\begin{document}

\frontmatter
\pagenumbering{roman}
\thispagestyle{empty}

\begin{titlepage}
\title{{\Huge Security Science (SecSci)}
\\[1ex]
{\Large --- Basic Concepts and Mathematical Foundations ---}
}

\author{\sf Dusko Pavlovic and Peter-Michael Seidel}
\date{
\vspace{2cm}
\begin{center}
\includegraphics[height=10cm
]{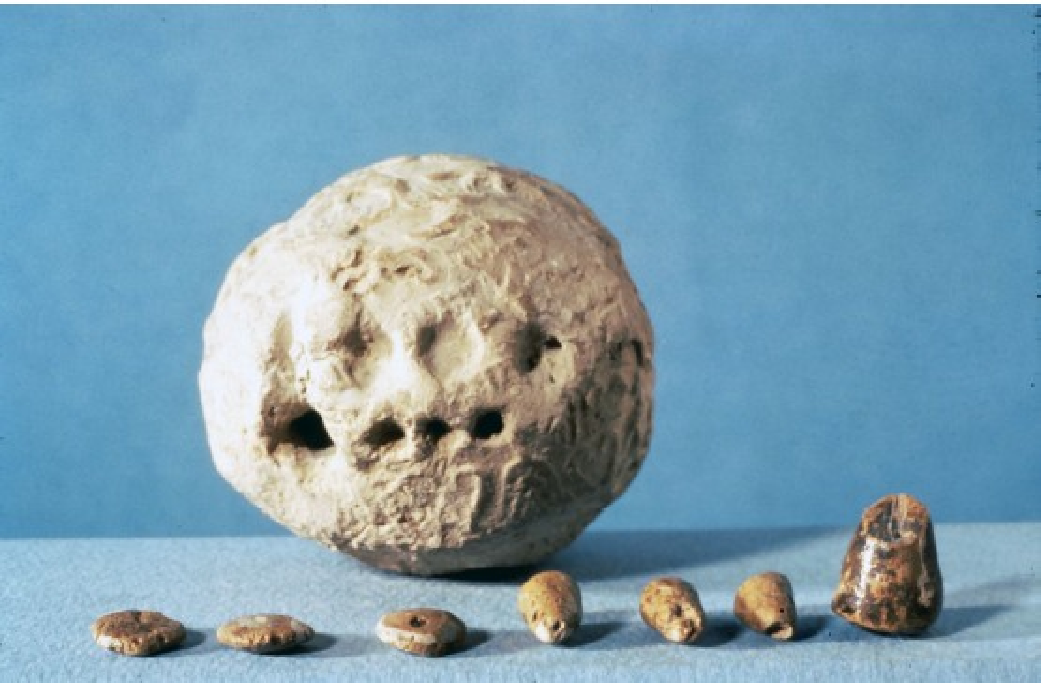}
\\[1ex]
{\large Tamper-resistant tokens and their clay envelope from Uruk, 4500 BC}
\end{center}
}

\end{titlepage}

\maketitle

\pagestyle{plain}
\tableofcontents

\chapter{Preface}\label{Chap:Preface}

\para{Origins.} This book originates from lecture notes started by the first author in Oxford in 2008 and developed by the coauthors together in Hawaii since 2015. The lecture notes evolved in style and content, but the approach to security through science persisted and did not evolve much.

The idea of security science originates from the research network initiated by Brad Martin of the NSA in 2008, allegedly in response to the questions asked by a member of the NSA Science Board: 
\begin{quote}
\em Why is it that communications security is based on a science, and for cybersecurity, we are competing who will hire more hackers? Why is there no general science of security?
\end{quote} 

\para{Background.} At the time, many software and protocol development efforts were plagued by a remarkable phenomenon: carefully designed protocols, thoroughly scrutinized and analyzed for years by top  expert committees, would finally get standardized after everyone agreed that they were secure --- only to turn out to be insecure upon deployment\footnote{The examples include the protocols from the IPSec IKE \cite{JFK,rfc6071,rfc2409,KrawczykH:SKEME,rfc2412}, and the IPSec GDoI \cite{PavlovicD:ESORICS04,MeadowsC:GDOI,rfc6407} protocol suites, as well as the relatives of  the MQV protocol \cite{KaliskiB:MQV,HMQV,MQV}, which kicked off the long series of ``another look'' articles \cite{Koblitz-Menezes:another-perspective}, questioning the utility of the product line of security proofs and experts, and in some cases even its soundness.}. The vulnerabilities varied from case to case and the only commonality was that they were abstracted away from the analyzed models. This seemed hard to avoid since you cannot model everything. So each case was dismissed as a fluke. For an outsider coming from mathematics, though, the regularity with which security experts were finding attacks on systems and protocols that other security experts had proven secure was mystifying \cite{HMQV,MenezesA:HMQV,PaulsonL:bull,Ryan-Schneider:bull}. In most cases, both the security proofs and the insecurity proofs were logically sound, just talking about different things. Time after time.

\para{Revelation.} On this background, the concept of security science came as a revelation. It suggested that \emph{security claims should not be viewed as mathematical theorems but as scientific theories}! 

A mathematical theorem is a statement about a mathematical model, derived from mathematical axioms. It remains valid as long as the axioms are considered valid. This may be forever, or until more interesting axioms are encountered. 

A scientific theory is a statement about reality, derived from the hypotheses that survive experimental testing. It remains valid until new tests or observations disprove it. For example, Newton's theory of gravitation survived for 350 years, until it was observed that the orbit of Mercury disproved it. Einstein's General Relativity improved it. While mathematicians work to \emph{prove}\/ their theorems, scientists work to \emph{disprove and improve}\/ their theories \cite{KuhnT:structure,PopperK:refutations}. The idea of Security Science is that \emph{\textbf{security is like science}}. This is spelled out in Sec.~\ref{Sec:Feynman}. 

\para{Latency.} However, the research network for the science of security did not pursue this idea. The leaders of the community argued that the security models that they studied in the 1980s were science, and the community followed suit. No one should be blamed for this. Community behaviors, of course, arise from community dynamics, not from individual qualities and shortcomings. The dynamics of security research are driven by incentives that do not always align to make the world more secure.

\para{Drift.} History is a quest for security. Every war is fought to secure something. In the year 1990, the US won the Cold War. The nuclear threat disappeared from one day to another. The Internet, developed as the communication infrastructure capable of surviving the fragmentation caused by a nuclear war, was released to the general public. Email brought together social and professional lives and the World Wide Web elevated the greed-is-good spirit of Reaganomics into a global, economy-driven engine of science, culture, and technology. The times of dark libraries with dusty books were behind us. All the information you could ever need was under the tips of your fingers. You could type a question on the keyboard; the world would process it and respond on the screen. The world could display a question on the screen; you would process it and respond on the keyboard. The world became a computer. Security became computer security. Privacy became data ownership. 

Throughout history, security has been the job of guards, generals, and diplomats. Since 1990, security became a career choice of software engineers. The echoes of this narrow view still dominate security education. Security professionals collect professional certificates issued by software giants, who in the meantime rule the world. Wars are fought in cyberspace. Political movements are implemented in social networks. Physical security is based on cybersecurity. Parents rely on cyber devices for their children's security, while predators rely on the same devices for prey. Security is a natural process. Together with everyone else, security researchers and their students are grains of sand in the rising sandstorms of their subject.

\para{Studying security.} This book is an effort to study security itself. Some of the obstacles to thinking about security on its own are that a part of it (national security) doesn't like to be observed, whereas another part (security industry) is hampered by the Principal-Agent Problem\footnote{Look it up if you are not sure what it is! The Agent defends the Principal from risks and becomes Principal's only source of information about the risks. Praetorian Guard was hired to defend Roman emperors, but ended up auctioning the position of Roman emperor for 400 years.}. But we owe it to the students to try. It is unlikely to help them get any particular security certificates, but we hope that it will help them understand the underlying processes of security. Or at least that it will help some of them to write a better book.

\subsubsection*{Thanks and apologies}
%
%
%

\label{Preface}

\mainmatter

\def\thechapter{1}
\setchaptertoc
\chapter{Introduction: On bugs and elephants}\label{Chap:intro}

\begin{flushright}
\parbox{10cm}{{\it \footnotesize A number of blind men 
came to an elephant. Somebody told them that it was an 
elephant. The blind men asked, "What is the elephant 
like?" and they began to touch its body. One of them said: 
"It is like a pillar." This blind man had only touched its leg. 
Another man said, "The elephant is like a husking basket." 
This person had only touched its ears. Similarly, he who 
touched its trunk or its belly talked of it differently.}}\\[3ex] 
{\footnotesize Ramakrishna Paramahamsa} 
\end{flushright}

\section{On security engineering}

Security means many things to many people. For a 
software engineer, it often means that there are no buffer 
overflows or dangling pointers in the code. For a 
cryptographer, it means that any successful attack on the 
cipher can be reduced to an algorithm for computing discrete logarithms, or to integer factorization. For a diplomat, security means that the enemy cannot read the confidential messages. For a credit card operator, it means that the total costs of the fraudulent transactions and of the measures to prevent them are low, 
relative to the revenue. For a bee, security means that no intruder into the beehive  will escape her sting\ldots

Is it an accident that all these different ideas 
go under the same name? What do they really have in common? They are studied in different sciences, ranging from computer science to biology, by a wide variety of different methods. Would it be useful to study them together?

\subsection{What is security engineering?}
If all avatars of security have one thing in common, it is surely the idea that \emph{there are attackers out there}. All security concerns, from computation to politics and biology, come down to averting the adversarial processes in the Environment, that are poised to subvert the goals of the System. There are, for instance, many kinds of bugs in software, but only those that the hackers use are a security concern.  

In all engineering disciplines, the System guarantees a 
functionality, provided that the Environment satisfies some 
assumptions. This is the standard \emph{assume-guarantee} format of the engineering correctness statements. Such statements are useful when the Environment is passive so that the assumptions about it remain valid for a while. \emph{The essence of security engineering is that the Environment actively seeks to invalidate the System's assumptions.} 

Security is thus an \emph{adversarial process}. In all 
engineering disciplines, failures usually arise from engineering errors. In security, failures arise \emph{in spite}\/ of compliance with the best engineering practices of the moment. \emph{\textbf{Failures are the first class citizens of security.}}\/ In most software systems, we normally expect security updates, which usually arise from attacks, and often inspire them. 

\subsection{Where did security engineering come from?}
The earliest examples of security technologies are found 
among the earliest documents of civilization. Fig.~\ref{bulla} shows security tokens with tamper protection 
technology from almost 6000 years ago. Fig.~\ref{MLsec} 
depicts the situation where this technology was 
probably used. Alice has a lamb and Bob has built a secure vault, perhaps with multiple security levels, spacious enough to store both Bob's and Alice's assets. For each of Alice's assets deposited in the vault, Bob issues a 
clay token, with an inscription identifying the asset. Alice's tokens are then encased into a \emph{bulla}, a round, hollow "envelope" of clay, which is then baked to prevent tampering. When she wants to withdraw her deposits, Alice submits her bulla to 
Bob, he breaks it, extracts the tokens, and returns the goods. Alice can also give her bulla to Carol, who can also submit it to Bob, to withdraw the goods, or to pass it on to Dave. Bull\ae\ can thus be traded, and they facilitate an exchange 
economy. The tokens used in the bull\ae\ evolved into the earliest forms of money, and the inscriptions on them led to the earliest numeral systems, as well as to the Sumerian cuneiform script, which was one of the earliest alphabets. Security thus predates literature, science, mathematics, and even money.
\begin{figure}[ht!]
    \begin{minipage}[t!]{0.45\linewidth}
    \centering
    \includegraphics[height=5cm]
    {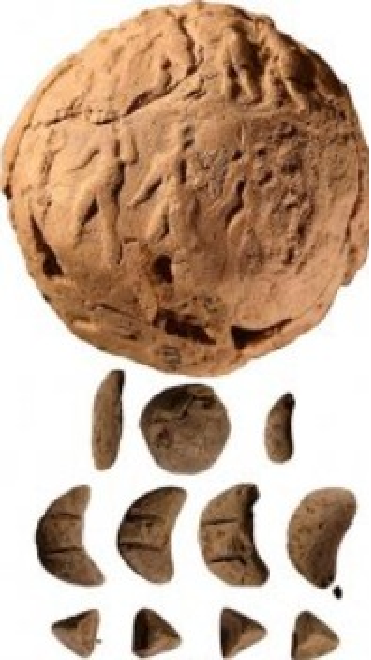}
    \medskip
    \caption{Tamper protection from 3700 BC: Bulla with tokens from Uruk (Iraq)}
    \label{bulla}
    \end{minipage}
    \hspace{0.1\linewidth}
    \begin{minipage}[t!]{0.45\linewidth}
    \centering
    \includegraphics[height=5cm]
    {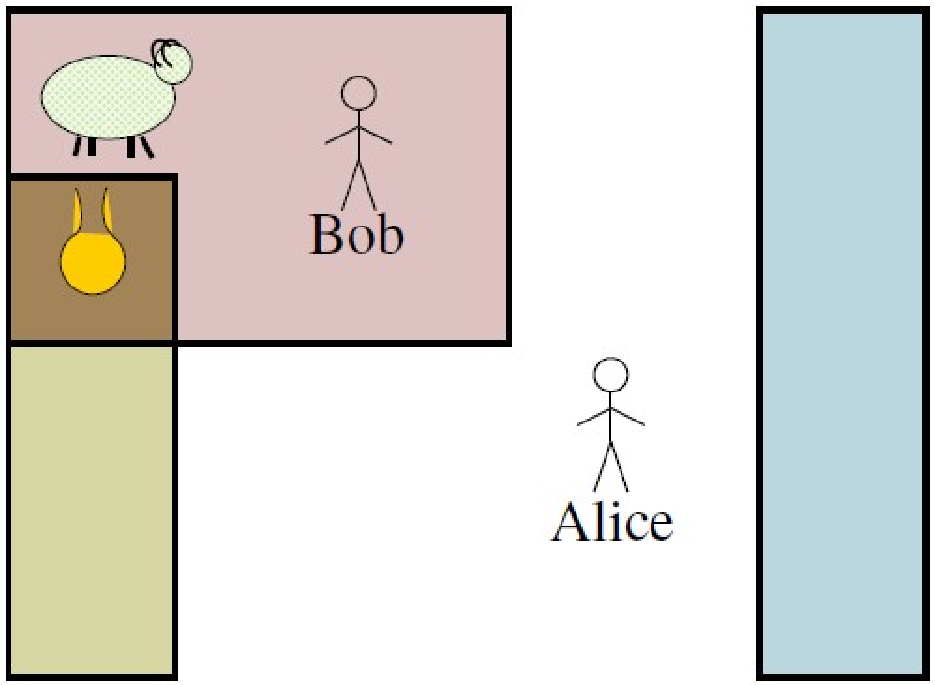}
    \medskip
    \caption{To withdraw her sheep from Bob's secure vault, 
    Alice submits a tamper-proof token, like in Fig.~\ref{bulla}.}
    \label{MLsec}
    \end{minipage}
    \end{figure}

\subsection{Where is security engineering going?}
Throughout history, security technologies evolved gradually, serving the purposes of war and peace, and protecting public resources and private property. As computers pervaded all aspects of social life, security became interlaced with computation, and security engineering came to be closely related to computer science. The developments in the realm of security are nowadays inseparable from the developments in the realm of computation. The most notable such development is the \emph{cyberspace}.

\subsubsection*{A brief history of cyberspace}
In the beginning, engineers built computers and wrote 
programs to control computations. The platform of computation was the computer, and it was used to execute algorithms and calculations, allowing people to discover, e.g., the fractals, and to invent 
compilers, that allowed them to write and execute more 
algorithms and more calculations more efficiently. Then the 
operating system became the platform of computation, and 
software was developed on top of it. The era of personal 
computing and enterprise software broke out. And then the 
Internet happened, followed by cellular networks, and 
wireless networks, and ad hoc networks, and mixed networks. \emph{Cyberspace emerged as the distance-free space of instant, costless communication.} 
Nowadays, software is developed to run in cyberspace. The Web is, strictly speaking, just a software system, albeit a formidable one. A botnet is also a software system. As social space blends with cyberspace, many social (business, collaborative) processes can be usefully construed as software systems, that run on social networks as hardware. Many social and computational processes become inextricable. Table~\ref{ages-comp} summarizes the crude picture of the paradigm shifts that led to this remarkable situation.

\begin{table}[ht]
\begin{center}
{\footnotesize
\begin{tabular}{|c||c|c|c|c|}
\hline
\textit{\textbf{age}} & \textit{ancient times} & \textit{middle ages} & \textit{modern times} \\[1ex]
\hline \hline
\textbf{platform}  & computer & operating system & network \\[1ex]
\hline
 \textbf{applications} & Quicksort, compilers & MS Word, Oracle  & 
WWW, botnets  \\[1ex]
\hline
\textbf{requirements} & correctness, termination & liveness, safety 
& trust, privacy \\[1ex]
 \hline
\textbf{tools} & programming languages & specification languages & scripting 
languages \\[1ex]
 \hline
\end{tabular}
}
\end{center}
\caption{Paradigms of computation}
\label{ages-comp}
\end{table}%

But as every person got connected to a computer, and every computer to a  network, and every network to a network of networks, computation became interlaced with communication and ceased to be programmable.  The functioning of the Web and of web applications is not determined by the code in the same sense as in a traditional software system: after all, web applications do include human users as a part of their runtime. The fusion of social and computational processes gave rise to the new sociotechnical space and led to new kinds of information processing, where the purposeful program executions at the network nodes are supplemented by the spontaneous data-driven evolution of network links. While the network emerges as the new computer, data and metadata become inseparable, and new kinds of security problems arise.

\subsubsection*{A brief history of cybersecurity}
In early computer systems, security tasks were mainly concerned with sharing of the computing resources. In computer networks, security goals expanded to include information protection. Those developments are reflected in the structure of the book, which progresses from resources to information. Both computer security and information security depend on a clear  distinction between the secure areas and the insecure areas, separated by a security perimeter. Security engineering caters to both by providing tools and methods for building security perimeters. In cyberspace, the secure areas are separated from the insecure areas by ``walls'' of cryptography; and they are connected by the ``gates'' of protocols.  

But as networks of computers and devices spread through physical and social spaces, the distinctions between the secure and the insecure areas become blurred. 
\begin{table}[ht]
\begin{center}
{\footnotesize
\begin{tabular}{|c||c|c|c|c|}
\hline
\textit{\textbf{age}} &  \textit{middle ages} & \textit{modern times} & \textit{postmodern times} \\[1ex]
\hline \hline
\textbf{space}  & computer center & cyberspace & sociotechnical space \\[1ex]
\hline
 \textbf{assets} & computing resources  & 
information & attention  \\[1ex]
\hline
\textbf{requirements} & availability, authorization & integrity, confidentiality 
& trust, privacy \\[1ex]
 \hline
\textbf{tools} &  locks, tokens, passwords & cryptography, protocols & search and intelligence \\[1ex]
 \hline
\end{tabular}
}
\end{center}
\caption{Paradigms of security}
\label{ages-sec}
\end{table}%
In such areas of the sociotechnical space, information processing does not yield to programming anymore. It cannot be secured just by cryptography and protocols. Security cannot be assured anymore by the engineering methodologies alone. Data mining, concept analysis, search, and intelligence span and cover new spaces. Like our cities, our sociotechnical spaces were originally built by engineers, but life took over, and some of our problems cannot be engineered away anymore. Enter science.
\label{Sec:Eng}

\section{On security science}

Science inputs processes that exist in nature and outputs their descriptions.  
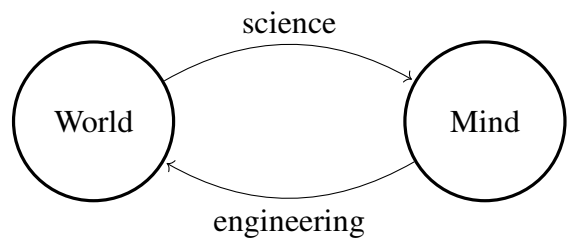
\begin{figure}[!ht]
\begin{center}
\begin{tikzpicture}[shorten >=1pt,node distance=3cm]
  \node[state,minimum size=35pt,draw=black,very thick]   
  (nocarry)                {\begin{minipage}[c]{1.75cm}\begin{center} World \end{center}\end{minipage}};
   \node[state,minimum size=0pt,draw=black,very thick] (carry)    [right=of nocarry]              
   {\begin{minipage}[c]{1.75cm}\begin{center} Mind \end{center}\end{minipage}};
  \path[->] 
  (nocarry) 
  edge [bend left] node [above] {science} 
  (carry)
  (carry) 
  edge [bend left] node [below] {engineering}  
  (nocarry);
   \end{tikzpicture}
   \caption{Civilization as conversation} 
\label{Fig:sci-eng}
\end{center}
\end{figure}
Engineering inputs descriptions of processes that do not exist in nature and outputs their realizations. Fig.~\ref{Fig:sci-eng} shows the conversation between the two which gave rise to our civilization. 

\subsection{Security space}
In the beginning, computation was engineered: we built computers and wrote programs. After we built lots of computers and wrote lots of programs, we connected them, and cyberspace emerged from the Big Bang of the Internet. Cyberspace expanded into many galaxies, and life emerged in it: viruses and chatbots, influencers and celebrities, global media, and genomic databases. Cyberspace is the space where we live. It was built by engineers, just like our cities, but it took a life of its own. \textbf{Cyberspace is the space of \emph{natural processes}.}

Every science is characterized by its space. Natural sciences place the world in physical space, connected by light, full of dark matter and  energy. Security science deals with the world in cyberspace, connected by networks, full of lies and deceit. The physical space is described using vectors and coordinate systems, cyberspace is structured by traces and protocols. 
This book provides an overview. The early chapters are suitable for a first course. Later chapters have been used in advanced courses. Research problems lurk throughout.

\subsection{Range of security science}
The challenge of security is that it defies spatial intuitions: \emph{we are used to thinking locally, whereas security requirements concern non-local interactions}. Strategic thinking about security requires expanding our views beyond our local horizons and taking into account the non-local interactions. Formal models are not just convenient means to increase precision. They are indispensable tools for non-local reasoning, just like airplanes are our indispensable tool for flying. Without such tools, we could not overcome our limitations. 

Using tools that overcome our limitations is, of course, a challenge in itself. It requires science. We need science to build airplanes.  Taming a horse requires a form of science: developing a common language with the horse, interpreting his behavior, hypotheses, experiments, and better hypotheses. Understanding other people, members of our community, and members of distant cultures, requires science. Interacting with thieves, attackers, deceivers, influencers, and all kinds of security experts, including textbook writers, requires a science. You need security science even to secure science. 

Science consists of theories and experiments. Theories and experiments need intuitive interfaces. The geometric view provides an intuitive interface for security science.

\subsection{The Earth is flat, and its perimeter is secure}
The problem with security is that things are not what they seem. A website provides free advice for dog owners, but tracks them and sells their private information. A bank offers a credit line for recent graduates but repossesses one in five of their houses. You look around, and you see that the Earth is obviously flat. The horizons where valleys and oceans meet the sky are straight. If Earth wasn't flat, oceans would flow like rivers. It's obviously flat. To understand that the Earth is round, you need science. To understand security, you also need science. Security Science. We call it SecSci.

Security science is easier than other sciences because it is more intuitive. We have a better sense for lies than for the roundness of the Earth. But security science is also harder than other sciences because its subject does not like to be observed. National security requires secrecy. Personal security requires privacy. Enterprise security requires top dollars to be paid to security experts. Otherwise, they say, other security experts, called hackers, will penetrate your defenses, and you will have to pay them more. You have to pay the good guys to defend you from the bad guys. But if the defenders are rational, they have to maximize their revenue, and they charge you just a penny less than what the attackers would steal from you. And the estimates of how much the attackers would steal are provided by the defenders, as a free service. So for all you know, they may be charging you a penny more than the attackers would steal.

\subsection{Science and deceit} 
Science begins from the idea that testing reality and eliminating false theories is a good thing. The underlying assumption is that reality strikes against false beliefs sooner or later. Employing science, our species took control of large parts of reality.  However, the assumption that reality strikes against false beliefs is just another theory. In reality, disproving false beliefs takes time and resources. If time is short and the resources are scarce, deception is a rational strategy. 

While science is the effort to find and disprove unintentionally false hypotheses, security is the effort to find and disprove intentional deceptions. Scientific hypotheses are the simplest explanations of past observations, chosen so that they are easy to test and disprove. Deceptions are the explanations beneficial for the deceiver, chosen so that they are hard to test and disprove. The rational strategy is to apply one to the other. Science and deceit are the only ones who stand a chance against each other.

\subsection{The best-kept secret}\label{Sec:Feynman}
\para{Feynman on science.}  {\it "If we have a definite theory, from which we can compute the consequences which can be compared with experiment, then in principle we can prove that theory wrong. } {\it But notice that we can never prove it right!} {\it Suppose that you invent a theory, calculate the consequences, and discover every time that the consequences agree with the experiment.  The theory is then right? No, it is simply not proven wrong. 
In the future you could compute a wider range of consequences, there could be a wider range of experiments, and you might then discover that the thing is wrong. [\ldots]} \begin{wrapfigure}
{R}
{5cm}
\centering \includegraphics[width=4cm
]
{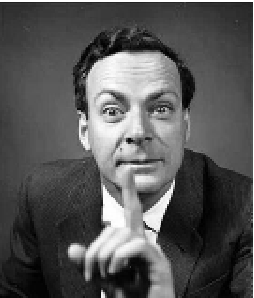}

\end{wrapfigure} 
\textit{\textbf{We never are definitely right; we can only be sure when we are wrong.}" } \cite{FeynmanR:character}

The best-kept secret of science is thus that it does not provide persistent laws, or definite assertions of truth. Science only provides methods to disprove and improve theories.

The best-kept secret of security is that all security claims have a lifetime. In cryptography, the fact that every key has a lifetime is well known. But the empiric fact that security protocols regularly fail is not well understood and is often received with astonishment. How can a verified security protocol still fail? Just like every scientific theory is stated with respect to a given set of observables, which may need to be expanded in the future, every security claim is stated with respect to a given system and attack model, which may need to be refined.

Feynman never said much about security, but if he did, he could have said something like this:

\para{"Feynman on security".} {\it If we have a precisely defined security claim about a system, from which we can derive the consequences which can be tested, then in principle we can prove that the system is insecure. 
But notice that we can never prove that it is secure! 
Suppose that you design a system, calculate some security claims, and discover every time that the system remains secure under all tests. The system is then secure? No, it is simply not proven insecure. In the future you could refine the security model, there could be a wider range of tests and attacks, and you might then discover that the thing is insecure. \textit{\textbf{We never are definitely secure; 
we can only be sure when we are insecure.}} }

\begin{figure}[htp]
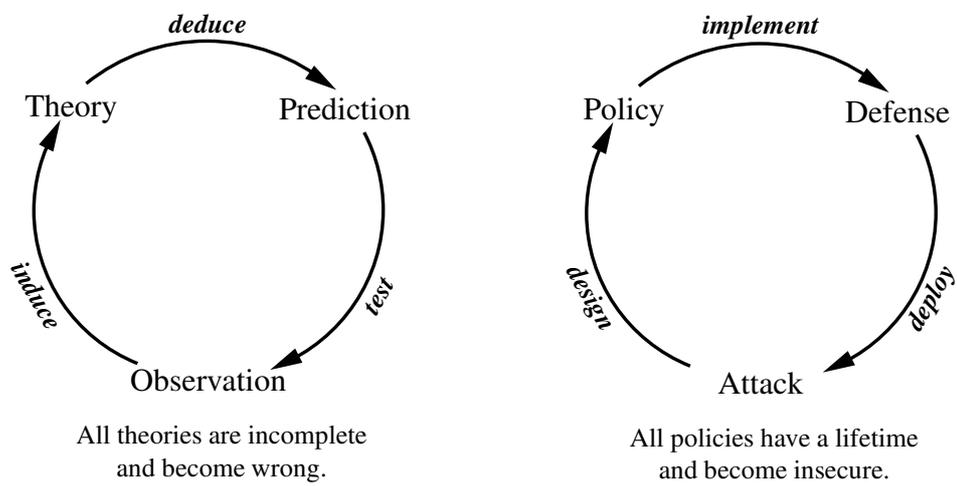


\caption{The process of science and the process of security}

\vspace{\baselineskip}

\begin{minipage}{.45\linewidth}
\newcommand{\Hypothesis}{Theory}
\newcommand{\Prediction}{Prediction}
\newcommand{\Example}{Observation}
\newcommand{\deduction}{\footnotesize \textit{\textbf{deduce}}}
\newcommand{\testing}{\footnotesize \textit{\textbf{test}}}
\newcommand{\induction}{\footnotesize \textit{\textbf{induce}}}
\newcommand{\Security}{}
\newcommand{\squig}{}
\begin{center}
\def\JPicScale{.45}
\input{evolution-tri}

\bigskip
\footnotesize All theories are incomplete\\ and become wrong.
\end{center}
\end{minipage}
\begin{minipage}{.45\linewidth}
\newcommand{\Hypothesis}{Policy}
\newcommand{\Prediction}{Defense}
\newcommand{\Example}{Attack}
\newcommand{\deduction}{\footnotesize \textit{\textbf{implement}}}
\newcommand{\testing}{\footnotesize \textit{\textbf{deploy}}}
\newcommand{\induction}{\footnotesize \textit{\textbf{design}}}
\newcommand{\Security}{}
\newcommand{\squig}{}
\begin{center}
\def\JPicScale{.45}
\input{evolution-tri}

\bigskip
\footnotesize All policies have a lifetime\\
{and become insecure}.
\end{center}
\end{minipage}
\label{Fig:science}
\end{figure}

\def\thechapter{2}
\setchaptertoc
\chapter{Security concepts}\label{Chap:concepts} 

We separate concepts to be able to reason. When concepts depend on each other, separating them seems wrong. How can I make sense of an egg without a chicken? But I also cannot make sense if I mix them up. Many security concepts depend on each other, don't make sense without each other, but get mixed up. Let us try to make sense.

\section{The dark side}
\label{Sec:why}

We all need security. We are anxious when we are insecure. We seek a secure home, we try to find a secure job. We crave security so much that we even study it. But what is it? 

\begin{figure}[htbp]
\begin{center}
\includegraphics[height=6cm]{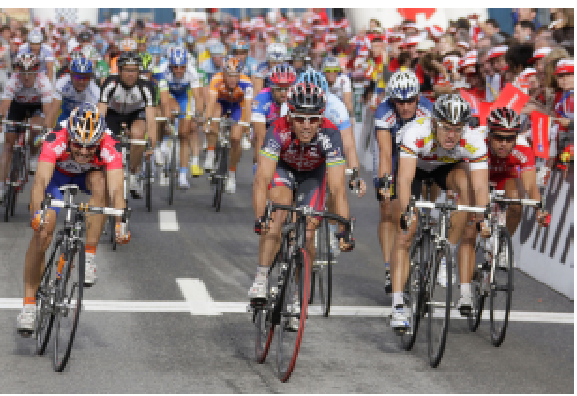}
\caption{A race can be won by being the best and the fastest\ldots}
\label{Fig:race}
\end{center}
\end{figure}

Yes, it means many things to many people: one thing to the software engineer, another to the soldier, and yet another to the diplomat and the little bee. We kicked off with that in Sec.~\ref{Sec:Eng}. But what is the common denominator of all the different notions of security?

In \emph{\textbf{Star Wars}}\/ terminology, security is the battle against the \textbf{dark side of the Force}. In words of Vilfredo Pareto\footnote{Pareto was one of the fathers of political economy. His most famous quote is: \emph{"The rich get richer"}.} \cite{Pareto},
\begin{quote}
\em "The efforts of humans are organized in two directions:
\begin{itemize}
\item to the production of goods, or else
\item to the appropriation of goods produced by others."
\end{itemize}
\end{quote}
Security is concerned with the second direction: those who produce goods want to \emph{secure}\/ them from being appropriated by others. When that fails, then those who have appropriated the goods want to \emph{secure}\/ them from being taken back; and so on. Security is the realm of such ownership conflicts. But note that Pareto, when he speaks of "the efforts of humans", restricts a broader process, as the ownership conflicts also rage among animals, and many of the security solutions that we will study emerged as evolutionary strategies. Even the simplest forms of life already compete to \emph{secure}\/ their resources from others in one way or another. Wherever there is competition, the is security. In terms of racing, Pareto's observation is illustrated in \cref{Fig:race} and \cref{Fig:trip}.
 \begin{figure}[htbp]
\begin{center}
\includegraphics[height=6cm]{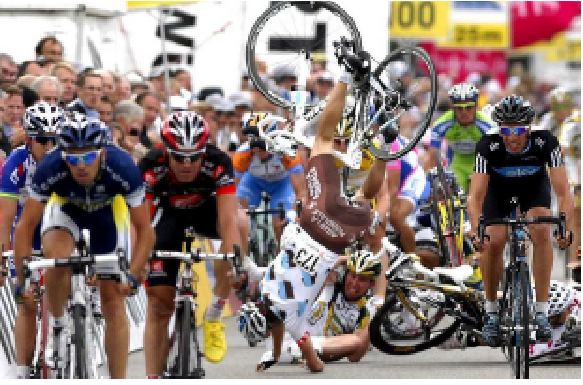}
\caption{\ldots or by tripping up the opponent.}
\label{Fig:trip}
\end{center}
\end{figure}

The war against the Dark Side of the Force, of course, persists in the world of computers and networks. The cyberspace only makes it harder to tell who is who, and what is what. In cyberspace, every force has a dark side. Is Facebook my friend or my enemy? How about YouTube? The memes all together seem to be taking me somewhere, one leading to the other; but where are they taking me? Am I safe there? Am I secure? What is the difference? We need \emph{\textbf{Security Science (SecSci)}}\/ because the eyes may deceive, and the minds are actively deceived.
 
 \begin{figure}[htbp]
\begin{center}
\includegraphics[height=6cm]{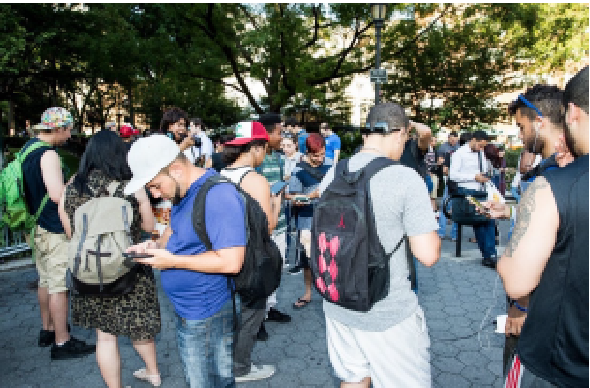}
\caption{If we are all plugged into a Matrix, then it is hard to tell what is secure.}
\label{Fig:matrix}
\end{center}
\end{figure}

Spam is obviously spam, an attack is obviously an attack, but not always. When should I start defending myself?

\section{The timing}
\label{Sec:when}

Security can be implemented 
\begin{itemize}
\item {\bf before attacks:} to \emph{\textbf{prevent}}\/ them by cryptography, protocols, firewalls; \sindex{prevention}

\item  {\bf during attacks:} \sindex{detection}\sindex{forensics}\sindex{intrusion detection} to \emph{\textbf{detect}}\/ them by intrusion detection systems, or to detect earlier phases by forensic methods; 

\item {\bf after attacks:}\sindex{deterrence} to \emph{\textbf{deter}}\/ them by punishing (by legal measures) or attacking the attackers (by illegal measures), and by balancing the incentives: the likely cost of an attack must be higher than the likely profit.  
\end{itemize}

\section{The good and the bad}
\label{Sec:good-bad}

In life and regarding software, our needs and requirements are always expressed in the same way: \sindex{good stuff}\sindex{bad stuff}
\begin{itemize}
\item we want that good stuff happens, and
\item we want that bad stuff doesn't happen.
\end{itemize}
For example, the good stuff that we want from a computational process is that it eventually terminates and gives the correct output. The bad stuff that should not happen is that the process should not crash or get hijacked. 
\sindex{liveness}\sindex{safety}

\sindex{properties!dependability}

\begin{figure}[!ht]
\begin{center}
\caption{Types of requirement specifications}
\renewcommand{\top}{Requirements}
\newcommand{\oneleft}{\begin{minipage}{2cm}
\center\small 
Good stuff\\
should\\ happen
\end{minipage}}
\newcommand{\oneright}{\begin{minipage}[c]{2cm} \center
\small Bad stuff\\
 should not happen
\end{minipage}}
\newcommand{\twoleft}{liveness}
\newcommand{\tworight}{\bf security}
\newcommand{\twomiddle}{safety}
\newcommand{\captionone}{\begin{minipage}[t]{2cm} \footnotesize functionality
\end{minipage}}
\newcommand{\captiontwo}{\begin{minipage}[t]{2cm} \footnotesize \flushright no accidents\\
{\color{red}(natural)}
\end{minipage}}
\newcommand{\captionthree}{\begin{minipage}[t]{3cm} \footnotesize no attacks\\
{\color{red}(adversarial)}
\end{minipage}}

\vspace{1.5\baselineskip}

\def\JPicScale{.5}
\input{requirements}

\vspace{\baselineskip}

\begin{minipage}{.25\linewidth}
\begin{center}
%
\vspace{.5\baselineskip}
\includegraphics[height=4cm
]{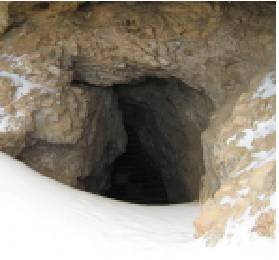}

A \emph{functional} dwelling,
\end{center}
\end{minipage}
\ \ \ \ \ \ \ \ \ \ \ \ \  
\begin{minipage}{.25\linewidth}
\begin{center}
%
\vspace{.5\baselineskip}
\includegraphics[height=4cm
]{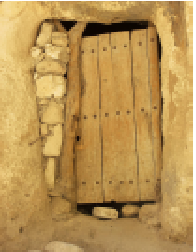}

a shelter from \emph{hazards},
\end{center}
\end{minipage}
\ \ \ \ \ \ \ \ \ \ \ \ \ 
\begin{minipage}{.25\linewidth}
\begin{center}
%
\vspace{.5\baselineskip}
\includegraphics[height=4cm
]{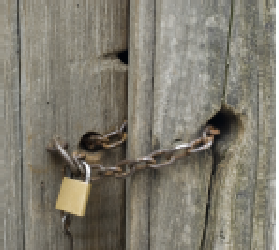}

a lock against~\emph{intruders}.
\end{center}
\end{minipage}

\vspace{.5\baselineskip}
\label{Fig:live-safe-sec}
\end{center}
\end{figure}
So there are two kinds of bad stuff that we want to avoid: accidental, and intentional. A requirement that good stuff happens is called a \emph{liveness}\/ requirement. \sindex{liveness} A requirement that the accidental bad stuff should not happen is called a \emph{safety}\/ requirement. \sindex{safety} A requirement that the intentional bad stuff perpetrated by an attacker should not happen is called a \emph{security}\/ requirement. \sindex{security} 
Note that
\begin{itemize}
\item the \textbf{liveness}\/ requirements specify some desired \emph{functionality}\/ that should be achieved; 
\item the \textbf{safety}\/ requirements specify some undesirable and unintentional \emph{hazards}\/ that should be prevented;
\item the \textbf{security}\/ requirements specify some undesirable intentional \emph{attacks}\/ that should be defeated.
\end{itemize}
This subdivision of the requirement specifications is displayed in \cref{Fig:live-safe-sec}. 
\begin{figure}[!ht]
\begin{center}
\includegraphics[width=9cm
]{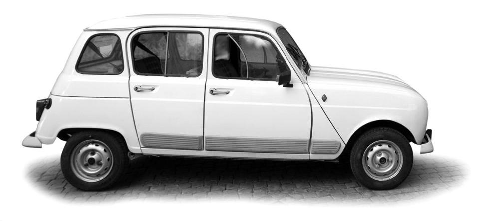}
\end{center}
\caption{A modest car provides modest functions, average safety, and affordable security}
\label{Fig:renault}
\end{figure}
Further down the road, cars are also built to satisfy the same three kinds of requirements: 
\begin{itemize}
\item the \emph{engine}\/ provides the driving functionality, \hfill $\leftsquigarrow$ \textbf{liveness}
\item the \emph{brakes}\/ prevent accidents and the loss of control, \hfill $\leftsquigarrow$ \textbf{safety}
\item the \emph{locks and alarm}\/ prevent theft and intrusions. \hfill $\leftsquigarrow$ \textbf{security}
\end{itemize}

\para{Summary.} The conceptual distinctions made so far are that
\begin{itemize}\sindex{safety}\sindex{bad stuff}\sindex{good stuff}
\item safety is a \emph{negative}\/ requirement that bad stuff does not happen --- in contrast with liveness, which is a \emph{positive}\/ requirement that good stuff happens; 
\item security protects from \emph{intentional}\/ bad stuff --- in contrast with safety, which protects from the \emph{unintentional}\/ bad stuff.
\end{itemize}

\begin{table}[!ht]
{\small
\begin{center}
\begin{tabular}{|c||c|c|c|}
\hline
& \textit{positive requirement} & \multicolumn{2}{c|}{\textit{negative requirements}} \\
\cline{2-4}
\textit{domain}& \textbf{liveness} & \begin{minipage}[c][.75cm][c]{3cm}\textbf{\hspace{2.5em} safety}\end{minipage} & \textbf{security}\\
\hline \hline
mountain & reach the peak   & do not slip on ice & \begin{minipage}[c][1cm][c]{3cm}do not get pushed
\end{minipage} \\
\hline
kitchen & \begin{minipage}[c][1cm][c]{3.5cm}\ \ \ \ \ \  prepare food \end{minipage} & \begin{minipage}[c][1cm][c]{3.5cm}\ do not bite your tongue \end{minipage}  & \begin{minipage}[c][1cm][c]{3.5cm}\ \ do not get poisoned
\end{minipage} \\
\hline
airport & board passengers & mark slippery floor & \begin{minipage}[c][1cm][c]{3cm}prevent terrorism
\end{minipage} \\
\hline
cryptography & $D(k, E(k,m)) = m$ & no bugs & 
\begin{minipage}[c][1cm][c]{3cm}$A(E(k,m)) = m \Rightarrow$\\[-1ex]
 $A(y) = D(k,y)$
\end{minipage}\\
\hline
\end{tabular}
\caption{Positive and negative requirements in various application domains}
\label{Tab:posneg}
\end{center}}
\end{table}

For more instances of these high-level distinctions, see Table~\ref{Tab:posneg}. The same distinctions arise in all areas of engineering, since every system specification includes a liveness requirement, most of them include some safety requirements, and all those that involve network interactions also include some security requirements.

\section{Know, Have, Be}\label{Sec:know-have-be}


All security requirement specifications begin by distinguishing three types of entities:
\begin{itemize}
\item \textbf{data}: what you know, \sindex{data}
\item \textbf{things}:  what you have, \sindex{thing}
\item \textbf{traits}:  what you are. \sindex{trait}
\end{itemize}
This \emph{know-have-be tripod}, \sindex{know-have-be tripod} displayed in \cref{Fig:know-have-be}, provides  a 'metaphysical foundation' for security analyses and designs.  The three types are distinguished by two actions:
\begin{itemize}
\item \textbf{what you know} can be copied and given away; 
\item \textbf{what you have} cannot be copied, but it can be given away; and
\item \textbf{what you are} cannot be either copied, or given away.
\end{itemize}
For example, in a data network, Alice can give copies of her digital keys to Bob, and then both Alice and Bob will \emph{know}\/ Alice's keys.  Digital keys are \textbf{\emph{data}}\/ because they can be copied and given away. In a physical network, on the other hand, it is not as easy for Alice to copy her physical key, or her tamper-resistant smart card; yet she can still give them to Bob, and then Bob will have them, but Alice will not. The physical keys are therefore \textbf{\emph{things}}, which means that they cannot be copied but can be given away. Finally, since Alice cannot easily copy or give away her \textbf{\emph{traits}}, such as genes, iris patterns, fingerprints, or handwriting, her identity is often identified with such indivisible individual \emph{biometric}{\/ features;\sindex{biometric feature} they are who she is. This security typing is summarized in Table~\ref{Table:tripod} and \cref{Fig:tripod}.

\begin{table}[!ht]
\begin{center}
\begin{tabular}{|r|c||c|c|}
\hline
type & what &can be copied & can be given away \\
\hline \hline 
\textbf{data} & what you know & $\checkmark$ & $\checkmark$\\
\hline
\textbf{things} &what you have & $\mathbf{\times}$ & $\checkmark$ \\
\hline 
\textbf{traits} & what you are &$\mathbf{\times}$ & $\mathbf{\times}$  \\
\hline
\end{tabular}
\caption{Distinguishing the security types}
\end{center}
\label{Table:tripod}
\end{table}%

\para{Comment.} There are, of course, methods to \emph{clone}\/ a smart card so that Alice still has it after she gives it to Bob; and there are methods to \emph{forge}\/ handwriting and fingerprints, which may make Alice and Bob indistinguishable even biometrically. But these methods provide attack avenues on particular \emph{implementations}\/ of security based on what you are, or what you have. Here we are not talking about the implementations, but about the basic ideas of security. The particular attack avenues can always be eliminated by improved implementations. The basic ideas remain the same. 

\sindex{properties!security}
\sindex{authenticity}
\sindex{integrity}
\sindex{availability}\sindex{authority}\sindex{authorization}\sindex{freedom}\sindex{health}

\begin{figure}[!ht]
\begin{center}

\renewcommand{\top}{Security}
\newcommand{\oneleft}{\begin{minipage}{2cm}
\center\small 
Resource\\
security
\end{minipage}}
\newcommand{\oneup}{\begin{minipage}{2cm}
\center\small 
Physical\\
security
\end{minipage}}
\newcommand{\oneright}{\begin{minipage}[c]{2cm} \center
\small Data\\
 security
\end{minipage}}
\newcommand{\captionone}{\begin{minipage}[t]{3cm} \footnotesize \flushleft what you are
\end{minipage}}
\newcommand{\captiontwo}{\begin{minipage}[t]{3cm} \footnotesize \flushright what you have
\end{minipage}}
\newcommand{\captionthree}{\begin{minipage}[t]{3cm} \footnotesize what you know
\end{minipage}}
\newcommand{\goode}{\begin{minipage}[t]{2cm} \small \centering good\\stuff
\end{minipage}}
\newcommand{\baade}{\begin{minipage}[t]{2cm} \small \centering bad\\ stuff
\end{minipage}}
\newcommand{\availability}{\begin{minipage}[t]{2cm} \footnotesize \centering \bf availability
\end{minipage}}
\newcommand{\authority}{\begin{minipage}[t]{2cm} \footnotesize \centering \bf authority
\end{minipage}}
\newcommand{\authenticity}{\begin{minipage}[t]{2cm} \scriptsize\centering \bf authenticity\\
integrity
\end{minipage}}
\newcommand{\secrecy}{\begin{minipage}[t]{2cm} \scriptsize \centering \bf secrecy\\
confidentiality
\end{minipage}}

\begin{center}
\def\JPicScale{.3}
\input{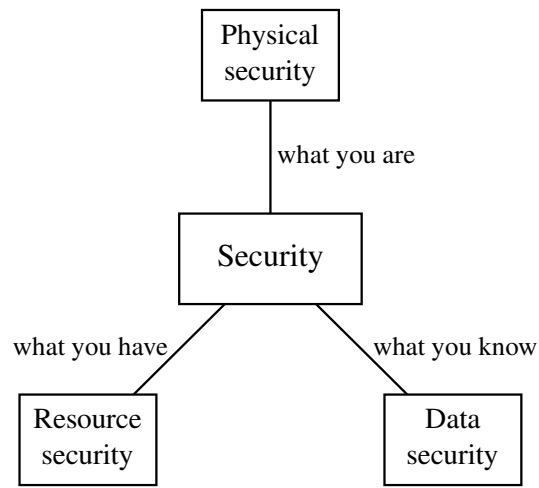}
\end{center}
\caption{The security tripod}
\label{Fig:tripod}
\end{center}
\end{figure}

\para{Using the tripod.} Of the three legs of the security tripod  displayed in \cref{Fig:know-have-be}, we leave aside the top one, the Physical Security, and focus on the remaining two: Resource Security and Data Security. So we ignore \emph{what you are}, and explore the methods to secure and to use \emph{what you have} and \emph{what you know}. In each case, the security requirements can again be subdivided into a \emph{``good-stuff-should-happen''}\/ part and a \emph{``bad-stuff-should-not-happen''}\/ part. This is displayed in \cref{Figure:taxo}. Security requirements specify the good stuff that should eventually happen (``yes good''). Security constraints specify the bad stuff that should never happen (``no bad'').  
\sindex{physical security} \sindex{data security} \sindex{resource security}
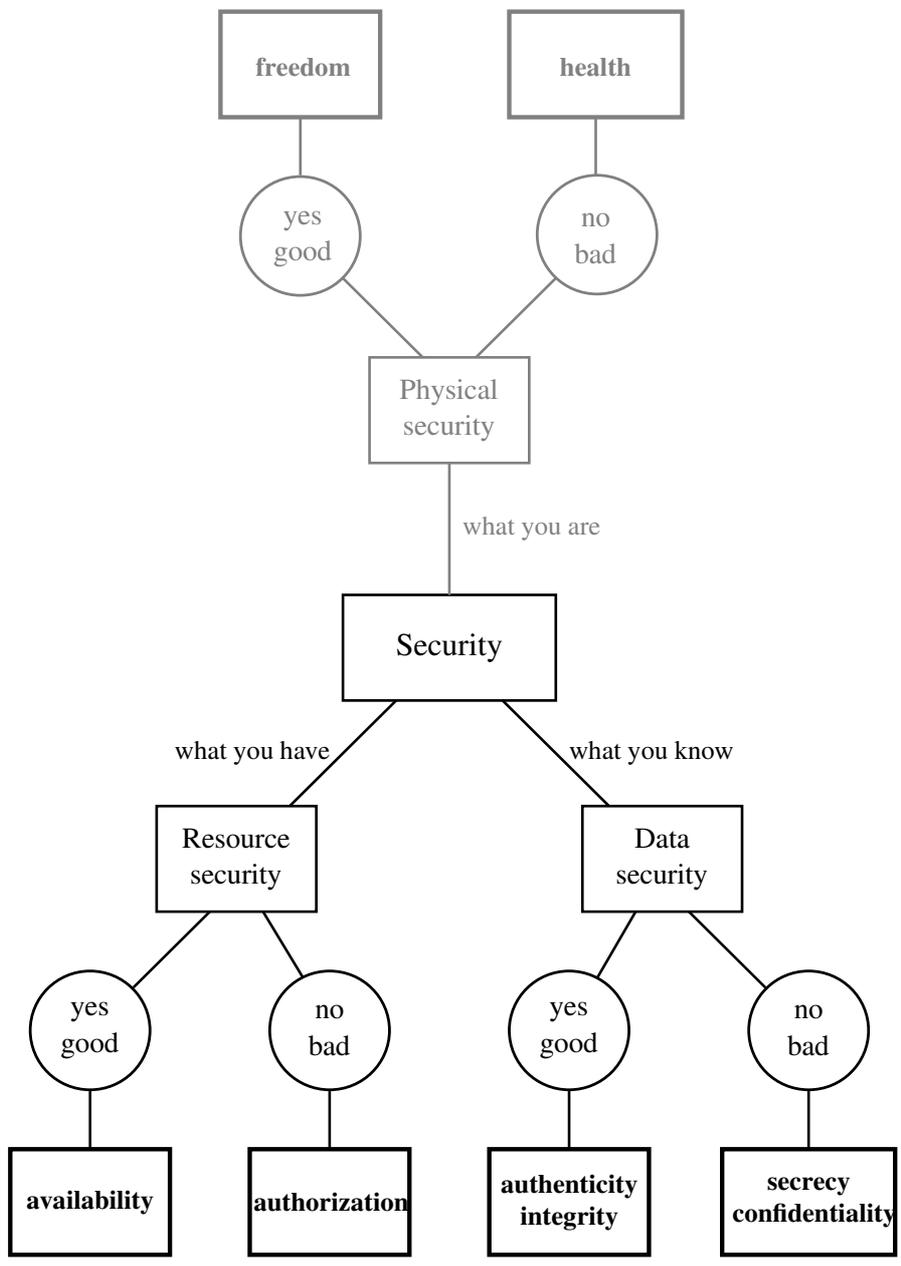
\begin{figure}[htbp]
\begin{center}
\renewcommand{\top}{Security}
\newcommand{\oneleft}{\begin{minipage}{2cm}
\center\small 
Resource\\
security
\end{minipage}}
\newcommand{\oneright}{\begin{minipage}[c]{2cm} \center
\small Data\\
 security
\end{minipage}}
\newcommand{\captiontwo}{\begin{minipage}[t]{3cm} \footnotesize \flushright what you have
\end{minipage}}
\newcommand{\captionthree}{\begin{minipage}[t]{3cm} \footnotesize what you know
\end{minipage}}
\newcommand{\goode}{\begin{minipage}[t]{2cm} \small \centering yes\\good
\end{minipage}}
\newcommand{\baade}{\begin{minipage}[t]{2cm} \small \centering no\\ bad
\end{minipage}}
\newcommand{\ggoode}{\begin{minipage}[t]{2cm} \small \centering \color{gray} yes \\ good
\end{minipage}}
\newcommand{\bbaade}{\begin{minipage}[t]{2cm} \small \centering \color{gray} no\\bad
\end{minipage}}
\newcommand{\availability}{\begin{minipage}[t]{2cm} \footnotesize \centering \bf availability
\end{minipage}}
\newcommand{\authority}{\begin{minipage}[t]{2cm} \footnotesize \centering \bf authorization
\end{minipage}}
\newcommand{\authenticity}{\begin{minipage}[t]{2cm} \footnotesize\centering \bf authenticity\\
integrity
\end{minipage}}
\newcommand{\secrecy}{\begin{minipage}[t]{2cm} \footnotesize 
\centering \bf secrecy\\
confidentiality
\end{minipage}}
\newcommand{\oneup}{\begin{minipage}{2cm}
\center\small 
\color{gray} Physical \\
security
\end{minipage}}
\newcommand{\captionone}{\begin{minipage}[t]{3cm} \footnotesize \flushleft \color{gray} what you are
\end{minipage}}
\newcommand{\health}{\begin{minipage}[t]{2cm} \footnotesize \centering \bf \color{gray} freedom
\end{minipage}}
\newcommand{\medicine}{\begin{minipage}[t]{2cm} \footnotesize \centering \bf \color{gray} health
\end{minipage}}

\bigskip
\def\JPicScale{.35}
\ifx\JPicScale\undefined\def\JPicScale{1}\fi
\psset{unit=\JPicScale mm}
\psset{linewidth=0.3,dotsep=1,hatchwidth=0.3,hatchsep=1.5,shadowsize=1,dimen=middle}
\psset{dotsize=0.7 2.5,dotscale=1 1,fillcolor=black}
\psset{arrowsize=1 2,arrowlength=1,arrowinset=0.25,tbarsize=0.7 5,bracketlength=0.15,rbracketlength=0.15}
\begin{pspicture}(0,0)(327.5,471)
\pspolygon[linewidth=1](50,170)(110,170)(110,130)(50,130)
\pspolygon[linewidth=1](210,170)(270,170)(270,130)(210,130)
\pspolygon[linewidth=1](120,250)(200,250)(200,210)(120,210)
\psline[linewidth=1](180,210)(220,170)
\psline[linewidth=1](140,210)(100,170)
\rput(160,230){\top}
\rput(80,150){\oneleft}
\rput(240,150){\oneright}
\rput[r](115,190){\captiontwo}
\rput[l](205,190){\captionthree}
\psline[linewidth=1](230,130)(215.62,105)
\psline[linewidth=1](70,130)(41.25,101.25)
\psline[linewidth=1](90,130)(105,105)
\psline[linewidth=1](250,130)(278.75,101.25)
\rput{0}(25,85){\psellipse[linewidth=1.05](0,0)(22.5,-22.5)}
\rput{0}(115,85){\psellipse[linewidth=1.05](0,0)(22.5,-22.5)}
\rput{0}(205,85){\psellipse[linewidth=1.05](0,0)(22.5,-22.5)}
\rput{0}(295,85){\psellipse[linewidth=1.05](0,0)(22.5,-22.5)}
\pspolygon[linewidth=1.65](-5,40)(55,40)(55,0)(-5,0)
\psline[linewidth=1](25,62)(25,40)
\pspolygon[linewidth=1.65](85,40)(145,40)(145,0)(85,0)
\psline[linewidth=1](115,62)(115,40)
\pspolygon[linewidth=1.65](262.5,40)(327.5,40)(327.5,0)(262.5,0)
\psline[linewidth=1](295,62)(295,40)
\pspolygon[linewidth=1.65](175,40)(235,40)(235,0)(175,0)
\psline[linewidth=1](205,62)(205,40)
\rput(25,85){\goode}
\rput(205,85){\goode}
\rput(115,85){\baade}
\rput(295,85){\baade}
\rput(295,20){\secrecy}
\rput(205,20){\authenticity}
\rput(115,20){\authority}
\rput(25,20){\availability}
\pspolygon[linewidth=1,linecolor=gray](130,340)(190,340)(190,300)(130,300)
\psline[linewidth=1,linecolor=gray](160,300)(160,250)
\psline[linewidth=1,linecolor=gray](200,370)(170,340)
\psline[linewidth=1,linecolor=gray](120,370)(150,340)
\rput{0}(104,386){\psellipse[linewidth=1.05,linecolor=gray](0,0)(22.5,-22.5)}
\rput{0}(215.5,386.5){\psellipse[linewidth=1.05,linecolor=gray](0,0)(22.5,-22.5)}
\pspolygon[linewidth=1.65,linecolor=gray](182.5,471)(247.5,471)(247.5,431)(182.5,431)
\psline[linewidth=1,linecolor=gray](215,431)(215,409)
\pspolygon[linewidth=1.65,linecolor=gray](74,471)(134,471)(134,431)(74,431)
\psline[linewidth=1,linecolor=gray](104,431)(104,409)
\rput(160,320){\oneup}
\rput[l](165,275){\captionone}
\rput(105,450){\health}
\rput(215,450){\medicine}
\rput(105,385){\ggoode}
\rput(215,385){\bbaade}
\end{pspicture}
\caption{Security requirements are ``yes good''. Security constraints are ``no bad''}
\label{Figure:taxo}
\end{center}
\end{figure}

\para{The coming chapters \emph{roughly}\/ correspond to the  properties at the bottom of \cref{Figure:taxo}.} Resource security specifications will be studied  in Ch.~\ref{Chap:Resource}, resource security requirements expressed in terms of \emph{availability}\/ and \emph{authorization} (a.k.a. \emph{authority}), in  Ch.\ref{Chap:Process}. Channel security is discussed in Ch.~\ref{Chap:Channel}, and it brings us to network security in \ref{Chap:Auth}. The crucial security properties required from channel and network flows are \emph{integrity}\/ and \emph{secrecy}. The crucial security requirements concerning channel sources are  \emph{authenticity}\/ and \emph{confidentiality}. \sindex{availability}
\sindex{authorization}
\sindex{authority}
\sindex{confidentiality}
\sindex{integrity}
\sindex{secrecy}
\sindex{authenticity}
We say that the chapters \emph{roughly}\/ correspond to the ``know-have-be'' types because data can also be resources, not just things; and things can also flow through the channels, not just data. But at least the rough correspondence follows the typical examples, and it seems useful early on. 

\sindex{good stuff}\sindex{bad stuff}

\para{Remark about CIA.}\sindex{CIA-triad}
Most security courses and textbooks begin with the \emph{CIA}-triad of security properties, which refers to \emph{C}\/onfidentiality, \emph{I}\/ntegrity}, and \emph{A}\/vailability. The question: \emph{``Why these properties, and not some other?''} often arises in conversations. The usual answer is that their importance has been established through years of experience and expertise. In special situations, other properties, such as \emph{non-repudiation}, are also considered, with similar justifications. Fig.~\ref{Figure:taxo} suggests that the CIA-canon could perhaps be naturally reconstructed along the \emph{good stuff / bad stuff}\/ axis, by saying that

\medskip
\makebox[\textwidth][l]{$\left.\begin{minipage}{.8\textwidth}\begin{itemize}
\item \textbf{Confidentiality} means that \emph{bad data flows} do not happen;
\item \textbf{Integrity} means \emph{good data flows} do happen;
\end{itemize}
\end{minipage}\right\}\text{\emph{\textbf{what you know}}}$}

\medskip
\makebox[\textwidth][l]{$\left.\begin{minipage}{.8\textwidth}\begin{itemize}
\item \textbf{Availability} means that \emph{good resource calls} are accepted.
\end{itemize}
\end{minipage}\right\}\text{\emph{\textbf{what you have}}}$}

However, the requirement that \emph{bad resource calls} are rejected is missing. It is called  \emph{authorization}\/ in \cref{Figure:taxo}. Maybe the ``A''  in CIA should be double counted, although we never encountered such a suggestion. \sindex{non-repudiation} On the other hand, the problem that non-repudiation is not a part of CIA is often discussed, and there was a proposal that the triad should be extended to CIAN. It is easy to see that non-repudiation is in fact a  \emph{``good-information-flows-do-happen''} requirement. It was interpreted as \emph{authentication-after-the-fact}\/ by some researchers. We wrote authenticity and integrity in the same box in \cref{Figure:taxo}, and they are often used interchangeably. The whole CIA thing should probably be taken with a grain of salt. 
\sindex{non-repudiation}
\sindex{authenticity}

\section{What did we learn?}\label{Sec:Intro:Summary}

\subsection{Process properties}
\subsubsection{Dependability}
\sindex{dependability}
In \textbf{software engineering}, the requirements are expressed in terms of \emph{dependability properties} that include
\begin{itemize}
\item {\bf safety:} \emph{bad stuff (actions) does not happen}, and \sindex{safety}
\item {\bf liveness:} \emph{good stuff (actions) does happen}, \sindex{liveness}
\end{itemize}
and their logical combinations. For sequential computations, every first order property can be expressed as a conjunction of safety and liveness properties. Safety and liveness are semantically independent properties: whether one property is satisfied or not does not depend on the other property.

\subsubsection{Security}
In \textbf{security engineering}, the requirements are expressed as \emph{security properties}. 

The \textbf{resource security requirements} are expressed in terms of the \emph{access control properties}, which are the combinations of
\begin{itemize}
\item {\bf authority:}\footnote{The process of verifying authority is called \textbf{authorization}.} \emph{bad resource requests are not granted}, and
\item {\bf availability:} \emph{good resource requests are granted}. \sindex{authority} \sindex{authorization} \sindex{availability}
\end{itemize}
In distributed computation (e.g., within a computer system, or controlled by an operating system), all security requirements are conjunctions of authorization and availability requirements. These are just the \emph{local}\/ versions of safety and liveness properties: to specify authority means to specify what is considered safe for a particular user; to specify availability means to specify what kind of liveness guarantees are provided for each particular user's resource calls.

The \textbf{channel security requirements} are expressed as combinations of 
\begin{itemize}
\item {\bf secrecy:} \emph{bad data flows do not happen}; and \sindex{secrecy}
\item {\bf integrity:} \emph{good data flows do happen}. \sindex{integrity}
\end{itemize}
Since the data flows that identify the originator often require special treatment, the same channel security requirements instantiated to identifications are often called by special names:
\begin{itemize}
\item {\bf confidentiality:} \emph{bad identifications do not happen}; and \sindex{confidentiality}
\item {\bf authenticity:} \emph{good identifications do happen}. \sindex{authenticity}
\end{itemize}
But the usage varies, and confidentiality and secrecy are often bundled together, confused, or switched, as are authenticity and integrity. We have to be flexible with words but precise with concepts.

\subsubsection{Static vs dynamic} 
While safety and liveness are independent of each other and freely combined to express arbitrary dependability properties, and while the resource security properties just extend the dependability properties by subjects' identities, and thus leave authority and availability independent, and even orthogonal in a certain formal sense, the channel security properties dynamically depend on each other, and require a substantially different treatment.

In \textbf{network computation} (where computers communicate by messages), all \emph{data flow constraints} are logical combinations of secrecy and authenticity. Note, however, that the secrecy and the authenticity properties are not independent. In fact, if only data are processed (while the objects and the subjects are fixed), then
\begin{itemize}
\item every secret must be authenticated, and
\item every authentication is based on some secrets.
\end{itemize}
In more general systems, authentications can also be based on secure tokens, and on biometric properties, but establishing and maintaining secrecy still requires authentications. Table~\ref{Table:depend} compares and contrasts dependability and security properties.

\begin{table}[!ht]
\begin{center}
\begin{tabular}{|l||c|c|}
\hline
processing & dependability & {\bf security} \\
\hline \hline 
System & centralized & {\bf distributed}\\
\hline
observations & global & {\bf local} \\
\hline \hline
Environment & neutral & {\bf adversarial}\\
\hline
threats &  accidents & {\bf attacks}\\
\hline
\end{tabular}
\caption{The differences between dependability and  security}
\label{Table:depend}
\end{center}
\end{table}%

\para{Terminology.} The concept of secrecy is closely related to the concept of \emph{confidentiality}. They are not entirely synonymous in the colloquial usage (e.g., a "confidential meeting" is not the same thing as a "secret meeting"), but we will not make a distinction yet, since both require that some undesired data and information flows do not happen. 

Similarly, the concept of authenticity is closely related to the concept of \emph{integrity}. In this case, the difference in the usage is perhaps easier to pin down: authenticity of a message means that if it is signed by Bob, then it is really a message from Bob; whereas the integrity of the same message means that no part of it was altered on the way. While such distinctions are, of course, of great interest in particular analyses, in order to keep the general conceptual framework as simple as possible, we shall use both authenticity and integrity to refer to the same requirement that some desired data and information flows (e.g., that Bob is online) do happen.

\subsection{So what?} 
Don't skip the questions in the next section. There are several correct answers in some cases, and it is useful to discuss them. It may be equally easy to argue for different answers. When you need to decide which security policy to impose, you need clear and unique answers. You will need methods to unify different answers. Remembering the dimensions of security, its frame of reference discussed in this chapter, will help.

We learned that security concepts are
\begin{itemize}
\item simple and easy to understand and define, \emph{but} that they are
\item complicated and hard to reason about, protect, and implement. 
\end{itemize}
The complications arise from the ease with which we think and talk about security, and develop the narratives we repeat until we and everyone around us believe them. Until they fail, and then we develop other narratives. Our common sense is primed for compelling security arguments, and the common-sense arguments are primed to eventually fail and be replaced by other arguments. The common-sense \emph{security}\/ arguments are primed for \emph{security}\/ failures, which are usually based on deceit, which is usually based on self-deceit. \emph{That is why we need to go beyond \emph{common sense}\/ reasoning and study security by the methods of \emph{science}.}

The language that evolved to address the shortcomings of our common-sense reasoning and provide foundations for scientific testing is the language of mathematics. In the coming chapters, we take up the task of spelling out the mathematical models of security.

\def\thechapter{3}
\setchaptertoc
\chapter{Static resource security: access control and multi-level security}\label{Chap:Resource}

\section[What]{What is resource security?}

\sindex{security!static}
\sindex{access control}
\sindex{security!multi-level}
\sindex{MLS}

Recall that the two basic types of resource security requirements are
\begin{itemize}
\item {\bf authority:} \emph{bad resource requests are rejected}, and
\item {\bf availability:} \emph{good resource requests are fulfilled}.
\end{itemize}
Before studying such requirements, we describe the methods of \emph{access control} that are used to distinguish the good resource requests from the bad. But we first need to define what is a resource request.

\subsection{What is a resource?}\label{Sec:what-resource} \sindex{resource}
The typical examples of resources are the fossil fuels: coal, petroleum, and natural gas. They gave us power to race in cars, 10 times faster than we can run, and to fly in planes, higher than any bird. They store the energy from the Sun, accumulated by plants and bacteria for 550 million years, and then fossilized. We have been burning that energy for about 200 years, and we will probably burn it up in another 50 years, or less.

\emph{Resources are one-way functions: they are easy to use, but hard to get by.}\/ Burning the hydrocarbons from fossil fuels releases energy; making the hydrocarbons requires energy. Capturing that energy took 550 million years; releasing it took 250 years. Going down a hill is easier than going back up. That is the essence of a resource: it is a one-way function. The idea is illustrated in Fig.~\ref{Fig:resources} on the left and in the middle.


\begin{figure}[t!]
\begin{minipage}{.3\linewidth}
\begin{center}
\newcommand{\Resource}{\includegraphics[height=.7cm]{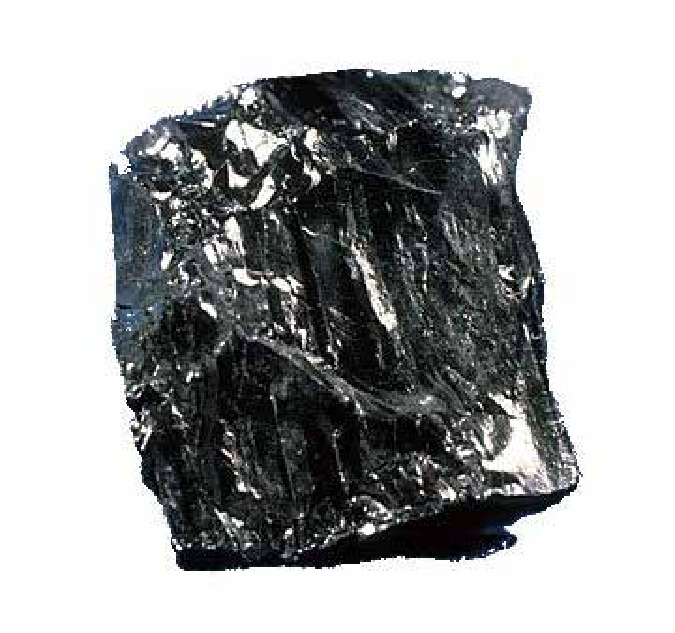}}
\newcommand{\Residue}{\includegraphics[height=.8cm]{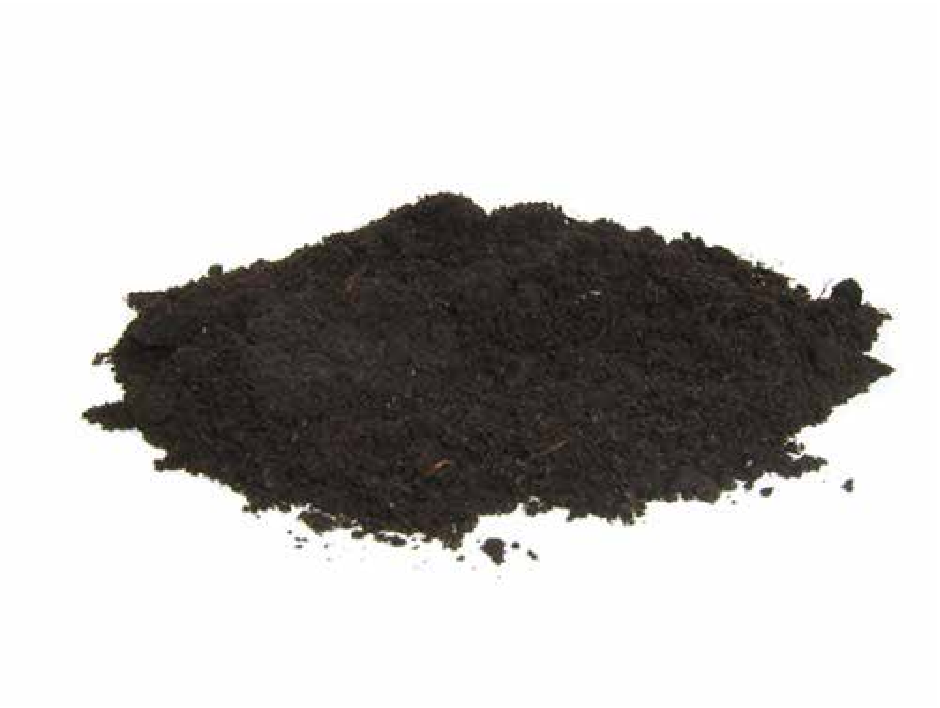}}
\newcommand{\utility}{\includegraphics[height=.85cm]{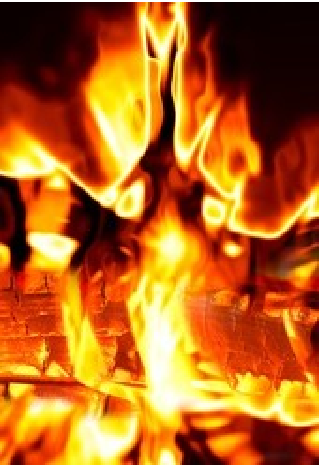}}
\newcommand{\investment}{\includegraphics[height=.85cm]{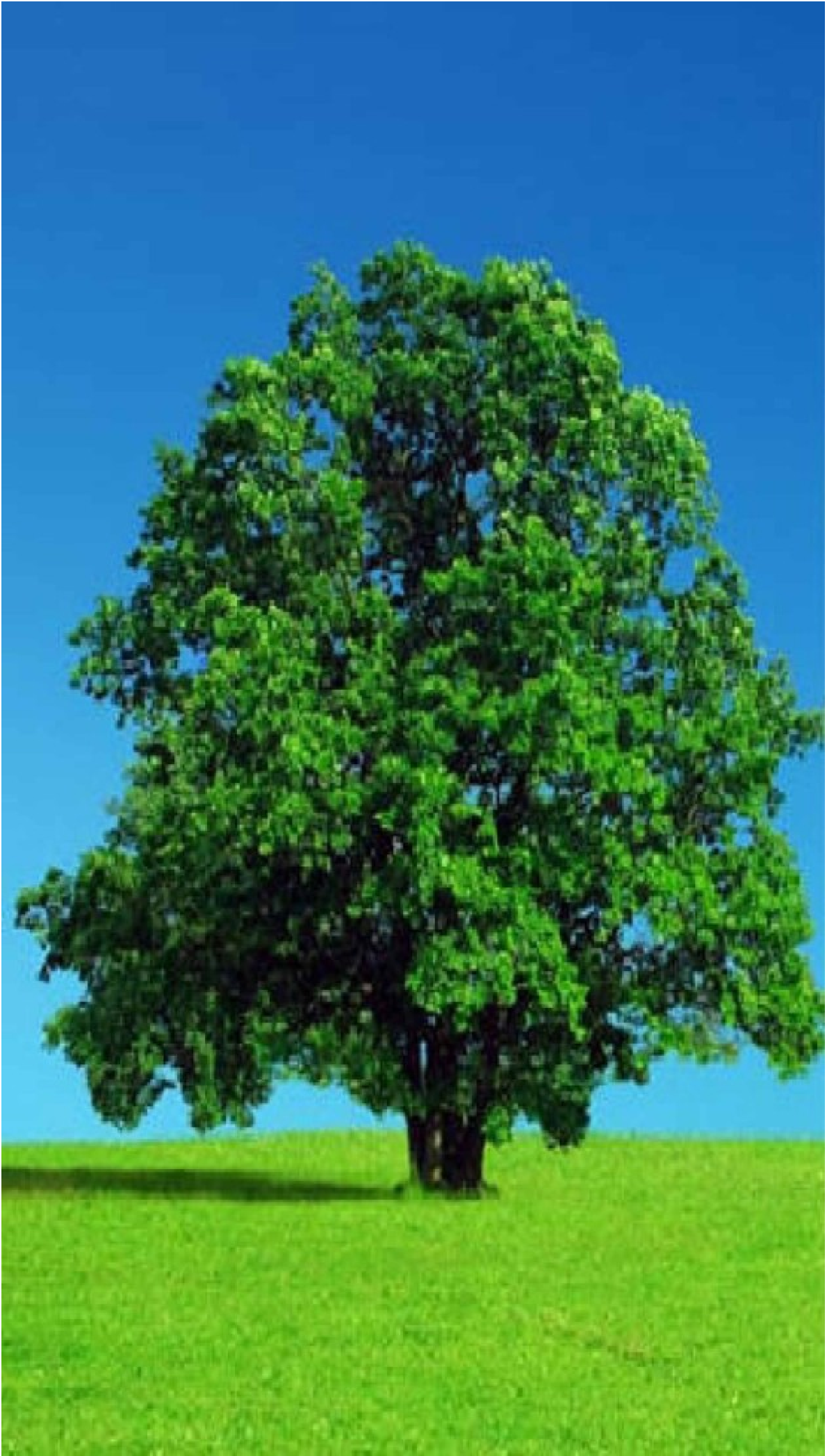}}
\newcommand{\coal}{coal}
\newcommand{\ashes}{ashes}
\newcommand{\burn}{\small burn}
\newcommand{\store}{\small store}
\def\JPicScale{.5}
\input{resource-photos.tex}

A resource is easy to use\\ but hard to come by. 
\end{center}
\end{minipage}
\begin{minipage}{.33\linewidth}
\begin{center}
\newcommand{\Resource}{Resource}
\newcommand{\Residue}{Residue}
\newcommand{\utility}{$\rightsquigarrow$ \small utility}
\newcommand{\investment}{\small investment $\rightsquigarrow$}
\def\JPicScale{.5}
\input{resource.tex}

\vspace{.75\baselineskip}
A resource provides utility\\ but requires investment.
\end{center}
\end{minipage}
\begin{minipage}{.33\linewidth}
\begin{center}
\newcommand{\Resource}{$11,213 \cdot 756,839$}
\newcommand{\Residue}{$8,486,435,707$}
\newcommand{\utility}{$\rightsquigarrow$ \small security}
\newcommand{\investment}{\small attack $\rightsquigarrow$}
\def\JPicScale{.5}
\input{resource.tex}

\vspace{.7\baselineskip}
A resource is\\ a one-way function.
\end{center}
\end{minipage}
\caption{The essence of resources.} \label{Fig:resources}
\end{figure}

Modern cryptography is based on one-way functions as computational resources. That is illustrated in the figure on the right. A one-way function is easy to compute, but hard to \emph{"uncompute"}. To compute a function $f:A\to B$ means to take an input $a$ of type $A$, and compute the output $f(a)$ of type $B$. To "uncompute" the function $f:A\to B$ means to take a value $b\in B$ and find a value $x$ of type $A$ such that $f(x) = b$. For instance, if we take the inputs to be pairs of natural numbers (nonnegative integers), i.e., $A = \NNn\times \NNn$, and multiply them, i.e., take $B = \NNn$ and set the function $f$ to be the multiplication $(\cdot ) : \NNn\times \NNn \to \NNn$, then the "uncomputing" can be construed as factoring a number $y\in B$ into primes, and partitioning them into two numbers, $x_0, x_1\in \NNn$, such that $x_0\cdot x_1 = y$. If $y$ is a product of two primes, then there is a unique way to "uncompute" the computation of $y = x_0\cdot x_1$. However, while multiplying $x_0$ and $x_1$ takes the number of computational steps proportional to the \emph{lengths}\/ of $x_0$ and $x_1$, factoring $y = x_0\cdot x_1$ requires that we somehow try to divide  $y$ with the various primes below it. This is why "uncomputing" a product of large primes is thought to be much harder than computing it, and the product is used as a one-way function. It is a computational resource of modern cryptography. (Sorry for repeating. It is sometimes useful.)

\subsection{What does it mean to secure a resource?}
A resource can be useful for many people. For example, a water well can be used by people and animals from a wide area. If it is so large that none of them can prevent others from accessing it, then they have to share it. It remains a public resource. However, if some of them can control access to the water well, then they can assert private ownership over it and claim it as an \emph{asset}.
\sindex{asset}

Resource security is \emph{access control}. It makes resources into assets. In a sense, access control is the stepping stone both into security, and into economy, which at this level boil down to the same. An asset is a resource that can be secured, owned, sold, and hence it acquires an economic value. Without security, there is no economy. However, if the cost of securing a resource is greater than its value, then claiming any ownership makes no sense. You don't spend \$200 on a lock to secure a \$20 bike. Without economy, there is no security. In a sense, economy and security are two sides of the same coin.

\section[Access Control (AC)]{Access control}\label{Sec:AC}

\sindex{access control}
\para{Example 1.} Access control had to be implemented long before there were computers and operating systems, as soon as there were some people, and they owned some goods. But things get more general in the science of security, so we will call people \emph{subjects}, and their goods \emph{objects}. This will be formalized in Def.~\ref{Def:AC}. To begin, consider our hero subjects and their valuable objects in \cref{Fig:aliceboob}: Alice had a sheep, and Bob had a jar of oil. \sindex{subjects} \sindex{objects}
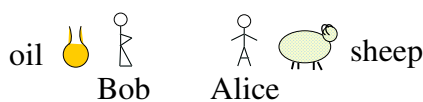
\begin{figure}[htbp]
\begin{center}
\newcommand{\sheep}{\mbox{sheep}}
\newcommand{\oil}{\mbox{oil}}
\newcommand{\alice}{\mbox{Alice}}
\newcommand{\bob}{\mbox{Bob}}
\def\JPicScale{.85}
\ifx\JPicScale\undefined\def\JPicScale{1}\fi
\psset{unit=\JPicScale mm}
\psset{linewidth=0.3,dotsep=1,hatchwidth=0.3,hatchsep=1.5,shadowsize=1,dimen=middle}
\psset{dotsize=0.7 2.5,dotscale=1 1,fillcolor=black}
\psset{arrowsize=1 2,arrowlength=1,arrowinset=0.25,tbarsize=0.7 5,bracketlength=0.15,rbracketlength=0.15}
\begin{pspicture}(0,0)(52.5,12.04)
\rput{0}(36.12,10.12){\psellipse[linewidth=0.15](0,0)(1,-1)}
\newrgbcolor{userFillColour}{0.77 0.89 0.88}
\psline[linewidth=0.15,fillcolor=userFillColour,fillstyle=solid](36.12,9.12)(36.12,6.12)
\newrgbcolor{userFillColour}{0.77 0.89 0.88}
\psline[linewidth=0.15,fillcolor=userFillColour,fillstyle=solid](36.12,6.12)(35.12,3.12)
\newrgbcolor{userFillColour}{0.77 0.89 0.88}
\psline[linewidth=0.15,fillcolor=userFillColour,fillstyle=solid](36.12,6.12)(37.12,3.12)
\newrgbcolor{userFillColour}{0.77 0.89 0.88}
\psline[linewidth=0.15,fillcolor=userFillColour,fillstyle=solid](36.12,8.12)(38.12,7.12)
\newrgbcolor{userFillColour}{0.77 0.89 0.88}
\psline[linewidth=0.15,fillcolor=userFillColour,fillstyle=solid](36.12,8.12)(34.12,7.12)
\rput{0}(17,10.38){\psellipse[linewidth=0.15](0,0)(1,-1)}
\newrgbcolor{userFillColour}{0.77 0.89 0.88}
\psline[linewidth=0.15,fillcolor=userFillColour,fillstyle=solid](17,9.38)(17,6.38)
\newrgbcolor{userFillColour}{0.77 0.89 0.88}
\psline[linewidth=0.15,fillcolor=userFillColour,fillstyle=solid](17,6.38)(16.88,3.12)
\newrgbcolor{userFillColour}{0.77 0.89 0.88}
\psline[linewidth=0.15,fillcolor=userFillColour,fillstyle=solid](17,6.38)(18.75,3.12)
\newrgbcolor{userFillColour}{0.77 0.89 0.88}
\psline[linewidth=0.15,fillcolor=userFillColour,fillstyle=solid](16.88,8.12)(18.75,7.5)
\newrgbcolor{userFillColour}{0.77 0.89 0.88}
\psline[linewidth=0.15,fillcolor=userFillColour,fillstyle=solid](18.75,7.5)(16.88,6.25)
\newrgbcolor{userFillColour}{0.77 0.89 0.88}
\newrgbcolor{userHatchColour}{1 1 0.8}
\rput{0}(45.5,6.12){\psellipse[linewidth=0.05,fillcolor=userFillColour,fillstyle=crosshatch*,hatchwidth=0.05,hatchsep=0.3,hatchcolor=userHatchColour](0,0)(4,-2.5)}
\newrgbcolor{userFillColour}{0.77 0.89 0.88}
\psline[fillcolor=userFillColour,fillstyle=solid](43.75,3.75)(43.5,2.62)
\newrgbcolor{userFillColour}{0.77 0.89 0.88}
\psline[fillcolor=userFillColour,fillstyle=solid](44.5,3.62)(44.5,2.62)
\newrgbcolor{userFillColour}{0.77 0.89 0.88}
\psline[fillcolor=userFillColour,fillstyle=solid](47.5,3.75)(48.12,2.5)
\newrgbcolor{userFillColour}{0.77 0.89 0.88}
\psline[fillcolor=userFillColour,fillstyle=solid](46.88,3.75)(46.88,2.5)
\newrgbcolor{userFillColour}{0.77 0.89 0.88}
\newrgbcolor{userHatchColour}{1 1 0.8}
\rput{90}(49.38,8.12){\psellipse[linewidth=0.05,fillcolor=userFillColour,fillstyle=crosshatch*,hatchwidth=0.05,hatchsep=0.3,hatchcolor=userHatchColour](0,0)(1.19,-1.12)}
\rput{112.29}(48.75,8.75){\psellipticarc[linewidth=0.2](0,0)(0.85,-0.58){-25.6}{109.12}}
\rput{111.93}(49.38,8.75){\psellipticarc[linewidth=0.2](0,0)(0.85,-0.59){-25.77}{108.95}}
\newrgbcolor{userFillColour}{0.77 0.89 0.88}
\rput[t](36.25,1.25){$\alice$}
\newrgbcolor{userFillColour}{0.77 0.89 0.88}
\rput[t](17.5,1.25){$\bob$}
\newrgbcolor{userFillColour}{0.77 0.89 0.88}
\rput[l](52.5,5){$\sheep$}
\rput[r](5,5){$\oil$}
\newrgbcolor{userFillColour}{1 0.8 0}
\rput{90}(10.06,4.91){\psellipticarc[linewidth=0.1,fillcolor=userFillColour,fillstyle=solid](0,0)(1.97,1.94){29.04}{324.46}}
\newrgbcolor{userFillColour}{1 0.8 0}
\rput{108.03}(8.17,8.81){\psellipticarc[linewidth=0.1,fillcolor=userFillColour,fillstyle=solid](0,0)(2.42,1){179.97}{281.38}}
\newrgbcolor{userFillColour}{1 0.8 0}
\rput{77.96}(11.88,8.75){\psellipticarc[linewidth=0.1,fillcolor=userFillColour,fillstyle=solid](0,0)(2.33,1.12){71.07}{176.01}}
\newrgbcolor{userFillColour}{0.77 0.89 0.88}
\psline[linewidth=0.15,fillcolor=userFillColour,fillstyle=solid](35.62,4.38)(36.88,4.38)
\newrgbcolor{userFillColour}{0.77 0.89 0.88}
\psline[linewidth=0.15,fillcolor=userFillColour,fillstyle=solid](16.88,3.12)(18.75,3.12)
\end{pspicture}
\caption{Subjects Alice and Bob, and their objects sheep and oil}
\label{Fig:aliceboob}
\end{center}
\end{figure}
The situation (or \emph{state}\/ of the world) $s$, where only Alice has access to sheep's wool, milk, and eventually meat, whereas only Bob has access to cooking with his oil, can be represented using a matrix like $M^s$, on the left in \cref{Fig:AliceBobStart}. \sindex{state} 
\begin{figure}[htbp]
\begin{center}
{\small \begin{tabular}{|l||l|l|}
\hline
$s$ & sheep & oil \\
 \hline\hline
Alice & \{\text{milk, wool, meat}\} 
& $\emptyset$ \\
\hline
Bob  & $\emptyset$ & \{\text{cook}\} 
\\
\hline
\end{tabular}
\hspace{1em}
{\Large $\to$}
\hspace{1em}
\begin{tabular}{|l||l|l|}
\hline
$q$ & sheep & oil \\
 \hline\hline
Alice & \{\text{milk, wool, meat}\} 
& \{\text{bottle of oil}\} \\
\hline
Bob  & \{\text{bottle of milk}\} & \{\text{cook}\} 
\\
\hline
\end{tabular}}
\caption{Alice and Bob trade their private resources}
\label{Fig:AliceBobStart}
\end{center}
\end{figure}
If Alice gives Bob a bottle of her sheep's milk in exchange for a bottle of Bob's oil, then the state $s$ changes to the state $q$,  which is represented by the matrix $M^{q}$, in \cref{Fig:AliceBobStart} on the right. 
Such representations are formalized as follows. 

\medskip
\begin{definition}\label{Def:AC} \sindex{access control!model}  \sindex{access control!types}
An \emph{access control (AC)}\/ model consists of
\begin{itemize}
\item \emph{access control types}:    
\begin{itemize}
\item \emph{objects} (or \emph{items}) $\Obj = \{i,j,\ldots\}$,
\item \emph{subjects} (or \emph{users}) $\Subj = \{s,u,\ldots\}$, and
\item \emph{actions} (or \emph{labels}) $\Act = \{a,b,\ldots\}$,
\end{itemize}
\item \emph{access control matrices} in the form $
M, B\  :\  \Subj \times \Obj  \to  \WP \Act$, 
where $\WP \Act$ is the set of subsets of $\Act$, and where
\begin{itemize} \sindex{access control!matrix}
\item $M \ =\  \left(M_{ui}\right)_{\Subj \times \Obj}$  is the \emph{permission matrix}: each entry $M_{ui}$ specifies  which actions the {\subj} $u$ is permitted to apply on the {\obj} $i$;
\item $B \ =\  \left(B_{ui}\right)_{\Subj \times \Obj}$ is the \emph{access matrix}: each entry $B_{ui}$ specifies  which actions has the {\subj} $u$ requested for the {\obj} $i$.
\end{itemize} \sindex{permission matrix} \sindex{access matrix}
\end{itemize} \sindex{actions} \sindex{access control!requirements}
The \emph{access control (AC) requirement}\/ is that for all $u\in \Subj$ and $i\in \Obj$ holds
\bea \label{eq:AC} \sindex{access control!requirements}
B_{ui} &\subseteq & M_{ui}.  
\eea
\end{definition}

\para{Remarks.} The terminological alternatives object/subject/action vs item/user/label are used in different communities that study the same phenomena from slightly different angles. There are still other alternatives. While using several terminological alternatives can be confusing, different alternatives sometimes support different intuitions, which can be valuable. We initially stick with the object/subject/action variant.  \sindex{item} \sindex{user} \sindex{label}

\para{How is access controlled?} Although the matrices $M$ and $B$ have the same types, their purposes and implementations are substantially different. The permission matrix $M$ is specified at design time by some \emph{access policy authority}, \sindex{access policy authority} e.g., the system designer or administrator. The sets of actions $M_{ui}\in \WP\Act$ are usually implemented as bitstrings, assigning each action $a\in \Act$ the value $1$ if it is permitted and the value $0$ otherwise; we write the permitted actions as subsets for convenience. The access matrix $B$ is maintained by the system itself, usually by a \emph{system monitor}. \sindex{system monitor}  Whenever a {\subj} $u\in \Subj$ issues a request to access an {\obj} $i\in \Obj$ by an action $a\in \Act$, the monitor tests whether $a\in M_{ui}$, and if the test succeeds, it adds $a$ to $B_{ui}$. The monitor thus maintains the AC requirement~\eqref{eq:AC} as a system invariant. The set $B_{ui}$ thus consists of the actions $a$ that the subject $u$ has requested for the object $i$, and the permissions were granted. An action that was requested but not permitted by $M_{ui}$, or an action that is permitted by $M_{ui}$, but not requested, will not be recorded in $B_{ui}$.\footnote{The requested actions must, of course, be recorded before the permission is issued or denied. An implementer might thus be tempted to record all requests in $B_{ui}$, and to purge them to maintain \eqref{eq:AC}. However, a separate list of all resource calls is maintained as a basic component of the operating system, not just for the purposes of access control. The access matrix $B$ is the intersection of that list of the actual calls and of the matrix $M$ of permissible calls.}

\para{Example 2.} Access control applies not only to goods but also to space, which is also a resource. When Alice and Bob build houses, like in \cref{Fig:housing} on the left, 
\begin{figure}[ht!]
\newcommand{\Carol}{}
\newcommand{\one}{\mbox{\footnotesize Uruk Lane}}
\newcommand{\two}{\mbox{\footnotesize \#31}}
\newcommand{\four}{\mbox{\footnotesize \#30}}
    \centering
    \caption{Access control of public and private spaces}
    \bigskip
    \begin{minipage}[b]{0.4\textwidth}
    \centering
\def\JPicScale{.8}
\ifx\JPicScale\undefined\def\JPicScale{1}\fi
\psset{unit=\JPicScale mm}
\psset{linewidth=0.3,dotsep=1,hatchwidth=0.3,hatchsep=1.5,shadowsize=1,dimen=middle}
\psset{dotsize=0.7 2.5,dotscale=1 1,fillcolor=black}
\psset{arrowsize=1 2,arrowlength=1,arrowinset=0.25,tbarsize=0.7 5,bracketlength=0.15,rbracketlength=0.15}
\begin{pspicture}(0,0)(55,40)
\newrgbcolor{userFillColour}{0.87 0.76 0.76}
\pspolygon[fillcolor=userFillColour,fillstyle=solid](0,40)(30,40)(30,20)(0,20)
\newrgbcolor{userFillColour}{0.73 0.84 0.87}
\pspolygon[fillcolor=userFillColour,fillstyle=solid](45,40)(55,40)(55,0)(45,0)
\newrgbcolor{userFillColour}{0.77 0.89 0.88}
\rput[br](29.38,20.62){$\two$}
\newrgbcolor{userFillColour}{0.77 0.89 0.88}
\rput[br](54,1){$\four$}
\rput{0}(20.75,34.12){\psellipse[linewidth=0.15](0,0)(1,-1)}
\newrgbcolor{userFillColour}{0.77 0.89 0.88}
\psline[linewidth=0.15,fillcolor=userFillColour,fillstyle=solid](20.75,33.12)(20.75,30.12)
\newrgbcolor{userFillColour}{0.77 0.89 0.88}
\psline[linewidth=0.15,fillcolor=userFillColour,fillstyle=solid](20.75,30.12)(20.62,26.88)
\newrgbcolor{userFillColour}{0.77 0.89 0.88}
\psline[linewidth=0.15,fillcolor=userFillColour,fillstyle=solid](20.75,30.12)(22.5,26.88)
\newrgbcolor{userFillColour}{0.77 0.89 0.88}
\psline[linewidth=0.15,fillcolor=userFillColour,fillstyle=solid](20.62,31.88)(22.5,31.25)
\newrgbcolor{userFillColour}{0.77 0.89 0.88}
\psline[linewidth=0.15,fillcolor=userFillColour,fillstyle=solid](22.5,31.25)(20.62,30)
\newrgbcolor{userFillColour}{0.77 0.89 0.88}
\psline[linewidth=0.15,fillcolor=userFillColour,fillstyle=solid](20.62,26.88)(22.5,26.88)
\rput{0}(38.62,34.5){\psellipse[linewidth=0.15](0,0)(1,-1)}
\newrgbcolor{userFillColour}{0.77 0.89 0.88}
\psline[linewidth=0.15,fillcolor=userFillColour,fillstyle=solid](38.62,33.5)(38.62,30.5)
\newrgbcolor{userFillColour}{0.77 0.89 0.88}
\psline[linewidth=0.15,fillcolor=userFillColour,fillstyle=solid](38.62,30.5)(37.62,27.49)
\newrgbcolor{userFillColour}{0.77 0.89 0.88}
\psline[linewidth=0.15,fillcolor=userFillColour,fillstyle=solid](38.62,30.5)(39.62,27.49)
\newrgbcolor{userFillColour}{0.77 0.89 0.88}
\psline[linewidth=0.15,fillcolor=userFillColour,fillstyle=solid](38.62,32.5)(40.62,31.5)
\newrgbcolor{userFillColour}{0.77 0.89 0.88}
\psline[linewidth=0.15,fillcolor=userFillColour,fillstyle=solid](38.62,32.5)(36.62,31.5)
\newrgbcolor{userFillColour}{0.77 0.89 0.88}
\psline[linewidth=0.15,fillcolor=userFillColour,fillstyle=solid](38.12,28.75)(39.38,28.75)
\newrgbcolor{userFillColour}{0.77 0.89 0.88}
\rput[bl](0,1.25){$\one$}
\rput{0}(31.12,15.75){\psellipse[linewidth=0.15](0,0)(1,-1)}
\newrgbcolor{userFillColour}{0.77 0.89 0.88}
\psline[linewidth=0.15,fillcolor=userFillColour,fillstyle=solid](31.12,14.75)(31.12,11.75)
\newrgbcolor{userFillColour}{0.77 0.89 0.88}
\psline[linewidth=0.15,fillcolor=userFillColour,fillstyle=solid](31.12,11.75)(30.12,8.74)
\newrgbcolor{userFillColour}{0.77 0.89 0.88}
\psline[linewidth=0.15,fillcolor=userFillColour,fillstyle=solid](31.12,11.75)(32.12,8.74)
\newrgbcolor{userFillColour}{0.77 0.89 0.88}
\psline[linewidth=0.15,fillcolor=userFillColour,fillstyle=solid](31.25,12.5)(32.5,11.88)
\newrgbcolor{userFillColour}{0.77 0.89 0.88}
\psline[linewidth=0.15,fillcolor=userFillColour,fillstyle=solid](32.5,14.38)(31.25,13.75)
\newrgbcolor{userFillColour}{0.77 0.89 0.88}
\psline[linewidth=0.15,fillcolor=userFillColour,fillstyle=solid](32.5,14.38)(33.75,13.75)
\newrgbcolor{userFillColour}{0.77 0.89 0.88}
\psline[linewidth=0.15,fillcolor=userFillColour,fillstyle=solid](33.75,12.5)(32.5,11.88)
\rput(31.25,6.25){\Carol}
\end{pspicture} 
\\[2ex]
Architecture
    \end{minipage}%
    ~ 
    \begin{minipage}[b]{0.5\textwidth}
\begin{center}
{\small \begin{tabular}{|c||c|c|c|}
\hline
$s$ & House \#30 & House \#31 & Uruk Lane \\
 \hline\hline
Alice & \{occupy\} & $\emptyset$ & \{use\} \\
\hline
Bob  & $\emptyset$ &  \{occupy\} & \{use\}
\\
\hline
Carol  & $\emptyset$ &  $\emptyset$ & \{use\}
\\
\hline
\end{tabular}}
\\[8ex]
Permissions
\end{center}
    \end{minipage}
        \label{Fig:housing}
\end{figure}
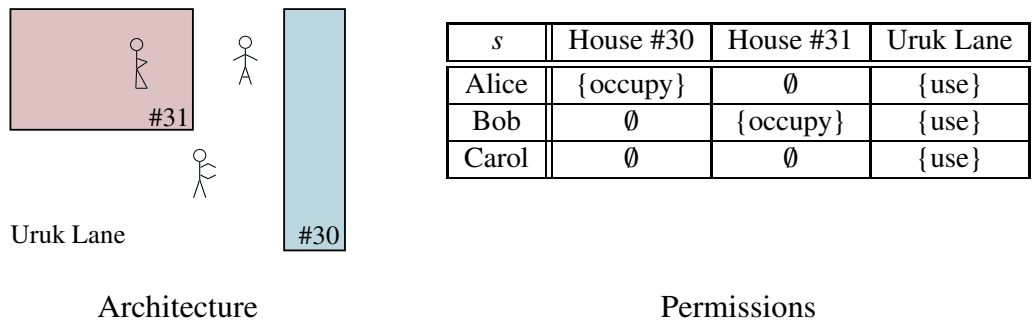
then the separation of the private space inside their houses and the public space that they share is an instance of access control. The permission matrix is displayed in \cref{Fig:housing} on the right. The public space is accessible to all public, represented by Carol, who does not own a house in the neighborhood.

\para{Matchstick names.} To minimize clutter, we avoid writing out subjects' names in the diagrams, but include their initials in their bodies: Alice's legs form an A, Carol's arms form a C, and Bob uses his arms and legs to form a B.

\para{Exercise.} Specify permission matrices that describe the states where Alice invites Bob and Carol to her house.

\para{Remark.} Note that the current locality of a subject cannot be directly expressed using permission matrices alone. For example, the situation in \cref{Fig:housing}, where Alice is not in her private home but in the public space of Uruk Lane, cannot be expressed. Location specifications will be introduced in Sec.~\ref{Sec:MLS}.

\subsection{Access control in computer security}
It is easy to see that each of the AC matrices is, in fact, a ternary relation, as there is a function
\[
\prooftree
\Subj \times \Obj \tto R \WP \Act
\justifies
\widehat R \in \WP(\Subj \times \Obj \times \Act)
\endprooftree
\] 
This means that the size of AC matrices grows linearly with the product of the number of \subjss, \objss, and actions. This presents an implementation problem, since the access matrix $B$ needs to be updated, and compared with the permission matrix $M$, at each access, potentially in each clock cycle. So a direct implementation of abstract access control is impractical.

The space of practical implementation strategies is spanned between the two extremal approaches:
\begin{itemize} \sindex{access control!lists} \sindex{access control!capability-based}
\item \textbf{access control lists (ACLs):}\ \ \ $\Obj \to \WP(\Act \times \Subj)$
\item \textbf{capability-based AC:}\hspace{3em} $\Subj \to \WP (\Act \times \Obj)$
\end{itemize}
In both cases, the AC matrices are thus decentralized: in the former case, the rows of the permission matrix $M$ are distributed among the \objs of the system as the ACLs, one for each $i\in \Obj$; in the latter case, the columns of $M$ are distributed among the \subjs as capabilities, one for each $u\in \Subj$. The access matrix $B$ is not stored at all: each access request of $u$ for $i$ is checked as it comes, be it against $u$'s capability, or against $i$'s ACL.

The ACL-based approach was first implemented in UNIX. \sindex{access control!UNIX} This is the common-sense approach, since the number of \objs in $\Obj$ is usually much greater than the number of \subjs in $\Subj$ or actions in $\Act$. The fact that the datatype $\WP(\Act \times \Subj)$ still appears exponential is resolved by reducing $\Act$ to some standard access tools. In UNIX, they are just {\tt r}ead, {\tt w}rite and e{\tt  x}ecute, and the Windows operating systems advanced the state of the art by extending this set to up to $7$ elements. The ACL entries for \subjs are also reduced to {\tt u}ser, {\tt o}thers, and {\tt g}roup. The \emph{user}\/ is $u\in \Subj$ extracted from the login, the entry \emph{others}\/ accommodates all of $\Subj$, and the rest of the $\WP \Subj$ are the possible groups. The impractical part of the ACLs is thus deferred to the group management, which is the task of the system manager.  Problem solved.

The capability-based approaches go back to Symbian, which was Ericson's early operating system for mobile phones. The development was driven by the fact that the early mobile devices were not powerful enough to manage ACLs. For different reasons, a version of capability-based AC was adopted in SELinux, and that AC model was adopted in Android and then modified many times. It is not easy to tell whether there is still a capability-based model. 

\subsection{Implicit clearance and classification}\label{Sec:implicit}
Intuitively, Alice's authority is at least as great as Bob's authority if she can do anything he can do. Such comparisons can be derived from any AC model as a \emph{clearance}\/ preordering of {\subjs}, defined by \sindex{clearance}
\bear
u\leq v & \iff & \forall i\in \Obj.\  M_{ui}\subseteq M_{vi}
\eear
for all $u,v\in \Subj$. It is easy to see that $\leq$ is a reflexive and transitive relation on $\Subj$. It is not antisymmetric, because different {\subjs} can have the same authority, i.e., there may be $u\neq v$ with $M_{ui}= M_{vi}$, and thus $u\leq v$ and $u\geq v$.

An analogous \emph{classification}\/ preordering of {\objs} can be derived similarly, except that it is more meaningful to use access requests than permissions. For example, oil is at least as classified as the sheep if anyone who requests the sheep must also request oil, i.e., \sindex{classification}
\bear
i\leq j & \iff & \forall u\in \Subj.\  B_{ui}\subseteq B_{uj}
\eear

In this way, an AC model implicitly assigns some clearance levels to the \subjs and some classification levels to the \objss. It is often more convenient to specify access policies by making such security levels explicit.

\section[Multi-Level Security (MLS)]{Multi-level security}\label{Sec:MLS}

While the AC models have been formalized to facilitate access control in operating systems, the multi-level security (MLS)models are familiar from the various military, diplomatic, industrial, and banking security systems. There, the access to resources is controlled by assigning different \emph{security classification}\/  \sindex{security!classification} levels to the \sindex{locality}\sindex{clearance}\sindex{classification} objects and different \emph{security clearance}\/ \sindex{security!clearance} levels to the subjects. While these two security frameworks have quite different origins, and different formalisms, they turn out to be logically equivalent. In Sec.~\ref{Sec:implicit}, \sindex{access policy} we saw that every AC model contains an implicit MLS model. In this section, we shall show how to reduce any given MLS model to an AC model. 

\subsection{Explicit clearance and classification}\label{Sec:explicit}
\newcommand{\one}{\ell_1}
\newcommand{\two}{\ell_2}
\newcommand{\three}{\ell_3}
\newcommand{\four}{\ell_4}
\newcommand{\five}{\ell_5}
In the meantime, as the Neolithic broke out, Bob built some secure vaults to store his assets, with a high-security chamber where he stored his crate of oil. This is displayed in \cref{Fig:vault}. Alice and her sheep remained outside at the lowest security level, denoted $\one$.
\begin{figure}[ht!]
\newcommand{\alice}{\mbox{\footnotesize Alice}}
\newcommand{\bob}{\mbox{\footnotesize Bob}}
\newcommand{\sheep}{\mbox{\footnotesize sheep}}
\newcommand{\oil}{\mbox{\footnotesize oil}}
    \centering
    \caption{Aspects of multi-level security (MLS)}
    \bigskip
    \begin{minipage}[b]{0.3\textwidth}
    \centering
\def\JPicScale{.8}
\ifx\JPicScale\undefined\def\JPicScale{1}\fi
\psset{unit=\JPicScale mm}
\psset{linewidth=0.3,dotsep=1,hatchwidth=0.3,hatchsep=1.5,shadowsize=1,dimen=middle}
\psset{dotsize=0.7 2.5,dotscale=1 1,fillcolor=black}
\psset{arrowsize=1 2,arrowlength=1,arrowinset=0.25,tbarsize=0.7 5,bracketlength=0.15,rbracketlength=0.15}
\begin{pspicture}(0,0)(55,40)
\newrgbcolor{userFillColour}{0.87 0.76 0.76}
\pspolygon[fillcolor=userFillColour,fillstyle=solid](0,40)(30,40)(30,20)(0,20)
\newrgbcolor{userFillColour}{0.84 0.84 0.64}
\pspolygon[fillcolor=userFillColour,fillstyle=solid](0,0)(10,0)(10,30)(0,30)
\newrgbcolor{userFillColour}{0.64 0.52 0.35}
\pspolygon[fillcolor=userFillColour,fillstyle=solid](0,20)(10,20)(10,30)(0,30)
\newrgbcolor{userFillColour}{0.73 0.84 0.87}
\pspolygon[fillcolor=userFillColour,fillstyle=solid](45,40)(55,40)(55,0)(45,0)
\newrgbcolor{userFillColour}{0.77 0.89 0.88}
\rput[br](29.38,0.62){$\one$}
\newrgbcolor{userFillColour}{0.77 0.89 0.88}
\rput[br](29.38,20.62){$\two$}
\newrgbcolor{userFillColour}{0.77 0.89 0.88}
\rput[bl](0.62,0.62){$\three$}
\newrgbcolor{userFillColour}{0.77 0.89 0.88}
\rput[br](54,1){$\four$}
\newrgbcolor{userFillColour}{0.77 0.89 0.88}
\newrgbcolor{userHatchColour}{1 1 0.8}
\rput{0}(31,9.75){\psellipse[linewidth=0.05,fillcolor=userFillColour,fillstyle=crosshatch*,hatchwidth=0.05,hatchsep=0.3,hatchcolor=userHatchColour](0,0)(4,-2.49)}
\newrgbcolor{userFillColour}{0.77 0.89 0.88}
\psline[fillcolor=userFillColour,fillstyle=solid](29.25,7.38)(29,6.24)
\newrgbcolor{userFillColour}{0.77 0.89 0.88}
\psline[fillcolor=userFillColour,fillstyle=solid](30,7.25)(30,6.24)
\newrgbcolor{userFillColour}{0.77 0.89 0.88}
\psline[fillcolor=userFillColour,fillstyle=solid](33,7.38)(33.62,6.12)
\newrgbcolor{userFillColour}{0.77 0.89 0.88}
\psline[fillcolor=userFillColour,fillstyle=solid](32.38,7.38)(32.38,6.12)
\newrgbcolor{userFillColour}{0.77 0.89 0.88}
\newrgbcolor{userHatchColour}{1 1 0.8}
\rput{90}(34.88,11.75){\psellipse[linewidth=0.05,fillcolor=userFillColour,fillstyle=crosshatch*,hatchwidth=0.05,hatchsep=0.3,hatchcolor=userHatchColour](0,0)(1.19,-1.12)}
\rput{112.29}(34.25,12.37){\psellipticarc[linewidth=0.2](0,0)(0.85,-0.58){-25.6}{109.12}}
\rput{111.93}(34.88,12.37){\psellipticarc[linewidth=0.2](0,0)(0.85,-0.59){-25.77}{108.95}}
\newrgbcolor{userFillColour}{0.77 0.89 0.88}
\rput[bl](0.62,20.62){$\five$}
\newrgbcolor{userFillColour}{1 0.8 0}
\rput{90}(5.06,24.91){\psellipticarc[linewidth=0.1,fillcolor=userFillColour,fillstyle=solid](0,0)(1.97,1.94){29.04}{324.46}}
\newrgbcolor{userFillColour}{1 0.8 0}
\rput{108.03}(3.17,28.81){\psellipticarc[linewidth=0.1,fillcolor=userFillColour,fillstyle=solid](0,0)(2.42,1){179.97}{281.38}}
\newrgbcolor{userFillColour}{1 0.8 0}
\rput{77.96}(6.88,28.75){\psellipticarc[linewidth=0.1,fillcolor=userFillColour,fillstyle=solid](0,0)(2.33,1.12){71.07}{176.01}}
\rput{0}(20.75,34.12){\psellipse[linewidth=0.15](0,0)(1,-1)}
\newrgbcolor{userFillColour}{0.77 0.89 0.88}
\psline[linewidth=0.15,fillcolor=userFillColour,fillstyle=solid](20.75,33.12)(20.75,30.12)
\newrgbcolor{userFillColour}{0.77 0.89 0.88}
\psline[linewidth=0.15,fillcolor=userFillColour,fillstyle=solid](20.75,30.12)(20.62,26.88)
\newrgbcolor{userFillColour}{0.77 0.89 0.88}
\psline[linewidth=0.15,fillcolor=userFillColour,fillstyle=solid](20.75,30.12)(22.5,26.88)
\newrgbcolor{userFillColour}{0.77 0.89 0.88}
\psline[linewidth=0.15,fillcolor=userFillColour,fillstyle=solid](20.62,31.88)(22.5,31.25)
\newrgbcolor{userFillColour}{0.77 0.89 0.88}
\psline[linewidth=0.15,fillcolor=userFillColour,fillstyle=solid](22.5,31.25)(20.62,30)
\newrgbcolor{userFillColour}{0.77 0.89 0.88}
\psline[linewidth=0.15,fillcolor=userFillColour,fillstyle=solid](20.62,26.88)(22.5,26.88)
\rput{0}(38.62,34.5){\psellipse[linewidth=0.15](0,0)(1,-1)}
\newrgbcolor{userFillColour}{0.77 0.89 0.88}
\psline[linewidth=0.15,fillcolor=userFillColour,fillstyle=solid](38.62,33.5)(38.62,30.5)
\newrgbcolor{userFillColour}{0.77 0.89 0.88}
\psline[linewidth=0.15,fillcolor=userFillColour,fillstyle=solid](38.62,30.5)(37.62,27.49)
\newrgbcolor{userFillColour}{0.77 0.89 0.88}
\psline[linewidth=0.15,fillcolor=userFillColour,fillstyle=solid](38.62,30.5)(39.62,27.49)
\newrgbcolor{userFillColour}{0.77 0.89 0.88}
\psline[linewidth=0.15,fillcolor=userFillColour,fillstyle=solid](38.62,32.5)(40.62,31.5)
\newrgbcolor{userFillColour}{0.77 0.89 0.88}
\psline[linewidth=0.15,fillcolor=userFillColour,fillstyle=solid](38.62,32.5)(36.62,31.5)
\newrgbcolor{userFillColour}{0.77 0.89 0.88}
\psline[linewidth=0.15,fillcolor=userFillColour,fillstyle=solid](38.12,28.75)(39.38,28.75)
\end{pspicture} \\[2ex]
Architecture
    \end{minipage}%
    ~ 
    \begin{minipage}[b]{0.3\textwidth}
\begin{center}
\def\JPicScale{1}
\ifx\JPicScale\undefined\def\JPicScale{1}\fi
\psset{unit=\JPicScale mm}
\psset{linewidth=0.3,dotsep=1,hatchwidth=0.3,hatchsep=1.5,shadowsize=1,dimen=middle}
\psset{dotsize=0.7 2.5,dotscale=1 1,fillcolor=black}
\psset{arrowsize=1 2,arrowlength=1,arrowinset=0.25,tbarsize=0.7 5,bracketlength=0.15,rbracketlength=0.15}
\begin{pspicture}(0,0)(36,26)
\newrgbcolor{userFillColour}{0.77 0.89 0.88}
\rput(20,1){$\one$}
\newrgbcolor{userFillColour}{0.77 0.89 0.88}
\rput(5,15){$\two$}
\newrgbcolor{userFillColour}{0.77 0.89 0.88}
\rput(20,15){$\three$}
\newrgbcolor{userFillColour}{0.77 0.89 0.88}
\rput(36,15){$\four$}
\newrgbcolor{userFillColour}{0.77 0.89 0.88}
\rput(12,26){$\five$}
\newrgbcolor{userFillColour}{0.77 0.89 0.88}
\psline[fillcolor=userFillColour,fillstyle=solid](20,12)(20,4)
\newrgbcolor{userFillColour}{0.77 0.89 0.88}
\psline[fillcolor=userFillColour,fillstyle=solid](35,13)(23,4)
\newrgbcolor{userFillColour}{0.77 0.89 0.88}
\psline[fillcolor=userFillColour,fillstyle=solid](6,13)(17,4)
\newrgbcolor{userFillColour}{0.77 0.89 0.88}
\psline[fillcolor=userFillColour,fillstyle=solid](11,24)(6,17)
\newrgbcolor{userFillColour}{0.77 0.89 0.88}
\psline[fillcolor=userFillColour,fillstyle=solid](14,24)(19,17)
\end{pspicture}\\[2ex]
Security levels
\end{center}
    \end{minipage}
    ~
        \begin{minipage}[b]{0.3\textwidth}
\begin{center}
\begin{tabular}{|c||c|c|}
\hline
$q_0$ & $\loc^{q_0}$ & $\clr^{q_0}$  \\
 \hline\hline
$\alice$ & $\one$ & $\four$ \\
\hline
$\bob$  & $\two$ & $\five$ \\
\hline
\cline{1-2}
$\sheep$ & $\one$  \\
\cline{1-2}
$\oil$  & $\five$ \\
\cline{1-2}
\end{tabular} \\[2ex]
Security assignments
\end{center}
    \end{minipage}
        \label{Fig:vault}
\end{figure}
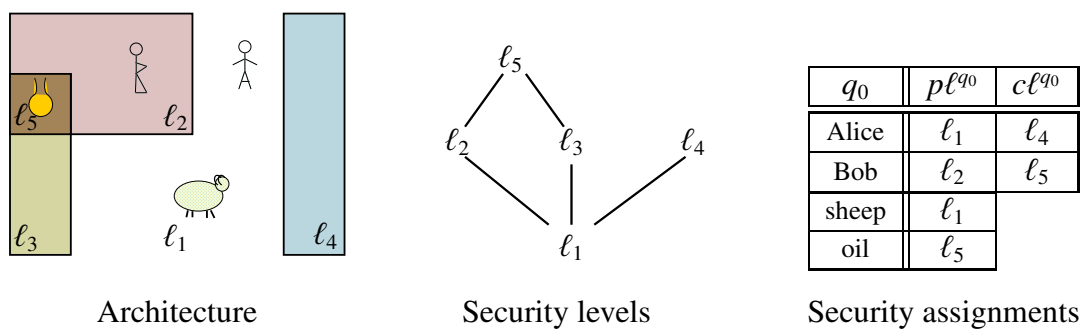

The location of each \subjj and each \objj is expressed by the function $\locc: \Subj \cup \Obj \to \Levels$, which we tabulated in the same diagram. The current location of each \objj expresses its security \emph{classification}. Authorized \subjss can change their classification level by moving it to another security level, provided that they are cleared to access that level themselves. The \emph{clearance}\/ level of each \subjj is specified by the function $\clr: \Subj \to \Levels$, also tabulated in the diagram. A \subjj\ $u$ is cleared to access the security levels up to its clearance level $\clr_u$. This is formalized in the following definition.

\bigskip
\begin{definition}\label{Def:MLS} \sindex{security!multi-level} \sindex{MLS} \sindex{security!levels} \sindex{security!clearance} \sindex{security!location}
A \emph{multi-level security (MLS)}\/ model consists of:
\begin{itemize}
\item security types $\Subj$ and $\Obj$ like in Def.~\ref{Def:AC}, and moreover, the type of 
\begin{itemize}
\item \emph{security levels} $\Levels = \{\ell_1, \ell_2,\ldots \}$, partially ordered by $\leq$,
\end{itemize}
\item two \emph{security assignment functions}:
\begin{itemize}
\item \emph{clearance}\/ $\clr: \Subj \to \Levels$, and 
\item \emph{location}\/ $\loc: \Subj \cup \Obj \to \Levels$.
\end{itemize}
\end{itemize}
The \emph{multi-level security (MLS) requirement}\/ is that for all $u\in \Subj$ holds
\bea \label{eq:MLS}
\loc_u &\leq & \clr_u.
\eea
\end{definition}

For the Neolithic example above, the MLS requirement  means that Alice cannot enter Bob's vault because she is not cleared for anything above the level $\ell_1$.

While AC models and MLS models appear quite different and are used for different purposes, in Sec.~\ref{Sec:implicit}, we saw that every AC model contains an implicit MLS model. 
Thm.~\ref{Thm:AC-MLS} below tells that any AC model can, in fact, be expressed as 
an MLS model. The price to be paid is that capturing the distinct access control permissions for distinct \objs in terms of security levels may require lots of security levels.  Further down, Thm.~\ref{Thm:MLS-AC} will show that any MLS model, even extended by additional authority requirements, can be reduced to an AC model. 

\bigskip
\begin{theorem}\label{Thm:AC-MLS}
Given an AC model, let the type of security levels $\Levels$ be the set of functions $\ell_0, \ell_1,\ldots, : \Obj \to \WP\Act$, ordered by pointwise inclusion, i.e.
\[
\Levels \ = \ \left(\WP\Act\right)^\Obj \qquad \mbox{ with the partial order } \qquad \ell_0 \leq \ell_1  \iff  \forall i\in \Obj.\ \ell_0(i) \subseteq \ell_1(i).
\] 
Define the corresponding MLS model by:
\begin{align*}
\clr\ : \Subj & \to  \left(\WP\Act\right)^\Obj & \loc\ : \Subj & \to  \left(\WP\Act\right)^\Obj\\
u & \longmapsto  \clr_u(i) = M_{ui} & u & \longmapsto  \loc_u(i) = B_{ui}. 
\end{align*}
The MLS requirement for this MLS model is satisfied if and only if the AC requirement was satisfied for the original AC model. This means that for every {\subj} $u\in \Subj$ holds
\bear
\loc_u \leq \clr_u & \iff & \forall i\in \Obj.\ B_{ui}\subseteq M_{ui}.
\eear
\end{theorem}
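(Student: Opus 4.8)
The statement is essentially a definitional unwinding, so the plan is to first check that the proposed data genuinely constitute an MLS model in the sense of Def.~\ref{Def:MLS}, and then transport the pointwise order through the definitions of $\clr$ and $\loc$.

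First I would verify that $\Levels = \left(\WP\Act\right)^\Obj$ with the pointwise inclusion order is a genuine partial order: reflexivity, transitivity, and antisymmetry all hold componentwise because they hold in each factor $\WP\Act$ ordered by $\subseteq$, and a pointwise-defined order on a product of posets is again a poset. Hence $\Levels$ is a legitimate type of security levels. The assignments $u\mapsto \clr_u$ and $u\mapsto \loc_u$ are well-defined functions $\Subj \to \Levels$ since $i\mapsto M_{ui}$ and $i\mapsto B_{ui}$ are elements of $\left(\WP\Act\right)^\Obj$. The location function that Def.~\ref{Def:MLS} requires on all of $\Subj \cup \Obj$ may be taken to agree with $\loc$ on $\Subj$ and extended to $\Obj$ arbitrarily (for instance via the classification preordering of Sec.~\ref{Sec:implicit}), since the MLS requirement~\eqref{eq:MLS} quantifies only over subjects and so is insensitive to this choice.

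The core step is then immediate: for a fixed $u\in\Subj$, by the definition of the pointwise order on $\Levels$ and then substitution of $\loc_u(i) = B_{ui}$ and $\clr_u(i) = M_{ui}$,
\[
\loc_u \leq \clr_u \iff \forall i\in\Obj.\ \loc_u(i)\subseteq \clr_u(i) \iff \forall i\in\Obj.\ B_{ui}\subseteq M_{ui},
\]
which is exactly the displayed biconditional in the theorem. Finally, quantifying over $u$, the MLS requirement~\eqref{eq:MLS} for this model reads $\forall u\in\Subj.\ \loc_u \leq \clr_u$, equivalently $\forall u\in\Subj.\ \forall i\in\Obj.\ B_{ui}\subseteq M_{ui}$, i.e.\ the AC requirement~\eqref{eq:AC} for the original model.

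There is no substantive obstacle: the argument is a rewriting of definitions. The only points that deserve a word of care are the bookkeeping ones flagged above — confirming that the pointwise order is a bona fide partial order (not merely a preorder), so the construction legitimately produces an instance of Def.~\ref{Def:MLS}, and remarking that Def.~\ref{Def:MLS} asks for $\loc$ on objects as well, which is harmless here. Everything else is a one-line unwinding of the pointwise order.
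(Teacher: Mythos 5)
Your proof is correct and is exactly the definitional unwinding the paper intends --- the text explicitly leaves this proof as an easy exercise, and your core step (transporting the pointwise order on $\Levels = \left(\WP\Act\right)^\Obj$ through the definitions of $\clr_u$ and $\loc_u$) is the whole argument. Your side remarks, that the pointwise order is a genuine partial order and that the location assignment on $\Obj$ required by Def.~\ref{Def:MLS} can be chosen arbitrarily without affecting requirement~\eqref{eq:MLS}, are sensible bookkeeping and do not change the equivalence.
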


The proof is easy, and it is left as an exercise.

\subsection{Reading and writing}\label{Sec:declassification}
The \obj classifications and the \subj clearances are established and managed by authorized \subjss. This is, in a sense, where security \emph{as a process}\/ begins.  
Classifying an \obj and requiring a clearance from a \subj are the most basic security operations. Declassifying an \obj or increasing a \subjj's clearance level are the simplest ways to relax security. Many different views of the problems of clearance and declassification, arising in many different contexts where these operations are needed, led to diverse but closely related security models and architectures, from which the early security research emerged \cite{Bell-Lapadula,Clark-Wilson,Biba,HRU,ChineseWall}. The process of classifying and declassifying \objs is modeled in terms of two basic actions which we write $\wt$ and $\rd$. Since most security models have been concerned with data, these are usually called \emph{write}\/ and \emph{read}; but the formalism applies to other action/reaction couples, such as give/take, and send/receive. It is often convenient and sometimes fun to call them all \emph{write/read}. The formal models in this area of security research largely subsume under the same structure that we will present, although the application domains vary vastly, and the interpretations are often genuinely different. We stick with a Neolithic one.

Suppose that Alice is leaving for a vacation and wants to "deposit" her sheep in Bob's secure vault (a Neolithic bank of sorts). Since Alice is not cleared to enter the secure vault, she \emph{gives}\/ the sheep to Bob, and he \emph{takes}\/ it to  a higher security level. When Alice returns from the vacation, Bob has to come out of the vault to return the sheep to the lower security level. This security process evolves through the states $q_0 \to q_1\to q_2\to q_3\to q_0$, displayed in \cref{Figure:sheep-machine}.

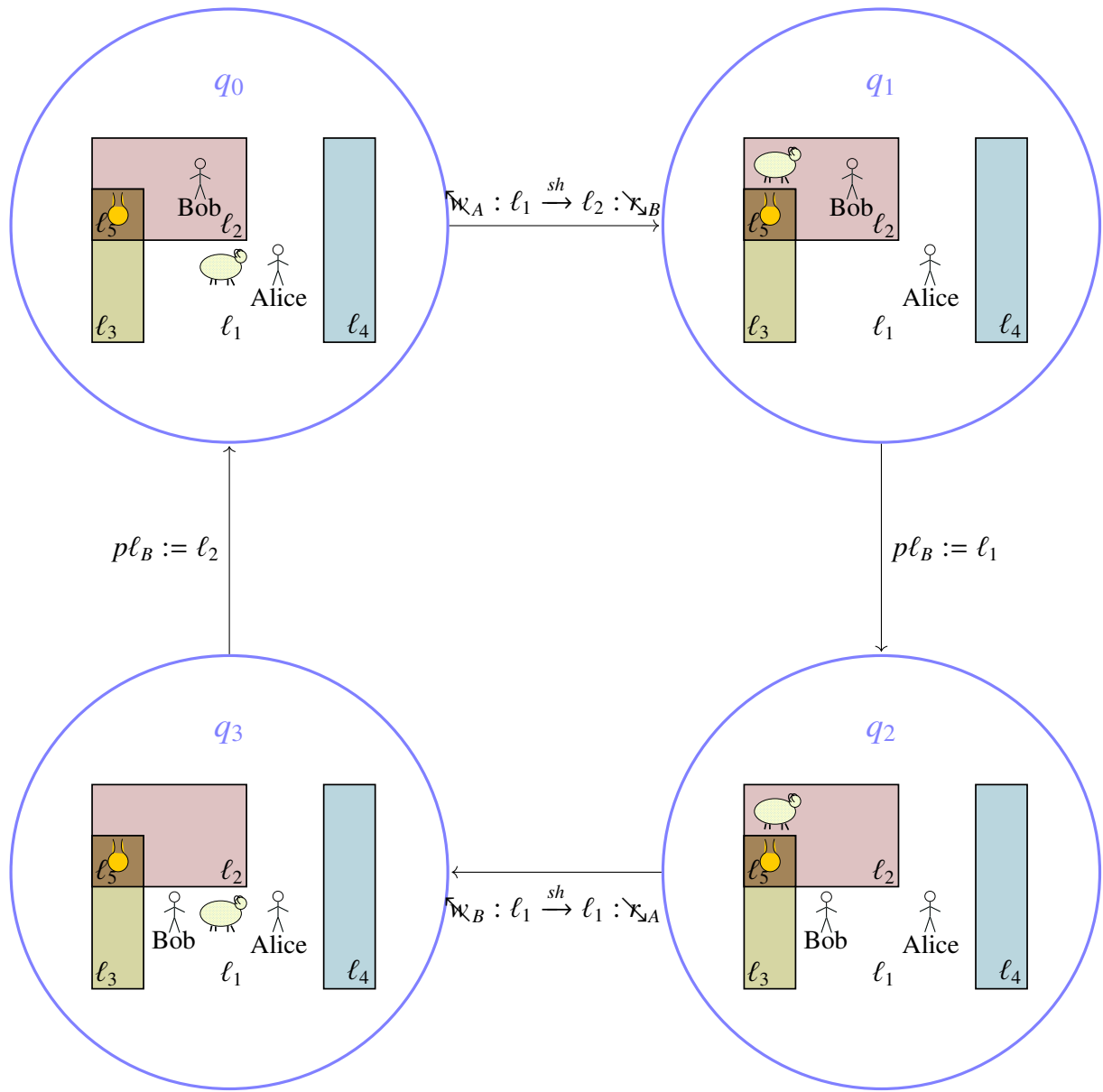
\begin{figure}[!ht]
\newcommand{\alice}{\mbox{\small Alice}}
\newcommand{\bob}{\mbox{\small Bob}}
\begin{center}
\begin{tikzpicture}[shorten >=1pt,node distance=9.5cm,on grid]
\tikzstyle{every state}=[draw=blue!50,very thick
]
\tikzstyle{initial}= [initial by arrow,initial text=]
\tikzstyle{every node}=[font=\normalsize]
  \node[state]   (ul)                {\begin{minipage}{4cm}
  \centering
  \mbox{\color{blue!50}\large $q_0$} \\[3ex]
  \def\JPicScale{.75}
\ifx\JPicScale\undefined\def\JPicScale{1}\fi
\psset{unit=\JPicScale mm}
\psset{linewidth=0.3,dotsep=1,hatchwidth=0.3,hatchsep=1.5,shadowsize=1,dimen=middle}
\psset{dotsize=0.7 2.5,dotscale=1 1,fillcolor=black}
\psset{arrowsize=1 2,arrowlength=1,arrowinset=0.25,tbarsize=0.7 5,bracketlength=0.15,rbracketlength=0.15}
\begin{pspicture}(0,0)(55,40)
\newrgbcolor{userFillColour}{0.87 0.76 0.76}
\pspolygon[fillcolor=userFillColour,fillstyle=solid](0,40)(30,40)(30,20)(0,20)
\newrgbcolor{userFillColour}{0.84 0.84 0.64}
\pspolygon[fillcolor=userFillColour,fillstyle=solid](0,0)(10,0)(10,30)(0,30)
\newrgbcolor{userFillColour}{0.64 0.52 0.35}
\pspolygon[fillcolor=userFillColour,fillstyle=solid](0,20)(10,20)(10,30)(0,30)
\newrgbcolor{userFillColour}{0.73 0.84 0.87}
\pspolygon[fillcolor=userFillColour,fillstyle=solid](45,40)(55,40)(55,0)(45,0)
\newrgbcolor{userFillColour}{0.77 0.89 0.88}
\rput[br](29.38,0.62){$\one$}
\newrgbcolor{userFillColour}{0.77 0.89 0.88}
\rput[br](29.38,20.62){$\two$}
\newrgbcolor{userFillColour}{0.77 0.89 0.88}
\rput[bl](0.62,0.62){$\three$}
\newrgbcolor{userFillColour}{0.77 0.89 0.88}
\rput[br](54,1){$\four$}
\rput{0}(36.12,18.12){\psellipse[linewidth=0.15](0,0)(1,-1)}
\newrgbcolor{userFillColour}{0.77 0.89 0.88}
\psline[linewidth=0.15,fillcolor=userFillColour,fillstyle=solid](36.12,17.12)(36.12,14.12)
\newrgbcolor{userFillColour}{0.77 0.89 0.88}
\psline[linewidth=0.15,fillcolor=userFillColour,fillstyle=solid](36.12,14.12)(35.12,11.12)
\newrgbcolor{userFillColour}{0.77 0.89 0.88}
\psline[linewidth=0.15,fillcolor=userFillColour,fillstyle=solid](36.12,14.12)(37.12,11.12)
\newrgbcolor{userFillColour}{0.77 0.89 0.88}
\psline[linewidth=0.15,fillcolor=userFillColour,fillstyle=solid](36.12,16.12)(38.12,15.12)
\newrgbcolor{userFillColour}{0.77 0.89 0.88}
\psline[linewidth=0.15,fillcolor=userFillColour,fillstyle=solid](36.12,16.12)(34.12,15.12)
\rput{0}(21,35){\psellipse[linewidth=0.15](0,0)(1,-1)}
\newrgbcolor{userFillColour}{0.77 0.89 0.88}
\psline[linewidth=0.15,fillcolor=userFillColour,fillstyle=solid](21,34)(21,31)
\newrgbcolor{userFillColour}{0.77 0.89 0.88}
\psline[linewidth=0.15,fillcolor=userFillColour,fillstyle=solid](21,31)(20,28)
\newrgbcolor{userFillColour}{0.77 0.89 0.88}
\psline[linewidth=0.15,fillcolor=userFillColour,fillstyle=solid](21,31)(22,28)
\newrgbcolor{userFillColour}{0.77 0.89 0.88}
\psline[linewidth=0.15,fillcolor=userFillColour,fillstyle=solid](21,33)(23,32)
\newrgbcolor{userFillColour}{0.77 0.89 0.88}
\psline[linewidth=0.15,fillcolor=userFillColour,fillstyle=solid](21,33)(19,32)
\newrgbcolor{userFillColour}{0.77 0.89 0.88}
\newrgbcolor{userHatchColour}{1 1 0.8}
\rput{0}(25,14.74){\psellipse[linewidth=0.05,fillcolor=userFillColour,fillstyle=crosshatch*,hatchwidth=0.05,hatchsep=0.3,hatchcolor=userHatchColour](0,0)(4,-2.49)}
\newrgbcolor{userFillColour}{0.77 0.89 0.88}
\psline[fillcolor=userFillColour,fillstyle=solid](23.25,12.38)(23,11.24)
\newrgbcolor{userFillColour}{0.77 0.89 0.88}
\psline[fillcolor=userFillColour,fillstyle=solid](24,12.25)(24,11.24)
\newrgbcolor{userFillColour}{0.77 0.89 0.88}
\psline[fillcolor=userFillColour,fillstyle=solid](27,12.38)(27.62,11.12)
\newrgbcolor{userFillColour}{0.77 0.89 0.88}
\psline[fillcolor=userFillColour,fillstyle=solid](26.38,12.38)(26.38,11.12)
\newrgbcolor{userFillColour}{0.77 0.89 0.88}
\newrgbcolor{userHatchColour}{1 1 0.8}
\rput{90}(28.88,16.75){\psellipse[linewidth=0.05,fillcolor=userFillColour,fillstyle=crosshatch*,hatchwidth=0.05,hatchsep=0.3,hatchcolor=userHatchColour](0,0)(1.19,-1.13)}
\rput{112.29}(28.25,17.37){\psellipticarc[linewidth=0.2](0,0)(0.85,-0.58){-25.6}{109.12}}
\rput{111.93}(28.88,17.37){\psellipticarc[linewidth=0.2](0,0)(0.85,-0.59){-25.77}{108.95}}
\newrgbcolor{userFillColour}{0.77 0.89 0.88}
\rput[t](36.25,10.5){$\alice$}
\newrgbcolor{userFillColour}{0.77 0.89 0.88}
\rput(20.88,26.5){$\bob$}
\newrgbcolor{userFillColour}{0.77 0.89 0.88}
\rput[bl](0.62,20.62){$\five$}
\newrgbcolor{userFillColour}{1 0.8 0}
\rput{90}(5.06,24.91){\psellipticarc[linewidth=0.1,fillcolor=userFillColour,fillstyle=solid](0,0)(1.97,1.94){29.04}{324.46}}
\newrgbcolor{userFillColour}{1 0.8 0}
\rput{108.02}(3.17,28.81){\psellipticarc[linewidth=0.1,fillcolor=userFillColour,fillstyle=solid](0,0)(2.42,1){179.98}{281.38}}
\newrgbcolor{userFillColour}{1 0.8 0}
\rput{77.86}(6.88,28.75){\psellipticarc[linewidth=0.1,fillcolor=userFillColour,fillstyle=solid](0,0)(2.33,1.12){71.11}{176.06}}
\end{pspicture}
\end{minipage}};
    \node[state]   (dl)  [below=of ul]              {\begin{minipage}{4cm}
  \centering
  \mbox{\color{blue!50}\large $q_3$} \\[3ex]
    \def\JPicScale{.75}
\ifx\JPicScale\undefined\def\JPicScale{1}\fi
\psset{unit=\JPicScale mm}
\psset{linewidth=0.3,dotsep=1,hatchwidth=0.3,hatchsep=1.5,shadowsize=1,dimen=middle}
\psset{dotsize=0.7 2.5,dotscale=1 1,fillcolor=black}
\psset{arrowsize=1 2,arrowlength=1,arrowinset=0.25,tbarsize=0.7 5,bracketlength=0.15,rbracketlength=0.15}
\begin{pspicture}(0,0)(55,40)
\newrgbcolor{userFillColour}{0.87 0.76 0.76}
\pspolygon[fillcolor=userFillColour,fillstyle=solid](0,40)(30,40)(30,20)(0,20)
\newrgbcolor{userFillColour}{0.84 0.84 0.64}
\pspolygon[fillcolor=userFillColour,fillstyle=solid](0,0)(10,0)(10,30)(0,30)
\newrgbcolor{userFillColour}{0.64 0.52 0.35}
\pspolygon[fillcolor=userFillColour,fillstyle=solid](0,20)(10,20)(10,30)(0,30)
\newrgbcolor{userFillColour}{0.73 0.84 0.87}
\pspolygon[fillcolor=userFillColour,fillstyle=solid](45,40)(55,40)(55,0)(45,0)
\newrgbcolor{userFillColour}{0.77 0.89 0.88}
\rput[br](29.38,0.62){$\one$}
\newrgbcolor{userFillColour}{0.77 0.89 0.88}
\rput[br](29.38,20.62){$\two$}
\newrgbcolor{userFillColour}{0.77 0.89 0.88}
\rput[bl](0.62,0.62){$\three$}
\newrgbcolor{userFillColour}{0.77 0.89 0.88}
\rput[br](54,1){$\four$}
\rput{0}(36.12,18.12){\psellipse[linewidth=0.15](0,0)(1,-1)}
\newrgbcolor{userFillColour}{0.77 0.89 0.88}
\psline[linewidth=0.15,fillcolor=userFillColour,fillstyle=solid](36.12,17.12)(36.12,14.12)
\newrgbcolor{userFillColour}{0.77 0.89 0.88}
\psline[linewidth=0.15,fillcolor=userFillColour,fillstyle=solid](36.12,14.12)(35.12,11.12)
\newrgbcolor{userFillColour}{0.77 0.89 0.88}
\psline[linewidth=0.15,fillcolor=userFillColour,fillstyle=solid](36.12,14.12)(37.12,11.12)
\newrgbcolor{userFillColour}{0.77 0.89 0.88}
\psline[linewidth=0.15,fillcolor=userFillColour,fillstyle=solid](36.12,16.12)(38.12,15.12)
\newrgbcolor{userFillColour}{0.77 0.89 0.88}
\psline[linewidth=0.15,fillcolor=userFillColour,fillstyle=solid](36.12,16.12)(34.12,15.12)
\rput{0}(16,18){\psellipse[linewidth=0.15](0,0)(1,-1)}
\newrgbcolor{userFillColour}{0.77 0.89 0.88}
\psline[linewidth=0.15,fillcolor=userFillColour,fillstyle=solid](16,17)(16,14)
\newrgbcolor{userFillColour}{0.77 0.89 0.88}
\psline[linewidth=0.15,fillcolor=userFillColour,fillstyle=solid](16,14)(15,11)
\newrgbcolor{userFillColour}{0.77 0.89 0.88}
\psline[linewidth=0.15,fillcolor=userFillColour,fillstyle=solid](16,14)(17,11)
\newrgbcolor{userFillColour}{0.77 0.89 0.88}
\psline[linewidth=0.15,fillcolor=userFillColour,fillstyle=solid](16,16)(18,15)
\newrgbcolor{userFillColour}{0.77 0.89 0.88}
\psline[linewidth=0.15,fillcolor=userFillColour,fillstyle=solid](16,16)(14,15)
\newrgbcolor{userFillColour}{0.77 0.89 0.88}
\newrgbcolor{userHatchColour}{1 1 0.8}
\rput{0}(25,14.74){\psellipse[linewidth=0.05,fillcolor=userFillColour,fillstyle=crosshatch*,hatchwidth=0.05,hatchsep=0.3,hatchcolor=userHatchColour](0,0)(4,-2.49)}
\newrgbcolor{userFillColour}{0.77 0.89 0.88}
\psline[fillcolor=userFillColour,fillstyle=solid](23.25,12.38)(23,11.24)
\newrgbcolor{userFillColour}{0.77 0.89 0.88}
\psline[fillcolor=userFillColour,fillstyle=solid](24,12.25)(24,11.24)
\newrgbcolor{userFillColour}{0.77 0.89 0.88}
\psline[fillcolor=userFillColour,fillstyle=solid](27,12.38)(27.62,11.12)
\newrgbcolor{userFillColour}{0.77 0.89 0.88}
\psline[fillcolor=userFillColour,fillstyle=solid](26.38,12.38)(26.38,11.12)
\newrgbcolor{userFillColour}{0.77 0.89 0.88}
\newrgbcolor{userHatchColour}{1 1 0.8}
\rput{90}(28.88,16.75){\psellipse[linewidth=0.05,fillcolor=userFillColour,fillstyle=crosshatch*,hatchwidth=0.05,hatchsep=0.3,hatchcolor=userHatchColour](0,0)(1.19,-1.13)}
\rput{112.29}(28.25,17.37){\psellipticarc[linewidth=0.2](0,0)(0.85,-0.58){-25.6}{109.12}}
\rput{111.93}(28.88,17.37){\psellipticarc[linewidth=0.2](0,0)(0.85,-0.59){-25.77}{108.95}}
\newrgbcolor{userFillColour}{0.77 0.89 0.88}
\rput[t](36.25,10.5){$\alice$}
\newrgbcolor{userFillColour}{0.77 0.89 0.88}
\rput(15.88,9.5){$\bob$}
\newrgbcolor{userFillColour}{0.77 0.89 0.88}
\rput[bl](0.62,20.62){$\five$}
\newrgbcolor{userFillColour}{1 0.8 0}
\rput{90}(5.06,24.91){\psellipticarc[linewidth=0.1,fillcolor=userFillColour,fillstyle=solid](0,0)(1.97,1.94){29.04}{324.46}}
\newrgbcolor{userFillColour}{1 0.8 0}
\rput{108.03}(3.17,28.81){\psellipticarc[linewidth=0.1,fillcolor=userFillColour,fillstyle=solid](0,0)(2.42,1){179.97}{281.38}}
\newrgbcolor{userFillColour}{1 0.8 0}
\rput{77.96}(6.88,28.75){\psellipticarc[linewidth=0.1,fillcolor=userFillColour,fillstyle=solid](0,0)(2.33,1.12){71.07}{176.01}}
\end{pspicture}
\end{minipage}};
  \node[state]           (ur) [right=of ul] {\begin{minipage}{4cm}
  \centering
  \mbox{\color{blue!50}\large $q_1$} \\[3ex]
  \def\JPicScale{.75}
\ifx\JPicScale\undefined\def\JPicScale{1}\fi
\psset{unit=\JPicScale mm}
\psset{linewidth=0.3,dotsep=1,hatchwidth=0.3,hatchsep=1.5,shadowsize=1,dimen=middle}
\psset{dotsize=0.7 2.5,dotscale=1 1,fillcolor=black}
\psset{arrowsize=1 2,arrowlength=1,arrowinset=0.25,tbarsize=0.7 5,bracketlength=0.15,rbracketlength=0.15}
\begin{pspicture}(0,0)(55,40)
\newrgbcolor{userFillColour}{0.87 0.76 0.76}
\pspolygon[fillcolor=userFillColour,fillstyle=solid](0,40)(30,40)(30,20)(0,20)
\newrgbcolor{userFillColour}{0.84 0.84 0.64}
\pspolygon[fillcolor=userFillColour,fillstyle=solid](0,0)(10,0)(10,30)(0,30)
\newrgbcolor{userFillColour}{0.64 0.52 0.35}
\pspolygon[fillcolor=userFillColour,fillstyle=solid](0,20)(10,20)(10,30)(0,30)
\newrgbcolor{userFillColour}{0.73 0.84 0.87}
\pspolygon[fillcolor=userFillColour,fillstyle=solid](45,40)(55,40)(55,0)(45,0)
\newrgbcolor{userFillColour}{0.77 0.89 0.88}
\rput[br](29.38,0.62){$\one$}
\newrgbcolor{userFillColour}{0.77 0.89 0.88}
\rput[br](29.38,20.62){$\two$}
\newrgbcolor{userFillColour}{0.77 0.89 0.88}
\rput[bl](0.62,0.62){$\three$}
\newrgbcolor{userFillColour}{0.77 0.89 0.88}
\rput[br](54,1){$\four$}
\rput{0}(21,35){\psellipse[linewidth=0.15](0,0)(1,-1)}
\newrgbcolor{userFillColour}{0.77 0.89 0.88}
\psline[linewidth=0.15,fillcolor=userFillColour,fillstyle=solid](21,34)(21,31)
\newrgbcolor{userFillColour}{0.77 0.89 0.88}
\psline[linewidth=0.15,fillcolor=userFillColour,fillstyle=solid](21,31)(20,28)
\newrgbcolor{userFillColour}{0.77 0.89 0.88}
\psline[linewidth=0.15,fillcolor=userFillColour,fillstyle=solid](21,31)(22,28)
\newrgbcolor{userFillColour}{0.77 0.89 0.88}
\psline[linewidth=0.15,fillcolor=userFillColour,fillstyle=solid](21,33)(23,32)
\newrgbcolor{userFillColour}{0.77 0.89 0.88}
\psline[linewidth=0.15,fillcolor=userFillColour,fillstyle=solid](21,33)(19,32)
\newrgbcolor{userFillColour}{0.77 0.89 0.88}
\newrgbcolor{userHatchColour}{1 1 0.8}
\rput{0}(6,34.75){\psellipse[linewidth=0.05,fillcolor=userFillColour,fillstyle=crosshatch*,hatchwidth=0.05,hatchsep=0.3,hatchcolor=userHatchColour](0,0)(4,-2.5)}
\newrgbcolor{userFillColour}{0.77 0.89 0.88}
\psline[fillcolor=userFillColour,fillstyle=solid](4.25,32.38)(4,31.24)
\newrgbcolor{userFillColour}{0.77 0.89 0.88}
\psline[fillcolor=userFillColour,fillstyle=solid](5,32.25)(5,31.24)
\newrgbcolor{userFillColour}{0.77 0.89 0.88}
\psline[fillcolor=userFillColour,fillstyle=solid](8,32.38)(8.62,31.12)
\newrgbcolor{userFillColour}{0.77 0.89 0.88}
\psline[fillcolor=userFillColour,fillstyle=solid](7.38,32.38)(7.38,31.12)
\newrgbcolor{userFillColour}{0.77 0.89 0.88}
\newrgbcolor{userHatchColour}{1 1 0.8}
\rput{90}(9.88,36.75){\psellipse[linewidth=0.05,fillcolor=userFillColour,fillstyle=crosshatch*,hatchwidth=0.05,hatchsep=0.3,hatchcolor=userHatchColour](0,0)(1.19,-1.12)}
\rput{112.29}(9.25,37.37){\psellipticarc[linewidth=0.2](0,0)(0.85,-0.58){-25.6}{109.12}}
\rput{111.93}(9.88,37.37){\psellipticarc[linewidth=0.2](0,0)(0.85,-0.59){-25.77}{108.95}}
\newrgbcolor{userFillColour}{0.77 0.89 0.88}
\rput(20.88,26.5){$\bob$}
\newrgbcolor{userFillColour}{0.77 0.89 0.88}
\rput[bl](0.62,20.62){$\five$}
\rput{0}(36.12,18.12){\psellipse[linewidth=0.15](0,0)(1,-1)}
\newrgbcolor{userFillColour}{0.77 0.89 0.88}
\psline[linewidth=0.15,fillcolor=userFillColour,fillstyle=solid](36.12,17.12)(36.12,14.12)
\newrgbcolor{userFillColour}{0.77 0.89 0.88}
\psline[linewidth=0.15,fillcolor=userFillColour,fillstyle=solid](36.12,14.12)(35.12,11.12)
\newrgbcolor{userFillColour}{0.77 0.89 0.88}
\psline[linewidth=0.15,fillcolor=userFillColour,fillstyle=solid](36.12,14.12)(37.12,11.12)
\newrgbcolor{userFillColour}{0.77 0.89 0.88}
\psline[linewidth=0.15,fillcolor=userFillColour,fillstyle=solid](36.12,16.12)(38.12,15.12)
\newrgbcolor{userFillColour}{0.77 0.89 0.88}
\psline[linewidth=0.15,fillcolor=userFillColour,fillstyle=solid](36.12,16.12)(34.12,15.12)
\newrgbcolor{userFillColour}{0.77 0.89 0.88}
\rput[t](36.25,10.5){$\alice$}
\newrgbcolor{userFillColour}{1 0.8 0}
\rput{90}(5.06,24.91){\psellipticarc[linewidth=0.1,fillcolor=userFillColour,fillstyle=solid](0,0)(1.97,1.94){29.04}{324.46}}
\newrgbcolor{userFillColour}{1 0.8 0}
\rput{108.02}(3.17,28.81){\psellipticarc[linewidth=0.1,fillcolor=userFillColour,fillstyle=solid](0,0)(2.42,1){179.98}{281.38}}
\newrgbcolor{userFillColour}{1 0.8 0}
\rput{77.86}(6.88,28.75){\psellipticarc[linewidth=0.1,fillcolor=userFillColour,fillstyle=solid](0,0)(2.33,1.12){71.11}{176.06}}
\end{pspicture}
\end{minipage}};
    \node[state]           (dr) [right=of dl] {\begin{minipage}{4cm}
  \centering
  \mbox{\color{blue!50}\large $q_2$} \\[3ex]
    \def\JPicScale{.75}
\ifx\JPicScale\undefined\def\JPicScale{1}\fi
\psset{unit=\JPicScale mm}
\psset{linewidth=0.3,dotsep=1,hatchwidth=0.3,hatchsep=1.5,shadowsize=1,dimen=middle}
\psset{dotsize=0.7 2.5,dotscale=1 1,fillcolor=black}
\psset{arrowsize=1 2,arrowlength=1,arrowinset=0.25,tbarsize=0.7 5,bracketlength=0.15,rbracketlength=0.15}
\begin{pspicture}(0,0)(55,40)
\newrgbcolor{userFillColour}{0.87 0.76 0.76}
\pspolygon[fillcolor=userFillColour,fillstyle=solid](0,40)(30,40)(30,20)(0,20)
\newrgbcolor{userFillColour}{0.84 0.84 0.64}
\pspolygon[fillcolor=userFillColour,fillstyle=solid](0,0)(10,0)(10,30)(0,30)
\newrgbcolor{userFillColour}{0.64 0.52 0.35}
\pspolygon[fillcolor=userFillColour,fillstyle=solid](0,20)(10,20)(10,30)(0,30)
\newrgbcolor{userFillColour}{0.73 0.84 0.87}
\pspolygon[fillcolor=userFillColour,fillstyle=solid](45,40)(55,40)(55,0)(45,0)
\newrgbcolor{userFillColour}{0.77 0.89 0.88}
\rput[br](29.38,0.62){$\one$}
\newrgbcolor{userFillColour}{0.77 0.89 0.88}
\rput[br](29.38,20.62){$\two$}
\newrgbcolor{userFillColour}{0.77 0.89 0.88}
\rput[bl](0.62,0.62){$\three$}
\newrgbcolor{userFillColour}{0.77 0.89 0.88}
\rput[br](54,1){$\four$}
\newrgbcolor{userFillColour}{0.77 0.89 0.88}
\newrgbcolor{userHatchColour}{1 1 0.8}
\rput{0}(6,34.75){\psellipse[linewidth=0.05,fillcolor=userFillColour,fillstyle=crosshatch*,hatchwidth=0.05,hatchsep=0.3,hatchcolor=userHatchColour](0,0)(4,-2.5)}
\newrgbcolor{userFillColour}{0.77 0.89 0.88}
\psline[fillcolor=userFillColour,fillstyle=solid](4.25,32.38)(4,31.24)
\newrgbcolor{userFillColour}{0.77 0.89 0.88}
\psline[fillcolor=userFillColour,fillstyle=solid](5,32.25)(5,31.24)
\newrgbcolor{userFillColour}{0.77 0.89 0.88}
\psline[fillcolor=userFillColour,fillstyle=solid](8,32.38)(8.62,31.12)
\newrgbcolor{userFillColour}{0.77 0.89 0.88}
\psline[fillcolor=userFillColour,fillstyle=solid](7.38,32.38)(7.38,31.12)
\newrgbcolor{userFillColour}{0.77 0.89 0.88}
\newrgbcolor{userHatchColour}{1 1 0.8}
\rput{90}(9.88,36.75){\psellipse[linewidth=0.05,fillcolor=userFillColour,fillstyle=crosshatch*,hatchwidth=0.05,hatchsep=0.3,hatchcolor=userHatchColour](0,0)(1.19,-1.12)}
\rput{112.29}(9.25,37.37){\psellipticarc[linewidth=0.2](0,0)(0.85,-0.58){-25.6}{109.12}}
\rput{111.93}(9.88,37.37){\psellipticarc[linewidth=0.2](0,0)(0.85,-0.59){-25.77}{108.95}}
\newrgbcolor{userFillColour}{0.77 0.89 0.88}
\rput[bl](0.62,20.62){$\five$}
\rput{0}(36.12,18.12){\psellipse[linewidth=0.15](0,0)(1,-1)}
\newrgbcolor{userFillColour}{0.77 0.89 0.88}
\psline[linewidth=0.15,fillcolor=userFillColour,fillstyle=solid](36.12,17.12)(36.12,14.12)
\newrgbcolor{userFillColour}{0.77 0.89 0.88}
\psline[linewidth=0.15,fillcolor=userFillColour,fillstyle=solid](36.12,14.12)(35.12,11.12)
\newrgbcolor{userFillColour}{0.77 0.89 0.88}
\psline[linewidth=0.15,fillcolor=userFillColour,fillstyle=solid](36.12,14.12)(37.12,11.12)
\newrgbcolor{userFillColour}{0.77 0.89 0.88}
\psline[linewidth=0.15,fillcolor=userFillColour,fillstyle=solid](36.12,16.12)(38.12,15.12)
\newrgbcolor{userFillColour}{0.77 0.89 0.88}
\psline[linewidth=0.15,fillcolor=userFillColour,fillstyle=solid](36.12,16.12)(34.12,15.12)
\newrgbcolor{userFillColour}{0.77 0.89 0.88}
\rput[t](36.25,10.5){$\alice$}
\newrgbcolor{userFillColour}{1 0.8 0}
\rput{90}(5.06,24.91){\psellipticarc[linewidth=0.1,fillcolor=userFillColour,fillstyle=solid](0,0)(1.97,1.94){29.04}{324.46}}
\newrgbcolor{userFillColour}{1 0.8 0}
\rput{108.03}(3.17,28.81){\psellipticarc[linewidth=0.1,fillcolor=userFillColour,fillstyle=solid](0,0)(2.42,1){179.97}{281.38}}
\newrgbcolor{userFillColour}{1 0.8 0}
\rput{77.96}(6.88,28.75){\psellipticarc[linewidth=0.1,fillcolor=userFillColour,fillstyle=solid](0,0)(2.33,1.12){71.07}{176.01}}
\rput{0}(16,18){\psellipse[linewidth=0.15](0,0)(1,-1)}
\newrgbcolor{userFillColour}{0.77 0.89 0.88}
\psline[linewidth=0.15,fillcolor=userFillColour,fillstyle=solid](16,17)(16,14)
\newrgbcolor{userFillColour}{0.77 0.89 0.88}
\psline[linewidth=0.15,fillcolor=userFillColour,fillstyle=solid](16,14)(15,11)
\newrgbcolor{userFillColour}{0.77 0.89 0.88}
\psline[linewidth=0.15,fillcolor=userFillColour,fillstyle=solid](16,14)(17,11)
\newrgbcolor{userFillColour}{0.77 0.89 0.88}
\psline[linewidth=0.15,fillcolor=userFillColour,fillstyle=solid](16,16)(18,15)
\newrgbcolor{userFillColour}{0.77 0.89 0.88}
\psline[linewidth=0.15,fillcolor=userFillColour,fillstyle=solid](16,16)(14,15)
\newrgbcolor{userFillColour}{0.77 0.89 0.88}
\rput(15.88,9.5){$\bob$}
\end{pspicture}
\end{minipage}};
  \path[->] 
  (ul) 
  edge node [above] {$\wt_A : \one\tto{sh} \two : \rd_B$}  (ur)
    (ur)
  edge node [right] {$\loc_B := \one$} (dr)
 (dl) 
  edge node [left] {$\loc_B := \two$}  (ul)
  (dr)
  edge node [below] {$\wt_B : \one\tto{sh} \one : \rd_A$} (dl);
\end{tikzpicture}
\caption{The process of storing and withdrawing Alice's sheep from Bob's secure vault}
\label{Figure:sheep-machine}
\end{center}
\end{figure}
The pairs of complementary actions analogous to the \emph{give/take}, applied to sheep, are usually called \emph{write/read}\/ when applied to data. Alice, in a sense, ``writes'' her sheep up the security ladder, but she is not permitted to ``read'' the sheep back down from the higher security level of the vault to her lower clearance level. Bob does have a clearance to be in his vault, but ``writing'' down is also not generally permitted since that would allow a malicious insider with a high clearance to declassify everything. The general principle is
\begin{itemize}
\item \emph{\textbf{no-read-up}}: 
\begin{itemize}
\item only receive data at or below your clearance level
\end{itemize}
and
\item \emph{\textbf{no-write-down}}: 
\begin{itemize}
\item only send data at or above your classification level.
\end{itemize}\end{itemize}
To remember this, we write $\rd$ and  $\wt$ for the actions of reading and writing data (or taking and giving objects), respectively. 

To return the sheep to Alice, Bob has to descend from his vault $\ell_{2}$ to the level $\ell_{1}$, where he can ``write'' the sheep to Alice. The state transition $\wt_A: \one \tto{sh} \two: \rd_B$ is an \emph{interaction}\/ between Alice whose action $\wt_A: \one \tto{sh} \two$ elicits Bob's reaction $\one  
\tto{?} \two : \rd_B$. The question mark means that Bob is ready to take on this channel, whatever Alice gives. Alice writes the sheep from $\one$ to $\two$, and Bob accepts to read on the same levels.  Similar couplings arise, e.g., when Alice \emph{sends}\/ a message, and Bob \emph{receives}\/ it. Alice's action and Bob's reaction are coupled into an interaction. The action and the interaction are bound together with an object: in the above case the sheep, and in the case of the send/receive cases the message. For the moment, though, we are only interested in the fact that security is this \emph{process}, evolving from state to state and from transition to transition:
\begin{itemize}
\item \textbf{$q_0$:} The crate of oil is highly classified. \emph{The sheep is unclassified}.
\begin{itemize}
\item \textbf{$\wt_A : \one\tto{sh} \two : \rd_B$: } Alice writes up, Bob reads down: \emph{they classify the sheep}.
\end{itemize}

\item\textbf{$q_1$:} Alice cannot read up, Bob cannot write down: \emph{the sheep is classified}.
\begin{itemize}
\item \textbf{$\clr_B := \one$ : } To be able to return the sheep, Bob descends to Alice's security level.
\end{itemize}

\item\textbf{$q_2$: } Bob is cleared to read the sheep in his vault; Alice is not.
\begin{itemize}
\item \textbf{$\wt_B : \one\tto{sh} \one : \rd_A$: } Bob writes and Alice reads at the same level: \emph{they declassify the sheep}.
\end{itemize}

\item\textbf{$q_3$: } All (except oil) are insecure.
\begin{itemize}
\item \textbf{$\clr_B := \two$ : } Bob is back to security. \emph{The sheep is unclassified}.
\end{itemize}

\end{itemize}

\bigskip
\begin{definition}\label{Def:SM}
An abstract \emph{(resource) security}\/ model consists of:
\begin{itemize}
\item security types $\Subj$, $\Obj$, $\Act$, and $\Levels$, as in Definitions~\ref{Def:AC} and \ref{Def:MLS}, together with
\begin{itemize}
\item distinguished actions $\rd, \wt \in \Act$, called respectively \emph{read}\/ and \emph{write}, 
\end{itemize}

\item the AC and the MLS structures:
\begin{itemize}
\item permission and access matrices $M,B: \Subj \times \Obj \to \WP\Act$, 
\item clearance and location assignments $\clr, \loc: \Subj \to \Levels$,
\item 
      location assignment $\loc: \Obj \to \Levels$.
\end{itemize}
\end{itemize}
The \emph{no-read-up}\/ and \emph{no-write-down}\/ requirements are respectively
\bea
\rd \in B_{ui} & \Longrightarrow & \clr_u \geq \loc_i, \label{eq:nru}\\
\wt \in B_{ui}  & \Longrightarrow & \loc_u \leq \loc_i. \label{eq:nrd}
\eea 
The states where resource security models satisfy the no-read-up and the no-write-down requirements are often called \emph{secure states}. 
\end{definition}

\para{Duality of reading and writing.} Requirements \eqref{eq:nru} and \eqref{eq:nrd} can be written together:
\[ \loc_u \ \stackrel{\wtt} \leq \ \loc_i  \ \stackrel{\rdd} \leq \ \clr_u.
\]
Bob is thus permitted to both read and write the sheep if the sheep's classification is in the interval between Bob's location and his clearance, i.e., if $\loc_{sh} \in  [\loc_B, \clr_B]$. In the early days of security research, the duality of reading and writing was viewed as echoing the duality of the confidentiality and integrity pair~\cite{Biba}. It is probably more justified to view it in terms of the duality of authority and availability.

\para{Remarks.} The term \emph{"secure state"}\/ does not imply that the state satisfies any particular security requirement. A security process may satisfy the no-read-up and no-write-down requirements and still contain a state where all subjects have the top clearance, and all objects are declassified to the lowest security level so that everyone can access all resources. Some subjects would consider such a state undesirable and completely insecure, while others would consider it desirable and completely secure. Secure states are secure only with respect to the two particular, very basic security requirements: no-read-up and no-write-down. More refined security requirements capture more refined concepts of security.

\subsection{All resource security boils down to access control}\label{Sec:MLSisAC}

Since every authorization model subsumes an AC model, extends it by an MLS model, and moreover imposes the no-read-up and no-write-down requirements on the combination, it seems that the authorization model formalism is more expressive than the AC model formalism. The next theorem says that this is not the case: every resource security policy specified by an authorization model can be equivalently expressed as an access control policy.

\bigskip
\begin{theorem}\label{Thm:MLS-AC}
For every authorization model
\begin{itemize}
\item $\PM, \AM :\Subj\times \Obj \to \WP\Act$
\item $\clr: \Subj \to \Levels$ and $\loc: \Subj \cup \Obj \to \Levels$
\end{itemize}
there is an access control model
\begin{itemize}
\item $\widehat \PM, \widehat \AM :\Subj\times \Obj \to \WP\widehat \Act$
\end{itemize}
such that
\bea\label{eq:propone}
\left(\begin{split} 
\AM_{ui}\subseteq \PM_{ui}\ \wedge\\  
\loc_u\leq \clr_u\ \wedge \\
\rd\in \AM_{ui} \Rightarrow  \clr_u \geq \loc_i\ \wedge\\ 
\wt\in \AM_{ui} \Rightarrow \loc_u\leq \loc_i\ \ \ 
\end{split}\right)  & \iff & \Big(\widehat{\AM}_{ui}\subseteq \widehat{\PM}_{ui}\Big)\eea
\end{theorem}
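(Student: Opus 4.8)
The plan is to exploit the fact that an access-control model constrains its data by a \emph{single} pointwise condition, $\widehat\AM_{ui}\subseteq\widehat\PM_{ui}$, and to pack all four conjuncts on the left of \eqref{eq:propone} into that one inclusion by enlarging the action type with ``virtual actions'' that record security levels. The device that makes this work is the principal down-set map, sending a level $\ell$ to ${\downarrow}\ell = \{\ell'\in\Levels : \ell'\le \ell\}\in\WP\Levels$; since $\Levels$ is a poset this map both preserves and reflects order, i.e.\ ${\downarrow}\ell_0\subseteq{\downarrow}\ell_1 \iff \ell_0\le\ell_1$, so it converts every ``$\le$'' between levels into a ``$\subseteq$'' between sets, which is exactly the shape of data an ACL can hold.

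Concretely I would take $\widehat\Act = \Act\sqcup\Levels_\mathsf{c}\sqcup\Levels_\mathsf{r}\sqcup\Levels_\mathsf{w}$ — the original actions together with three disjoint, labelled copies of $\Levels$ — and define $\widehat\PM,\widehat\AM\colon\Subj\times\Obj\to\WP\widehat\Act$ block by block. On $\Act$, keep the original data: $\widehat\PM_{ui}\cap\Act=\PM_{ui}$ and $\widehat\AM_{ui}\cap\Act=\AM_{ui}$. On the clearance block, $\widehat\PM_{ui}\cap\Levels_\mathsf{c}={\downarrow}\clr_u$ and $\widehat\AM_{ui}\cap\Levels_\mathsf{c}={\downarrow}\loc_u$. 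On the read block, $\widehat\PM_{ui}\cap\Levels_\mathsf{r}={\downarrow}\clr_u$, while $\widehat\AM_{ui}\cap\Levels_\mathsf{r}$ is ${\downarrow}\loc_i$ if $\rd\in\AM_{ui}$ and $\emptyset$ otherwise. On the write block, $\widehat\PM_{ui}\cap\Levels_\mathsf{w}={\downarrow}\loc_i$, while $\widehat\AM_{ui}\cap\Levels_\mathsf{w}$ is ${\downarrow}\loc_u$ if $\wt\in\AM_{ui}$ and $\emptyset$ otherwise. (Two copies would already suffice, since the clearance and read blocks both give $\widehat\PM_{ui}$ the value ${\downarrow}\clr_u$ and can be merged; three just keep the accounting transparent.)

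The verification should then be a routine unwinding. Since the four blocks partition $\widehat\Act$, the inclusion $\widehat\AM_{ui}\subseteq\widehat\PM_{ui}$ holds iff it holds inside each block. The $\Act$-block yields $\AM_{ui}\subseteq\PM_{ui}$; the clearance block yields ${\downarrow}\loc_u\subseteq{\downarrow}\clr_u$, i.e.\ $\loc_u\le\clr_u$; the read block yields ${\downarrow}\loc_i\subseteq{\downarrow}\clr_u$ when $\rd\in\AM_{ui}$ and the vacuously true $\emptyset\subseteq{\downarrow}\clr_u$ otherwise, i.e.\ exactly $\rd\in\AM_{ui}\Rightarrow\loc_i\le\clr_u$; and the write block symmetrically yields exactly $\wt\in\AM_{ui}\Rightarrow\loc_u\le\loc_i$. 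Conjoining the four equivalences reproduces the bracketed conjunction in \eqref{eq:propone}, which is the claim; in particular the secure states of the given authorization model are then precisely the states satisfying the AC requirement for $\widehat\PM,\widehat\AM$.

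The only point needing a moment's care — the ``hard part'', such as it is — is that the third and fourth conjuncts \eqref{eq:nru}--\eqref{eq:nrd} are \emph{implications}, not bare order relations; this is handled entirely by the ``$\emptyset$ otherwise'' clauses, since making the relevant block of $\widehat\AM_{ui}$ empty renders the subset test on that block automatically satisfied exactly when the implication's hypothesis is false. Everything else is just disjointness of the blocks together with the order-reflecting property of ${\downarrow}(-)$. Finally, as a remark mirroring the discussion before Thm.~\ref{Thm:AC-MLS}, the cost here is slight: $\widehat\Act$ is only $O(\lvert\Levels\rvert)$ larger than $\Act$, unlike the potential blow-up in the \emph{number} of levels needed for the reverse reduction.
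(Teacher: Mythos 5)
Your proposal is correct and is essentially the paper's own argument: both reductions adjoin security levels to the action type and use the order-embedding $\ell\mapsto\nabla\ell=\{\ell'\le\ell\}$ to turn each level comparison into a set inclusion, with emptiness of the relevant part of $\widehat\AM_{ui}$ making the two implications vacuous exactly when their hypotheses fail. The only difference is bookkeeping: the paper fibres level-sets $\PML_{ui}(a)$, $\AML_{ui}(a)$ over each action $a$ (folding the read and write conditions into the $\rd$- and $\wt$-fibres), whereas you keep four disjoint labelled blocks — arguably a cleaner partition, but the substance of the verification is the same.
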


\bpr
Define the new set of actions as the disjoint union $\widehat \Act\   = \  \Act + \Levels$, so that $\WP \widehat \Act \cong \WP \Act \times \WP \Levels$. Define furthermore the new permission and access matrices $\widehat \PM, \widehat \AM : \Subj\times \Obj \to \WP\widehat \Act$ to have the following entries:  
\begin{align}\label{eq:MB}
\widehat \PM_{ui} & = \ \coprod_{a\in \PM_{ui}} \PML_{ui}(a) &
\widehat \AM_{ui} & = \ \coprod_{a\in \AM_{ui}} \AML_{ui}(a)
\end{align}
where $\coprod$s denote the disjoint unions of the families
\begin{align}
\PML_{ui}(a) & =  \left.\begin{cases} \nabla\clr_u \cap \nabla\loc_i & \mbox{ if } a = \wt\\
\nabla\clr_u& \mbox{ otherwise} \end{cases}\right\} &
\AML_{ui}(a) & =  \left.\begin{cases} \nabla\loc_u \cup \nabla\loc_i & \mbox{ if } a = \rd\\
\nabla\loc_u& \mbox{ otherwise} \end{cases}\right\}
\end{align}
and $\nabla \ell = \{x\in \Levels\ |\ x\leq \ell\}$ is the set of security levels at or below $\ell$. Using the fact that $\ell \leq \ell'$ holds if and only if $\nabla \ell \subseteq \nabla\ell'$ holds, the direction $\Longrightarrow$ of \eqref{eq:propone} is straightforward. 

Towards the direction $\Longleftarrow$, first note that  $\widehat \AM_{ui}\subseteq \widehat \PM_{ui}$ always implies $\AM_{ui}\subseteq \PM_{ui}$. To see why, first note that, by definition in \eqref{eq:MB}, an element of $\widehat \AM_{ui}$ is a pair $<a,\elll>$ where $a\in \AM_{ui}$ and $\ell \in \beta_{ui}(a)$. The inclusion $\widehat \AM_{ui}\subseteq \widehat \PM_{ui}$ implies that $<a,\ell>\in \PM_{ui}$, i.e., $a\in \PM_{ui}$ and $\ell \in \mu_{ui}(a)$. So if we take an arbitrary $a\in \AM_{ui}$, then $\ell = \loc_{u}\in \beta_{ui}(a)$ gives $<a,\ell> \in \widehat \AM_{ui}\subseteq \widehat \PM_{ui}$, which implies  $a\in \PM_{ui}$. Hence, $\AM_{ui}\subseteq \PM_{ui}$, since $a$ was arbitrary.

Furthermore, whenever $\PM_{ui}$ does not contain either  $\rd$ or $\wt$, then $\AM_{ui}\subseteq \PM_{ui}$ does not contain them either, and therefore
\begin{align*}
\widehat \PM_{ui} & = \ \coprod_{a\in \PM_{ui}} \nabla\clr_u &
\widehat \AM_{ui} & = \ \coprod_{a\in \AM_{ui}} \nabla\loc_u
\end{align*}
Using the fact that $\loc_u \leq \clr_u$ holds if and only if $\nabla \loc_u \subseteq \nabla \clr_u$, we get
\bear
\AM_{ui}\subseteq \PM_{ui}\  \wedge\ \loc_u \leq \clr_u & \iff & \widehat \AM_{ui}\subseteq \widehat \PM_{ui} 
\eear
This means that for the case without $\rd$ or $\wt$,  the claim is proven, because the two bottom 
conjuncts on the left-hand side of \eqref{eq:propone} are true trivially since the premises $\rd \in \AM_{ui}$ and $\wt\in \AM_{ui}$ are false.

When $\rd \in \AM_{ui}$, then $\widehat \AM_{ui}\subseteq \widehat \PM_{ui}$ implies $\AML_{ui}(\rd) \subseteq \PML_{ui}(\rd)$, i.e., $\nabla\loc_u \cup \nabla\loc_i \subseteq \nabla \clr_u$. Hence, $\nabla\loc_i \subseteq \nabla \clr_u$ and thus $\loc_i\leq \clr_u$, so the no-read-up requirement, third conjunct on the left-hand side of \eqref{eq:propone} is proven.

Finally, when $\wt \in \AM_{ui}$, then $\widehat \AM_{ui}\subseteq \widehat \PM_{ui}$ implies $\AML_{ui}(\wt) \subseteq \PML_{ui}(\wt)$, which means that $\nabla\loc_u \subseteq \nabla \clr_u\cap \nabla \loc_i$. Hence, $\nabla\loc_u \subseteq \nabla \loc_i$ and thus $\loc_u\leq \loc_i$. This means that the no-write-down requirement, the fourth conjunct on the left-hand side of \eqref{eq:propone}, is also proven.
\epr

\para{Remark.} Theorem~\ref{Thm:AC-MLS} showed how any AC policy can be expressed as an MLS policy. The price to be paid was a large and usually unintuitive poset of security levels. Theorem~\ref{Thm:MLS-AC} showed how any authorization policy (and thus any MLS policy as a special case) can be expressed as an AC policy. The price to be paid in this direction is that the security levels are captured as actions. This is not so unintuitive since each security level can be construed as the action of accessing that level. In practice, though, separating actions and security levels is usually more convenient than bundling them together. 

The upshot of Theorems~\ref{Thm:AC-MLS} and \ref{Thm:MLS-AC} is that the apparently different languages of AC, MLS and authorization are logically equivalent. However, it is useful to have different languages that express the same things.

\def\thechapter{4}
\setchaptertoc
\chapter{Dynamic resource security: authorization and availability}\label{Chap:Process}

\section{Histories, properties, safety, liveness}\label{Sec:Traces}

\subsection{What happens: Sequences of events and properties}\label{Sec:histories}
In~\cref{Fig:AliceBobStart}, different states of the world are presented as different permission matrices. In~\cref{Figure:sheep-machine}, transitions between the different states arise from subjects' actions. Now, we need to study what happens as time goes by. 

\para{Notation.} Our notational conventions for lists and strings can be found in Prerequisites \ref{Prereq:list}.

\begin{definition}\label{Def:property}
For a type $\Event$ \sindex{event} of \emph{events}, the finite sequences
\bear
\vec x = \seq{x_0 \, x_1 \,  \ldots \,  x_n} & \in & \Event^\ast
\eear
are called the $\Event$-\emph{traces}, or \sindex{history} \sindex{trace|see{history}}\emph{histories}, or \emph{contexts}\sindex{context|see{history}}. A \emph{$\Event$-property}, \sindex{property} is expressed as a set of histories 
\bear
P & \subseteq &\Event^\ast.
\eear
\end{definition}

\para{Remark.} The words \emph{``trace''}, \emph{``history''}, and \emph{``context''}\/ are used in different communities to denote the same thing: a sequence of events. ``Traces'' are used in automata theory, ``histories'' in process theory, and ``contexts'' in computational linguistics. Sequences of events are also studied in other research areas and perhaps called different names. The notion of events in time is ubiquitous. 

\para{Trivial properties.} \sindex{property!trivial} The simplest properties are the set of all traces $\Event^\ast$, and the empty set $\emptyset$. 
If properties are thought of as requirement constraints, then $\Event^\ast$ is the least constraining requirement since every history satisfies it, whereas $\emptyset$ is the impossible requirement, satisfied. A singleton $\left\{\vec t\right\} \subseteq  \Event^\ast$ is a maximally constraining non-trivial requirement. The larger properties are less constraining. 

\para{Standard and refined $\Event$ types.} In most models, an event is an action of a subject on an object. Formally, an event $x\in \Event$ is thus a triple $x = <a,i,u>$, and the event type is the product 
\bea\label{eq:event}
\Event & = & \Act \times \Obj \times \Subj
\eea 
However, not all actions are applicable to all objects by all subjects. Alice can drive a car and eat an apple, but she cannot drive an apple or eat a car. And Bob might be prohibited from driving Alice's car or eating her apple. Formally, a subject $u\in \Subj$ is often given a specified subtype of actions $\Act_{ui} \subseteq \Act$ that they can apply to an object $i\in \Obj$. Sometimes $\Act_{ui}$ is empty, which means that the subject $u$ is not permitted to use the object $i$. That can also be captured by specifying a subtype $\Obj_u \subseteq \Obj$ of objects accessible by the subject $u\in \Subj$., The set of events that may actually occur can then be expressed more explicitly as the disjoint union\footnote{In dependent type theory, this type would be written as $\sum{u: \Subj}\  \sum{i: \Obj(u)}.\ \  \Act({u,i})$. The notational clash of the symbol $\Sigma$ used for labels, alphabets, and events in semantics of computation, and for sum types in type theory, is accidental.
}
\bea\label{eq:event-deptype}
\widetilde{\Event} & = & \coprod_{u\in \Subj} \coprod_{i \in \Obj_u} \Act_{ui}
\eea 
But since $\widetilde \Event \subseteq
 \Event$, the crude form \eqref{eq:event} generally suffices, and the more precise form  \eqref{eq:event-deptype} is used when additional clarity is needed, e.g., in Sec.~\ref{Sec:au-av}.

\para{Access matrices as event spaces.} If an event is an action of  a subject on an object, then a permission matrix corresponds to a set of permitted events along the one-to-one correspondence 
\[
\prooftree
\Subj \times \Obj \tto M \WP\Act
\justifies
\widehat{M} \subseteq \Event
\endprooftree
\] 
defined by
\bear
<a,i,u> \in \widehat M & \iff & a \in M_{ui}
\eear
Since any access matrix $\Subj \times \Obj \tto{B} \WP\Act$ similarly corresponds to a set of events $\widehat B^q\subseteq \Event$, the access control security requirement  in \eqref{eq:AC} now becomes $\widehat B^q \subseteq \widehat M$. The access matrices from Ch.~\ref{Chap:Resource} correspond to subsets of events. At each moment in time, the matrix view, and the event space view are equivalent. The latter is, however, more convenient when we need to take a "historic perspective", and study resource security of processes in time.

To get going, let us take a look at a couple of examples of histories and properties. In general, events are actions of subjects on objects. But when subjects don't matter, or when there is just one, then the events are actions on objects. When objects don't matter, or there is just one, then the events are just actions. We begin from this simplest case, and then broaden to richer frameworks.

\para{Example 1: sheep.} We begin with Alice as the only subject, and her sheep as the only object, so that an event is just one of Alice's actions on the sheep, i.e., 
\bear
\Event\  \ =\ \  \Act  & = &   \{\mbox{milk}, \mbox{wool}, \mbox{meat}\}.
\eear 
The trace 
\bear
\vec m & = & \seq{ \milk \ \milk \ \milk \ \milk \ \mbox{wool} \ \milk } 
\eear
means that Alice milked her sheep 4 times, then sheared the wool from the sheep, and then milked her again. Consider the following trace properties: 
\bea
\mbox{MilkWool} & = & \{\milk,   \mbox{wool}\}^\ast \label{eq:MilkWool}\\ 
\mbox{MilkWoolMeat} & = & \{\milk,  \mbox{wool}\}^\ast ::  \mbox{meat} \label{eq:MilkWoolMeat}\\
\mbox{MilkWoolWool} & = & \{\milk,  \mbox{wool}\}^\ast ::  \mbox{wool} 
\label{eq:MilkWoolWool}\\
\mbox{MilkWoolMeatMeat} & = & \{\milk,   \mbox{wool},\mbox{meat}\}^\ast ::  \mbox{meat} \label{eq:MilkWoolMeatMeat}\\
\mbox{MeatWoolMilkMilk} & = & \{\mbox{meat},   \mbox{wool}, \milk \}^\ast ::  \milk 
\label{eq:MeatWoolMilkMilk}\\ 
\mbox{MilkWoolAnnual} & = & \milk^\ast \cup \big[\underbrace{\milk\, \milk\,  \ldots\,  \milk}_{365\ {\rm times}}\  \mbox{wool}\big]::\mbox{MilkWoolAnnual}\label{eq:MilkWoolAnnual}
\eea
They constrain how Alice uses her sheep resources as follows:
\begin{itemize}
\item \eqref{eq:MilkWool} says that Alice can milk and shear her sheep at will, but cannot use its meat;
\item \eqref{eq:MilkWoolMeat} says that Alice can milk and shear her sheep at will, but only until she eats its meat;

\item \eqref{eq:MilkWoolWool} says that Alice can milk and shear her sheep at will, and in the end has to shear it;

\item \eqref{eq:MilkWoolMeatMeat} says that Alice can milk, shear, and eat her sheep at will, and in the end has to eat it;

\item \eqref{eq:MeatWoolMilkMilk} says that Alice can milk, shear, and eat her sheep at will, and in the end has to milk it;

\item \eqref{eq:MilkWoolAnnual} says that Alice can use milk and wool, provided that she shears the wool at most once per year
\end{itemize}

\para{Example 2: elevator.}  Time passed, and Alice and Bob moved from their houses on Uruk Lane into apartments in a tall building with an elevator, like in \cref{Fig:elevator}. 
\begin{figure}[htp]
\begin{center}
\includegraphics[height=6cm]{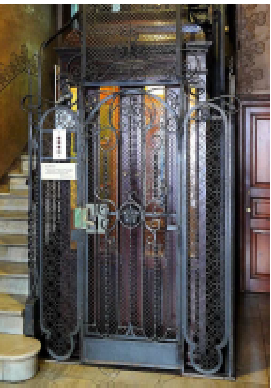}
\caption{The elevator has a call button at each floor, and the call buttons for all floors in the cabin.}
\label{Fig:elevator}
\end{center}
\end{figure}
They are now sharing not only the public spaces in their neighborhood but also the elevator in their building. Making sure that the elevator is safe and secure requires understanding how it works. For simplicity, we first assume that it works the same for everyone and omit subjects from the model. The relevant types are thus:
\begin{itemizz}{.25}
\item objects $\Obj\  =\  \big\{0, 1, 2, \ldots, n\big\}$, where
\begin{itemize}
\item $i$ denotes the $i$-th floor of the building;
\end{itemize}
\item actions $\Act \ =\  \big\{\send{}, \recv{} \big\}$, where 
\begin{itemize}
\item $\send{}$ means ``call'' and
\item $\recv{}$ means ``go'';
\end{itemize}
\item  events $\Event\  =\  \Obj \times \Act\ = \ \big\{\send{i}, \recv{i} \  \big|\ i = 0, 1, 2,\ldots, n \big\}$, where\footnote{An action $a$ performed on an object $i$ is usually written as $a_i$. Here we write $\send i$ instead of $<>_i$ and $\recv i$ instead of $()_i$.}
\begin{itemize}
\item $\send{i}$  means ``call elevator to floor $i$'' and
\item $\recv{i}$  means ``go in elevator to floor $i$''.
\end{itemize}
\end{itemizz}
An elevator history is thus a sequence of calls $\send i$  and services $\recv i$. Such sequences of events are the elements of the set 
\bear \Event^\ast & = & \left\{\, \big[\send{x_{0}}\, \send{x_{1}}\, \recv{x_0}\, \send{x_2}\, \send{x_{3}}\ldots\ \recv{x_k} \big]\  \big| \ x_0, x_1, \ldots, x_k   \in \Obj \
\right\}.
\eear
The variables $x_k\colon \Obj$ denote the floor numbers where the elevator is called or sent. The elevator receives the calls as the inputs and provides its services as the outputs.  A process diagram is displayed in \cref{Fig:elevator-process}. 
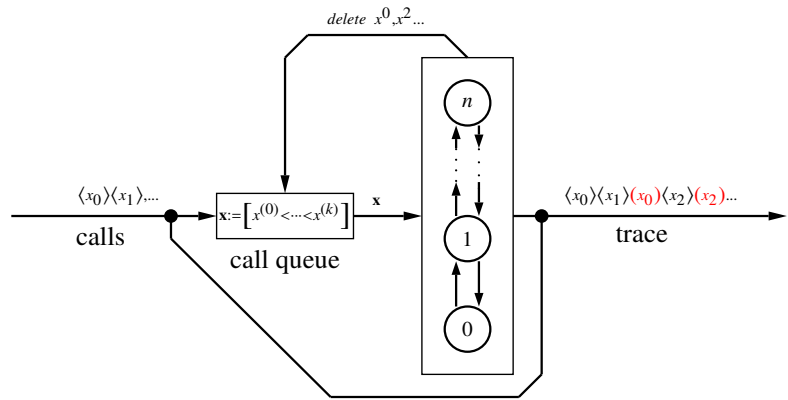
\begin{figure}[htp]
\newcommand{\remove}{\scriptscriptstyle delete\ x^0, x^2\ldots}
\newcommand{\add}{\scriptscriptstyle\vec x}
\newcommand{\xss}{\scriptscriptstyle \left<x_{0}\right> \left<x_{1}\right>,\ldots}
\newcommand{\calls}{\scriptscriptstyle\vec x := \left[x^{(0)} \lt \cdots\lt x^{(k)}\right]}
\newcommand{\callls}{\footnotesize calls}
\newcommand{\trace}{\footnotesize trace}
\newcommand{\stack}{\footnotesize call queue}
\newcommand{\ffloor}{\scriptstyle 0}
\newcommand{\sfloor}{\scriptstyle 1}
\newcommand{\tfloor}{\scriptstyle n}
\newcommand{\yss}{\scriptscriptstyle\left<x_{0}\right> \left<x_{1}\right> {\color{red}\left(x_0\right)} \left<x_{2}\right>{\color{red}\left(x_2\right)}\ldots}
\begin{center}
\def\JPicScale{.6}
\ifx\JPicScale\undefined\def\JPicScale{1}\fi
\psset{unit=\JPicScale mm}
\psset{linewidth=0.3,dotsep=1,hatchwidth=0.3,hatchsep=1.5,shadowsize=1,dimen=middle}
\psset{dotsize=0.7 2.5,dotscale=1 1,fillcolor=black}
\psset{arrowsize=1 2,arrowlength=1,arrowinset=0.25,tbarsize=0.7 5,bracketlength=0.15,rbracketlength=0.15}
\begin{pspicture}(0,0)(165,82.5)
\psline[linewidth=0.5,arrowlength=2,arrowinset=0]{<-}(85,40)(70,40)
\psline[linewidth=0.5,arrowlength=1.5,arrowinset=0,bracketlength=0.65]{->}(92.5,55)(92.5,60)
\psline[linewidth=0.5,arrowlength=1.5,arrowinset=0,bracketlength=0.65]{->}(97.5,60)(97.5,55)
\psline[linewidth=0.5,arrowlength=1.5,arrowinset=0,bracketlength=0.65]{->}(92.5,20)(92.5,30)
\psline[linewidth=0.5,arrowlength=1.5,arrowinset=0,bracketlength=0.65]{->}(97.5,30)(97.5,20)
\psline[linewidth=0.5,linestyle=dotted,dotsep=2](92.5,47.5)(92.5,55)
\psline[linewidth=0.5,linestyle=dotted,dotsep=2](97.5,55)(97.5,45.88)
\rput{0}(95,15){\psellipse[linewidth=0.5](0,0)(5,-5)}
\pspolygon[](85,75)(105,75)(105,5)(85,5)
\rput{0}(95,35){\psellipse[linewidth=0.5](0,0)(5,-5)}
\rput{0}(95,65){\psellipse[linewidth=0.5](0,0)(5,-5)}
\psline[linewidth=0.5,arrowlength=1.5,arrowinset=0,bracketlength=0.65]{->}(92.5,40)(92.5,47.5)
\psline[linewidth=0.5,arrowlength=1.5,arrowinset=0,bracketlength=0.65]{->}(97.5,47.5)(97.5,40)
\pspolygon[](40,45)(70,45)(70,35)(40,35)
\rput(55,40){$\calls$}
\psline[linewidth=0.5,arrowlength=2,arrowinset=0]{<-}(40,40)(-5,40)
\rput{0}(30,40){\psellipse[linewidth=0.5,fillstyle=solid](0,0)(1.25,-1.25)}
\psline[linewidth=0.5](65,0)(30,35)
\psline[linewidth=0.5](106.25,0)(65,0)
\psline[linewidth=0.5,arrowlength=2,arrowinset=0]{<-}(165,40)(105,40)
\psline[linewidth=0.5](106.25,0)(111.25,5)
\rput{0}(111.25,40){\psellipse[linewidth=0.5,fillstyle=solid](0,0)(1.25,-1.25)}
\psline[linewidth=0.5](111.25,5)(111.25,40)
\psline[linewidth=0.5](95,75)(90,80)
\psline[linewidth=0.5](90,80)(60,80)
\psline[linewidth=0.5,arrowlength=2,arrowinset=0]{<-}(55,45)(55,75)
\rput[B](75,82.5){$\remove$}
\rput[b](75,42.5){$\add$}
\rput[br](27.5,42.5){$\xss$}
\rput[bl](116.25,42.5){$\yss$}
\rput(95,15){$\ffloor$}
\rput(95,35){$\sfloor$}
\rput(95,65){$\tfloor$}
\psline[linewidth=0.5](30,35)(30,40)
\psline[linewidth=0.5](60,80)(55,75)
\rput[tr](20,37.5){\callls}
\rput[tl](127.5,37.5){\trace}
\rput[t](55,32.5){\stack}
\end{pspicture}
\caption{Elevator requests flow in on the left and come shuffled with services on the right}
\label{Fig:elevator-process}
\end{center}
\end{figure}
The stream of calls is entered on the left, and the process trace streams out on the right, where the elevator arrivals to where it was called are inserted between the calls. In-between is the service scheduler, which maintains the queue of open elevator calls. The calls are pushed on a queue and popped from it after they are responded to. This is implemented as a feedback loop from the elevator controller, which informs the scheduler about the elevator arrivals. When the scheduler matches an arrival with a previous call, it pops the call. The process can be viewed as a channel with feedback. The scheduler usually sorts the calls in order of the floors and changes the priorities depending on the state: it pops the next floor up on the way up and the next floor down on the way down. This is obviously a crude, high-level picture that needs to be refined for actual implementations. 

But even this crude picture illustrates the main point: that the required properties of processes, such as an elevator, or Bob's sheep bank from Sec.~\ref{Sec:MLS}, can be expressed as trace properties, which we discuss next. 

%

\subsection{Dependability properties: safety and liveness} \label{Sec:safe-live}\sindex{dependability}\sindex{safety}\sindex{liveness}
The idea of safety and liveness is that a process is
\begin{itemize}
\item safe if \emph{bad stuff never happens}, and
\item alive if \emph{good stuff eventually happens}.
\end{itemize}
As they emerged in software engineering \cite{Lamport:safliv}, the properties expressible in terms of safety and liveness came to be called \emph{dependability}\/ properties. When a geometric view of such properties got worked out \cite{Alpern-Schneider:safliv}, it turned out that \emph{all}\/ properties expressible as sets, as in Def.~\ref{Def:property}, can be expressed as an intersection of a safety and a liveness property. This will be spelled out in Sec.~\ref{Sec:Geo-saf}.

%

\para{Unsafety} implements the idea that \sindex{unsafety} there is no good stuff (actions) happening in any future: 
\begin{quote} $U$ is an unsafety property if every unsafe history remains unsafe in all futures.
\end{quote}

\para{Safety}\sindex{safety} implements the idea that there is no bad stuff (actions) happening in the past: 
\begin{quote} $S$ is a safety property if every safe history has always been safe in the past.
\end{quote}

A convenient way to formalize safety in terms of traces is to declare that \emph{unsafety is persistent}: once bad stuff (actions) happens, it cannot become good in any future.  \emph{An unsafe history remains unsafe forever.} More precisely, if a history $\vec x$ is unsafe, then no future $\vec z$ can become safe again, i.e.
\bea\label{eq:saf-conv}
\forall \vec x\vec z \in \Event^\ast. \ \ \vec x \not \in S\ \wedge \vec x \sqsubseteq \vec z & \implies & \vec z  \not \in S.
\eea
The other way around (and equivalently), if a history is safe, then its past must also be safe, i.e.
\bea\label{eq:saf}
\forall \vec x\vec z \in \Event^\ast.\ \ \vec x \sqsubseteq \vec z \in S\ \ \Longrightarrow \ \ \vec x \in S.
\eea

\para{Liveness} implements the idea that \sindex{liveness} there is always a good stuff (actions) in the future: 
\begin{quote} $L$ is a liveness property if every history can reach it in some future.
\end{quote}
Intuitively, a set of histories is a liveness property if every given history may become \emph{alive}\footnote{This is not everyone's idea of the meaning of the word "alive", but it does make sense for computations. You look at a process, do not see that it works, and ask yourself "Is this process alive?" The answer "yes" usually means that you will see that it works sometime in the future.} in the future. Formally,  $L\subseteq \Event^\ast$ is a liveness property if
\bea\label{eq:liv} 
\forall \vec x\in \Event^\ast\  \exists \vec z \in L. \ \ \   \vec x \sqsubseteq \vec z.
\eea
In terms of the prefix ordering $\sqsubseteq$ from Prerequisites~\ref{Prereq:list}\eqref{eq:prefix-order}, this means that  safety properties are the $\sqsubseteq$-lower closed sets, whereas liveness properties are reached from below.
\begin{definition}\label{Def:safliv}
The families of safety properties and liveness properties over the event set $\Event$ are respectively
\bea\label{eq:safdef}
\SAF_\Event &= & \left\{S \in \WP\left(\Event^\ast\right)\ |\ 
\forall \vec x \in \Event^\ast\ \forall \vec y \in \Event^\ast.\ \vec x::\vec y \in S \implies \vec x\in S
\right\}, \\
\LIV_\Event & = & \left\{L \in \WP\left(\Event^\ast\right)\ |\ 
\forall \vec x \in \Event^\ast\  \exists \vec y\in \Event^\ast.\ \vec x::\vec y \in L
\right\}.
\label{eq:livdef}
\eea
When confusion seems unlikely, we omit the subscript $\Event$ from $\SAF_\Event$ and $\LIV_\Event$. 
\end{definition}

\para{Remark.} Note that the definition of safety in \eqref{eq:safdef} is equivalent to \eqref{eq:saf}, whereas the definition of liveness in \eqref{eq:livdef} is equivalent to \eqref{eq:liv}.

\para{Different situations require different safety and liveness requirements.} 
There are examples where both $S\subseteq \Event^\ast$ and $\neg S  = \Event^\ast \setminus S$ are safety properties. There are examples where both $L\subseteq \Event^\ast$ and $\neg L  = \Event^\ast \setminus L$ are liveness properties.  This does not mean that a system can be both safe and unsafe at the same time, or both alive and dead. It means that desirable concepts of safety and liveness may vary from situation to situation. The intent of Def.~\ref{Def:safliv} is not to provide objective or universal properties that every safe process should satisfy.  Def.~\ref{Def:safliv} describes the properties of properties that can be meaningfully declared to be the desired safety or liveness notions for a particular family of processes.

While there are many properties that are both safe and unsafe and properties that are both alive and dead, the only property that is both safe and alive is the trivial one, i.e., $\SAF\ \cap\  \LIV = \left\{\Event^\ast \right\}$, whereby every history is safe. More examples follow.

\para{Example 1: Safety and liveness of sheep.}\label{Exmpl:safliv} The properties of Alice's interactions with her sheep using $\Event=\{\milk, \mbox{wool, meat}\}$ are as follows:
\begin{itemize}
\item \eqref{eq:MilkWool} $\mbox{MilkWool}\ \in\ \SAF \ \setminus\  \LIV $;
\item \eqref{eq:MilkWoolMeat} $\mbox{MilkWoolMeat}\ \not \in\ \SAF \ \cup\  \LIV $;
\item \eqref{eq:MilkWoolWool} $\mbox{MilkWoolWool}\ \not \in\ \SAF \ \cup\  \LIV $;

\item \eqref{eq:MilkWoolMeatMeat} $\mbox{MilkWoolMeatMeat}\ \in\ \LIV \ \setminus\  \SAF $;

\item \eqref{eq:MeatWoolMilkMilk} $\mbox{MeatWoolMilkMilk}\ \in\ \LIV \ \setminus\  \SAF $;

\item \eqref{eq:MilkWoolAnnual} $\mbox{MilkWoolAnnual}\ \in\ \SAF \ \setminus\  \LIV $.
\end{itemize}

\para{Example 2: Safety and liveness of an elevator.}\label{Exmpl:saflivelev} An example of reasonable dependability properties required from an elevator are:
\begin{itemize}
\item\emph{\textbf{safety:}} the elevator should \emph{only}\/ come to a floor to which it was called, and
\item\emph{\textbf{liveness:}} the elevator should eventually go to \emph{every}\/ floor where it was called. 
\end{itemize}
Using the notation from Prerequisites \ref{Prereq:list}, these properties can be formally stated as
\bear
 \overline{\recv i} & \subseteq & \overline{\exists \send i \prc \recv i}, \hspace{4em}  \leftsquigarrow \text{\textit{safety}}
 \\
\overline{\send i} & \subseteq &\overline{\send i \prc \exists \recv i}. \hspace{4em} \leftsquigarrow \text{\textit{liveness}}
\eear
Safety says that every arrival $\recv i$  must be preceded by a call $\send i$. Liveness says that every call $\send i$ must eventually be followed by an arrival $\recv i$.
The properties that we specified are thus
\bea\label{eq:SafE}
\SafE & = &
 \left\{\ \vec t\in \Event^\ast \ \Big|\ \forall i\in \Obj.\ \vec t\in \overline{\recv i} \ \Longrightarrow\ \vec t\in \overline{\exists \send i \prc \recv i}\right\}\ \ =\ \ \bigcap_{i=0}^n\ \neg \overline{\recv i}\ \cup \ \overline{\exists \send i \prc \recv i} \  \notag  \\
 \label{eq:Safe-elev} \\
 LivE & = &  \left\{\ \vec t\in \Event^\ast \ |\ \forall i\in \Obj.\ \vec t\in \overline{\send i} \ \Longrightarrow\ \vec t\in \overline{\send i \prc \exists \recv i}\right\}\ \ =\ \ \bigcap_{i=0}^n\ \neg \overline{\send i}
\ \cup \   \overline{\send i \prc \exists \recv i}, \ \notag \\ 
\label{eq:Live-elev}
 \eea
where $\neg X = \Event^\ast \setminus X$ denotes the complement set.

\section{Authority and availability}\label{Sec:Authorization}

\subsection{Strictly local events and properties}
\label{Sec:Local}
As indicated in Sec.~\ref{Sec:histories}, we model security properties as sets of histories $P\subseteq \Event^\ast$, where $\Event \ = \ \Act \times \Obj \times \Subj$, or more generally $\Event \ = \ \Act \times \Obj \times \Subj\times \Levels$, when security levels need to be taken into account.  An event $x \in \Event$ is thus an action of a subject on an object, i.e., a tuple $x = <a, i, u>$, or $x = <a,i,u,\ell>$, where $\ell$ is a security level. A more general notion of events will be needed for channel modeling in Sec.~\ref{Sec:shared}, but for the moment we stick with the tuples. The reason is that they conveniently display the \emph{strict localities}\/ \sindex{locality!strict} of each event as the component of the tuple that represents it. 

\para{Strictly local views of events.} The type $\Event$ of events thus comes with the projections
\beq \label{eq:projections} \suct\colon\Event \to \Subj,\qquad\qquad  \obct\colon\Event \to \Obj,\qquad\qquad \actn\colon\Event \to \Act, \qquad\qquad \levl \colon\Event \to \Levels\eeq
assigning to each event $x\in \Event$ the unique subject $\suct(x)$ that observes or enacts it, the unique action $\actn(x)$ that takes place, the unique object $\obct(x)$ that is acted on, and the security level $\levl(x)$ where the event takes place. All such projections $\ppP \colon \Event \to \VVv$ induce the \emph{strictly local} event types $\Event_v \ = \  \left\{x\in \Event \ |\ \ppP(x) = v\right\}$, which partition the event type as the disjoint unions 
\bea\label{eq:partitions} \Event & = & \coprod_{v\in \VVv} \Event_v\qquad \mbox{ where }\hfill \VVv\in \{\Act, \Obj, \Subj, \Levels\}
\eea
The strictly local event types that we will be working with here are thus:
\begin{itemize}
\item $\Event_u = \big\{<a, i, u>\ |\ a\in \Act, i\in \Obj\big\}$ --- the events observable by the subject $u$;
\item  $\Event_i = \big\{<a, i, u>\ |\ a\in \Act, u\in \Subj\big\}$ --- the events involving the object $i$;
\item  $\Event_a = \big\{<a, i, u>\ |\ i\in \Obj, u\in \Subj\big\}$ --- the events where someone performs the action $a$ on something.
\end{itemize}
These types are \textbf{strictly}\/ local in the sense that they are disjoint, i.e., $\Event_{v}\cap \Event_{w} = \emptyset$ as soon as $v\neq w$, for $v,w$ both of any of the above types $\VVv\in \{\Act, \Obj, \Subj, \Levels\}$.

\para{Locality vs \emph{strict}\/ locality.} Localizing the events at the various security levels, various subjects, various actions, and various objects is necessary for studying security dynamics. In Chapters~\ref{Chap:Process}--\ref{Chap:Geometry}, we always assume the \emph{strict}\/ localities, partitioning the events into the disjoint families in~\eqref{eq:partitions}. In the later chapters (starting from Sec.~\ref{Sec:shared}), we will use a more general and more realistic notion of locality. We start with the special case of strict localities, to keep things simple until we get used to them. \textbf{\emph{In Chapters~\ref{Chap:Process}--\ref{Chap:Geometry}, \emph{``local''}\/ always means \emph{``strictly local''}, unless specified otherwise.}} We will keep repeating ``strict'' to introduce new concepts, but will elide it eventually.

\para{Strictly local histories.}  \sindex{history!strictly local}Process models are usually set up so that Alice only sees her own actions. An event $x = <a,i,u>\in \Event$ is thus observable for Alice just when $u = A$. This assumption is imposed on the model using the \emph{strict purge}\/ operation\sindex{purge!strict} $\restriction_A : \Event^\ast \to \Event^\ast_A$  defined as
\bea\label{eq:purge}
\seq{}\restr_A & = & \seq{}, \label{eq:restrict}\\
\left(\vec x\cons y\right)\restr_{A} & = & \begin{cases} \left(\vec x\restr_{A}\right)\cons y & \mbox{ if } y = <a,i,A> \mbox{ for some } a\in \Act\mbox{ and } i\in \Obj \\
\vec x\restr_{A} & \mbox{ otherwise.}
\end{cases}  \notag
\eea 
Similar strict purge operations can be defined for any of the projections in \eqref{eq:projections}. The general (nonstrict) purge operations will be defined and used in  Ch.~\ref{Chap:Channel}.

\para{Notation.} We write $x_A\in \Event_A$ for local events, and $\vec x_A \in \Event_A^\ast$ for strictly local histories. Note the difference between
\begin{itemize}
\item  the strict purge $\vec x\restr_A\in \Event_A^\ast$ of a global history $\vec x \in \Event^\ast$, and
\item a strictly local history $\vec x_A\in \Event_A^\ast$, specified strictly locally, with no reference to a global context.
\end{itemize}
Alice's \emph{strictly local view}\/ of the \sindex{view!strictly local} property $\Property\subseteq \Event^\ast$ is written
\bea
\Property_A &= & \left\{\vec x\restr_A\ |\  \vec x \in \Property\right\}.
\eea

\para{Strictly localized properties.}  A history $\vec t_A\in \Event_A^\ast$ observed by Alice is a strict localization of some global history $\vec z \in \Event^\ast$ such that $\vec z \restr_A = \vec t_A$. While $\vec z$ may be just $\vec t_{A}$ if no one except Alice did anything in the given global history; or there may be lots of events unobservable for Alice. She cannot know. But she does know that there are infinitely many possible global histories $\vec z$ consistent with her observation $t_{A}$. 

If Alice observes $\vec z\restr_A$, Bob observes $\vec z\restr_B$, and they tell each other what they have seen, they will still not be able to derive $\vec z$, even if they are alone in the world. The reason is that neither of them can tell how exactly their actions were mixed: which of Alice's actions preceded Bob's actions, and the other way around.

\para{Example.} Let $\Event = \Event_A \coprod \Event_B$ where $\Event_A = \{a\}$ and $\Event_B = \{b\}$. Suppose that a history $\vec t$ satisfies the property $\Property = \big\{\sseq{aaaa},\sseq{aabb}, \sseq{baab},\sseq{bbbb}\big\}$. If Alice observes $\vec t_A = \sseq{aa}$ and Bob observes $\vec t_B = \sseq{bb}$, can they be sure that the property $\Property$ has been satisfied?
The possible global histories consistent with Alice's and Bob's local observations are
\bear
\widehat t & = & \big\{\vec z\in \Event^\ast\ |\  \vec z\restr_A = \sseq{aa} \wedge \vec z\restr_B = \sseq{bb}\big\}\\
& = & \big\{\sseq{aabb}, \sseq{abab}, \sseq{abba}, \sseq{baab}, \sseq{baba}, \sseq{bbaa}\big\}.
\eear
Any of these actions could have taken place. Alice and Bob will only be able to verify locally a property $\Property$ is satisfied if it is a \emph{strictly localized}\/ property, according to the next definition.

\bigskip
\begin{definition}\label{Def:localized}\sindex{localization!strict} \sindex{strict localization|see{localization, strict}}
The \emph{strict localization} of a  property $\Property\subseteq \Event^\ast$ is the set $\widehat \Property$ of all histories with the projections satisfying $\Property$, i.e.
\bea\label{eq:localization}
\widehat \Property & = & \left\{\vec z\in \Event^\ast\ |\  \forall u\in \Subj.\ \vec z \restr_u \in \Property_u\right\}\\ && \mbox{ where } \Property_u = \left\{ \vec x\restr_u\ |\ \vec x\in \Property\right\}. \notag
\eea

A property $\Property\subseteq \Event^\ast$ is \emph{strictly localized}  when $\Property = \widehat \Property$, i.e.
\bea\label{eq:proj}
\vec t \in \Property & \iff & \forall u\in \Subj.\ \vec t \restr_u \in \Property_u.
\eea
The family of strictly localized properties is
\sindex{property!strictly localized}\sindex{strictly localized property|see{property, strictly localized}}
\bea
\Local & = & \{\Property \in \WP(\Event^\ast)\ |\ \Property = \widehat \Property\}.
\eea
\end{definition}

\subsection{Resource security as localized safety and liveness} \label{Sec:au-av}
The simplest security properties arise as localized dependability properties. In particular, authority and availability can be construed as safety and liveness from Alice's, Bob's, and other subjects' points of view. A process is thus
\begin{itemize}
\item authorized if bad stuff (actions) does not happen {\color{red} to anyone}: \emph{all bad resource requests are rejected};
\item available if good stuff (actions) happens {\color{red} to everyone}: \emph{all good resource requests are eventually accepted}.
\end{itemize}
On a closer look, it turns out that there are several reasonable views of what are "good resource requests". Do all subjects need to coordinate to make the request; or is it enough that some subjects make the request, and no one objects; or should a majority of some sort be required? To study such questions, we need a formalization.

Refining the idea of safety from \eqref{eq:saf-conv}, we say that $\Closed \subseteq \Event^\ast$ is an authorization property if it satisfies the following condition:
\bea\label{eq:au-conv}
\forall \vec x\vec z \in \Event^
\ast. \ \ \vec x \not \in \Closed\ \wedge \vec x \sqsubseteq \vec z & \implies & \exists u\in \Subj.\ \vec z \restr_u  \not \in \Closed_u.
\eea
In words, if there is an authority breach in some history, then in every future of that history, some subject will observe that their authority has been breached. Every authority breach is a breach of someone's authority. The logical converse of \eqref{eq:au-conv}, refining \eqref{eq:saf}, characterizes authority (or synonymously, an authorization property) $\Closed$ by
\bea\label{eq:authorization}
\forall \vec x\vec z \in \Event^
\ast.\ \ \Big(\vec x \sqsubseteq \vec z\ \wedge\ \forall u\in \Subj.\ \vec z \restr_u \in \Closed_u\Big)\ \ \Longrightarrow \ \ \vec x \in \Closed.
\eea
In other words, if the local views $\vec z \restr_u$ of a history $\vec z$ appear authorized to all subjects $u\in \Subj$, then all past histories $\vec x\sqsubseteq \vec z$ must have been authorized globally. Note that the statement packs two intuitively different requirements. One is that if a history $\vec z$ is authorized, then every past history $\vec x\sqsubseteq \vec z$ is authorized; i.e., any authorization property is a safety property. The other one is that if all local projections of $\vec z \restr_u$ are authorized, then $\vec z$ is authorized globally; i.e., any authorization property is a local property. The fact that these two requirements are equivalent to authority is stated in Prop.~\ref{Prop:saf-liv-au-av}(b) as claim \eqref{prop:auu}.

Liveness from  \eqref{eq:liv} can be localized in more than one way. We first consider what seems to be the weakest reasonable localization: 
\bea\label{eq:availability} 
\forall \vec x\in \Event^\ast\  \forall u\in \Subj\ \ \exists \vec z \in \Dense. \ \ \  \vec x\restr_u \sqsubseteq \vec z\restr_u
\eea
In other words, an availability property is a liveness property where any subject on their own can assure the liveness.

\smallskip
\begin{definition}\label{Def:AuthAvail}
For any family of subjects $\Subj$ for events partitioned into $\Event  =  \coprod_{u\in \Subj} \Event_u$, the authorization and the availability properties are respectively defined
\bea\label{eq:AU}
\AU_{\Event} & = & \left\{\Closed \in\WP\left(\Event^*\right)\ |\ \forall \vec x\in \Event^\ast\ \forall \vec y \in \Event^\ast.\ \Big(\forall u\in \Subj.\ \left(\vec x::\vec y\right) \restr_u \in \Closed_u\Big) \implies \vec x\in \Closed
\right\}, \quad\quad \\
\AV_{\Event} & = & \Big\{\Dense \in \WP\left( \Event^*\right)\ |\ \forall \vec x \in \Event^\ast\  \forall u\in \Subj\ \ \exists \vec y_u\in \Event^\ast_u.\ \ \vec x\restr_u ::\vec y_u \in \Dense_u
\Big\}. \quad\quad \label{eq:AV}
\eea
When confusion seems unlikely, we omit the subscript $\Event$.
\end{definition}

\begin{proposition}\label{Prop:saf-liv-au-av}
Let $\Property \subseteq \Event^\ast$ where $\Event = \coprod_{u\in \Subj} \Event_u$. Then the following statements are true.
%
\begin{enumerate}[a)]
%
\item Authorization is strictly local safety. Availability implies strictly local liveness. Formally, this means: 
\bea 
\Property \in \AU_\Event & \iff & \forall u\in \Subj.\ \ \Property_u \in \SAF_{\Event_u}\ \wedge \ P=\widehat P,
\label{prop:au}\\
\Property \in \AV_\Event & \iff & \forall u\in \Subj.\ \ \Property_u \in \LIV_{\Event_u}.
\label{prop:av}
\eea
\item Authorization is just the global safety that is strictly local. The strict localization of an availability property is a liveness property. Formally:
\bea
\AU_\Event & = & \{\Property \in \WP(\Event^\ast)\ |\ P \in \SAF_\Event\ \wedge \ P=\widehat P \}, \label{prop:auu}\\
\AV_\Event & \subseteq & \{\Property \in \WP(\Event^\ast)\ |\ \widehat P \in \LIV_\Event\}.
\label{prop:avv}
\eea
\end{enumerate}
\end{proposition}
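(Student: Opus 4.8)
The plan is to reduce both parts to the defining conditions \eqref{eq:AU} and \eqref{eq:AV} by pushing the strict purge $\restr_u\colon\Event^\ast\to\Event_u^\ast$ through concatenation. Three elementary facts carry the whole argument. First, $\restr_u$ is a monoid homomorphism, $(\vec x::\vec y)\restr_u=(\vec x\restr_u)::(\vec y\restr_u)$, by an immediate induction on $\vec y$ from \eqref{eq:purge}. Second, $\vec s\restr_u=\vec s$ for every strictly local history $\vec s\in\Event_u^\ast$, since strictness (disjointness of the $\Event_v$) means nothing is deleted. Third, $\Property\subseteq\widehat\Property$ always holds by the very definition $\Property_u=\{\vec x\restr_u\mid\vec x\in\Property\}$, so the hypothesis $\Property=\widehat\Property$ is the same as $\widehat\Property\subseteq\Property$. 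I would also isolate a splitting lemma: if $\vec w\restr_u=\vec s::\vec t$, then letting $\vec x$ be the shortest prefix of $\vec w$ with $\vec x\restr_u=\vec s$ gives a factorization $\vec w=\vec x::\vec y$ with $\vec y\restr_u=\vec t$ — the purges of the successive prefixes of $\vec w$ run through every prefix of $\vec s::\vec t$, adding at most one event at a time, so the value $\vec s$ is attained.

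For part (a), equation \eqref{prop:av} is direct both ways: if each $\Property_u$ is a liveness property, then given $\vec x\in\Event^\ast$ and $u\in\Subj$, applying liveness of $\Property_u$ to $\vec x\restr_u$ produces $\vec y_u\in\Event_u^\ast$ with $(\vec x\restr_u)::\vec y_u\in\Property_u$, which is exactly clause \eqref{eq:AV}; conversely, feeding a local history $\vec s\in\Event_u^\ast$ (regarded as a global one) into \eqref{eq:AV} and using $\vec s\restr_u=\vec s$ yields liveness of $\Property_u$. For \eqref{prop:au}, the $\Leftarrow$ direction rewrites the hypothesis $(\vec x::\vec y)\restr_u\in\Property_u$ as $(\vec x\restr_u)::(\vec y\restr_u)\in\Property_u$, uses prefix-closure of $\Property_u$ to get $\vec x\restr_u\in\Property_u$ for all $u$, hence $\vec x\in\widehat\Property=\Property$. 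For $\Rightarrow$: taking $\vec y=\seq{}$ in \eqref{eq:AU} gives $\widehat\Property\subseteq\Property$, i.e.\ $\Property=\widehat\Property$; and to see $\Property_u$ is prefix-closed, given $\vec s::\vec t\in\Property_u$ choose $\vec w\in\Property$ with $\vec w\restr_u=\vec s::\vec t$, factor $\vec w=\vec x::\vec y$ by the splitting lemma, note $(\vec x::\vec y)\restr_v=\vec w\restr_v\in\Property_v$ for every $v\in\Subj$, conclude $\vec x\in\Property$ from \eqref{eq:AU}, and read off $\vec s=\vec x\restr_u\in\Property_u$.

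Part (b) is then mostly a repackaging of (a). For \eqref{prop:auu}: by \eqref{prop:au} it suffices to show, under the assumption $\Property=\widehat\Property$, that ``$\Property_u\in\SAF_{\Event_u}$ for all $u$'' is equivalent to ``$\Property\in\SAF_\Event$''. One implication is the splitting argument already used and needs no localization (if $\Property$ is prefix-closed and $\vec s::\vec t\in\Property_u$, split a witness $\vec w$ and apply prefix-closure of $\Property$); the converse uses $\Property=\widehat\Property$: if $\vec x::\vec y\in\Property=\widehat\Property$ then $(\vec x\restr_u)::(\vec y\restr_u)\in\Property_u$ for all $u$, prefix-closure of each $\Property_u$ gives $\vec x\restr_u\in\Property_u$ for all $u$, hence $\vec x\in\widehat\Property=\Property$. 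For \eqref{prop:avv}: by \eqref{prop:av} each $\Property_u$ is liveness; given $\vec x\in\Event^\ast$, pick for each $u$ a continuation $\vec t_u\in\Event_u^\ast$ with $(\vec x\restr_u)::\vec t_u\in\Property_u$, and let $\vec z$ be their concatenation; by disjointness of the $\Event_u$ and the homomorphism property, $\vec z\restr_u=\vec t_u$, so $(\vec x::\vec z)\restr_u=(\vec x\restr_u)::\vec t_u\in\Property_u$ for all $u$, i.e.\ $\vec x::\vec z\in\widehat\Property$, whence $\widehat\Property\in\LIV_\Event$.

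The only step requiring a moment's thought is the merge in \eqref{prop:avv}: the continuation $\vec z$ must be a single \emph{finite} history, so one needs $\vec t_u=\seq{}$ to work for all but finitely many $u$. This is automatic: $\Property\in\AV_\Event$ forces $\Property_u\neq\emptyset$ (instantiate \eqref{eq:AV} at $\vec x=\seq{}$), hence $\Property\neq\emptyset$, and any $\vec w\in\Property$ is supported on finitely many subjects, so $\vec w\restr_u=\seq{}$ — and therefore $\seq{}\in\Property_u$ — for all $u$ outside that finite support; for such $u$ with moreover $\vec x\restr_u=\seq{}$ one may take $\vec t_u=\seq{}$, leaving only finitely many nonempty $\vec t_u$ to concatenate. (If $\Subj$ is taken finite, as it is in the running examples, this point is vacuous.) Everything else is a mechanical chase of the purge homomorphism and the splitting lemma, quoting Definitions~\ref{Def:safliv} and~\ref{Def:localized}.
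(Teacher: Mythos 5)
Your proof is correct, and it follows the same overall skeleton as the paper's (reduce everything to \eqref{eq:AU}--\eqref{eq:AV}, use that $\restr_u$ is a concatenation homomorphism, and get $P=\widehat P$ by taking $\vec y=\seq{}$ in \eqref{eq:AU}). It differs in two places, both to your advantage. First, for the direction $\Property\in\AU_\Event\Rightarrow \Property_u\in\SAF_{\Event_u}$ the paper embeds a local history $\vec x_A\cons\vec y_A$ as a global one and then needs $\seq{}\in P_u$ for every $u\neq A$ (which it justifies by an appeal to part (b) and which tacitly uses nonemptiness of $P$); your splitting lemma instead factors an actual global witness $\vec w\in\Property$ with $\vec w\restr_u=\vec s\cons\vec t$ as $\vec w=\vec x\cons\vec y$ and applies \eqref{eq:AU} to that factorization, so the empty-purge issue never arises. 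Second, for \eqref{prop:avv} the paper dismisses the claim as obvious "since $\vec y_u\in\Event_u^\ast\subseteq\Event^\ast$", which elides the real content: the local continuations $\vec y_u$ for distinct $u$ must be merged into a single finite global continuation witnessing liveness of $\widehat\Property$. You carry out this merge explicitly (disjointness gives $\vec z\restr_u=\vec t_u$) and even dispose of the only potential obstruction, infinitely many nonempty $\vec t_u$ when $\Subj$ is infinite, by noting that any $\vec w\in\Property$ has finite support so $\seq{}\in\Property_u$ for cofinitely many $u$. The cost of your route is the extra splitting lemma, but it is elementary and it is exactly the tool that also gives the unlocalized implication $\Property\in\SAF_\Event\Rightarrow\Property_u\in\SAF_{\Event_u}$ in part (b), so it pays for itself.
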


\bpr \textbf{\emph{a)}} Since $\AU_\Event \subseteq \Local_\Event$ follows from \eqref{eq:AU} for $\vec x \cons \vec y \ = \ \vec x$, proving \eqref{prop:au} boils down to showing that the following two implications are equivalent
\bear
\Big(\forall u\in \Subj.\ \left(\vec x \cons \vec y\right)\restr_u \in P_u\Big) & \Longrightarrow & \Big(\forall u\in \Subj.\ \vec x \restr_u \in P_u\Big) \hspace{2.5em} \mbox{--- which means }\hspace{3.6em} P \in \AU_\Event,
\\
 \Big(\forall u\in \Subj.\ \left(\vec x_u \cons \vec y_u\right) \in P_u \ & \Longrightarrow & \hspace{4em}\vec x_u \in P_u\Big) \hspace{2.5em} \mbox{--- which means }\  \forall u\in \Subj.\  P_u \in \SAF_{\Event_u}.
\eear
for all $\vec x,\vec y \in \Event^\ast$ and all $\vec x_u,\vec y_u \in \Event_u^\ast$. The bottom-up direction is valid for all predicates in first-order logic: the second implication is stronger than the first one. Towards the top-down direction, fix a subject $A$, take arbitrary histories $\vec x_A, \vec y_A\in \Event^\ast_A$ such that $\left(\vec x_A::\vec y_A\right) \in P_A$, and set $\vec x = \vec x_A$ and $\vec y = \vec y_A$. For an arbitrary subject $u\in \Subj$ we have 
\bear
\left(\vec x \cons \vec y\right)\restr_u\  \ =\ \ \left(\vec x_A::\vec y_A\right)\restr_u &  = &  \begin{cases} 
\ \left(\vec x_A::\vec y_A\right) \in P_A & \mbox{ if } u=A,\\
\ \seq{} \in P_u & \mbox{ if } u \neq A.
\end{cases}
\eear
Since $\seq{}\in P_u$ follows from part \emph{(b)} below, and $\left(\vec x_A::\vec y_A\right) \in P_A$ was assumed, we have $\forall~u\in~\Subj.\ \vec x \restr_u \in P_u$, as claimed. 

Towards \eqref{prop:av}, we need to show that the following statements are equivalent:
\bear 
\forall \vec x\in \Event^\ast\ \ \exists \vec y_u \in \Event^\ast_u. &\hspace{-1.2em}& \left(\vec x\restr_u :: \vec y_u\right) \in P_u \hspace{2em} \mbox{--- which means }\hspace{3.6em} P \in \AV_\Event, \\
\forall \vec x_u\in \Event_u^\ast\   \exists \vec y_u \in \Event_u^\ast.&\hspace{-1.2em}&   \left(\vec x_u \ :: \vec y_u\right) \in P_u\hspace{2em}\mbox{--- which means }\  \forall u\in \Subj.\  \, P_u \in \LIV_{\Event_u}.
\eear
To show that the first implies the second, consider that the second is just a special case of the first one, 
obtained by taking $\vec x = \vec x_u$, and noting that $\vec x_u\restr_u \ = \ \vec x_u$.
For the converse direction, we consider arbitrary $\vec x_u \in \Event_u^\ast$ and any $\vec x$ with $\vec x\restr_u \  = \ \vec x_u$ for all $u \in \Subj$. Then, the choices of $\vec y_u$ provided by the local liveness
requirements, to meet $\left(\vec x_u \ :: \vec y_u\right) \in P_u$, will also meet the requirements for availability: $\left(\vec x\restr_u \ :: \ \vec y_u\right) \in P_u$, so that availability follows.

\smallskip

\textbf{\emph{b)}} Towards \eqref{prop:auu}, consider
\bear \Big(\forall u\in \Subj.\ \left(\vec x \cons \vec y\right)\restr_u \in P_u\Big) & \Longrightarrow & \vec x \in P \hspace{3em} \mbox{--- which means } P \in \AU_\Event, \\
\left(\vec x \cons \vec y\right) \in P & \implies & \vec x \in P \hspace{3em} \mbox{--- which means } P \in \SAF_\Event.
\eear
The fact that $\AU_\Event\subseteq \SAF_\Event$ means that the first implication follows from the second one. This is true because  $\left(\vec x \cons \vec y\right) \in P$ implies $\left(\vec x \cons \vec y\right)\restr_u \ \in P_u$  for all $u$. The converse clearly also holds as soon as $P$ is local. And \eqref{prop:avv} is obvious, since $\vec y_u \in \Event_u^\ast \subseteq \Event^\ast$.
\epr

\medskip
\para{Example 1: Authority and availability of sheep and oil.} To model sheep security, we zoom out again and go back to the situation where Alice and Bob need to share some of their resources. The subject type is thus $\Subj = \{A,B\}$, the objects are from $\Obj = \{\mbox{sheep}, \mbox{oil}\}$, and the actions are $\Act = \{\mbox{shear}, \mbox{cook}\}$. The possible events in the simple form of \ref{eq:event} would thus be all triples from $\Act \times \Obj \times \Subj$. But since Alice and Bob will not shear the oil or cook the sheep, we take\footnote{See the remark about \emph{Standard and refined $\Event$ types}\/ in Sec.~\ref{Sec:histories} The $\Event$ type used here is in a refined form of \eqref{eq:event-deptype}.}
\bear
\Event & = &   \left\{\mbox{shear sheep}_A, \mbox{cook oil}_B\right\}.
\eear 
Consider the following properties of Alice's and Bob's interactions:
\bea
\mbox{Either} & = & \mbox{shear sheep}_A^\ast \cup \mbox{cook oil}_B^\ast, \label{eq:Either}\\
\mbox{Alternate} & = & \left(\mbox{shear sheep}_A :: \mbox{cook oil}_B\right)^\ast, \label{eq:Alternate}\\ 
\mbox{Finish} & = & \left\{\mbox{shear sheep}_A,  \mbox{cook oil}_B\right\}^\ast :: \mbox{shear sheep}_A :: \mbox{cook oil}_B. \label{eq:Finish}
\eea
These properties provide counterexamples for the converses of the claims of Prop.~\ref{Prop:saf-liv-au-av}:
\begin{itemize}
\item \eqref{eq:Either} $\mbox{Either}\ \ \in\ \ \SAF_\Event  \setminus  \left(\AU_\Event\cup \Local\right)$;
\item \eqref{eq:Alternate} 
$\mbox{Alternate}_A\ \in\ \LIV_{\Event_A}$, and $\mbox{Alternate}_B\ \in\ \LIV_{\Event_B}$, but \ \  $\mbox{Alternate} \not\in \LIV_{\Event} $;
\item \eqref{eq:Finish} $\mbox{Finish}\  \in  \ \LIV_\Event \cap \AV_\Event$.
\end{itemize}

\medskip
\para{Example 2: Authorization and availability of elevator.}  To model the security of the elevator, we consider the events from the point of view of the subjects, i.e., in the form
\bear
\Event & = & \Obj \times \Act \times \Subj\ \ = \ \ \big\{\send{i}_u, \recv{i}_u \  \big|\ i \in \Obj, u \in \Subj \big\}
\eear
where 
\begin{itemize}
\item $\Obj\  =\  \big\{0, 1, 2, \ldots, n\big\}$ are the objects again: the floors of the building (denoted by variables $x_0, x_1, \ldots$);
\item $\Act \ =\  \big\{\send{-}, \recv{-} \big\}$ are the actions:  ``call/go'' and ``arrive'', respectively; 
\item $\Subj \ =\ \big\{A, B\big\}$ are the subjects: Alice and Bob (denoted by variables $Y_0, Y_1, \ldots$).
\end{itemize}
A history is now  in the form
\[\seq{\ssend {x_0}{ Y_0} 
\ssend {x_1}{ Y_1} 
\rrecv{x_2}{ Y_2} 
\ssend {x_0}{ Y_0} 
\rrecv{x_0}{ Y_2}
\ldots}
\]
meaning that  ``$Y_0$ calls to $x_0$, $Y_1$ calls to $x_1$, ${ Y_2}$ arrives to $x_2$, 
$Y_0$ calls to $x_0$ again, ${ Y_2}$ arrives to $x_0$
\ldots''. The dependability properties (which required that the elevator should \emph{only}\/ go where called, and that it should \emph{eventually}\/ go where called) are now refined to
\begin{itemize}
\item\emph{\textbf{authorization:}} the elevator should \emph{only}\/ take \emph{some}\/ subject to a floor if they have a clearance and if they requested it, and
\item\emph{\textbf{availability:}} the elevator should \emph{eventually}\/ take \emph{every}\/ subject with a clearance for the floor that they requested,
\end{itemize}
which generalizes Example 2 from Sec.~\ref{Sec:safe-live} to
\bear
 \overline{\recv i_u} & \subseteq & \overline{\exists \send i_u \prc \recv i_u}, \hspace{4em}  \leftsquigarrow \text{\textit{authority}}
 \\
\overline{\send i_u} & \subseteq &\overline{\send i_u \prc \exists \recv i_u}. \hspace{4em} \leftsquigarrow \text{\textit{availability}}
\eear
\para{Note}, however, that here we need to assume that the subject $u$ is somehow authorized to go to the floor $i$. This is where the static resource security formalism from Chap.~\ref{Chap:Resource} needs to be imported into the dynamic resource security formalisms of histories and properties. In terms of multi-level security, the clearance assumption would be $\clr(u)\geq \loc(i)$. In terms of permission matrices, it would be $\send -, \recv - \in M_{ui}$. Writing for simplicity either of these assumptions as the predicate $\Clr(u,i)$, the above crude idea of authorization and availability properties of the elevator can be formalized to
\bea
\AuthoE & = &
 \Big\{\vec t\in \Event^\ast \ |\ \forall u\in \Subj\ \ \forall i\in \Obj.
  \label{eq:Autho-elev}\\  
 && \hspace{3.6em} \vec t\in \overline{\recv i _u} \ \Longrightarrow\ \left({\color{red}\Clr(u,i)}\ \wedge\  \vec t\in \overline{\exists \send i_u \prc \recv i_u} \right)\Big\}\notag\\
\AvailE & = &  \Big\{\vec t\in \Event^\ast \ |\ \forall u\in \Subj\ \ \forall i\in \Obj.\label{eq:Avail-elev}
\\ 
&& \hspace{3.6em} \left(\vec t\in \overline{\send i _u} \ \wedge\ {\color{red}\Clr(u,i)}\right) \ \Longrightarrow\ \vec t\in \overline{\send i_u \prc \exists \recv i_u}\Big\}\notag
\eea
The reason why $\Clr(u,i)$ occurs on the left of the implication for authorization and on the right for availability is that the lift should be available to $u$ for $i$ \emph{only if}\/ $\Clr(u,i)$ holds, and $u$ should be authorized to go to $i$ \emph{whenever} $\Clr(u,i)$ holds. Recall from Ch.~\ref{Chap:Resource} that permissions and clearances may be declared differently for different states of the system, which means that the clearance predicate would be in the more general form $\Clr\left(\vec t, u,i\right)$, which makes it dependent on the history $\vec t$. Alice could thus have different authorizations in different historic contexts.

\para{Example 3: Questions and answers.}  Suppose that Alice and Bob are having a conversation: one asks a question, and the other one answers or asks another question. We denote the action of asking a question by "$?$", and the action of answering a question by "$!$". To distinguish between different questions, and to identify the question that can be repeated, and to bind questions and the corresponding answers, we assume that there is a fixed set of questions $\Obj$, which we take to be a set of numbers. Thus, we have the types
\begin{itemize}
\item $\Obj\  =\  \big\{0, 1, 2, \ldots, n\big\}$ are the questions that may be asked or answered (viewed as objects, and denoted by variables $x_0, x_1, \ldots$ again),
\item $\Act \ =\  \big\{ ?, ! \big\}$ are the actions:  "ask question" and "answer question", respectively; and moreover
\item $\Subj \ =\ \big\{A, B\big\}$ are the subjects: Alice and Bob (denoted by variables $Y_0, Y_1, \ldots$),
\end{itemize}
which induce the event space in the form
\bear\Event & = & \Subj\times \Obj \times \Act \ \ = \ \ \big\{
\qu {i} {u},\  \an {i} {u} \ \big| \ \ u \in \Subj,\  i \in \Obj \big\}
\eear
with histories as sequences of events such as
\[\seq{\qu{x_0}{ Y_0}\ \  \ 
\qu {x_1}{ Y_1} \ \ \ 
\an{x_2}{ Y_2} \ \ \ 
\qu {x_0}{ Y_0} \ \ \ 
\an{x_2}{Y_2}
\ldots}\]
which says that "$Y_0$ asks the question $x_0$, then $Y_1$ asks the question $x_1$, then ${ Y_2}$ answers the question $x_2$,  then $Y_0$ asks $x_0$ again, then ${ Y_2}$ answers $x_0$", etc. Formally, such conversations between Alice and Bob are of course similar to the operations of the elevator that they share in their building. 

The difference is, of course, that when Alice calls the elevator, she usually wants to use the service herself; whereas when she asks a question, then she usually expects Bob to provide an answer. This leads to the following properties that may be required in a formal conversation or perhaps interrogation:
\begin{itemize}
\item\emph{\textbf{all answers questioned:}} an answer should \emph{only}\/ be given for a question previously asked by someone, and
\item\emph{\textbf{all questions answered:}} every question that was asked will  \emph{eventually}\/ be answered by someone.
\end{itemize}
Questions and answers in a conversation are matched similarly like calls and services of an elevator, but the matching is slightly more general:
\bear
 \overline{\an i u} & \subseteq & \overline{\exists \qu i w \prc \an i u}, \hspace{4em}  \leftsquigarrow \text{\textit{answers questioned}}
 \\
\overline{\qu  i u} & \subseteq &\overline{\qu i u  \prc \exists \an i w}. \hspace{4em} \leftsquigarrow \text{\textit{questions answered}}
\eear
A clearance predicate could moreover be used to constrain who is permitted to ask which questions, and who is permitted to answer them. A predicate in the form $\Clr(u,i,a)$, meaning that the subject $u$ has permission to perform on the object $i$ the action $a$, would thus correspond to the declaration $a\in M_{ui}$ in a permission matrix $M$. The requirements that all answers should be preceded by corresponding questions, and that all questions should eventually be answered, can be refined to
\bea
\SafC & = &
 \Big\{\vec t\in \Event^\ast \ |\ \forall u\in \Subj\ \ \forall i\in \Obj.  \label{eq:Saf-conv}\\  
 &&\hspace{3em}\Big(\vec t\in \overline{\an i u} \ \wedge\ {\Clr(u,i,!)}\Big) \ \Longrightarrow\ \exists w\in \Subj.\ \Big(\Clr(w,i,?)\ \wedge\  \vec t\in \overline{\exists \qu i w \prc \an i u} \Big)\Big\},\notag\\
\LivC & = &  \Big\{\vec t\in \Event^\ast \ |\ \forall u\in \Subj\ \ \forall i\in \Obj.\label{eq:Liv-conv} \\
&& \hspace{3em}\Big(\vec t\in \overline{\qu i u} \ \wedge\ {\Clr(u,i,?)}\Big) \ \Longrightarrow\ \exists w\in \Subj.\ \Big(\Clr(w,i,!)\ \wedge\ \vec t\in \overline{\qu i u \prc \exists \an i w}\Big)\Big\}.\notag
\eea
The fact that $\SafC$ does not guarantee authority and that $\LivC$ does not guarantee availability points to the limitations of the presented formalizations of authority and availability as tools of resource security. A particular way to mitigate some of these limitations is worked out in the exercises. The general way to resolve the issue is authentication, explored in Ch.~\ref{Chap:Auth}.

\para{Remark.} The presented examples are meant to be simple. In some cases, they are oversimplified. For example, the precedence relation in $\exists \send i\prec \recv i$ does not discharge the calls that the elevator has responded to. Even the simplest requirements used in real elevators are significantly more complex than a textbook permits. And elevators are among the simplest systems that need to be secured.

\subsection{General relativity of histories}
If there are just two possible events, i.e.
$\Event =  \{0,1\}$, then both $0\Event^{\ast}$ and $1\Event^{\ast}$ are safety properties. 
But they are each other's complements, since $0\Event^{\ast} \cup 1\Event^{\ast} = \Event^{\ast}$ and $0\Event^{\ast} \cap 1\Event^{\ast} = \emptyset$. Each of them is, therefore, both a safety property and an unsafety property. Similarly, there are authorization properties whose complements are also authorizations. If Alice is Bob's commanding officer, then she needs to be able to authorize him to use a weapon and advance in one situation; and she also needs to be able to authorize him to not use the weapon and retreat in another situation. Authorizations are not objective or absolute, but subjective and relative.  On one hand, they are relative to the subjective standpoints. In a zero-sum game, one subject's safe position is the other subject's unsafe position, and vice versa. On the other hand, they are relative to security contexts, e.g., in an evolving conflict where Alice may need to modify Bob's authorizations. Within each subject's frame of preferences, different situations induce different authorities. 

Formal models, scientific methods, and general relativity are needed not only in astronomy but also in security --- to protect us from intuitions that the Earth is flat and that authority is absolute.

Conflicting availability requirements are even more common, and occasionally more subtle, than the conflicting authority claims. If both Alice and Bob need water, but the fountain cannot serve both at the same time, then the fountain may be alive but not available to either of them. The concept of availability evolved to characterize such situations.
It signals opportunities for attacks on shared resources, whereby Bob may use the fountain just to deny it to Alice. Other such attacks are described in the next section.

\section{Denial-of-Service}\label{Sec:DoS}

Although  liveness is not a security property and is generally not a matter of conflict, Alice and Bob may pursue different \emph{good stuff (actions)}\/ that they want to happen. If each of them pursues different \emph{good stuff (actions)}\/, and Alice's \emph{good stuff (actions)}\/ preclude Bob's \emph{good stuff (actions)}, then a system that is alive for Alice will be dead for Bob. 

If, moreover,  Alice's liveness property happens to be availability, so that every subject can secure it on their own, then Alice can make her \emph{good stuff (actions) happen}, which then means that Bob's opposite good stuff (actions) will not happen. In such situations, we say that Alice has launched a \emph{Denial-of-Service (DoS)}\/ attack against Bob.

In general, a DoS phenomenon occurs when both a property $\Property$ and its complement $\neg \Property$ are liveness properties, and a history that is alive in one sense must be dead in another sense. When the property $\neg \Property$ is moreover \emph{available} for some subjects, then those subjects can cause a \emph{Denial-of-Service} $\Property$. 

\para{Example 3 again: More questions and answers.} One thing we didn't mention so far is Alice's and Bob's ages. They are, in fact, 2 years old. They learned to speak very recently, and they are trying to learn the rules of conversation. While Alice believes that all questions should be answered, Bob's standpoint is that all answers should be questioned. 

For Alice, a conversation is alive only when every question can be answered by someone.  That is when Alice's requirement is a liveness property. It is, moreover, an availability property when everyone knows answers to all questions. Only then can anyone satisfy Alice's requirement.

Bob's view is, on the other hand, that a conversation is only alive when there are some unanswered questions. Bob's requirement is \emph{also}\/ a liveness property whenever, for any question, there is someone who is permitted to ask it. If, moreover, everyone is permitted to ask any question, then Bob's requirement is even an availability property.

Formally, Alice's requirement is thus $\Service_A  =  \LivC$, whereas Bob's requirement is opposite: $\Service_B  =  \neg \LivC$. 

On the other hand, using  \eqref{eq:Liv-conv} and the definitions in the sections preceding it, it can be proven that 
\bear
\forall i \in \Obj \ \exists w \in \Subj.\ \Clr(w,i,!) & \implies & \LivC \in \LIV,\\
\forall i \in \Obj \ \forall w \in \Subj.\ \Clr(w,i,!) & \implies & \LivC \in \AV,\\
\forall i \in \Obj \ \exists w \in \Subj.\ \Clr(w,i,?) & \implies & \neg\LivC \in \LIV,\\
\forall i \in \Obj \ \forall w \in \Subj.\ \Clr(w,i,?) & \implies & \neg\LivC \in \AV.
\eear
If there is a particular question $q\in \Obj$ that Bob is permitted to ask, in the sense that $\Clr(B,q,?)$ holds true, then the property $\Service_B = \neg \LivC$ is available to him, as he can extend any history of questions $\vec t\in \Event^\ast$  to $\vec t:: \left(\qu q B\right) \in \neg \LivC = \Service_B$. Since $\Service_A = \LivC$, Alice has herewith been denied the service of a final answer, that she normally requires from conversations. 

If Alice is, on the other hand, permitted to answer the question $q$, in the sense that $\Clr(A,q,!)$ is true, then she can extend the current history to $\vec t:: \left(\qu q B\right):: \left(\an q A\right) \in \LivC = \Service_A$. Now Bob has been denied service. 

Alice and Bob can continue to deny service to each other like this until one of them finds a way to break the symmetry. For example, Alice may be able to convince their parents, Carol and Dave, that the questions that have been answered in the past should not be asked again. Bob's Denial-of-Service attack will then depend on how many other questions he is permitted to ask, i.e., for how many different $i\in \Obj$ does he have a clearance $\Clr(B,i,?)$. Whether he will win or not also depends on Alice's capability $\Clr(A,i,!)$ to answer questions.

\para{Example 4: TCP-flooding.}  The internet transport layer connections are established using the Transmission Control Protocol (TCP). Its rudimentary structure, displayed on the left in Fig.~\ref{Fig:SYN-flood}, can be construed as an extension of the basic question-answer protocol, where asking and answering a question is followed by a final action, where the answer is accepted.  
\begin{figure}[!h]
\begin{center}
\begin{minipage}[t]{.8\linewidth}
\includegraphics[height=6.3cm
]{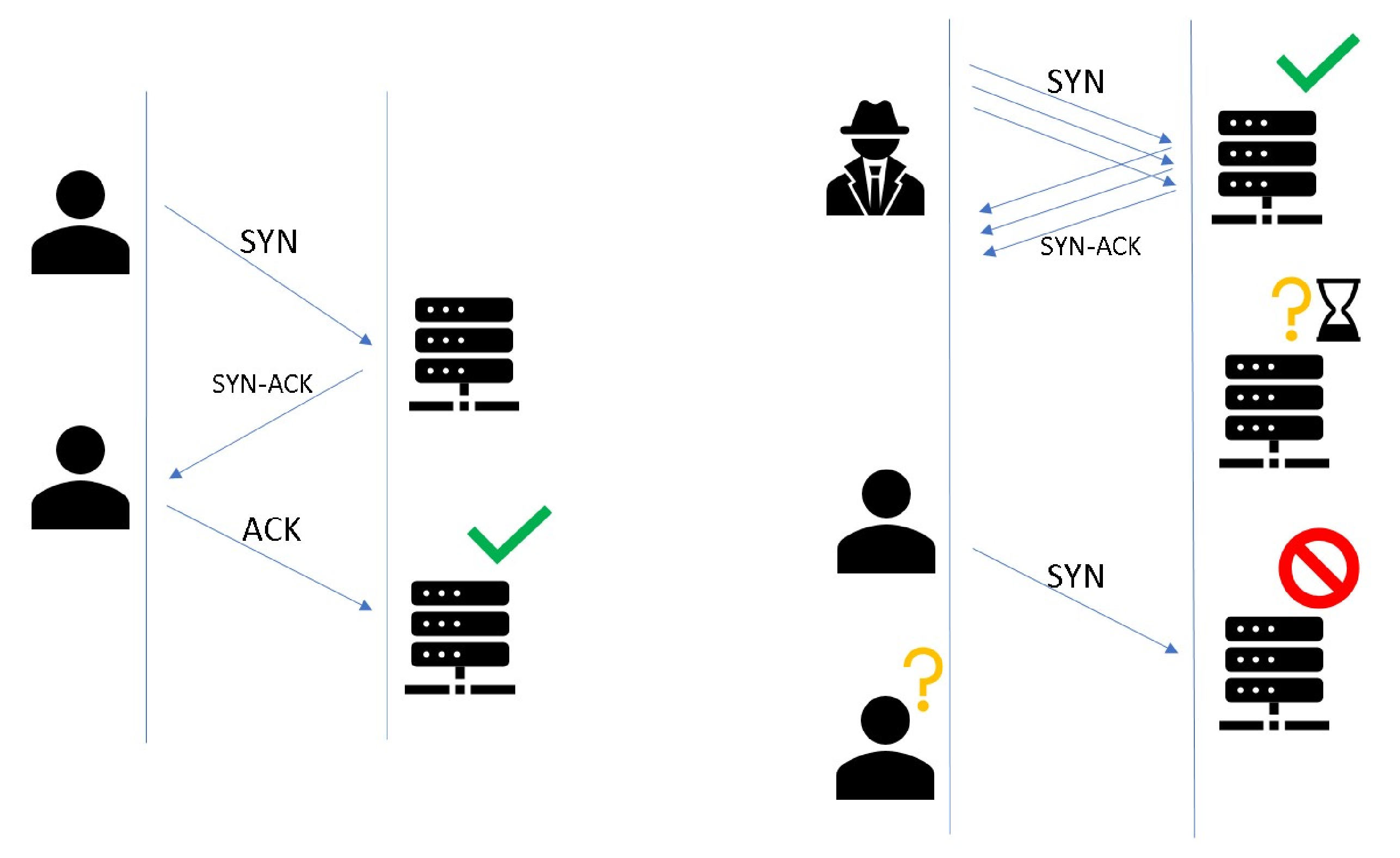}
\end{minipage}
\caption{{TCP: 3-way handshake and the SYN-flooding}}
\end{center}
\label{Fig:SYN-flood}
\end{figure}
In the TCP terminology, the action of asking a question is called SYN, the action of answering it is called SYN-ACK, and the acceptance of the answer is called ACK. Alice really likes this part, as it makes the TCP connections available, by closing the question-answer sessions with the answer acceptance. After an answer has been accepted, a TCP server establishes the TCP network socket and releases the protocol state, i.e., forgets the question. If Bob, however, never accepts any answer, then the TCP server has in principle to remember lots of questions, i.e., keeps lots of TCP protocol sessions open and, at one point, runs out of memory. That is the SYN-flooding attack, displayed in Fig.~\ref{Fig:SYN-flood} on the right.

This basic idea of a DoS attack on the Internet is very old, almost as old as the Internet, and there are in the meantime many methods to prevent it; but there are even more methods to circumvent these preventions. DoS attacks are a big business, both on the Internet and in everyday conversations between the 2-year-olds.

The theory presented so far just provides a formal model of security properties and suggests a basic classification. The next chapter spells out the security space where this classification turns out to be universal. 

\def\thechapter{5}
\setchaptertoc
\chapter{Geometry of security$^{\star}$}
\label{Chap:Geometry}

\section{Geometry?}\label{Sec:Geo-why}

Geometry is the science of space. Space is the way we observe things. Our physical space has 3 dimensions, and we usually subdivide it into cubes because we see and hear things up or down, left or right, forward or backward. Ants observe the world by way of smells and tastes, and their physical space is structured differently.  
\begin{figure}[h!t]
\begin{center}
\includegraphics[height=4.5cm]{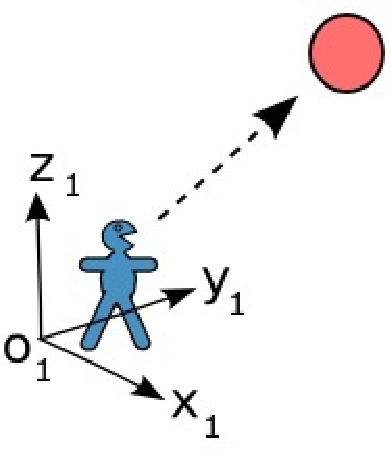}
\hspace{8em}\includegraphics[height=4.5cm]{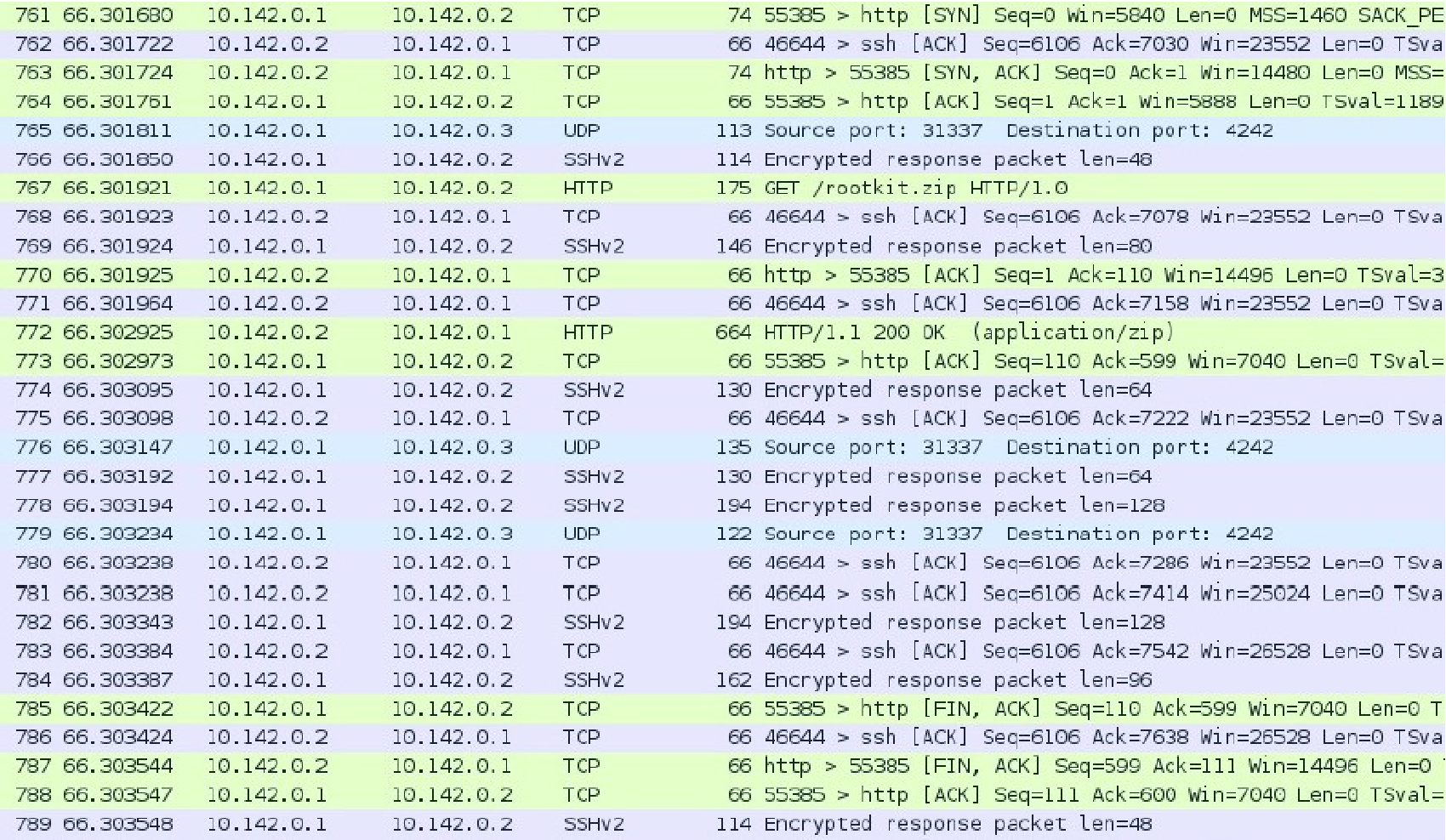}
\caption{An observation in physical space vs an observation in cyberspace.}
\label{Fig:space}
\end{center}
\end{figure}
Computers observe the world as traces of computations, as streams of data, as network logs, as listings of one sort or another. We have been calling such listings \emph{histories} because they have a known past and an unknown future, separated by the present.

\section{Geometry of histories and properties}\label{Sec:Geo-hist}

\para{Observations.} If we record a string of past events $ \vec a = \sseq{ a_0\ a_1\ a_2\ \cdots\  a_n   }\in \Event^\ast$, then we have observed that our history has the property 
\bear
\vec a & \hspace{-.5em}::\hspace{-.5em}& \Event^\ast\  \in\  \WP\left(\Event^\ast\right).\\
\uparrow &\hspace{-1em}& \uparrow\\
past &\hspace{-1em}& futures
\eear
This is an \emph{observation}. Observations are collected in the family of \emph{basic}\/ sets
\bea\label{eq:BBB}
\BBB & = & \left\{\vec a :: \Event^\ast\ |\ \vec a \in \Event^\ast\right\}.
\eea

\para{Open observables} represent possible observations of bad stuff (actions) that may happen. If such an observation occurs, it will persist, i.e., remain open under all futures. A family $\OOO$ of open observables is thus required to satisfy the following properties:
\begin{enumerate}[a)]
\item $\BBB \subseteq \OOO$ --- any observation is open;
\item $\XXX \subseteq \OOO\ \implies \ \bigcup \XXX \in \OOO$ --- any set of open observables corresponds to a possible observation, i.e., to an open observable;
\item\label{item:intersection} $U, V\in \OOO\  \implies\  U\cap V\in \OOO$ --- any finite set of open observables can be observed together as a single open observable.
\end{enumerate}
The set of observables $\OOO$ thus has to contain $\BBB$, and all unions of the elements of $\BBB$. It follows that
\bea\label{eq:OOO}
\OOO & =  \left\{ U \in\WP\Event^\ast \ |\ \vec x \in U \wedge \vec x \sqsubseteq \vec y \implies \vec y \in U \right\},
\eea
because $U \in\WP\Event^\ast$ satisfies $\vec x \in U \wedge \vec x \sqsubseteq \vec y \implies \vec y \in U$ if and only if it also satisfies $U = \bigcup \{\vec a::\Event^\ast\subseteq U\ |\ \vec a\in \Event^\ast\}$. It is easy to see that $\OOO$ from \eqref{eq:OOO} also satisfies \eqref{item:intersection} since already $\BBB$ is closed under $\cap$. According to \eqref{eq:OOO}, the observables $U\in \OOO$ are thus the \emph{upper-closed}\/ sets under the prefix order $\sqsubseteq$. This captures that they are \emph{open}\/ into all futures.

\para{Closed observables} represent observations that bad stuff (actions) has not happened so far; i.e., they are the complements $\Closed = \neg U$ of open observables $U\in \OOO$. Since for $\vec x\sqsubseteq \vec y$ the implication $\vec x \in U\implies \vec y\in U$ is equivalent with $\vec y \not \in U \implies \vec x\not \in U$, it follows that the family of closed observables must be in the form
\bea\label{eq:FFF}
\FFF & = & \left\{ F \in\WP\Event^\ast \ |\ \vec x \sqsubseteq \vec y \wedge \vec y \in F  \implies \vec x \in F \right\}.
\eea
The closed observables are thus the \emph{lower-closed}\/ sets under the prefix ordering. This captures that they are \emph{closed}\/ under the past: if bad stuff (actions) did not happen now, then that statement was also true at any moment in the past.

\para{Dense observables} are those that always remain possible: an observable is dense if it must be observed eventually, after any future-open observation. Each of the three previously defined families of properties can be used to characterize dense properties, which leads to three equivalent characterizations:
\bea\label{eq:DDD}
\DDD & = & \left\{ D \in\WP\left(\Event^\ast\right) \ |\ \forall \vec a\in\Event^\ast.\ D\cap \left(\vec a::\Event^\ast\right) \neq \emptyset \right\}
\\
& = & \left\{ D \in\WP\left(\Event^\ast\right) \ |\ \forall U\in \OOO.\ D\cap U= \emptyset\implies U=\emptyset \right\}
\notag\\
& = & \left\{ D \in\WP\left(\Event^\ast\right) \ |\ \forall F\in \FFF.\ D\subseteq  F\implies F=\Event^\ast \right\}\notag
\eea
Dense observations are used to characterize the {\nice} stuff (actions) that is required to eventually happen. The third characterization says that if {\nice} stuff (actions) is not bad, then nothing is bad; i.e., if a dense property is past-closed, then it contains all histories. The definition is then relativized to any $Y\subseteq \Event^\ast$ as
\bea\label{eq:DDDY}
\DDD_Y & = & \left\{ D \in\WP(Y) \ |\ \forall F\in \FFF.\ D\subseteq  F\implies Y\subseteq F \right\}.
\eea

\section{Geometry of safety and liveness}\label{Sec:Geo-saf}

\begin{proposition}
A history is
\begin{itemize}
\item[a)] safe if and only if it is closed: $\SAF\  = \ \FFF$ 

\item[b)] alive if and only if it is dense: $\LIV \  = 
\  \DDD$
\end{itemize}
\end{proposition}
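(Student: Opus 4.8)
The plan is to unfold the definitions and observe that each claimed equality is really a restatement of a lower-/upper-closure condition. For part (a), recall that by Definition~\ref{Def:safliv}, $S\in\SAF_\Event$ means $\forall \vec x,\vec y\in\Event^\ast.\ \vec x::\vec y\in S\implies \vec x\in S$, and by \eqref{eq:FFF}, $F\in\FFF$ means $\vec x\sqsubseteq\vec y\wedge\vec y\in F\implies \vec x\in F$. So the only thing to check is that the quantifier ``$\forall\vec x,\vec y$ with $\vec x::\vec y\in S$'' ranges over exactly the same pairs as ``$\forall \vec x\sqsubseteq\vec y$ with $\vec y\in S$''. This is immediate from the definition of the prefix order $\sqsubseteq$ in Prerequisites~\ref{Prereq:list}\eqref{eq:prefix-order}: $\vec x\sqsubseteq\vec y$ holds precisely when $\vec y = \vec x::\vec y'$ for some $\vec y'\in\Event^\ast$. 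Substituting $\vec y\mapsto\vec x::\vec y$ in one formulation yields the other, so $\SAF = \FFF$.

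For part (b), I would argue $\LIV = \DDD$ using the first characterization of $\DDD$ in \eqref{eq:DDD}, namely $D\in\DDD$ iff $\forall\vec a\in\Event^\ast.\ D\cap(\vec a::\Event^\ast)\neq\emptyset$. By Definition~\ref{Def:safliv}, $L\in\LIV_\Event$ means $\forall\vec x\in\Event^\ast\ \exists\vec y\in\Event^\ast.\ \vec x::\vec y\in L$. The statement $D\cap(\vec a::\Event^\ast)\neq\emptyset$ says exactly that there is some element of $D$ of the form $\vec a::\vec y$ for some $\vec y\in\Event^\ast$, i.e.\ $\exists\vec y.\ \vec a::\vec y\in D$. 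Renaming $\vec a$ to $\vec x$, this is verbatim the liveness condition, so $\LIV = \DDD$ once we know the three displayed characterizations of $\DDD$ coincide. That coincidence was already asserted in the text around \eqref{eq:DDD}; if a self-contained argument is wanted, the equivalence of the first line with the other two follows from the complementation duality $\OOO\leftrightarrow\FFF$ (since $U=\bigcup\{\vec a::\Event^\ast\subseteq U\}$) plus the observation that $D\subseteq F$ with $F$ lower-closed forces $F=\Event^\ast$ exactly when $D$ meets every basic set $\vec a::\Event^\ast$.

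There is essentially no obstacle here: both equalities are ``definition chasing'' once the prefix order is spelled out, and the proposition is really recording that the combinatorial definitions of Chapter~\ref{Chap:Process} and the topological definitions of Section~\ref{Sec:Geo-hist} describe the same classes of sets. The one point deserving a sentence of care is making explicit that $\vec x\sqsubseteq\vec y$ is synonymous with ``$\vec y$ extends $\vec x$ on the right'', so that the pattern $\vec x::\vec y$ and the relation $\sqsubseteq$ are interchangeable in the quantifiers; everything else is substitution.
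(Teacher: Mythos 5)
Your proposal is correct and takes essentially the same route as the paper, whose proof is simply ``by inspection of \eqref{eq:safdef}$\iff$\eqref{eq:FFF} and \eqref{eq:livdef}$\iff$\eqref{eq:DDD}''; you have merely spelled out the inspection, namely that $\vec x\sqsubseteq\vec y$ unfolds to $\vec y=\vec x::\vec z$ so the quantifier patterns in $\SAF$ and $\FFF$ coincide, and that $D\cap(\vec a::\Event^\ast)\neq\emptyset$ unfolds to $\exists\vec y.\ \vec a::\vec y\in D$ so $\LIV$ matches the first characterization of $\DDD$. Nothing is missing.
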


\bpr  By inspection of \eqref{eq:safdef}$\stackrel{{\rm(a)}}\iff$\eqref{eq:FFF} and \eqref{eq:livdef}$\stackrel{{\rm(b)}}\iff$\eqref{eq:DDD}.
\epr

\bigskip
\begin{corollary}\label{Corollary:safe}
Any property $X\subseteq \Event^\ast$ factors through the induced lower set $\closure X  =   \left\{\vec y \sqsubseteq \vec x \in X\right\}$ 
\[
\begin{tikzar}[column sep = 1.5cm,row sep = 1.5cm]
X\ar[hook]{rr} \ar[two heads]{rd}{\DDD_{\closure X}}[swap]{\LIV_{\closure X}} \&\& \Event^\ast \\
\& \closure X \ar[tail]{ur}{\FFF}[swap]{\SAF}
\end{tikzar}
\]
The set $\closure X \subseteq \Event^\ast$ is closed in the sense of \eqref{eq:FFF} and  $X\subseteq \closure X$ is dense in the sense \eqref{eq:DDD}.
\end{corollary}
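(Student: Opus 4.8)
The plan is to verify three things: (1) the triangle commutes, i.e., the inclusion $X \hookrightarrow \Event^\ast$ factors as $X \hookrightarrow \closure X \hookrightarrow \Event^\ast$; (2) $\closure X$ is closed in the sense of \eqref{eq:FFF}, so that the right leg $\closure X \rightarrowtail \Event^\ast$ does land in $\SAF = \FFF$; and (3) $X$ is dense in $\closure X$ in the sense of \eqref{eq:DDDY}, so that the left leg $X \twoheadrightarrow \closure X$ lands in $\DDD_{\closure X} = \LIV_{\closure X}$. The commutativity in (1) is immediate: every $\vec x \in X$ satisfies $\vec x \sqsubseteq \vec x$, hence $\vec x \in \closure X$, and the composite $X \hookrightarrow \closure X \hookrightarrow \Event^\ast$ is plainly the original inclusion. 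The labels $\FFF/\SAF$ and $\DDD_{\closure X}/\LIV_{\closure X}$ on the arrows are justified by the preceding Proposition, which identifies $\SAF = \FFF$ and $\LIV = \DDD$ (the relativized version $\LIV_{\closure X} = \DDD_{\closure X}$ follows by the same inspection of \eqref{eq:livdef} against \eqref{eq:DDDY}).

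For (2): take $\vec y \sqsubseteq \vec x$ with $\vec x \in \closure X$. By definition of $\closure X = \{\vec y' \sqsubseteq \vec x' \in X\}$, there is some $\vec x' \in X$ with $\vec x \sqsubseteq \vec x'$. Transitivity of the prefix order $\sqsubseteq$ gives $\vec y \sqsubseteq \vec x'$, hence $\vec y \in \closure X$. This is exactly the lower-closure condition \eqref{eq:FFF}, so $\closure X \in \FFF = \SAF$.

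For (3): I need $\closure X \in \DDD_{\closure X}$, i.e., using the third (relativized) characterization in \eqref{eq:DDDY} with $Y = \closure X$, that for every $F \in \FFF$, if $X \subseteq F$ then $\closure X \subseteq F$. So suppose $X \subseteq F$ and $F$ is lower-closed. Given $\vec y \in \closure X$, pick $\vec x \in X$ with $\vec y \sqsubseteq \vec x$; then $\vec x \in F$ since $X \subseteq F$, and $\vec y \in F$ since $F$ is lower-closed. Hence $\closure X \subseteq F$, which is precisely density of $X$ inside $\closure X$. (Equivalently, one can check the first characterization directly: for any $\vec a \in \Event^\ast$ with $\vec a :: \Event^\ast$ meeting $\closure X$, some prefix of an element of $\closure X$ lies in that basic set, and then an element of $X$ above it does too — but the $F$-formulation is cleanest since it matches \eqref{eq:DDDY} verbatim.)

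There is no real obstacle here; the only thing to be careful about is bookkeeping with the two characterizations of $\closure X$ — as the set of prefixes of elements of $X$, versus as the $\sqsubseteq$-lower set generated by $X$ — and making sure the density claim is stated against the ambient set $\closure X$ rather than $\Event^\ast$ (density in $\Event^\ast$ would be false in general, e.g. when $X$ is finite). Both characterizations of $\closure X$ coincide, and I would note this in passing. The whole argument is a two- or three-line unwinding of definitions plus transitivity of $\sqsubseteq$.
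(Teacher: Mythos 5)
Your proof is correct and follows the same route the paper intends: the corollary is left unproved there precisely because it is the definitional unwinding you carry out, namely that $\closure X$ is $\sqsubseteq$-lower-closed (hence in $\FFF=\SAF$) and that $X$ is dense in $\closure X$ in the relativized sense of \eqref{eq:DDDY} (hence in $\DDD_{\closure X}=\LIV_{\closure X}$), with the triangle commuting trivially. One cosmetic slip: in step (3) you write ``I need $\closure X \in \DDD_{\closure X}$'' where you mean $X \in \DDD_{\closure X}$, but the condition you then state and verify is the correct one, and your remark that density must be taken relative to $\closure X$ rather than $\Event^\ast$ is exactly the right point of care.
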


\bigskip
\para{Requirements are future-open properties.} Positive properties, e.g., in the form $\overline{b \prc c}$, can be expressed as unions of basic properties, e.g.
\bear
\overline{b \prc c} & = & \bigcup_{\vec x, \vec y \in \Event^\ast} \vec x:: b:: \vec y :: c:: \Event^\ast \ \in \  \OOO
\eear
and thus, remain open. Further properties can be expressed as finite intersections of those, e.g.
\bear
\overline{b_0 \prc b_1\prc \cdots b_n} & = & 
\overline{b_0 \prc b_1}\cap \overline{b_1 \prc b_2} \cap \cdots \overline{b_{n-1} \prc b_n}\ \in \ \OOO.
\eear

\para{Constraints are past-closed} because they correspond to negative requirements, e.g., 
\bear
\bigcap_{i\in\Obj} \neg \overline{i! \prc i?} \ \ \in \ \ \FFF & \mbox{ because } & \bigcup_{i\in\Obj} \overline{i! \prc i?}  \ \ \in \ \ \OOO
\eear
 This explains why safety properties from Sec.~\ref{Sec:safe-live} often occur as negative requirements.

\section{Geometry of security}\label{Sec:Geo-sec}

Security problems arise from conflicting goals: what is good for me is bad for you, and vice versa. We discussed in Sec.~\ref{Sec:Local} that they also arise from different standpoints, local observability, and hiding: I can deceive you if there is something that I see and you do not. 
\begin{figure}[ht!]
\begin{minipage}[t]{0.4\linewidth}
\centering
\includegraphics[height=6cm]{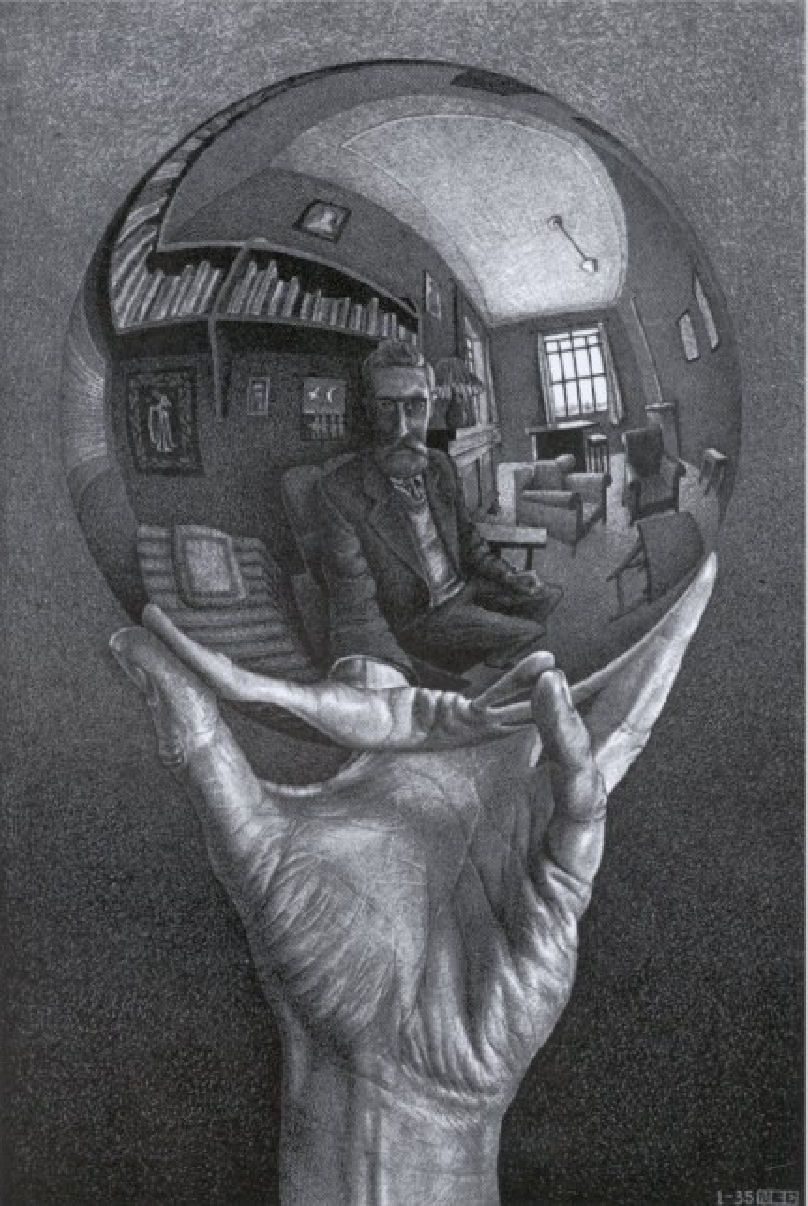}
\caption{Global observer:\\ ``I think, therefore I exist.''}
\label{Fig:cogito}
\end{minipage}
\hspace{0.1\linewidth}
\begin{minipage}[t]{0.45\linewidth}
\centering
\includegraphics[height=6cm]{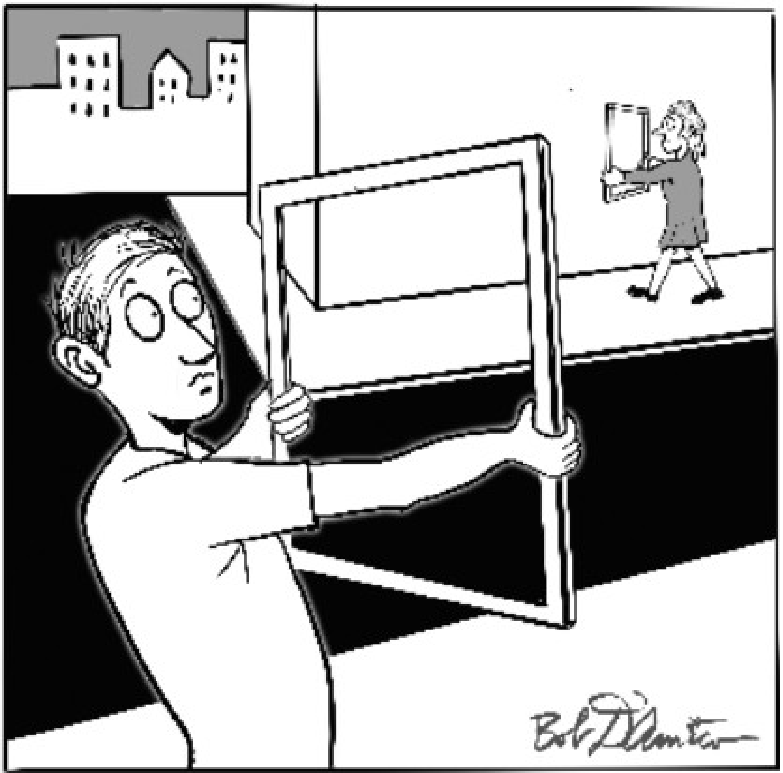}
\caption{Local observers:\\ relativistic frames of reference}
\label{Fig:frames}
\end{minipage}
\end{figure}
Burglary, as an attack on physical security, is more likely to succeed if there is no one home to see it. Cybersecurity often requires reconciling Alice's and Bob's views of transaction histories. National security also requires reconciling different views of history. Behind every security problem, there are different views of some space of histories.

The difference between cyber security and physical security is the difference between the underlying geometries. Physical security is based on physical distances and velocities: animals prevent attacks from predators by maintaining a distance that allows them to outrun the attacker. Cybersecurity  is not based on outrunning the attacker because the concept of distance is unreliable in cyberspace, as many copies of a message travel in parallel.   
\begin{figure}[h!t]
\begin{center}
\includegraphics[height=4.5cm]{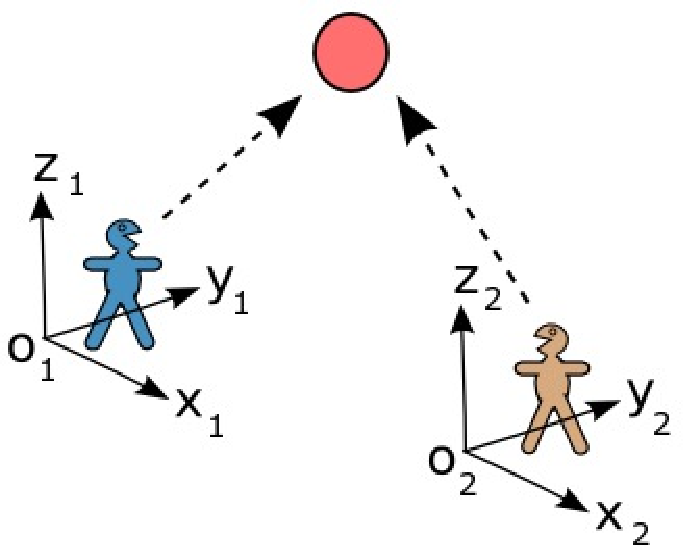}
\hspace{3em}\includegraphics[height=4.5cm]{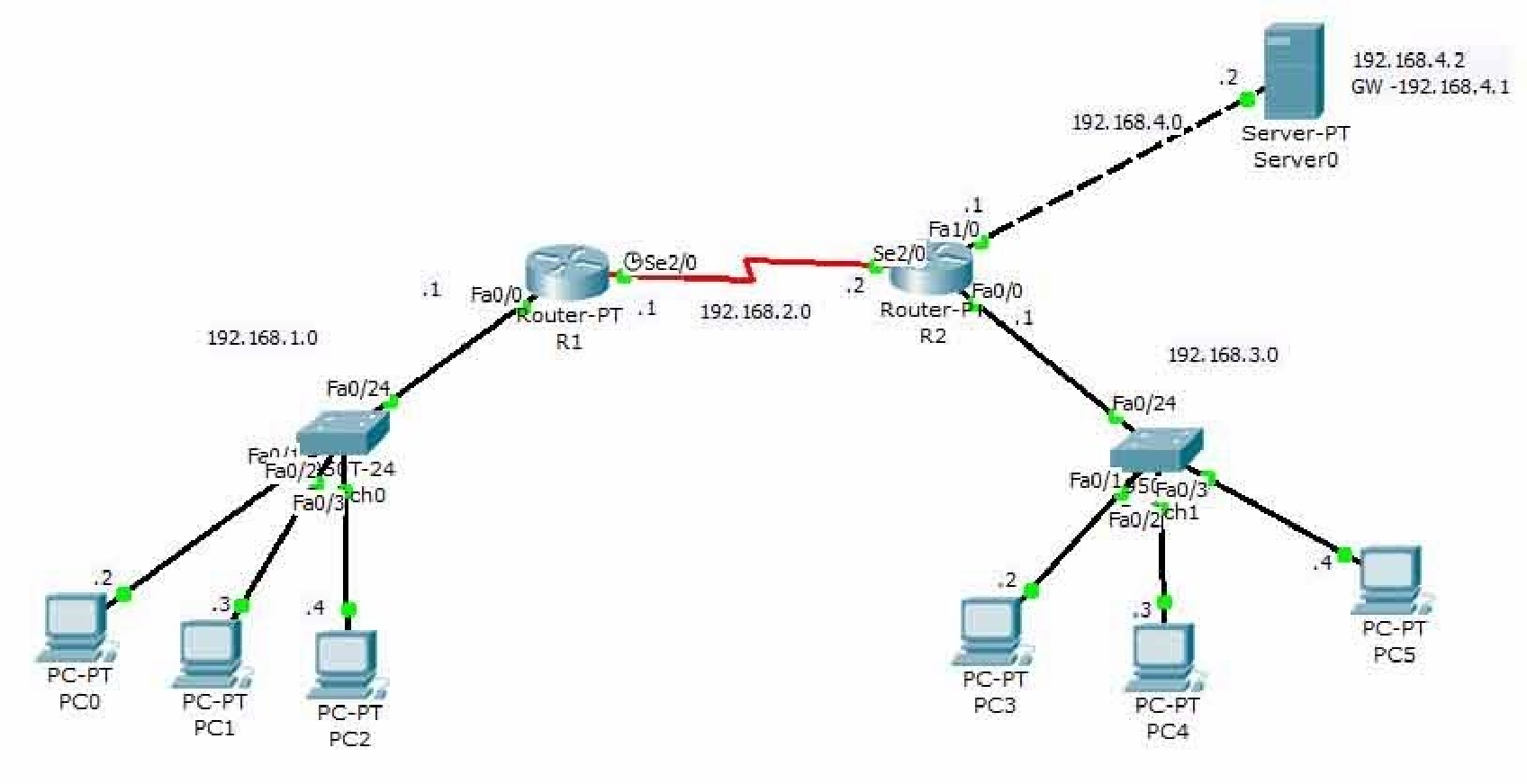}
\caption{Linking and separating frames of reference in physical space and in cyberspace}
\label{Fig:frame-linking}
\end{center}
\end{figure}
If the implementation details of network services are abstracted away, then every two nodes  in cyberspace look like neighbors because the underlying networks, in principle, do their best to route traffic as fast as possible. This geometric peculiarity of cyberspace, as a space where every two points are at the same negligible distance\footnote{In terms of a disc, as a set of points bounded by circle, as a set of points at equal distances from a center, cyberspace can be viewed as a disc whose center is everywhere, and whose bounding circle is nowhere. That very same geometric property was often proposed as the defining characteristic of God. The proposal is said to have originated from Hermes Trismegistos, but it was also repeated e.g., by Voltaire.}, is the source of many cybersecurity problems.

\section{Locality and cylinders}\label{Sec:Geo-loc}

In Sec.~\ref{Sec:au-av}, we analyzed how authority and availability can be construed as \emph{localized}\/ versions of safety and liveness. Here we spell out the geometric meaning of that observation. It is based on interpreting the purge \sindex{purge} operations $\restriction_u:\Event^\ast \to \Event_u^\ast$ \eqref{eq:purge} as spatial projections, that reduce global histories to local views. Instantiating the \emph{cylinder localizations}\/ from Sec.~\ref{Sec:localization} in the Appendix to the family of views $V = \left\{\restriction_u: \Event^\ast \to \Event_u^\ast\ |\ u\in \Subj\right\}$, the task of localizing security properties boils down to determining how well they are approximated by the corresponding cylinders, which are the local,  subjective views.

\bigskip
\begin{lemma}\label{Lemma:cylinder}
A property is localized in the sense of \sindex{property!localized} Def.~\ref{Def:localized} if and only if it is external cylindric in the sense of Def.~\ref{Def:cylindric} in the Prerequisites: 
\bear
\Local & = & \Cylout
\eear
\end{lemma}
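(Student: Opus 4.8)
The plan is to unwind both definitions and check they literally coincide. We need to know what Def.~\ref{Def:cylindric} says in the Prerequisites, but from the phrasing "external cylindric" and the fact that the relevant family of views is $V = \{\restriction_u : \Event^\ast \to \Event_u^\ast \mid u \in \Subj\}$, the external cylinder of a property $\Property$ should be exactly the set of global histories whose every projection $\vec z\restr_u$ agrees with the projection of \emph{some} element of $\Property$, i.e. $\vec z\restr_u \in \Property_u$ for all $u$. That is precisely $\widehat{\Property}$ from Def.~\ref{Def:localized}, equation~\eqref{eq:localization}. So the statement $\Local = \Cylout$ should reduce to: a property equals its strict localization $\widehat{\Property}$ if and only if it equals its external cylinder.

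First I would recall the definition of the external cylinder associated to a family of projections and observe that, instantiated at $V = \{\restriction_u\}_{u\in\Subj}$, it is the set $\{\vec z \mid \forall u.\ \vec z\restr_u \in \Property_u\}$, where $\Property_u = \{\vec x\restr_u \mid \vec x \in \Property\}$ is the image of $\Property$ under $\restriction_u$. This is term-for-term the same as the right-hand side of~\eqref{eq:localization}. Hence the external cylinder operator and the strict localization operator $\Property \mapsto \widehat{\Property}$ are the same operator on $\WP(\Event^\ast)$. Then "external cylindric" ($\Property = $ its external cylinder, Def.~\ref{Def:cylindric}) and "strictly localized" ($\Property = \widehat{\Property}$, Def.~\ref{Def:localized}) are the same condition, so $\Cylout = \Local$ as families of properties. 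That is the whole argument.

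The key steps, in order: (1) write out $\Cylout$ for the family $V$, unpacking the generic definition of external cylinder from the Appendix; (2) identify the projection maps in that definition with the strict purges $\restriction_u$; (3) match the resulting formula against~\eqref{eq:localization} and conclude the two closure operators agree; (4) conclude that their fixed points agree, i.e. $\Local = \Cylout$. Optionally one can note that both operators are idempotent and inflationary ($\Property \subseteq \widehat{\Property}$ always, since $\vec x \in \Property \Rightarrow \vec x\restr_u \in \Property_u$ for each $u$), which makes "being a fixed point" a genuine, checkable condition, but this is not strictly needed once the operators are seen to be literally equal.

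The main obstacle is purely bookkeeping: making sure the generic "external cylinder" from Sec.~\ref{Sec:localization} in the Appendix, stated for an abstract family of views $V = \{p_v : \Space \to \Space_v\}$, specializes correctly — in particular that "external" refers to intersecting the preimages $p_v^{-1}(p_v(\Property))$ rather than some internal variant, and that no side condition (e.g. surjectivity of the $p_v$, which does hold here since $\restriction_u$ is surjective onto $\Event_u^\ast$ by construction) is being silently used. Once the Appendix definition is pinned down, the proof is a one-line identification of formulas; there is no real mathematical content beyond recognizing that "$\vec z\restr_u \in \Property_u$ for all $u$" is simultaneously the defining clause of $\widehat{\Property}$ and of the external cylinder of $\Property$.
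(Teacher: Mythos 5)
Your proposal is correct and follows essentially the same route as the paper: the paper's proof likewise unpacks $\ana{\Property} = \bigcap_{u\in\Subj} u^{*}u_{!}(\Property) = \{\vec z \mid \forall u.\ \vec z\restr_u \in \Property_u\}$, observes this is literally the right-hand side of~\eqref{eq:localization}, and concludes that the fixed-point families $\Local$ and $\Cylout$ coincide. Your added remarks about idempotence and the irrelevance of surjectivity are correct but, as you note, not needed once the two operators are identified.
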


\bpr We prove that the localization $\widehat Y$ from Def.~\ref{Def:localized} is the special case of cylindrification $\ana Y$ from Def.~\ref{Def:cylindric}:
\[\widehat Y \  \  =\  \  \left\{\vec y\ |\ \forall u\in \Subj.\ \vec y\restr_u \in Y_u
\right\}\ \
 = \ \ \bigcap_{u\in \Subj} \left\{\vec y\ |\ \vec y\restr_u \in Y_u\right\}\ \ =\ \ 
\ \  \bigcap_{u\in \Subj} u^*u_!(Y) \ \ = \ \ \ana{Y}
\] 
Hence $\Local \ = \ \{Y\in \WP(\Event^\ast)\ |\ Y = \widehat Y\} \ = \ \{Y\in \WP(\Event^\ast)\ |\ Y=\ana Y\} \ = \ \Cylout$.
\epr

\section{Geometry of authority and availability}
\label{Sec:Geo-avai}

\begin{proposition}\label{Prop:geo-auth-avail}
A history is
\begin{itemize}
\item[a)] authorized if and only if it is closed and cylindric: $$\AU \  = \   \left\{\Property\in \WP(\Event^\ast)\ |\ \Property \in \FFF \wedge \Property = \ana \Property\right\}\ =\  \ana\FFF$$

\item[b)] available only if its cylindrification is dense: 
$$\AV \ \subseteq \ \left\{\Property\in \WP(\Event^\ast)\ |\ \ana\Property \in \DDD\right\}\ =\ \ana{\DDD}^{-1}$$
\end{itemize}
\end{proposition}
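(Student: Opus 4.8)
The plan is to chain together results already established: the characterizations of authority and availability from Proposition~\ref{Prop:saf-liv-au-av}, the geometric identifications $\SAF = \FFF$ and $\LIV = \DDD$ from the proposition in Section~\ref{Sec:Geo-saf}, and the identification $\Local = \Cylout$, i.e. ``localized $=$ cylindric'', from Lemma~\ref{Lemma:cylinder}. Part (a) will be a direct substitution; part (b) will be slightly more delicate because only a one-directional inclusion is available for availability.

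For part (a), I would start from \eqref{prop:auu}, which says $\AU_\Event = \{\Property \in \WP(\Event^\ast)\ |\ \Property \in \SAF_\Event\ \wedge\ \Property = \widehat \Property\}$. Now rewrite the two conjuncts: by the proposition of Section~\ref{Sec:Geo-saf}, $\Property \in \SAF_\Event \iff \Property \in \FFF$; and by Lemma~\ref{Lemma:cylinder} together with Def.~\ref{Def:localized}, $\Property = \widehat \Property \iff \Property = \ana \Property$ (since $\widehat\Property$ is exactly the cylindrification $\ana\Property$, as shown in the proof of the lemma). So $\AU = \{\Property\ |\ \Property \in \FFF \wedge \Property = \ana \Property\}$, which is the first claimed equality. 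For the second equality $\{\Property\ |\ \Property\in\FFF \wedge \Property = \ana\Property\} = \ana\FFF$, I would argue both inclusions: if $\Property\in\FFF$ and $\Property = \ana\Property$, then $\Property = \ana\Property \in \ana\FFF$ trivially; conversely, if $\Property = \ana F$ for some $F\in\FFF$, then $\Property$ is cylindric (a cylinder of a cylinder is itself, $\ana\ana = \ana$ — an idempotence fact that should be available from the Prerequisites on cylinder localizations) and $\Property\in\FFF$ because an intersection $\bigcap_u u^\ast u_!(F)$ of lower-closed sets is lower-closed (each $u^\ast u_!$ preserves the prefix order downward, and $\FFF$ is closed under arbitrary intersections). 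This uses only that $\FFF$ is closed under arbitrary intersections and that the purge maps respect $\sqsubseteq$, both of which are already in hand.

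For part (b), I would start from \eqref{prop:avv}, which gives $\AV_\Event \subseteq \{\Property \in \WP(\Event^\ast)\ |\ \widehat \Property \in \LIV_\Event\}$. Substituting $\widehat\Property = \ana\Property$ (proof of Lemma~\ref{Lemma:cylinder}) and $\LIV_\Event = \DDD$ (Section~\ref{Sec:Geo-saf}) turns the right-hand side into $\{\Property\ |\ \ana\Property \in \DDD\}$, giving the claimed inclusion $\AV \subseteq \{\Property\ |\ \ana\Property\in\DDD\}$. It then remains to identify this set with $\ana{\DDD}^{-1}$, which I read as the preimage of $\DDD$ under the cylindrification operator $\ana{(-)}$, i.e. $\{\Property\ |\ \ana\Property\in\DDD\}$ by definition of preimage; so this last equality is just notation unfolding.

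The main obstacle I anticipate is purely one of bookkeeping with the Appendix machinery: making sure that the cylinder operator $\ana{(-)} = \bigcap_{u} u^\ast u_!$ is genuinely idempotent and that $u^\ast u_!$ is monotone and order-compatible in the way part (a)'s second equality needs. If the Prerequisites state $\ana{(-)}$ as a (co)closure operator on $\WP(\Event^\ast)$, these are immediate; if not, I would insert a short lemma: $u^\ast u_!(F)\in\FFF$ whenever $F\in\FFF$ (because $\vec x\restr_u\in F_u$ and $\vec y\sqsubseteq\vec x$ imply $\vec y\restr_u\sqsubseteq\vec x\restr_u\in F_u$, hence $\vec y\restr_u\in F_u$ by lower-closedness of $F_u$), and $\ana\ana\Property=\ana\Property$. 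Everything else is substitution, so once that is pinned down the proposition follows in a few lines.
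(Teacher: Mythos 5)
Your proposal is correct and follows essentially the same route as the paper, whose proof of (a) is literally the citation of \eqref{prop:auu} together with Lemma~\ref{Lemma:cylinder}, and whose proof of (b) is the citation of the availability condition \eqref{eq:availability} (equivalently, your \eqref{prop:avv}); you have simply unfolded those citations via $\SAF=\FFF$, $\LIV=\DDD$, and $\widehat\Property=\ana\Property$. The extra checks you flag --- idempotence of $\ana{(-)}$ and the fact that $u^\ast u_!$ preserves lower-closedness, needed for the identification with $\ana\FFF$ --- are correct and are exactly the details the paper leaves implicit.
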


\bpr (a) follows from \eqref{prop:auu} and Lemma~\ref{Lemma:cylinder}. (b) follows from \eqref{eq:availability}.
\epr

\bigskip
\begin{corollary}\label{Corollary:au-av}
The closure operator $\closure{\ana -} :\WP(\Event^\ast) \to \WP(\Event^\ast)$ factors any property $X\subseteq \Event^\ast$ through its cylindric closure $\closure{\ana X}   =   
\left\{\vec y \sqsubseteq \vec x\ |\ \forall u\in \Subj.\  \vec x\restr_u \in X_u\right\}$
\[
\begin{tikzar}[column sep = 1.5cm,row sep = 1.5cm]
X\ar[hook]{rr} \ar[dashed,thin,two heads]{rd} \ar[thick,two heads,crossing over]{rrd}[description]{\overline{\AV}} \&\& \Event^\ast \\
\& \closure X \ar[tail,dashed,thin]{ur} \rar[tail,thin] \& \closure{\ana X} \ar[tail,thick]{u}[description]{\AU}
\end{tikzar}
\]
where $\AU$ is the set of close cylindric sets $\ana{\closure X} \subseteq \Event^\ast$,  whereas $\overline \AV$ is the set of dense cylindric sets $X\subseteq \ana{\closure X}$.
\end{corollary}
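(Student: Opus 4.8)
The plan is to read this corollary as the fusion of Corollary~\ref{Corollary:safe} with the cylindrification machinery, by exhibiting $\closure{\ana-}$ as a closure operator on $\WP(\Event^\ast)$ whose fixed points are exactly $\AU=\ana\FFF$. First I would collect the elementary facts. By Corollary~\ref{Corollary:safe}, $\closure{-}$ is a closure operator with fixed points $\FFF$, and every $X$ is dense (in the relativized sense $\DDD_{\closure X}$ of \eqref{eq:DDDY}) in $\closure X$. By the Prerequisites, $\ana{-}$ — which, as in the proof of Lemma~\ref{Lemma:cylinder}, equals $\bigcap_{u\in\Subj}u^{\ast}u_{!}$ — is monotone, inflationary and idempotent, with fixed points the cylindric sets, i.e.\ $\Cylout=\Local$. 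One extra elementary observation is needed: $\ana{-}$ carries closed sets to closed sets. Indeed each projection $\restr_u$ is monotone for $\sqsubseteq$, and $(\closure Y)_u=\closure{(Y_u)}$, so if $F$ is a lower set then so is $\ana F=\{\vec y\mid\forall u.\ \vec y\restr_u\in F_u\}$: from $\vec y'\sqsubseteq\vec y$ we get $\vec y'\restr_u\sqsubseteq\vec y\restr_u\in F_u$ and $F_u$ is lower. Hence $\ana{\closure X}$ is closed (a lower closure carried into a closed set) and cylindric (it is $\ana$ of something, by idempotence), so $\ana{\closure X}\in\ana\FFF=\AU$ by Proposition~\ref{Prop:geo-auth-avail}(a); and it is the least closed cylindric superset of $X$, since any closed cylindric $Y\supseteq X$ satisfies $Y\supseteq\closure X$ and $Y=\ana Y\supseteq\ana{\closure X}$. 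This already yields the right-hand leg of the diagram, the $\AU$-labelled inclusion $\closure{\ana X}=\ana{\closure X}\rightarrowtail\Event^\ast$, together with the reflection $X\twoheadrightarrow\ana{\closure X}$.

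The substantive step is the commutation $\closure{\ana X}=\ana{\closure X}$, which makes ``cylindric closure'' unambiguous and identifies the reflection with an availability-type dense map. The inclusion $\closure{\ana X}\subseteq\ana{\closure X}$ is formal: $X\subseteq\closure X$ gives $\ana X\subseteq\ana{\closure X}$ by monotonicity, and $\ana{\closure X}$ is closed, so applying $\closure{-}$ keeps the inclusion. For $\ana{\closure X}\subseteq\closure{\ana X}$ I would interleave: if $\vec y\in\ana{\closure X}$ then for each $u$ there is $\vec a^{(u)}\in X$ with $\vec y\restr_u\sqsubseteq\vec a^{(u)}\restr_u$, and choosing $\vec a^{(u)}$ so that $\vec a^{(u)}\restr_u=\seq{}$ whenever $\seq{}\in X_u$ leaves only finitely many subjects $u_1,\dots,u_k$ with a nonempty ``tail'' $\vec t_{u_j}$, the suffix of $\vec a^{(u_j)}\restr_{u_j}$ past $\vec y\restr_{u_j}$ — here one uses that $\{u\mid\seq{}\notin X_u\}=\bigcap_{\vec a\in X}\{u\mid\vec a\text{ has a }u\text{-event}\}$ is finite when $X\neq\emptyset$. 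Then $\vec z=\vec y::\vec t_{u_1}::\cdots::\vec t_{u_k}$ satisfies $\vec z\restr_u=\vec a^{(u)}\restr_u\in X_u$ for every $u$, so $\vec z\in\widehat X=\ana X$ by Lemma~\ref{Lemma:cylinder} and $\vec y\sqsubseteq\vec z$, i.e.\ $\vec y\in\closure{\ana X}$.

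Finally I would assemble the diagram from the chain $X\subseteq\closure X\subseteq\ana{\closure X}=\closure{\ana X}\subseteq\Event^\ast$: the leg $\closure X\in\FFF$ is Corollary~\ref{Corollary:safe}, the middle leg $\closure X\hookrightarrow\ana{\closure X}$ is the inflationary inclusion of $\ana$, and the thick leg $\ana{\closure X}\in\AU$ is the paragraph above; commutativity is automatic since all maps are inclusions, and agreement with the factorization of Corollary~\ref{Corollary:safe} follows from $\AU\subseteq\FFF$. For the label $\overline\AV$ on $X\twoheadrightarrow\closure{\ana X}$: this is the ``dense cylindric'' map dual to $\AU$, namely $\closure{\ana X}$ is cylindric and $X$ — equivalently $\ana X$ — is dense in it, since $\closure{\ana X}$ is literally the lower closure of $\ana X$, so $\ana X\in\DDD_{\closure{\ana X}}$; this is the geometric shadow of availability exactly as in Proposition~\ref{Prop:geo-auth-avail}(b) and Proposition~\ref{Prop:saf-liv-au-av}\eqref{prop:avv}, where availability properties are characterized by having dense cylindrification. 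The main obstacle is the interleaving lemma $\ana{\closure X}\subseteq\closure{\ana X}$, and inside it the bookkeeping that keeps the interleaving a finite operation even when $\Subj$ is infinite; everything else is formal manipulation of $\closure{-}$ and $\ana{-}$ on top of Corollary~\ref{Corollary:safe}, Lemma~\ref{Lemma:cylinder} and Proposition~\ref{Prop:geo-auth-avail}.
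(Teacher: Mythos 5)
Your proof is correct and follows the route the paper itself takes implicitly: it identifies $\ana{\closure X}$ as the least closed cylindric superset of $X$ via Proposition~\ref{Prop:geo-auth-avail}(a) and reads the two legs of the diagram as the $\AU$-inclusion and the relative density of $\ana X$ in its own lower closure. The one substantive point the paper omits --- the commutation $\ana{\closure X}=\closure{\ana X}$, dismissed in the subsequent Remark as ``easy to check'' --- is exactly what your interleaving argument supplies, and it is sound: the tails $\vec t_{u_j}$ lie in pairwise disjoint strict localities $\Event_{u_j}$, so appending them to $\vec y$ yields $\vec z\restr_u=\vec a^{(u)}\restr_u\in X_u$ for every $u$, and the set of subjects needing a nonempty tail, namely $\{u\mid \vec y\restr_u\notin X_u\}$, is finite because $\vec y$ and any single $\vec a\in X$ each involve only finitely many subjects (your phrasing of the choice of $\vec a^{(u)}$ is slightly off --- the tail can be made empty exactly when $\vec y\restr_u\in X_u$, not when $\seq{}\in X_u$ --- but the finiteness bookkeeping goes through as you intend).
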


\bigskip
\para{Remark.} The closure $\closure{\ana X}$ is obtained by sending $X$ through 
$\WP(\Event^\ast) \tto{\ana -} \WP(\Event^\ast) \tto{\ \closure{}\ } \WP(\Event^\ast)
$, which is the composite of the lower closure operator $\closure{}$ from Prerequisites~\ref{Prereq:Topology}\eqref{eq:closure}, instantiated to the space of histories, and the cylinder closure operator $\ana -$ from Prerequisites~\ref{Sec:localization}\eqref{eq:ana}. It is easy to check that $\ana{\closure X} = \closure{\ana X}$.

\section[Decompositions]{Normal decompositions of properties}
\label{Sec:Geo-decomp}

Corollaries \ref{Corollary:safe} and \ref{Corollary:au-av} establish how any specified constraint $X\subseteq \Event^\ast$ can be approximated by a safety constraint $\closure X$ or by an authority constraint $\ana{\closure X}$. This is useful because safety constraints and authority constraints are generally easier to implement and validate than arbitrary constraints. 

\para{Safety-driven decompositions.} The closure $\closure X$, in general, declares more histories to be safe than intended by $X$. This deviation of $\closure X$ from $X$ can, however, always be corrected by intersecting $\closure X$ with a liveness property in such a way that the intersection contains precisely the original property $X$. The histories that are in $\closure X$ but not in $X$ will thus be declared safe but not alive. This is a special case of the geometric decomposition from prerequisites section ~\ref{Prereq:Topology} of arbitrary sets into closed and dense sets.

\bigskip
\begin{proposition}\label{Prop:closed-dense}
Any property can be expressed as an intersection of a safety property and a liveness property:
\bear
X & = & \closure X\ \cap \ \left(X\cup \neg \closure X\right).
\eear
\end{proposition}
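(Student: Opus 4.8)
The plan is to verify the set identity $X = \closure X \cap \left(X \cup \neg \closure X\right)$ directly, and then observe that $\closure X$ is a safety property and $X \cup \neg \closure X$ is a liveness property, using the characterizations already established in the excerpt.

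First I would establish the set-theoretic equation. The inclusion $X \subseteq \closure X \cap (X \cup \neg\closure X)$ is immediate: every $\vec x \in X$ lies in $\closure X$ by definition of the lower closure (since $\vec x \sqsubseteq \vec x \in X$), and it trivially lies in $X \cup \neg \closure X$. For the reverse inclusion, suppose $\vec x \in \closure X \cap (X \cup \neg \closure X)$. Then $\vec x \in \closure X$, so $\vec x \notin \neg \closure X$; since $\vec x \in X \cup \neg\closure X$, it follows that $\vec x \in X$. Thus the two sides are equal. This is the routine part and takes only a couple of lines.

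Next I would identify the two factors as a safety and a liveness property. By Corollary~\ref{Corollary:safe}, the lower set $\closure X = \{\vec y \sqsubseteq \vec x \in X\}$ is closed in the sense of \eqref{eq:FFF}, and by Prop.~5.3(a) (the proposition identifying $\SAF = \FFF$) this means $\closure X$ is a safety property. For the second factor, Corollary~\ref{Corollary:safe} also states that $X \subseteq \closure X$ is dense in $\closure X$, i.e.\ $X \in \DDD_{\closure X}$. The point I would need to make is that $X \cup \neg \closure X$ is dense in all of $\Event^\ast$: given any basic open $\vec a :: \Event^\ast$, either $\vec a :: \Event^\ast$ meets $\closure X$ — in which case, by density of $X$ in $\closure X$ via \eqref{eq:DDDY}, it meets $X$, hence meets $X \cup \neg\closure X$ — or $\vec a :: \Event^\ast$ is disjoint from $\closure X$, in which case $\vec a \in \neg \closure X \subseteq X \cup \neg\closure X$, so again it meets $X \cup \neg\closure X$. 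Either way $X \cup \neg\closure X$ intersects every basic open set, so it is dense by the first characterization in \eqref{eq:DDD}, hence a liveness property by Prop.~5.3(b).

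I expect the main (mild) obstacle is simply making sure the density argument for the second factor is phrased correctly: one must handle the case split on whether the basic cylinder $\vec a :: \Event^\ast$ hits $\closure X$, and invoke the relativized density $X \in \DDD_{\closure X}$ from \eqref{eq:DDDY} rather than plain density in $\Event^\ast$ (which is false in general). Everything else is bookkeeping with the prefix order and the already-proven equivalences $\SAF = \FFF$, $\LIV = \DDD$. In fact, since Corollary~\ref{Corollary:safe} already packages $X$ as the composite $X \hookrightarrow \closure X \hookrightarrow \Event^\ast$ with the first leg dense and the second closed, the cleanest writeup just reads off $X = \closure X \cap (X \cup \neg\closure X)$ as the explicit form of that factorization and cites the corollary for the safety/liveness classification of the two pieces.
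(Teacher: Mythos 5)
Your proof is correct and follows the same route as the paper: the paper's argument is exactly the closed--dense $\cap$-decomposition $X=\closure X\cap(X\cup\neg\closure X)$ from the Prerequisites (Sec.~\ref{Prereq:Topology}), read through the identifications $\SAF=\FFF$ and $\LIV=\DDD$ and Corollary~\ref{Corollary:safe}. You merely spell out the density of $X\cup\neg\closure X$ (via the case split on whether a basic set $\vec a::\Event^\ast$ meets $\closure X$) that the paper delegates to the appendix, which is a fine and complete way to write it up.
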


\bpr Recalling that in the space $\Event^\ast$ of traces the closure is $\closure X = \{\vec y \sqsubseteq \vec x \in X\}$, we have
\[
\begin{tikzar}[column sep = 1.5cm,row sep = 1.5cm]
\& X\cup \neg \closure X \ar[two heads]{dr}[swap]{\DDD}{\LIV}\\
X\ar[tail]{ur} \ar[equal]{r} \ar[two heads]{rd} \&\closure X \cap \left(X\cup \neg \closure X\right)\ar[hook]{d}\ar[hook']{u}\& \Event^\ast \\
\& \closure X \ar[tail]{ur}{\FFF}[swap]{\SAF}
\end{tikzar}
\]
\epr

\para{Example 1: sheep life.} If Alice wants to shear her sheep at most once every year, and only to use meat in the end, the requirement might be
\bear
\mbox{SheepLife} & = & \mbox{MilkWoolMeat}\ \cap\ \mbox{MilkMeatWoolAnnual} \hspace{4em} \mbox{ --- where}\\
\mbox{MilkWoolMeat} & = & \{\milk,  \mbox{wool}\}^\ast ::  \mbox{meat} \\
\mbox{MilkMeatWoolAnnual} & = & \{\milk, \mbox{meat}\}^\ast \cup \big[\underbrace{\milk\, \milk\,  \ldots\,  \milk}_{365\ {\rm times}}\  \mbox{wool}\big]::\mbox{MilkWoolAnnual}
\eear
The normal decomposition is then
\bear
\mbox{SheepLife} & = & \closure{\mbox{SheepLife}}\  \cap\ \left(\mbox{SheepLife} \ \cup \ \interior{\mbox{NoSheepLife}}\right)
\eear 
where 
\begin{itemize}
\item $\closure{\mbox{SheepLife}}$ consists of all histories where sheep's wool is only ever used after 365 consecutive milkings, and where sheep's meat is only ever used at the end, after which there are no further events; and
\item $\interior{\mbox{NoSheepLife}} = \neg \closure{\mbox{SheepLife}}$ consists of all histories where the rules of $\closure{\mbox{SheepLife}}$ are broken. 
\end{itemize}

\para{Remark.} Note that the liveness part of the SheepLife decomposition is thus unsafe, whereas the safety part of the decomposition is not alive. It is, of course, reasonable to expect this. Yet, when the liveness part and the safety part are tested independently, this sometimes causes problems.

\bigskip

\para{Authority-driven property decompositions.} The cylinder closure $\ana{\closure X}$ in Corollary~\ref{Corollary:au-av} is the smallest authorization property containing all histories from a given $X\subseteq \Event^\ast$. If a security designer wants to make sure that all histories contained in $X$ are authorized, he will thus use $\ana{\closure X}$. Some of the histories authorized in $\ana{\closure X}$ may not be in $X$; but they can be eliminated as unavailable, by intersecting  $\ana{\closure X}$ with an availability property, as spelled out in Prop.~\ref{Prop:au-av-decomp}. The desired property $X$ now remains as the set of precisely those histories that are both authorized and available, just like it remained as the set of histories that are both safe and alive in the safety-driven decomposition above.

\begin{proposition}\label{Prop:au-av-decomp}
Any property can be expressed as an intersection of authorization and availability:
\bear
X & = & \ana{\closure X} \cap \left(X\cup \neg \ana{\closure X}\right).
\eear
\end{proposition}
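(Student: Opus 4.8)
\textbf{Proof plan for Proposition~\ref{Prop:au-av-decomp}.}

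The plan is to mimic exactly the structure of the proof of Proposition~\ref{Prop:closed-dense}, replacing the closure operator $\closure{(-)}$ by the cylinder-closure operator $\ana{\closure{(-)}}$. First I would recall from Corollary~\ref{Corollary:au-av} (and the Remark following it) that $\ana{\closure X} = \closure{\ana X}$ is the smallest authorization property containing $X$; in particular $X \subseteq \ana{\closure X}$, and $\ana{\closure X} \in \AU$ by Proposition~\ref{Prop:geo-auth-avail}(a). This gives one of the two factors. For the other factor, set $A = X \cup \neg \ana{\closure X}$ and observe that $\ana{\closure X} \cap A = \ana{\closure X} \cap \bigl(X \cup \neg\ana{\closure X}\bigr) = \bigl(\ana{\closure X}\cap X\bigr)\cup \bigl(\ana{\closure X}\cap \neg\ana{\closure X}\bigr) = X \cup \emptyset = X$, where $\ana{\closure X}\cap X = X$ uses $X\subseteq \ana{\closure X}$. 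This is the claimed set-theoretic identity, and it is genuinely routine — just distributivity of $\cap$ over $\cup$ together with the containment $X\subseteq\ana{\closure X}$.

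The substantive content, paralleling the diagram in the proof of Proposition~\ref{Prop:closed-dense}, is to confirm that the two factors really are an authorization property and an availability property respectively. The first factor $\ana{\closure X}$ is authorization by Proposition~\ref{Prop:geo-auth-avail}(a), since $\closure X$ is closed (Corollary~\ref{Corollary:safe}) and applying $\ana{(-)}$ to a closed set lands in $\ana\FFF = \AU$. The second factor $X\cup\neg\ana{\closure X}$ should be shown to lie in $\AV$: this is the availability analogue of the fact, used in Proposition~\ref{Prop:closed-dense}, that $X\cup\neg\closure X$ is a liveness property. Concretely, I would check the defining condition \eqref{eq:AV}: for every $\vec x\in\Event^\ast$ and every $u\in\Subj$, I need $\vec y_u\in\Event_u^\ast$ with $\vec x\restr_u :: \vec y_u \in \bigl(X\cup\neg\ana{\closure X}\bigr)_u$. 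If $\vec x\in\ana{\closure X}$, then some prefix-extension witnesses membership in $X$ along the $u$-projection (unravelling $\ana{\closure X} = \closure{\ana X}$ and the density of $X$ in $\closure X$, exactly as in Corollary~\ref{Corollary:safe}); if $\vec x\notin\ana{\closure X}$, then since $\ana{\closure X}$ is cylindric, already $\vec x\restr_u\notin(\ana{\closure X})_u$ for some $u$, and the empty extension keeps us in the complement. Packaging this as the commuting triangle with $\ana{\closure X}\cap(X\cup\neg\ana{\closure X})$ at the apex, factoring $X\hookrightarrow\Event^\ast$ through $\ana{\closure X}$ along $\AU$ and through $X\cup\neg\ana{\closure X}$ along $\overline{\AV}$, gives the diagram-level statement matching Corollary~\ref{Corollary:au-av}.

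The main obstacle I anticipate is the availability claim for $X\cup\neg\ana{\closure X}$: unlike the liveness case in Proposition~\ref{Prop:closed-dense}, availability is a \emph{per-subject} local liveness condition \eqref{eq:availability}, so I must be careful to produce the witnessing extension $\vec y_u$ purely from the $u$-local view $\vec x\restr_u$, not from global information. The key leverage is that $\ana{\closure X}$ is cylindric, so its complement is "detectable locally" in the sense that if $\vec x\notin\ana{\closure X}$ there is a \emph{single} subject $u$ already seeing the violation; for the remaining subjects, and for histories inside $\ana{\closure X}$, one uses that $X$ is dense in $\closure X$ and that cylindrification commutes with closure ($\ana{\closure X}=\closure{\ana X}$) to extend the local view into $X$'s local view. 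Once that case analysis is set up cleanly, everything else is bookkeeping identical to the safety-driven decomposition.
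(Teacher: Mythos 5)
Your proposal follows essentially the same route as the paper: the paper's proof is just the set-theoretic identity (distributivity plus $X\subseteq\ana{\closure X}$) packaged into the diagram expanding Corollary~\ref{Corollary:au-av}, with the two factors labelled $\AU$ and $\AV$, and you reproduce exactly that decomposition while (rightly) checking the defining condition \eqref{eq:AV} directly rather than relying on Proposition~\ref{Prop:geo-auth-avail}(b), which is only an inclusion. One small repair: in your availability argument the dichotomy should be taken \emph{per subject} rather than on whether the global $\vec x$ lies in $\ana{\closure X}$ --- for each fixed $u$, either some $\vec z\in X$ has $\vec x\restr_u\sqsubseteq\vec z\restr_u$ (extend into $X_u$), or no such $\vec z$ exists, in which case $\vec x\restr_u$ itself, viewed as a global history, cannot lie in $\ana{\closure X}$ and the empty extension lands in $\bigl(\neg\ana{\closure X}\bigr)_u$; as written, your second case produces a witness only for \emph{some} $u$, whereas \eqref{eq:AV} demands one for every $u$.
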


\bpr Recalling that the cylinder closure is $\ana{\closure X} = \{\vec y\in \Event^\ast\ |\  \exists \vec x\in X \ \forall u\in \Subj.\ \vec y\restr_u \sqsubseteq \vec x\restr_u\}$, and expanding the diagram from Corollary~\ref{Corollary:au-av}, we have
\[
\begin{tikzar}[column sep = 1.5cm,row sep = 1.5cm]
\&\& X\cup \neg \ana{\closure X} \ar[two heads]{d}[description]{\AV}\\
X\ar{urr} \ar[equal]{r} \ar{rrd} \&\ana{\closure X} \cap \left(X\cup \neg \ana{\closure X}\right)\ar[hook]{dr}\ar[hook']{ur}\& \Event^\ast \\
\&\& \ana{\closure X} \ar[tail]{u}[description]{\AU}.
\end{tikzar}
\]
\epr

\para{Availability-driven decomposition.} The decomposition in Prop.~\ref{Prop:au-av-decomp} is authority-driven because it is obtained by embedding the desired property $X$ into the smallest authorization property $\ana{\closure X}$ that contains it. One of the equivalent logical forms of authority derived in Sec.~\ref{Sec:au-av} was
\bea\label{eq:aau-conv}
\forall \vec x\vec z \in \Event^
\ast. \ \ \vec x \not \in \Closed\ \wedge \vec x \sqsubseteq \vec z & \implies & \exists u\in \Subj.\ \vec z \restr_u  \not \in \Closed_u
\eea
It says that any unauthorized event will be observed by someone, independently on others. \emph{\textbf{No cooperation is needed.}} On the other hand, the corresponding availability property 
\bea\label{eq:aavailability}
\forall \vec x\in \Event^\ast\  \forall u\in \Subj\ \ \exists \vec z \in \Event^\ast. \ \Big(  \vec x\restr_u \sqsubseteq \vec z\restr_u \ \wedge\  \ \vec z \in \Dense\Big)
\eea
says that in any situation there is a future for all \emph{together}. More precisely, it says that Alice on her own can find $\vec y_A$ such that $\vec x\restr_A :: \vec y_A$ is in $\Dense_A$; and Bob can find $\vec y_B$ on his own, such that $\vec x\restr_B :: \vec y_B$ is in $\Dense_B$; but they must come together to find  $\vec z \in \Dense$ such that $\vec z\restr_A = \vec x\restr_A :: \vec y_A$ and $\vec z\restr_B = \vec x\restr_B :: \vec y_B$. Finding such $\vec z$ requires scheduling local histories. 
\emph{\textbf{This requires cooperation.}}

A \emph{strong availability}\/ requirement, strengthening \eqref{eq:aavailability} so that the required actions can be realized by individual subjects with no need for cooperation, is
 \beq\label{eq:stongav}
 {\forall \vec x \in\Event^\ast\ \forall   \vec z\in\Event^\ast \ \exists u\in \Subj.\Big( \vec x\restr_u \sqsubseteq \vec z\restr_u\ \ \implies \ \ \vec z\in\Dense\Big)}
 \eeq
saying there is someone for whom a future is available \emph{independently of others}.

\bigskip
\begin{proposition}
Any property can be expressed as a union of a strong availability and an authority breach:
\bea\label{eq:union-decomp}
X & = & \cata{\interior{X}} \ \cup \ \Big(X \cap \neg \cata{\interior{X}}  \Big)
\eea
\end{proposition}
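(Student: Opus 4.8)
The plan is to recognize that \eqref{eq:union-decomp} is the ``union/deflationary'' mirror of Propositions~\ref{Prop:closed-dense} and \ref{Prop:au-av-decomp}: just as those reduce to the inclusions $X\subseteq\closure X$ and $X\subseteq\ana{\closure X}$, this one reduces to the single inclusion $\cata{\interior X}\subseteq X$. Concretely, for any $A\subseteq\Event^\ast$ one has the Boolean identity
\[
A\ \cup\ \bigl(X\cap\neg A\bigr)\ =\ (A\cup X)\cap(A\cup\neg A)\ =\ A\cup X,
\]
so $X=A\cup(X\cap\neg A)$ holds if and only if $A\cup X=X$, i.e.\ if and only if $A\subseteq X$. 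Taking $A=\cata{\interior X}$, the proposition is therefore equivalent to $\cata{\interior X}\subseteq X$, and everything else is the set algebra above, which I would just record.

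To prove $\cata{\interior X}\subseteq X$ I would argue in two deflationary steps. First, the interior operator $\interior{(-)}$ is deflationary: since $\interior Y=\neg\closure{\neg Y}$ (as already used in the ``sheep life'' example, where $\interior{\mathrm{NoSheepLife}}=\neg\closure{\mathrm{SheepLife}}$) and the lower closure $\closure{(-)}$ of Prerequisites~\ref{Prereq:Topology} is inflationary, we get $\interior X=\neg\closure{\neg X}\subseteq\neg\neg X=X$ — exactly dual to the inflationary $X\subseteq\closure X$ behind Prop.~\ref{Prop:closed-dense}. Second, the internal cylindrification $\cata{(-)}$ is deflationary: $\cata Y\subseteq Y$ for every $Y$. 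This is the co-cylinder counterpart of the inclusion $Y\subseteq\ana Y$ underpinning Lemma~\ref{Lemma:cylinder} and Prop.~\ref{Prop:au-av-decomp}: where $\ana Y=\bigcap_{u\in\Subj}u^{*}u_{!}(Y)$ sits above $Y$ because each unit $Y\subseteq u^{*}u_{!}(Y)$ of $u_{!}\dashv u^{*}$ inflates, the dual $\cata Y=\bigcup_{u\in\Subj}u^{*}u_{*}(Y)$ sits below $Y$ because each counit $u^{*}u_{*}(Y)\subseteq Y$ of $u^{*}\dashv u_{*}$ deflates (equivalently, if one sets $\cata Y=\neg\ana{\neg Y}$ then $\cata Y\subseteq Y$ follows from $\neg Y\subseteq\ana{\neg Y}$). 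Composing, $\cata{\interior X}\subseteq\interior X\subseteq X$, as needed. Along the way I would unwind $\cata{\interior X}$ through the purge operators $\restr_u$ of \eqref{eq:purge} to confirm that it is literally a strong availability property in the sense of \eqref{eq:stongav} (``someone alone can force the good stuff''), and, dually to the type checking in Props.~\ref{Prop:closed-dense} and \ref{Prop:au-av-decomp}, that $X\cap\neg\cata{\interior X}$ is an authority breach; this is bookkeeping, not a real difficulty.

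Finally I would package the argument as a commuting square in $\WP(\Event^\ast)$ parallel to the diagrams of Props.~\ref{Prop:closed-dense} and \ref{Prop:au-av-decomp}, sandwiching $X$ as $\cata{\interior X}\subseteq X\subseteq\Event^\ast$ and using the Boolean identity to identify the union $\cata{\interior X}\cup(X\cap\neg\cata{\interior X})$ with $X$. The only place where anything can go wrong is the second deflationary step: it hinges on the precise definition of $\cata{(-)}$ from Sec.~\ref{Sec:localization} being the left-adjoint / De~Morgan dual of $\ana{(-)}$ (hence deflationary) rather than some other ``internal cylinder''. Once that identification is in hand, $\cata Y\subseteq Y$ is a one-line consequence of a triangle identity and the rest is purely formal, so I would make pinning down that definition the first order of business.
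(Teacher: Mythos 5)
Your proof is correct, but it reaches the identity by a different route than the paper does. The paper first computes $\cata{\interior X} \;=\; \bigcup_{u\in \Subj} u^*u_*\left(\interior X\right) \;=\; \neg \ana{\closure{\neg X}}$ via the De Morgan dualities $\cata Y = \neg\ana{\neg Y}$ and $\interior X = \neg\closure{\neg X}$, and then simply takes complements on both sides of the authority-driven decomposition of $\neg X$ from Prop.~\ref{Prop:au-av-decomp}; the whole proposition is thus obtained as the Boolean dual of the preceding one. You instead isolate the underlying set-algebra fact — $A\cup(X\cap\neg A)=X$ iff $A\subseteq X$ — and discharge the single remaining inclusion $\cata{\interior X}\subseteq X$ from the two deflationary inequalities $\interior X\subseteq X$ and $\cata Y\subseteq Y$, both of which are recorded verbatim in Prerequisites~\ref{Prereq:Topology} and \ref{Sec:localization} (the latter as the counit inequality $v^*v_*(Y)\subseteq Y$), so your worry about the definition of $\cata{(-)}$ is resolved by the paper's own appendix. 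Your route is more elementary and makes transparent that all three normal-form decompositions (Props.~\ref{Prop:closed-dense}, \ref{Prop:au-av-decomp}, and this one) are the same one-line Boolean argument applied to an inflationary or deflationary operator; what it does not buy you is the explicit identification $\cata{\interior X}=\neg\ana{\closure{\neg X}}$, which is how the paper justifies the semantic reading of the two components (the complement of the smallest authority property containing $\neg X$ is the largest strong availability property inside $X$, in the sense of \eqref{eq:stongav}). You defer that typing to ``bookkeeping''; since the prose of the proposition does assert that the first summand is a strong availability and the second an authority breach, that step is part of the claim and deserves at least the one displayed computation the paper gives for it — though to be fair, the paper's own verification is scarcely more detailed.
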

\bigskip

\bpr
The largest strong availability property contained in $X$ is
\begin{multline*}
\cata{\interior X} \ \  = \ \  
\left\{\vec y\ |\ \exists u\in \Subj.\ \vec y\restr_u \in X_u \Rightarrow \vec y \in \interior X\right\}
\ \   = \ \  \bigcup_{u\in \Subj} \left\{\vec y\ |\ \vec y\restr_u \in X_u \Rightarrow \vec y \in \interior X
\right\}\ \  = \\  \bigcup_{u\in \Subj} u^*u_*\left(\interior X\right) \ \  = \ \  \neg \ana{ \closure{\neg X}}
\end{multline*}
Since Prop.~\ref{Prop:au-av-decomp} gives
$\neg X \  = \ \ana{\closure{\neg X}} \ \cap \ \left(\neg X \cup \neg \ana{\closure{\neg X}}  \right)$, the claim follows by taking complements on both sides. \epr

%
%
%
%
%

\section{What did we learn about resource security?}
\label{Sec:Geo-lesson}

\para{Security properties of histories and interactions are like syntactic properties of sentences.} Security properties as families of trace properties are the ``syntactic types'' of the ``language'' of network interactions. A correctly typed security policy is like a syntactically well-formed sentence in this language. A syntactically incorrect sentence surely does not convey the intended meaning. But syntactic correctness does not guarantee that it does. An authorization policy that is not expressed in terms of an authorization property surely does not implement the desired authority; but an authorization policy expressed in a sound authority property does not guarantee the desired security effects. The meaning of a sentence, a program, or a protocol is up to the designer. The science of security only provides the tools to achieve the desired goals, just like the syntax of a language facilitates communication --- but does not create the content to be communicated.

The geometric view allows us to express any set as an intersection of a closed and a dense set. Since safety and liveness properties of computations turn out to be past-closed and dense sets, any property of computations $X$ can thus be approximated by a safety property, and precisely recovered by intersecting that safety property with a liveness property. 

In cyberspace, the authority properties turn out to be past-closed \emph{cylindric}\/ sets, whereas the availability properties are the sets whose cylindrifications are dense. Therefore, any resource security property can be approximated by an authority property, and precisely recovered by intersecting it with an availability property.

In this chapter, we formalized arbitrary, often complex, resource security requirements and decomposed them into normal forms. Verifying whether security requirements are satisfied can thus be simplified to standardized tests. In the next chapter, we shall see how non-local properties, that cannot be tested locally at all, can be established and assured through non-local interactions.

\def\thechapter{6}
\setchaptertoc
\chapter{Relational channels and noninterference}
\label{Chap:Channel}

\section{Networks and channels}\label{Sec:chanwhat}

\subsection{Networks}
We live in networks. \sindex{network} Our computers and devices are connected to electronic networks, our friends and families to social networks, our species to biological networks, and so on. The resources studied so far are the nodes of the various networks that we live in. The channels studied in this chapter are the network links.

A network is an abstraction of space. It abstracts away the   irrelevant features of space and displays some features of interest as network nodes. The network links present the connections or relations between the nodes. Fig.~\ref{Fig:london} 
\begin{figure}
\begin{center}
\includegraphics[width = 10.5cm
]{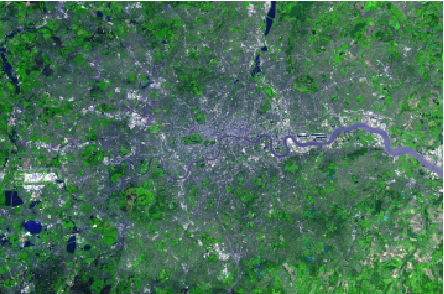}

\medskip
\includegraphics[width = 10.5cm
]{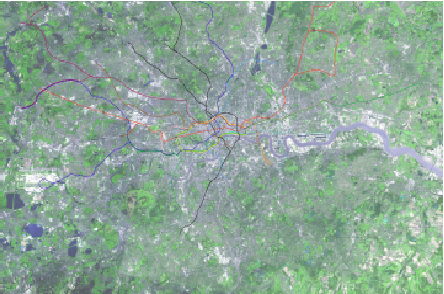}

\medskip
\includegraphics[width = 10.5cm
]{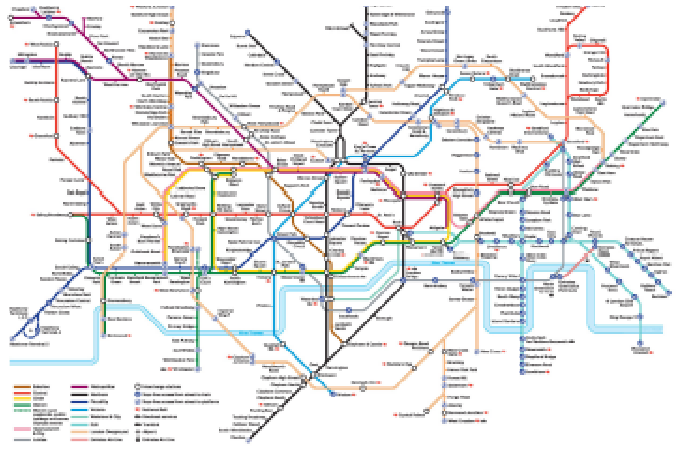}

\caption{The London Underground network is an abstraction of London}
\end{center}
\label{Fig:london}
\end{figure}
shows a physical space (the City of London) on the top, a network of interest (the London Underground) at the bottom, and the two projected over each other in the middle. The network nodes are the tube stations, whereas the network links are the railway connections between them. The traffic channels are realized by the trains traveling over the network links. They input the passengers who enter and output the passengers who exit. The passengers may enter through many entrances, but at each entrance, they enter one after the other. For simplicity, we assume that they all enter in a sequence (since that will be the case in most examples of interest here). But different passengers travel to different destinations, which are usually unknown. The traffic models, therefore, take into account possibilities or probabilities. Since each passenger's trip depends on the traffic load, corresponding to where the other passengers are going, the traffic channels take the sequences of entering passengers at the input but single exits at the output, taking into account all passengers that may possibly exit there, and even assigning the probabilities to each of them, if known. The fact that all passengers eventually exit the network is recovered by accumulating the sequences of single-exit outputs. 

We first focus on modeling a single channel and then expand the view to networks with many channels and the protocols regulating the network traffic. The individual model is formalized in Sec.~\ref{Sec:chandef}, and the cumulative view in Sec.~\ref{Sec:chanlist}.

\subsection{Channels}\label{Sec:chandef}
\para{Idea.} Channels transmit what we know, what we have, or what we are. Most organisms are equipped with senses to recognize each other and to display their traits to each other. The senses are the basic channels. Traffic channels transmit what we have, and communication channels transmit what we know. Our first communication channels were the body gestures and the speech. To be communicable, data must be encoded in a language. Communication channels transmit codes and languages. They are the content that the channels transmit. 

Channels are the media that transmit content. An electronic channel can be a copper wire that conducts electronic signals between two nodes in a circuit. The copper wire is the medium that carries the communication channel. The radio waves are the medium that carries the content of radio channels. Radio channels carry the radio communications. Physicists used to think that the radio waves were carried by a specific medium, the \emph{ether}, but the theory of relativity established that the only carrier of radio waves and light was the space itself. The theory of relativity says that space is a channel and that information cannot travel through it faster than light. All channels transmit information and can be used for communication. A communication channel can be implemented on top of a radio channel, or a copper wire, or even the English Channel as the medium. A social channel can be implemented on top of a communication channel. There are layers and layers of channels, implemented on top of each other, but the basic picture of each of them is always the same. 

\para{Channel as a history-sensitive function.} A channel \sindex{channel} is a function that inputs histories.  While an ordinary function is in the form $\Inp\to \Oup$, a channel is in the form $\Inp^{+}\to \Oup$, where $\Inp^{+}$ is the type of strings of events from $\Inp$, representing histories. (See Prerequisite~\ref{Prereq:list}.) A channel is thus a \emph{history-sensitive}\/ function. The other way around, a function is a \emph{memoryless}\/ channel, meaning that its output only depends on the most recent event, and not on the memory of earlier events.

\subsection{Possibilistic channels}
\para{Subsets and relations.} A relation \sindex{relation} is a function in the form $\Inp \to \WP\Oup$, where $\WP\Oup = \{Y\subseteq \Oup\}$ is the set of subsets of $\Oup$, its powerset. A relation is thus a function that may output multiple values, or nothing. A function is a single-valued, total relation.  The bijection $\WP\Oup \cong \{0,1\}^{\Oup}$, i.e., the correspondence
\beq\label{eq:hommset}
\prooftree
Y\in  \WP{\Oup}
 \Justifies
 \homm - _{Y}\colon \Oup\to \{0,1\}
 \endprooftree
 \qquad \mbox{ where } \homm y _{Y} = 1\  \iff \ y\in Y
\eeq
induces the bijective correspondence 
 \beq\label{eq:homset}
 \prooftree
 r\colon \Inp \to \WP{\Oup}
 \Justifies
 \hhom- r - \colon \Inp\times \Oup \to \{0,1\}
 \endprooftree
 \qquad \mbox{ where } \hhom x r y = 1 \ \iff\  y\in r_{x}
 \eeq
 A subset $Y$ of $\Oup$ can thus be viewed as a 01-vector $\homm - _{Y}\in \{0,1\}^{\Oup}$, whereas a relation $r$ from $\Inp$ to $\Oup$ can be viewed as a 01-matrix $\hhom- r - \in \{0,1\}^{\Inp\times \Oup}$.

\para{Function view.}\label{Sec:relchan} 
A possibilistic (or relational) channel \sindex{channel!relational} \sindex{channel!possibilistic@ see {relational}} is a binary relation between strings of inputs from $\Inp$ and outputs in $\Oup$. They can be viewed as functions $f\colon \Inp^{+}\to \WP\Oup$, mapping the input histories $\vec x = \seq{x_{0}x_{1}\ldots x_{n}}$ to the sets of \emph{possible}\/ outputs  $f_{\vec x}\subseteq \Oup$. 

\para{Matrix view.} \sindex{sequent} \sindex{channel!sequent notation} According to \eqref{eq:hommset}, the output subsets can  be equivalently viewed as functions to the set $\{0,1\}$.  According to \eqref{eq:homset}, the functions $f\colon\Inp^+\to \WP\Oup$ can be equivalently viewed as the matrices $\hhom - f - \colon \Inp^{+}\times \Oup\to \{0,1\}$ whose entries are the sequents $\hhom{x_{0}x_{1}\ldots x_{n}} f y$.  The bijective correspondence in \eqref{eq:homset} is thus instantiated to
\beq\label{eq:homchan}
 \prooftree
 f\colon \Inp^{+} \to \WP{\Oup}
 \Justifies
 \hhom- f - \colon \Inp^{+}\times \Oup \to \{0,1\}
 \endprooftree
 \qquad \mbox{ where } \hhom{x_{0}x_{1}\ldots x_{n}} f y = f_{x_{0}x_{1}\ldots x_{n}}(y)
 \eeq
When the channel name $f$ is clear from the context, we elide it and write $\hom{x_{0}x_{1}\ldots x_{n}} y$, or more succinctly as $\hom{\vec x}{y}=f_{\vec x}(y)$ for $\vec x = \seq{x_{0}x_{1}\ldots x_{n}}$. 

\para{Special case: deterministic channels.} \sindex{channel!deterministic} A deterministic channel is a relational channel $f$ where the inputs determine unique outputs, which means that $f_{\vec x}\subseteq \Oup$ is either a singleton $\{y\}$ or the empty set $\emptyset$. A deterministic channel is, therefore, in the form $f\colon \Inp^{+}\to \Oup\uplus 1$, where $1=\{\emptyset\}$ and $\uplus$ is the disjoint union. It can be viewed as the \emph{partial}\/ function $f\colon \Inp^{+}\pto \Oup$, considered undefined when the output is empty. The empty output must be allowed if computable channels are to be taken into account, since computations may search infinitely long and not return any output.

\para{Special case: memoryless channel.} \sindex{channel!memoryless} While a channel is a \emph{history-sensitive}\/ function, a function is a \emph{memoryless}\/ channel, whose output only depends on the most recent input, i.e.
\bea\label{eq:memoryless}
f_{x_{0}x_{1}\ldots x_{n}} & = & f_{x_{n}}. 
\eea

\para{Examples.} The English Channel links the North Sea and the Atlantic Ocean. It is a deterministic channel, because the same ships that enter eventually exit, except those that sink. A channel for land traffic can be a freeway between two cities or a railway line between two stations. The sequence of cars  $\seq{x_{0} x_{1}\ldots x_{n}}$ that enters the freeway in the city $\Inp$ is generally different from the set of cars $Y= \{y_{0}, y_{1},\ldots,y_{m}\}$ that may exit in the city $\Oup$, since some cars that entered go to other cities, and some that exit come from other cities. The set $Y$ is not ordered because its elements are the cars that may exit at a given moment. Below we introduce an equivalent relational channel formalism, which also displays the sequence of outputs on a given sequence of inputs. The relation $\hom {x_{0} x_{1}\ldots x_{n}}{y}$ means that the car $y$ may exit the freeway after the sequence of inputs $\seq {x_{0} x_{1}\ldots x_{n}}$.

\subsection{Possibilistic channel types and structure}
\subsubsection{Cumulative channels}\sindex{channel!cumulative}\label{Sec:chanlist}
\para{Chatbots as channels.} 
A chatbot \sindex{chatbot} inputs prompts as sequences of words\footnote{They actually partition words into \emph{tokens}, and also take punctuation into account, also as tokens. Here, it only matters that the contexts are sequences. It does not matter whether they are sequences of words or sequences of tokens. For simplicity, we ignore the difference.}\sindex{words vs tokens}\sindex{tokens vs words} and outputs the corresponding responses, which are also sequences of words. But the crucial point is that it does not generate the response all at once. It first guesses the most likely first word of the response, which it then adds to the context and guesses the most likely next word, and so on. So the chatbot, in principle, inputs sequences of words and outputs words, one at a time. Although some words are more likely than others, taking into account just which words $y\in \Oup$ are \emph{possible}\/ as continuations of the contexts $\vec x\in \Inp^{+}$ gives a channel $\Inp^{+}\to \WP\Oup$. Later, in Ch.~\ref{Chap:Info}, we will spell out tools that will allow us to take into account the different \emph{probabilities}\/ of different words in the same context. In any case, \emph{a chatbot is a channel}. But taking just a single word of the chatbot output into account provides a poor picture of its performance. The \emph{miracle of language}\/ is that the single words that we say form sentences; and that sentences form stories, and conversations. The chatbot miracle boils down to accumulating the single word outputs of the channel $\Inp^{+}\to \WP\Oup$ and presenting it in the form $\Inp^{\ast}\to \WP\Oup^{\ast}$, mapping the prompts as contexts to the sequences of words that are its responses. This is the \emph{cumulative}\/ form of the channel. 

\para{Accumulating and projecting strings.} The string constructors \sindex{string!constructor} given in Prerequisites~\ref{Prereq:list} induce the bijection
\beq
\begin{tikzar}[column sep = large]
X^{+}\times X \ar[bend left]{r}[pos=0.45]{(\cons)} \ar[phantom]{r}[description,pos=0.4]{\cong} \& X^{+} \ar[bend left]{l}[pos=0.53]{<\pxx,\sxx>}
\end{tikzar}
\eeq
where 
\bear
\pxx\seq{x_{1}\, x_{2}\ldots x_{n-1}\, x_{n}} & = & \seq{x_{1}\, x_{2}\ldots x_{n-1}},\\
\sxx\seq{x_{1}\, x_{2}\ldots x_{n-1}\, x_{n}} & = & x_{n}. 
\eear
The condition in \eqref{eq:memoryless} can \sindex{channel!memoryless} now be written  in the form
\bear
f_{\vec x}  & = &  f_{\sxx(\vec x)}.
\eear
A channel $f$ is thus memoryless if and only if the following diagram commutes 
\beq\label{eq:memoryless2}
\begin{tikzar}[column sep = 5pc]
\Inp^+ \arrow[two heads]{d}[swap]{\sxx} \arrow{drr}{f}\\
\Inp \arrow[tail]{d}[swap]{(-)}\&\& \WP\Oup\\
\Inp^+ \arrow{urr}[swap]{f}
\end{tikzar}
\eeq
with the embedding $(-)\colon X\mono X^+$ from  Prerequisites~\ref{Prereq:list}\eqref{eq:string}.

\para{Accumulating and projecting channels.} The string constructors given in Prerequisites~\ref{Prereq:list} induce the maps
\beq\label{eq:cumul}
\begin{tikzar}[column sep = large]
\Big\{\Inp^{+}\to \WP\Oup\Big\} \ar[bend left,tail]{r}{\ladj{(-)}} \& \Big\{\Inp^{\ast}\to \WP\Oup^{\ast}\Big\} \ar[bend left, two heads]{l}{\radj{(-)}}
\end{tikzar}
\eeq
where $f\colon \Inp^{+}\to \WP\Oup$ and $g\colon \Inp^{\ast}\to \WP\Oup^{\ast}$ are mapped to
\begin{center}
\begin{tabular}{cc}
$\ladj f()  = \big\{()\big\}$\hspace{8em}		& 
	\multirow{2}{*}{$\radj g\left(\vec x\right) = \begin{cases} \emptyset & \mbox{ if } g(\vec x) \subseteq\big\{()\big\} \\
\sxx\left(g(\vec x)\right) & \mbox{ otherwise}
\end{cases}$}					\\
$\ladj f\left(\vec x \cons a\right)  =  \ladj f(\vec x) \cons f\left(\vec x \cons a\right)$\hspace{5em}		&		\\
\end{tabular}
\end{center}
where the concatenation $(\cons)$ and the projection $\sxx$ are extended from elements to subsets
\[\prooftree
\Oup^{\ast}\times \Oup \tto{(\cons)} \Oup^{\ast}
\justifies 
\WP\Oup^{\ast}\times \WP\Oup \tto{(\cons)} \WP\Oup^{\ast}
\endprooftree
\qquad\qquad 
\prooftree
\Oup^{+} \tto{\sxx} \Oup
\justifies 
\WP\Oup^{+} \tto{\sxx} \WP\Oup
\endprooftree
\]
along the direct images so that 
\[ \ladj f (\vec x)\cons f(\vec x\cons a) = \left\{\vec y \cons b\ |\ \vec y\in \ladj f(\vec x), b\in f(\vec x\cons a)\right\}\qquad \quad \sxx\left(g(\vec x)\right) = \{\sxx\left( \vec y\right)\ |\ \vec y \in g(\vec x)\}\]
It is easy to verify that $\radj{\left(\ladj f\right)} = f$ always holds, whereas  $\ladj{\left(\radj g\right)} = g$ holds if and only if $g$ preserves the list lengths. 

\para{Sequent notation for cumulative channels.} \sindex{channel!cumulative!sequent notation} To capture correspondence \eqref{eq:cumul}, the notation from Sec.~\ref{Sec:chandef} extends to
\bea\label{eq:cumulseq}
 \hom{x_0 \ldots  x_n}{y_0 \ldots  y_n}\  &= &  \hom{x_0}{y_{0}}\cdot \hom{x_0 x_{1}}{y_{1}}\cdot \hom{x_0 x_{1}x_{2}}{y_{2}}\cdots  \hom{x_0 \ldots  x_n}{y_n}
\eea

\subsubsection{Continuous channels}\label{Sec:contchan}\sindex{channel!continuous possibilistic}
\begin{definition}\label{Def:contchan}\sindex{channel!continuous}
A\/ \emph{continuous (possibilistic) channel} is a function $\gamma\colon \WP\Inp^{\ast}\tto\cup \WP\Oup^{\ast}$ which preserves all unions and maps finite sets to finite sets:
\bea
\gamma\left(\bigcup \UUU\right) \ \  = \ \  \bigcup_{U\in \UUU}\gamma(U) &\quad\mbox{ and }\quad&  \gamma\Big(\{\vec x\}\Big) = \Big\{\vec y_{1}, \vec y_{2},\ldots, \vec y_{m}\Big\}
\eea 
\end{definition}

\para{Channel continuation and restriction.} 
The bijection between the cumulative channels $g\colon \Inp^{\ast}\to \WP\Oup^{\ast}$  and the continuous channels $\gamma\colon \WP\Inp^{\ast}\tto\cup \WP\Oup^{\ast}$ is in the form
\beq\label{eq:contchan}
\begin{tikzar}[column sep = large]
\Big\{\Inp^{\ast}\to \WP\Oup^{\ast}\Big\} \ar[bend left]{r}{(-)^{\#}}\ar[phantom]{r}[description]{\cong} \& \Big\{\WP\Inp^{\ast}\tto{\cup} \WP\Oup^{\ast}\Big\}  \ar[bend left]{l}{(-)_{\#}}
\end{tikzar}
\eeq
where the transformations
\bea
\prooftree 
\Inp^{\ast}\tto g \WP\Oup^{\ast}
\justifies
\WP\Inp^{\ast}\tto{g^{\#}} \WP \Oup^{\ast}
\endprooftree
&\qquad\qquad& 
\prooftree
\WP\Inp^{\ast}\tto \gamma \WP\Oup^{\ast}
\justifies
\Inp^{\ast}\tto{\gamma_{\#}} \WP \Oup^{\ast}
\endprooftree
\eea
define
\begin{align*}
g^{\#}(U) & =  \bigcup_{\vec x\in U} g(\vec x) & \gamma_{\#}\left(\vec x\right) & =  \gamma\left(\{\vec x\}\right)
\end{align*}

\para{Sequent notation for continuous channels.} \sindex{channel!continuous!sequent notation} To capture correspondence \eqref{eq:contchan}, the sequent notation is further extended from \eqref{eq:cumulseq} to $U\in \WP\Inp^{\ast}$ and $V\in \WP\Oup^{\ast}$
\bea\label{eq:contseq}
 \hom{U}{V}\  &= &  \bigvee_{\substack{\vec x\in U\\
 \vec y \in V}}\hom{\vec x}{\vec y}
 \eea

\subsubsection{Overt and covert channels}
\sindex{channel!covert}\sindex{channel!overt}
An \textbf{overt} channel is provided to serve an overtly specified intent. For example, a phone is an overt channel for conversations. The security checkpoint at the airport is an overt channel for passengers. 

A \textbf{covert} channel is, on the other hand, used covertly and against the specified intent. For example, the phone can be used as a covert channel if Alice and Bob establish a secret code to transmit a confidential message. Say, if Alice calls at midnight and hangs up as soon as Bob picks up, the message means that Bob should leave the front door open. While the overt constraint for security checkpoints is that no more than 3.4 oz of liquid should be permitted into the secure area, Alice and Bob can establish a covert channel by pooling their 3.4 oz containers together, to transmit 6.8 oz liquid into the secure area.

While overt channels can be unsafe, the purpose of channel security is to prevent covert channels.

\section{Channel safety and inference}

\subsection{Observation and inference in overt channels}
Channels are often used to model \emph{causation}.\sindex{causation}  A channel of type $\Inp^{+}\to \WP\Oup$ is then thought of as a process of observing the effects of type $\Oup$ caused by sequences of events of type $\Inp$. Typically, the input causes in $\Inp$ are unobservable, and the task is to infer information about them from the output effects in $\Oup$. For example, a microscope and a  telescope are such channels, making the invisible visible. Galileo's telescope is in \cref{Fig:newton}, together with Newton's cradle, which is a channel displaying the invisible transmission of force along a sequence of adjacent stationary metal balls. 
\begin{figure}[h!t]
\begin{center}
\includegraphics[width=9cm
]
{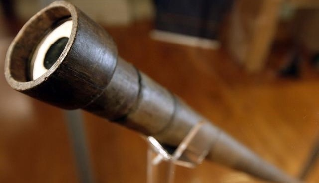}

\vspace{2ex}
\includegraphics[width=9cm]
{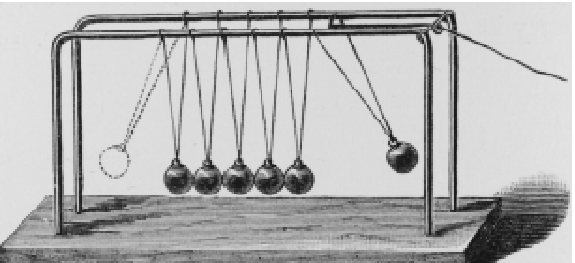}
\caption{A channel transmits $\left(\mbox{Unobservable causes}\right)^{+}\to \left(\mbox{Observable effects}\right)$}
\label{Fig:newton}
\end{center}
\end{figure} 
Most scientific instruments are channels that transform unobservable inputs into observable outputs. They are black boxes enclosing causations. The question: \emph{``What can be inferred about the unobservable causes of this observed effect?''} --- becomes the problem of channel decoding:
\begin{quote}{What can be inferred about the channel inputs from the channel outputs?}\end{quote} 
This makes a channel into the principal tool of \emph{\textbf{inductive inference}}\sindex{inductive inference}. \sindex{channel!inference} and the cumulative channels from Sec.~\ref{Sec:chanlist} into a rudimentary model of empiric induction \cite{MacKayD:book}. 

\subsection{Inverse channels}\label{Sec:chaninverse}
\sindex{channel!inverse}
If a {\relational} channel $g\colon\Inp^{\ast}\to \WP\Oup^{\ast}$ maps causal contexts $\vec x\in \Inp^{\ast}$ to sets of their possible effects $f_{\vec x} \subseteq \Oup^{\ast}$, then the task of deriving causes from effects, i.e., channel inputs from its outputs requires that we construct the inverse channel $\tilde g\colon\Oup^{\ast}\to \WP\Inp^{\ast}$ defined as
\bear
\tilde g_{\vec y} & = & \{\vec x \in \Inp^{\ast}\ |\ \vec y\in g_{\vec x}\}.
\eear
Given in the matrix form, the inverse of the relational channel $\hhom - g - \colon \Inp^{\ast}\times \Oup^{\ast}\to\{0,1\}$ is just
\bear
\hhom{\vec y}{\tilde g}{\vec x} & = & \hhom{\vec x}{g}{\vec y}.
\eear
  
\subsection{Unsafe channels}
Based on the black boxes of causation, inductive inference is always hypothetical. The claim that a sequence of events $x_{0}x_{1}\ldots x_{n}$ causes the event $y$ means that the correlation $\hom {x_{0}x_{1}\ldots x_{n}} y$ has been frequently observed on the input and the output of a channel black box. Such a claim can never be definitely proven, since further observations may always disprove it. It is accepted as inductively validated only until it is disproven. 

History of science is the history of disproven hypotheses, stated as channel correlations $$\hom {x_{0}x_{1}\ldots x_{n}} y.$$ Fig.~\ref{Fig:phrenology} shows the diagram illustrating the hypothesis that the properties $y$ of a person's mind are correlated with the shapes $x_{0}, x_{1}, \ldots,  x_{n}$ of their skull. The claim was that the scull provides an overt channel\sindex{channel!overt} into the human mind.
\begin{figure}[h!t]
\begin{center}
\includegraphics[height=7cm
]
{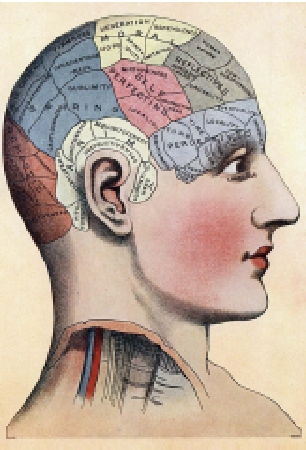}
\caption{Phrenological map of human mind}
\label{Fig:phrenology}
\end{center}
\end{figure} 
This hypothesis, which went under the name \sindex{phrenology} \emph{phrenology}, was conveniently viewed as true by many people through the XIX century. 
Although disproved early on, it continued to be attractive for the same reason for which many people still follow predictions of their horoscopes. The human mind is an ongoing quest for unobservable causes of everything we observe. When we cannot find some likely causal explanations supported by experience, we contrive them. We are primed to think in terms of causal channels $\hom{\vec x}{\vec y}$, and we construct them, no matter what.

Chatbots are primed to answer questions, and when they cannot find some likely answers supported by their training datasets, they contrive them. Their artificial mind is an ongoing process of generation for which the underlying language models are trained. They are required to extrapolate the given context and generate a response to every prompt no matter what. Hence the phenomenon of chatbots'  \emph{hallucinations}\sindex{chatbot!hallucination}. A hallucinating chatbot is unsafe because a contrived correlation $\hom{\vec x}{\vec y}$ of your prompt $\vec x$ and the chatbot's response $\vec y$ may cause bad stuff to happen. In general, \sindex{unsafety!of a channel} \sindex{channel!unsafe} a channel is unsafe when it outputs observables $\vec y$ that are correlated with the inputs $\vec x$ by flawed inferences or superstitions \cite[Ch.~4]{PavlovicD:LangEng}.

\section{Channel security and interference}

Security problems arise from sharing goals,  views, or resources; or rather from not sharing. Alice wants to do this, and Bob wants to do that, and they cannot do both. Or they both need the same thing, but only one can have it. A channel can be viewed as a resource, a tool for acquiring or transmitting information. Alice may need to protect her private information transmitted by the shared channel. To circumvent the protections, Bob may construct a covert channel within the overt channel shared with Alice. Alice's private information then covertly leaks to Bob through a channel within a channel.

\para{Idea of interference.}\sindex{interference} If Alice is using a channel on her own and no one else needs it, there are no security problems with it. Channel security problems arise when Alice and Bob share a channel. Each of them enters their inputs and observes their outputs. The channel $f\colon\Inp^{+}\to \Oup$ receives the inputs as they are entered, and its input histories are the shuffles of Alice's inputs with Bob's inputs, say 
\bear
\vec x & = & \seq{x_{0}^{A}x_{1}^{A}x_{2}^{B}x_{3}^{B}x_{4}^{B}x_{5}^{A}x_{6}^{B}\ldots}
\eear
After each of Alice's inputs, the channel produces an output for Alice; after each of Bob's inputs, it produces an output for Bob. Although each of them only observes their own outputs, Bob's output $y^{B}=f(x_{0}^{A}x_{1}^{A}x_{2}^{B})$ depends not only on his input $x_{2}^{B}$ but also on Alice's inputs $x_{0}^{A}$ and $x_{1}^{A}$. Therefore, Bob's input $x_{2}^{B}$ may cause the output $y^{B}$ in Alice's context $x_{0}^{A}x_{1}^{A}$, but it may cause a different output $\overline y^{B}$ if Alice enters a different context $\overline x_{0}^{A}\overline x_{1}^{A}$. When that happens, we say that the private inputs \emph{interfere}\/ in the shared channel. \sindex{channel!interference} Interference is the main problem of channel security. Bob cannot directly derive Alice's inputs $x_{0}^{A}x_{1}^{A}$ or $\overline x_{0}^{A}\overline x_{1}^{A}$ from his outputs $y^{B}$ and $\overline y^{B}$, but he can derive that Alice has entered different inputs. That is one bit of information about Alice's inputs. If the inputs are 0 or 1, then there are just 4 possible inputs $x_{0}^{A}x_{1}^{A}$. If Bob can get 2 bits of information about Alice's inputs, he can guess them. 

In this section, we formalize the \emph{security problem of \textbf{interference}}. In the next section, we characterize the \emph{security requirement of \textbf{noninterference}}.

\subsection{Example: Elevator as a shared channel}\label{Sec:elev-inter}
The elevator in Fig.~\ref{Fig:elev} inputs the calls to the floors $0, 1,$ up to $n$ represented as the events of type 
\bear
\Inp & = & \Big\{\send i\ |\ 0\leq i \leq n \Big\}.
\eear
It outputs the services 
\bear
\Oup & = & \{ \recv{\comme i}, \recv{\sstay i}\ |\  0\leq i \leq n\}
\eear
where 
\begin{itemize}
\item $\recv{\comme i}$ means: \emph{"The elevator comes to floor $i$"}, whereas
\item $\recv{\sstay i}$ means: \emph{"The elevator is already on floor $i$".}
\end{itemize}
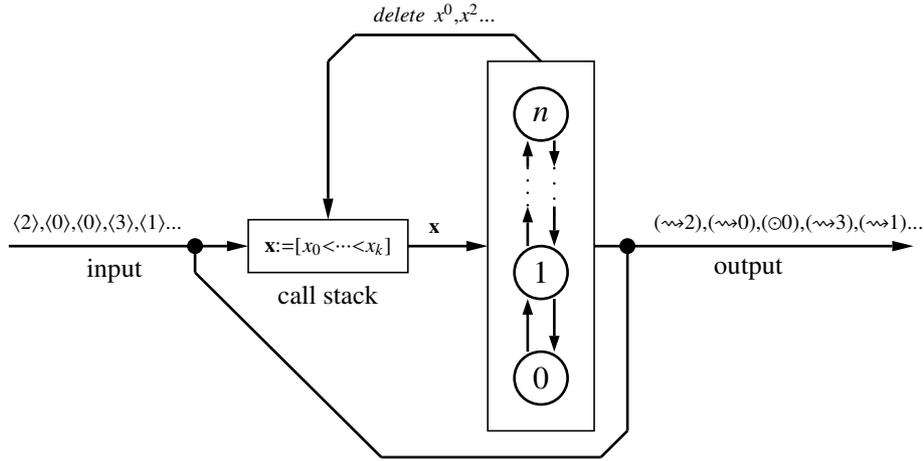
\begin{figure}[!h]
\newcommand{\remove}{\scriptstyle delete\ x^0, x^2\ldots}
\newcommand{\add}{\scriptstyle\vec x}
\newcommand{\xss}{\scriptstyle \left<{2}\right>, \left<0\right>,\left<0\right>,\left<3\right>,\left<1\right>\ldots}
\newcommand{\calls}{\scriptstyle\vec x := \left[x_0 \lt \cdots\lt x_k\right]}
\newcommand{\callls}{\footnotesize input}
\newcommand{\trace}{\footnotesize output}
\newcommand{\stack}{\footnotesize call stack}
\newcommand{\ffloor}{0}
\newcommand{\sfloor}{1}
\newcommand{\tfloor}{n}
\newcommand{\yss}{\scriptstyle\left(\comme 2\right), \left(\comme 0\right),\left(\sstay 0\right),\left(\comme 3\right),\left(\comme 1\right)\ldots}
\begin{center}
\def\JPicScale{.7}
\ifx\JPicScale\undefined\def\JPicScale{1}\fi
\psset{unit=\JPicScale mm}
\psset{linewidth=0.3,dotsep=1,hatchwidth=0.3,hatchsep=1.5,shadowsize=1,dimen=middle}
\psset{dotsize=0.7 2.5,dotscale=1 1,fillcolor=black}
\psset{arrowsize=1 2,arrowlength=1,arrowinset=0.25,tbarsize=0.7 5,bracketlength=0.15,rbracketlength=0.15}
\begin{pspicture}(0,0)(165,82.5)
\psline[linewidth=0.5,arrowlength=2,arrowinset=0]{<-}(85,40)(70,40)
\psline[linewidth=0.5,arrowlength=1.5,arrowinset=0,bracketlength=0.65]{->}(92.5,55)(92.5,60)
\psline[linewidth=0.5,arrowlength=1.5,arrowinset=0,bracketlength=0.65]{->}(97.5,60)(97.5,55)
\psline[linewidth=0.5,arrowlength=1.5,arrowinset=0,bracketlength=0.65]{->}(92.5,20)(92.5,30)
\psline[linewidth=0.5,arrowlength=1.5,arrowinset=0,bracketlength=0.65]{->}(97.5,30)(97.5,20)
\psline[linewidth=0.5,linestyle=dotted,dotsep=2](92.5,47.5)(92.5,55)
\psline[linewidth=0.5,linestyle=dotted,dotsep=2](97.5,55)(97.5,45.88)
\rput{0}(95,15){\psellipse[linewidth=0.5](0,0)(5,-5)}
\pspolygon[](85,75)(105,75)(105,5)(85,5)
\rput{0}(95,35){\psellipse[linewidth=0.5](0,0)(5,-5)}
\rput{0}(95,65){\psellipse[linewidth=0.5](0,0)(5,-5)}
\psline[linewidth=0.5,arrowlength=1.5,arrowinset=0,bracketlength=0.65]{->}(92.5,40)(92.5,47.5)
\psline[linewidth=0.5,arrowlength=1.5,arrowinset=0,bracketlength=0.65]{->}(97.5,47.5)(97.5,40)
\pspolygon[](40,45)(70,45)(70,35)(40,35)
\rput(55,40){$\calls$}
\psline[linewidth=0.5,arrowlength=2,arrowinset=0]{<-}(40,40)(-5,40)
\rput{0}(30,40){\psellipse[linewidth=0.5,fillstyle=solid](0,0)(1.25,-1.25)}
\psline[linewidth=0.5](65,0)(30,35)
\psline[linewidth=0.5](106.25,0)(65,0)
\psline[linewidth=0.5,arrowlength=2,arrowinset=0]{<-}(165,40)(105,40)
\psline[linewidth=0.5](106.25,0)(111.25,5)
\rput{0}(111.25,40){\psellipse[linewidth=0.5,fillstyle=solid](0,0)(1.25,-1.25)}
\psline[linewidth=0.5](111.25,5)(111.25,40)
\psline[linewidth=0.5](95,75)(90,80)
\psline[linewidth=0.5](90,80)(60,80)
\psline[linewidth=0.5,arrowlength=2,arrowinset=0]{<-}(55,45)(55,75)
\rput[B](75,82.5){$\remove$}
\rput[b](75,42.5){$\add$}
\rput[br](27.5,42.5){$\xss$}
\rput[bl](116.25,42.5){$\yss$}
\rput(95,15){$\ffloor$}
\rput(95,35){$\sfloor$}
\rput(95,65){$\tfloor$}
\psline[linewidth=0.5](30,35)(30,40)
\psline[linewidth=0.5](60,80)(55,75)
\rput[tr](20,37.5){\callls}
\rput[tl](127.5,37.5){\trace}
\rput[t](55,32.5){\stack}
\end{pspicture}
\caption{Elevator channel}
\label{Fig:elev}
\end{center}
\end{figure}
The elevator can be viewed as a deterministic channel $\Inp^+ \tto{ele} \Oup$  realized by the state machine in Fig.~\ref{Fig:elev-STM}. Note that it is assumed that the outputs are produced both at each transition and at each state.\footnote{This mixture of Moore and Mealy machine outputs can be reduced to either paradigm using the usual translation between the two.}

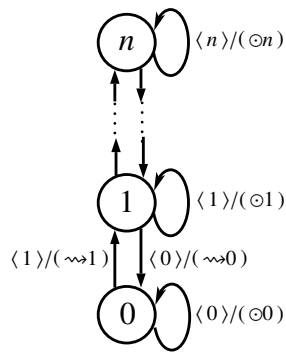
\begin{figure}[!hb]
\begin{center}
\newcommand{\oneleft}{$0$}
\newcommand{\onemid}{$1$}
\newcommand{\oneright}{$n$}
\newcommand{\transone}{$\scriptstyle \send 0/\recv{\sstay 0}$}
\newcommand{\transtwo}{$\scriptstyle \send 1/ \recv{\comme 1}$}
\newcommand{\transthree}{$\scriptstyle \send 1/ \recv{\sstay 1}$}
\newcommand{\transfive}{$\scriptstyle \send n/ \recv{\sstay n}$}
\newcommand{\transfour}{$\scriptstyle \send 0/\recv{ \comme 0}$}
\def\JPicScale{.25}
\hspace{2em} \ifx\JPicScale\undefined\def\JPicScale{1}\fi
\psset{unit=\JPicScale mm}
\psset{linewidth=0.3,dotsep=1,hatchwidth=0.3,hatchsep=1.5,shadowsize=1,dimen=middle}
\psset{dotsize=0.7 2.5,dotscale=1 1,fillcolor=black}
\psset{arrowsize=1 2,arrowlength=1,arrowinset=0.25,tbarsize=0.7 5,bracketlength=0.15,rbracketlength=0.15}
\begin{pspicture}(0,0)(51.88,176.88)
\rput{0}(15,74.37){\psellipse[linewidth=1.5](0,0)(15,-15)}
\rput{0}(15,15){\psellipse[linewidth=1.5](0,0)(15,-15)}
\psline[linewidth=1.5,arrowsize=2 2,arrowlength=1.5,arrowinset=0,bracketlength=0.65]{->}(8.75,29.37)(8.75,60)
\psline[linewidth=1.5,arrowsize=2 2,arrowlength=1.5,arrowinset=0,bracketlength=0.65]{->}(22.5,60.62)(22.5,29.37)
\rput[l](51.25,15){\transone}
\rput(15.62,15.62){\oneleft}
\rput[r](5.62,44.38){\transtwo}
\rput[l](26.88,44.38){\transfour}
\rput[l](51.88,75.62){\transthree}
\rput{89.99}(38.75,13.75){\psellipticarc[linewidth=1.5,arrowsize=2 2,arrowlength=2]{->}(0,0)(17.5,9.38){122}{414.45}}
\rput{89.99}(38.12,74.38){\psellipticarc[linewidth=1.5,arrowsize=2 2,arrowlength=2]{->}(0,0)(17.5,9.38){122}{414.45}}
\rput{0}(15,159.37){\psellipse[linewidth=1.5](0,0)(15,-15)}
\psline[linewidth=1.5,arrowsize=2 2,arrowlength=1.5,arrowinset=0,bracketlength=0.65]{->}(8.75,128.12)(8.75,145)
\psline[linewidth=1.5,arrowsize=2 2,arrowlength=1.5,arrowinset=0,bracketlength=0.65]{->}(22.5,145.62)(22.5,127.5)
\rput(15,159.37){\oneright}
\rput{89.99}(38.12,159.38){\psellipticarc[linewidth=1.5,arrowsize=2 2,arrowlength=2]{->}(0,0)(17.5,9.38){122}{414.45}}
\psline[linewidth=1.5,arrowsize=2 2,arrowlength=1.5,arrowinset=0,bracketlength=0.65]{->}(9.38,88.75)(9.38,108.75)
\psline[linewidth=1.5,arrowsize=2 2,arrowlength=1.5,arrowinset=0,bracketlength=0.65]{->}(23.75,109.38)(23.75,87.5)
\psline[linewidth=1.5,linestyle=dotted,dotsep=4](8.75,110.62)(8.75,127.5)
\psline[linewidth=1.5,linestyle=dotted,dotsep=4](23.12,128.12)(23.12,106.25)
\rput(15,75){\onemid}
\rput[l](51.25,160.62){\transfive}
\end{pspicture}
\caption{The state machine view of the elevator}
\label{Fig:elev-STM}
\end{center}
\end{figure}

\para{Sharing the elevator.} If Alice and Bob share the elevator, then each of them calls it separately. For simplicity, we assume that they also receive separate service.  For example,  Alice calls it to the ground floor $0$ by pushing the input $\send 0_A$. When the elevator arrives, Alice observes $\recv{\comme 0}_A$. To go to floor 1, Alice then enters  $\send 1_A$. Then Bob may call the elevator down again by $\send 0_B$. We do not distinguish the calls from inside the cabin from the calls to the cabin. To capture sharing, we now need to distinguish Alice's actions from Bob's actions, which doubles the input and output types:
\begin{align*}
\Inp_{\Subj} \ & =\   \hspace{5.3em} \Big\{\send i_{A}\ |\ 0\leq i \leq n \Big\}\  +\  \Big\{\send i_{B}\ |\ 0\leq i \leq n \Big\} \ &=\  \ \Inp\times \Subj  \\
\Oup_{\Subj} & =  \hspace{2em} \Big\{ \recv{\comme i}_{A}, \recv{\sstay i}_{A}\ |\  0\leq i \leq n\Big\}\  +\  \Big\{ \recv{\comme i}_{B}, \recv{\sstay i}_{B}\ |\  0\leq i \leq n\Big\} & \ \ = \Oup\times \Subj 
\end{align*}
For a subject $S \in \Subj$ 
\begin{itemize}
\item $\recv{\comme i}_S$ means: \emph{"Subject observes that the elevator comes to floor $0$"}, whereas
\item $\recv{\sstay i}_S$ means: \emph{"Subject observes that elevator is already on floor $i$".}
\end{itemize}

\para{Channel interference in the elevator.} When a subject calls the elevator, they observe whether the elevator is already there or not. If Alice and Bob are the only users then each of them will know where the other has left the elevator. If Alice arrives home and leaves the elevator on floor 1, Bob will know that she is home or not by calling the elevator. E.g. if he enters a call to floor 1, he may observe two different outputs, depending on the history of Alice's earlier action:
\bear
\Inp^{+} & \to & \Oup\\
\send 0 _A \send 1_B & \mapsto &\recv{\comme 1}_B
\\
\send 1 _A \send 1_B & \mapsto & \recv{\sstay 1}_B
\eear
Alice's use of the elevator thus interferes with Bob's use\sindex{interference}.

\subsection{Sharing}
\subsubsection{Shared world}\label{Sec:shared}
In the preceding chapters, the world has been presented in terms  of histories $\vec x\in \Event^{\ast}$ as sequences of events from a set partitioned into disjoint unions 
\[ \Event\ =\ \coprod_{u\in \Subj}\Event_{u}\ =\  \coprod_{i\in \Obj} \Event_{i}\  =\  \coprod_{a\in \Act}\Event_{a}\]
where the events $\Event_{u}$ were performed by a subject $u$, the events $\Event_{i}$ were performed on the object $i$, and the events $\Event_{a}$ were all occurrences of the action $a$, performed by all subjects on all objects. In each case, the parts $\Event_{v}$ \emph{strictly}\/ localize the events at the subject, object, action, or level $v$. The strictness meant that Alice's local events $\Event_{A}$ are disjoint from Bob's local events $\Event_{B}$. But if the clearances need to be taken into account, then the subjects at the same clearance level should be able to observe and enact the same events. And even if they don't have the same clearance level, there are often events for which multiple subjects are cleared, and their general localities are not disjoint.

\para{General localities.} The world \sindex{world} \sindex{locality} is presented as histories $\vec x\in \Inp^{\ast}$, the sequences of events $x\in \Inp$. But some events happen here, other events there. Alice and Bob may be able to observe what is happening here, but not there. The localities  ``here'' and ``there'' may be presented as network nodes. The events from $\Inp$ may be localized on a network. To formalize this, we consider a lattice $\Levels$ of clearance levels, viewed as general localities, like in Sec.~\ref{Sec:explicit}. In general, they are  partially ordered, in the sense that a house is contained in a city, the city is contained in a country, so the lattice $\Levels$ then contains the chain house$\leq$city$\leq$country. But the order is partial because, say, a \emph{blue house}\/ and a \emph{red house}, are not contained in each other. Alice may, however, have access to both houses, and the lattice $\Levels$ must contain the clearance level \emph{blue house $\vee$ red house}\/ to assign it to Alice. If Carol has access to the whole city and Bob only to the blue house, then their clearance levels are $\clr(B) \lt \clr(A) \lt \clr(C)$. To capture channel sharing, the subjects from $\Subj$ and the events from $\Inp$ are assigned clearances \sindex{clearance} by the maps 
\[\loc: \Inp \to \Levels\qquad\qquad\qquad\qquad \clr :\Subj \to \Levels 
\]

\para{Clearance types.} \sindex{clearance!scope} Alice is cleared to enter an input $x$ if its locality $\loc(x)$ is below Alice's clearance level $\clr(A)$. The \emph{clearance relation}\/ \sindex{clearance!relation} $(\propto)\subseteq \Inp\times \Subj$ is defined
\bea\label{eq:obscone}
\Big(x\propto A\Big) &\iff & \Big(\loc(x)\leq \clr(A)\Big).
\eea
The events and actions for which Alice is cleared are collected in 
\bea
\Inp_{A} & = & \{x\in \Inp\ |\ x\propto A\}.
\eea
This is Alice's \emph{clearance type}.\sindex{clearance!type} The family of clearance types $\{\Inp_{u}\ |\ u\in \Subj\}$ is generally not a partition of $\Inp$, since its elements may have nonempty intersections, and may not cover $\Inp$. This is in contrast with the partitions $\{\Event_{u}\ |\ u\in \Subj\}$ of  general events $\Event$  in Sec.~\ref{Sec:Local}.

\para{Worldviews.} \sindex{worldview} Alice's worldview is the set $\Inp_{A}^{\ast}$ of all  histories from her clearance type $\Inp_{A}$. The general histories $\vec x\in \Inp^{\ast}$ are purged of events outside Alice's clearance type along the \emph{general purge}\/ channel \sindex{purge!general} $\restr_{A}\colon \Inp^{\ast}\to \Inp_{A}^{\ast}$ defined\sindex{channel!purge}
 \beq\label{eq:privview} \seq{}\restr_{A} = \seq{}\qquad \qquad \qquad\qquad (\vec x\cons u)\restr_{A} = \begin{cases}
\left(\vec x\restr_{A}\right) \cons u & \mbox{ if } u\propto A\\
 \left(\vec x\restr_{A}\right) & \mbox{ otherwise}
 \end{cases}
 \eeq
The local history $\vec x\restr_{A}$ is Alice's view of a global history $\vec x$. \sindex{history!global}\sindex{global history|see{history}}\sindex{history!local}\sindex{local history|see{history}}
 
\para{Local states of the world.} \sindex{world!state of the world} \sindex{state of the world|see{world}} If Alice observes a local history $\vec x_{A}\in \Inp_{A}^{\ast}$, then she knows that any of the global histories $\vec x\in \Inp^{\ast}$ such that $\vec x\restr_{A}=\vec x_{A}$ could have taken place. Alice's state of the world is the set of global histories consistent with her local view. Such derivations from her local views induce Alice's state of the world channel $\state = \tilde\restr \colon \Inp^{\ast}_{A}\to \WP\Inp^{\ast}$ defined
 \bea\label{eq:stateA}
 \state_{\vec x_{A}}(\vec x) & = & \begin{cases} 1 & \mbox{ if } \vec x \restr_{A} = \vec x_{A}\\
 0 & \mbox{ otherwise}
 \end{cases}
 \eea
 It is easy to see that this is the inverse (in the sense of Sec.~\ref{Sec:chaninverse}) of Alice's worldview channel, i.e., $\state = \tilde \restr$. Equivalently, using correspondence \eqref{eq:homchan}, Alice's local state of the world can also be written in the matrix form $\hom - - \colon \Inp^{\ast}_{A}\times \Inp^{\ast}\to \{0,1\}$ with the entries
\bear
\hom{\vec x_{A}}{\vec x} & = & \state_{\vec x_{A}}(\vec x).
\eear 

\subsubsection{Shared channels}\label{Sec:gensharing}
\sindex{channel!shared}
\para{Setting for sharing.} A shared channel must be given with a set of subjects $\Subj$, a locality lattice $\Levels$, a locality assignment $\loc: \Inp \to \Levels$, and a clearance assignment $\clr :\Subj \to \Levels$, determining the observability relation $(\propto)$ defined in  \eqref{eq:obscone}.

\para{Local channel views.} \sindex{channel!local view} Alice's local view of the channel $g\colon \Inp^{\ast}\to \WP\Oup^{\ast}$ is the derived channel $g^{A}\colon \Inp^{\ast}\to \WP\Oup^{\ast}$ which displays the outputs induced by Alice's inputs, and purges all outputs induced outside Alice's clearance set $\Inp_{A}$:
 \beq\label{eq:chanview} g^{A}()= \left\{()\right\}\qquad \qquad \qquad\qquad g^{A}(\vec x\cons u) = \begin{cases}
g^{A}\left(\vec x\right) \cons g_{\ast}(\vec x\cons u) & \mbox{ if } u\propto A\\
g^{A}\left(\vec x\right) & \mbox{ otherwise}
 \end{cases}
 \eeq
where $g_{\ast}\colon \Inp^{+}\to \WP\Oup$ corresponds to $g$ by \eqref{eq:cumul}.

\begin{lemma}\label{Lemma:privchan} Alice's local view $g^{A}$ of any {\relational} channel $g\colon \Inp^{\ast}\to \WP\Oup^{\ast}$ includes all channel outputs of her worldviews $\vec x\restr_{A}$ for every global history $\vec x$:
\bea\label{eq:privchanlemma}
g\left(\vec x\restr_{A}\right) & \subseteq & g^{A}(\vec x)
\eea
\end{lemma}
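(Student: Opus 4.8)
\textbf{Proof plan for Lemma~\ref{Lemma:privchan}.}
The plan is to prove the inclusion \eqref{eq:privchanlemma} by induction on the length of the global history $\vec x$. The base case is $\vec x = \seq{}$: here $\vec x\restr_A = \seq{}$ by \eqref{eq:privview}, and $g^A(\seq{}) = \{()\}$ by \eqref{eq:chanview}. So we need $g(\seq{}) \subseteq \{()\}$, which holds because a cumulative channel preserves list lengths on the empty input, so $g(\seq{})\subseteq \Oup^{\ast}$ consists only of length-zero strings, i.e., $g(\seq{})\subseteq\{()\}$. (If one does not want to assume length-preservation for a general relational $g$, one instead works with $g = \ladj{\left(g_\ast\right)}$ coming from $g_\ast\colon\Inp^+\to\WP\Oup$ via \eqref{eq:cumul}, for which $\ladj f() = \{()\}$ by definition; this is the reading I would adopt, since \eqref{eq:chanview} is phrased in terms of $g_\ast$.)

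For the inductive step, write $\vec x\cons u$ and assume \eqref{eq:privchanlemma} holds for $\vec x$. There are two cases according to the clause in \eqref{eq:privview} and \eqref{eq:chanview}. If $u\not\propto A$, then $(\vec x\cons u)\restr_A = \vec x\restr_A$ and $g^A(\vec x\cons u) = g^A(\vec x)$, so the claim for $\vec x\cons u$ is literally the inductive hypothesis. If $u\propto A$, then $(\vec x\cons u)\restr_A = (\vec x\restr_A)\cons u$ and $g^A(\vec x\cons u) = g^A(\vec x)\cons g_\ast(\vec x\cons u)$. Now using $g = \ladj{\left(g_\ast\right)}$ and the defining recursion $\ladj f(\vec w\cons a) = \ladj f(\vec w)\cons f(\vec w\cons a)$, applied with $\vec w = \vec x\restr_A$, we get
\[
g\big((\vec x\restr_A)\cons u\big) \ =\ g(\vec x\restr_A)\cons g_\ast\big((\vec x\restr_A)\cons u\big).
\]
So it suffices to show two inclusions: $g(\vec x\restr_A)\subseteq g^A(\vec x)$, which is the inductive hypothesis, and $g_\ast\big((\vec x\restr_A)\cons u\big)\subseteq g_\ast(\vec x\cons u)$, after which monotonicity of the extended concatenation $(\cons)\colon\WP\Oup^{\ast}\times\WP\Oup\to\WP\Oup^{\ast}$ (it is defined as a direct image, hence monotone in each argument) gives the result.

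The main obstacle is the second inclusion, $g_\ast\big((\vec x\restr_A)\cons u\big)\subseteq g_\ast(\vec x\cons u)$: this is \emph{not} true for an arbitrary channel, since a history-sensitive channel can respond completely differently to $\vec x\restr_A$ than to $\vec x$. This is exactly the content of interference, discussed just before the lemma. So this step must be where an additional hypothesis is used — presumably the lemma is implicitly about the \emph{local-view} channel $g^A$ being compared against $g$ evaluated on purged histories, and the intended reading is that $g^A(\vec x)$ is meant to \emph{collect} $g$-outputs over all histories consistent with $\vec x\restr_A$, not just over $\vec x$ itself; under that reading the inclusion $g(\vec x\restr_A)\subseteq g^A(\vec x)$ is automatic because $\vec x\restr_A$ is one of the consistent histories (indeed $\vec x\restr_A\restr_A = \vec x\restr_A$, so $\vec x\restr_A$ lies in Alice's state-of-the-world set $\state_{\vec x\restr_A}$), and no interference-freeness is needed. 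I would therefore double-check the precise definition \eqref{eq:chanview} against \eqref{eq:stateA}: if $g^A$ is genuinely defined by the recursion in \eqref{eq:chanview} using $g_\ast$ on the \emph{global} prefixes $\vec x$, then the lemma as stated requires (and I expect the surrounding text to supply) the hypothesis that outputs on purged-away events do not affect the retained outputs; if instead $g^A$ is the composite $g\circ\state$-style construction, the induction above closes cleanly with no extra assumption. My plan is to present the clean inductive argument, flagging explicitly at the $u\propto A$ step that the key inclusion $g_\ast\big((\vec x\restr_A)\cons u\big)\subseteq g_\ast(\vec x\cons u)$ is precisely where the structure of $g^A$ (as an accumulation over Alice-consistent histories) is invoked.
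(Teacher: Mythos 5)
The paper states Lemma~\ref{Lemma:privchan} without any proof, so there is nothing of the authors' to compare your argument against; your induction is the natural attempt, the base case (which forces the reading $g=\ladj{(g_{\ast})}$) is handled correctly, and the case $u\not\propto A$ is indeed just the inductive hypothesis. Your diagnosis of the case $u\propto A$ is also exactly right, and the situation is worse than you hedge: the inclusion $g_{\ast}\big((\vec x\restr_{A})\cons u\big)\subseteq g_{\ast}(\vec x\cons u)$ is not merely unavailable from the definitions, it is false, and with it the lemma as literally stated. Take $\Inp=\{a,b\}$ with $a\propto A$ and $b\not\propto A$, $\Oup=\{0,1\}$, and the deterministic single-output channel $f$ with $f(\vec w)=\{0\}$ if $b$ occurs in $\vec w$ and $f(\vec w)=\{1\}$ otherwise; set $g=\ladj{f}$, so $g_{\ast}=f$. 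For $\vec x=\seq{b\ a}$ you get $\vec x\restr_{A}=\seq{a}$ and
\[
g\left(\vec x\restr_{A}\right)\ =\ \{()\}\cons f\seq{a}\ =\ \big\{\seq{1}\big\},
\qquad
g^{A}(\vec x)\ =\ g^{A}\seq{b}\cons f\seq{b\ a}\ =\ \{()\}\cons\{0\}\ =\ \big\{\seq{0}\big\},
\]
so $g\left(\vec x\restr_{A}\right)\not\subseteq g^{A}(\vec x)$. No argument can close the step you flagged; the statement needs repair (an additional hypothesis on $g_{\ast}$, or a different $g^{A}$), and the gap propagates to the appeal to this lemma in the proof of Prop.~\ref{Prop:nonint-char}, (a)$\Rightarrow$(c).

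Your proposed rescue --- reading $g^{A}(\vec x)$ as an accumulation of outputs over all global histories consistent with $\vec x\restr_{A}$ --- would indeed make the inclusion automatic, but it is not definition \eqref{eq:chanview} and cannot be the intended one: under that reading $g^{A}(\vec x)$ would depend only on $\vec x\restr_{A}$, condition (b) of Prop.~\ref{Prop:nonint-char} would hold for every channel, and the interference discussion that the lemma is meant to set up would be vacuous. (It is essentially the interference channel $\intt^{A}g$ of \eqref{eq:inttposs}, a different object.) What your induction does prove is the statement in the neighborhood that is true: for purely local histories $\vec x_{A}\in\Inp_{A}^{\ast}$ the two recursions coincide, so $g^{A}(\vec x_{A})=g(\vec x_{A})$, and hence $g\left(\vec x\restr_{A}\right)=g^{A}\left(\vec x\restr_{A}\right)\subseteq\intt^{A}g\left(\vec x\restr_{A}\right)$ because $\left(\vec x\restr_{A}\right)\restr_{A}=\vec x\restr_{A}$. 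So present your induction, stop at the flagged step, exhibit the counterexample, and report the needed correction rather than a proof.
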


The converse inclusion is not satisfied by all {\relational} channels. In some cases, Alice may learn from her local view $g^{A}(\vec x)$ of a channel $g$ more than that channel outputs in response to her local inputs as $g\left(\vec x\restr_{A}\right)$.  That is a consequence of the \emph{interferences} within some channels. The channels where the converse inclusion of \eqref{eq:privchanlemma} is also valid, so that $g^{A}(\vec x)=g\left(\vec x\restr_{A}\right)$, are characterized in Prop.~\ref{Prop:nonint-char}.

\subsubsection{Interference channels}
\label{Sec:interchan}\sindex{interference!channel}\sindex{channel!interference}
Alice can collect information about the interferences within a channel $g\colon \Inp^{\ast}\to \WP\Oup^{\ast}$ by repeatedly entering the same input $\vec x_{A}\colon \Inp_{A}^{\ast}$ and recording the different outputs that arise in the different global contexts within her local state. In this way, she builds a {\relational} channel $\intt^{A} g\colon \Inp^{\ast}_{A}\to \WP\Oup^{\ast}$ defined as
\bea\label{eq:inttposs}
\intt^{A}  g(\vec x_{A}) & = & \bigcup_{\vec u\restr_{A}= \vec x_{A}} g^{A}(\vec u).
\eea
If the channel $g^{A}$ is viewed, using (\ref{eq:hommset}--\ref{eq:homchan}), as the matrix $\hhom- {g^{A}} -\colon \Inp^{\ast}\times \Oup^{\ast}\to \{0,1\}$ where $\hhom {\vec x} {g^{A}} {\vec y} = \homm{\vec y}_{g^{A}(\vec x)}=1$ if and only if $\vec y \in g^{A}(\vec x)$, else 0, then the interference channel $\int^{A} g$ becomes the matrix $\hhom- {\int^{A} g} -\colon \Inp_{A}^{\ast}\times \Oup^{\ast}\to \{0,1\}$ defined as 
\bear
\hhom{\vec x_{A}}{\int^{A} g}{\vec y} & = & 
\bigvee_{\vec x\in \Inp^{\ast}} \hom{\vec x_{A}}{\vec x}\cdot \hhom{\vec x}{g^{A}}{\vec y}.
\eear


\section{Relational noninterference}\label{Sec:nonint}

\subsection{Ideas of covert channels and interference} 
The \emph{no-read-up}\/ and \emph{no-write-down}\/ requirements, that we studied in Chap.~\ref{Chap:Resource}, assure that data and objects can not flow through a given channel downwards, i.e., that they cannot be either pulled down by reading, or pushed down by writing. 

But those requirements are imposed and can be tested  only on the \emph{explicitly given}, i.e.,  \emph{overt}\/ channels. The ancient idea of the \emph{Trojan horse}\/ shows that an attacker can hide a \emph{covert}\/ \sindex{channel!covert} \sindex{Trojan horse} channel inside an overt channel, and transfer through it some prohibited data or objects. The profession of smugglers mainly consists of constructing covert channels inside overt channels. 
\begin{figure}[!ht]
\centering
\includegraphics[height = 7cm
]{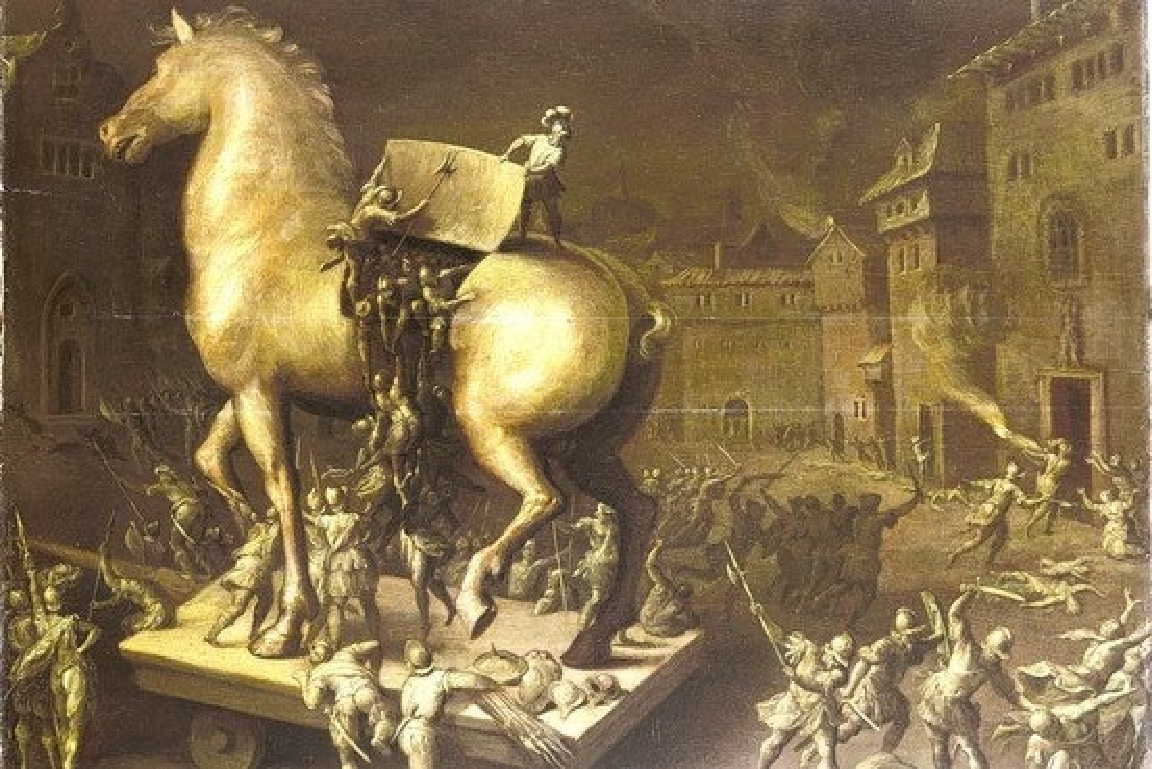}
\caption{Trojan horse covertly channels soldiers through overt gates}
\label{Fig:trojan}
\end{figure}
The story of the Trojan Horse is a legend of Odysseus' brilliant smuggling idea. The city of Troy was a walled fortification, and the only overt channel into the city was the city gate, manned by the guards who made sure that no arms enter the city. Odysseus built a wooden horse, large enough to conceal a group of armed soldiers and left it in front of the gate. To the citizens, the wooden horse looked like a work of art. An overt channel from the gods. It was actually a covert channel for armed soldiers, but it appeared to comply with the security requirement of the city gate because they were covert.

Noninterference assures that overt channels do not contain covert channels. When a channel $g:\Inp^\ast \to \left(\WP\Oup\right)^{\ast}$ is shared between Alice, Bob, Carol, and others, the overt channels are the local views $g^{A}, g_{B}, g_{C},\ldots:\Inp^\ast\to \left(\WP\Oup\right)^{\ast}$ defined in \eqref{eq:chanview}. The security requirement is that Alice's local overt channel $g^{A}$ must not provide any information about Bob's channel $g_{B}$ unless she has a clearance  $\clr(A) \geq \clr(B)$ to see Bob's private data.  But how could $g^{A}$ ever provide information about $g_{B}$ if \eqref{eq:chanview} purges Bob's outputs from Alice's view? We saw an instance of the answer (with the roles reverted) in Example~\ref{Sec:elev-inter}.
While Bob's outputs are purged from Alice's local view, Alice's outputs $g^{A}(\vec x)$ do not depend only on her inputs, but on the entire global histories $\vec x$, which contain everyone's inputs. So Alice can derive information about the global history $\vec x$, including Bob's inputs $\vec x\restr_B$, from her local observations $g^{A}(\vec x)$. Such derivations are possible when there is channel interference between Alice's outputs and Bob's inputs.

How do we gain information about the input of a function from its outputs? For example, if I know that the sum of two positive integers is $x+y = 4$, then I know that $x$ must be 1, 2, or 3. If an unknown function $f$ over positive integers outputs the values $f(x)$ and $f(y)$, I may not be able to find what $x$ and $y$ are, but if $f(x) \neq f(y)$, then I do know that $x\neq y$. This is $1$ bit of information about $x$ and $y$. If I calculate the value of $f$ at 0, and it turns out to be $f(0) = f(x)$, then I have one more bit of information about $x$ and $y$. But if the function $f$ is constant, and $f(x) = f(y)$ for all $x,y\in A$, then the outputs of $f$ provide no information about its inputs. 

Suppose Alice observes a worldview $\vec x_{A}$. She wonders what is the actual state of the world $\vec x$, of which she only sees the restriction $\vec x_{A}=\vec x\restr_{A}$. From the channel $g\colon \Inp^{\ast}\to \left(\WP\Oup\right)^{\ast}$, she also sees some outputs $y, y',\ldots \in g^{A}(\vec x)\subseteq \Oup^{\ast}$. What does $y\in g^{A}(\vec x)$ tell about $\vec x$? Alice can extract one bit of information about $\vec x$ if she can find another global extension  $\vec u$ of her worldview $\vec x \restr_A = \vec u\restr_A$ such that $y\not \in g^{A}(\vec u)$, i.e., $g^{A}(\vec x) \neq g^{A}(\vec u)$. Another such extension provides another bit of information and restricts $\vec x$ even more. Starting from a local view $\vec x_{A}$ and testing the observable outputs $g^{A}(\vec u)$ for unobservable global inputs $\vec u$ such that $\vec u\restr_{A}=\vec x_{A}$, Alice gathers information about $\vec x$ that caused her observation $g^{A}(\vec x)$ by distinguishing it from $\vec u$ that cause different effects $g^{A}(\vec u)$, although they are for Alice locally indistinguishable, since $\vec u\restr_{A}=\vec x\restr_{A} = \vec x_{A}$. 

This information gathering process, where Alice keeps re-entering her local input $\vec x_{A}$ to trigger different global inputs $\vec u$ with $\vec u\restr_{A}= \vec x_{A}$ and record the possibly different channel outputs $g^{A}(\vec u)$, is modeled by Alice's interference channel $\intt^{A}g$. There is no interference or covert channels in $g$ if the interference channels $\intt^{A}g$ do not gather any more information than the local channels $g^{A}$ display directly.

\begin{definition}\label{Def:nonint}\sindex{noninterference}
A channel $g \colon \Inp^\ast \to \left(\WP\Oup\right)^{\ast}$ satisfies the noninterference requirement if for all subjects $A$ the interference channels output just the local views that they get overtly, i.e.
\bea\label{eq:nonint}
\intt^{A} g(\vec x_{A}) & = & g(\vec x_{A})
\eea
A channel $f\colon \Inp^{+}\to \WP\Oup$ is said to satisfy noninterference if $f^{\ast}\colon \Inp^{\ast}\to \left(\WP\Oup\right)^{\ast}$ satisfies \eqref{eq:nonint}.
\end{definition}

\subsection{Characterizing noninterference}
The following proposition provides several equivalent characterizations of relational noninterference. A useful equivalent condition is stated in terms of the \sindex{purge!complementary} \sindex{complementary purge} \emph{complementary purge}\/ channel 
$\restriction_{\oth   A} : \Inp^\ast \to \Oup_A^\ast$ defined
 \beq\label{eq:complview} \seq{}\restr_{\oth A} = \seq{}\qquad \qquad \qquad\qquad (\vec x\cons u)\restr_{\oth A} = \begin{cases}
\left(\vec x\restr_{\oth A}\right) & \mbox{ if } u\propto A\\
 \left(\vec x\restr_{\oth A}\right) \cons u & \mbox{ otherwise}
 \end{cases}
 \eeq
Comparing this definition with \eqref{eq:privview} shows that this \emph{complementary}\/ purge channel keeps in $\restr_{\oth A}\vec x$ precisely those events from $\vec x$ which Alice purge operation eliminates from $\restr_{A}\vec x$; and vice versa, $\restr_{\oth A}$ purges precisely those events which $\restr_{A}$ keeps. 

\begin{proposition}\label{Prop:nonint-char}
For every shared channel $g:\Inp^\ast\to\WP\Oup^{\ast}$ and every subject $A$, the following conditions are equivalent:
\begin{enumerate}[(a)]
\item $g$ satisfies the noninterference requirement:
\bear
\intt^{A} g(\vec x_{A}) & = & g(\vec x_{A})
\eear

\item  for all $\vec x, \vec y\in \Inp^\ast$ 
\bear
\vec x\restr_{A} = \vec y\restr_{A} &  \Longrightarrow &  g^{A}(\vec x) = g^{A}(\vec y)\eear
 
 \item for all $\vec x\in \Inp^\ast$ 
 \bear g^{A}\left(\vec x\right)  &  = &    g\left(\vec x\restr_{A}\right)\eear
 
\item for all $\vec x, \vec z\in \Inp^\ast$ there is $\vec y\in \Inp^\ast$ such that  
\bear 
\vec x\restr_{A} = \vec y\restr_{A} \ \ \wedge\ \ \  \vec y \restr_{\oth   A} = \vec z\restr_{\oth   A} \ \ \wedge \ \ g^{A}(\vec x) = g^{A}(\vec y) \eear
\end{enumerate}
\end{proposition}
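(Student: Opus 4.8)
Proof proposal for Proposition~\ref{Prop:nonint-char}.

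The plan is to prove the four conditions equivalent by a cycle of implications, say $(a)\Rightarrow(c)\Rightarrow(b)\Rightarrow(d)\Rightarrow(a)$, unwinding the definitions in \eqref{eq:privview}, \eqref{eq:chanview}, \eqref{eq:complview}, and \eqref{eq:inttposs}, and using Lemma~\ref{Lemma:privchan}. Before starting the cycle I would record two elementary facts about the general purge: first, that $\vec x\restr_A$ is itself fixed by the purge, i.e. $(\vec x\restr_A)\restr_A = \vec x\restr_A$, which one proves by an easy induction on $\vec x$ using \eqref{eq:privview}; and second, that the local channel view only depends on the purged input in the weak sense $g^A(\vec x\restr_A) = g(\vec x\restr_A)$, which follows because every event in $\vec x\restr_A$ is cleared for $A$, so the second clause of \eqref{eq:chanview} never fires and $g^A$ just accumulates $g_\ast$ along the whole string, reproducing $g$ on that input (here I am using correspondence \eqref{eq:cumul} between $g$ and $g_\ast$, and reading $g(\vec x_A)$ as shorthand for $g^A(\vec x_A)$ when $\vec x_A\in\Inp_A^\ast$, as the statement implicitly does).

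For $(a)\Rightarrow(c)$: by definition \eqref{eq:inttposs}, $\intt^A g(\vec x\restr_A) = \bigcup_{\vec u\restr_A = \vec x\restr_A} g^A(\vec u)$. Since $\vec x$ itself satisfies $\vec x\restr_A = \vec x\restr_A$, the right-hand side contains $g^A(\vec x)$, so $(a)$ gives $g^A(\vec x)\subseteq g(\vec x\restr_A)$; the reverse inclusion is exactly Lemma~\ref{Lemma:privchan} applied with the worldview $\vec x\restr_A$ in place of $\vec x$ (or directly: $\vec x\restr_A$ is one of the $\vec u$ in the union, and $g^A(\vec x\restr_A) = g(\vec x\restr_A)$ by the preliminary fact). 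Hence $g^A(\vec x) = g(\vec x\restr_A)$. For $(c)\Rightarrow(b)$: if $\vec x\restr_A = \vec y\restr_A$ then $g^A(\vec x) = g(\vec x\restr_A) = g(\vec y\restr_A) = g^A(\vec y)$. For $(b)\Rightarrow(d)$: given $\vec x,\vec z$, simply take $\vec y = \vec x$. Then $\vec x\restr_A = \vec y\restr_A$ trivially, $g^A(\vec x) = g^A(\vec y)$ trivially, and the middle conjunct $\vec y\restr_{\oth A} = \vec z\restr_{\oth A}$ is \emph{not} automatic — this is the step that needs care. The honest statement of $(d)$ is that one may choose $\vec y$ to agree with $\vec x$ on the $A$-cleared events and with $\vec z$ on the non-$A$ events; the natural candidate is the ``shuffle'' $\vec y$ built by interleaving $\vec x\restr_A$ with $\vec z\restr_{\oth A}$, for instance $\vec y = \vec z\restr_{\oth A}\,\cons\,\vec x\restr_A$ (concatenation of the two purges), which by the two preliminary facts satisfies $\vec y\restr_A = \vec x\restr_A$ and $\vec y\restr_{\oth A} = \vec z\restr_{\oth A}$. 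With $(b)$ in hand, $\vec x\restr_A = \vec y\restr_A$ forces $g^A(\vec x) = g^A(\vec y)$, completing $(d)$. Finally $(d)\Rightarrow(a)$: for each $\vec u$ with $\vec u\restr_A = \vec x\restr_A$, apply $(d)$ with this $\vec u$ as ``$\vec x$'' and $\vec x\restr_A$ (viewed as a global history with empty non-$A$ part, so $(\vec x\restr_A)\restr_{\oth A} = \seq{}$) as ``$\vec z$''; one obtains $g^A(\vec u) = g^A(\vec y)$ for some $\vec y$ with $\vec y\restr_{\oth A} = \seq{}$, i.e. $\vec y = \vec y\restr_A = \vec u\restr_A = \vec x\restr_A$, whence $g^A(\vec u) = g^A(\vec x\restr_A) = g(\vec x\restr_A)$. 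Taking the union over all such $\vec u$ gives $\intt^A g(\vec x\restr_A) = g(\vec x\restr_A)$, which is $(a)$.

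The main obstacle I anticipate is the bookkeeping around $(d)$ and the $(d)\Rightarrow(a)$ step: one has to be scrupulous about what ``$\vec z\restr_{\oth A}$'' means when $\vec z$ is taken to be a purely $A$-local string versus a purely non-$A$ string, and about the fact that $g^A$ genuinely ignores the \emph{order} in which the non-$A$ events are interleaved — this is where condition $(d)$ is really doing work beyond $(b)$, because it packages both the ``output depends only on $A$-projection'' content and a witnessing construction. I would make the interleaving/shuffle lemma explicit (that $g^A(\vec u)$ depends on $\vec u$ only through $\vec u\restr_A$ once $(b)$ holds, and that any prescribed pair of $A$- and $\oth A$-projections is realized by some global history) rather than hand-waving it, since the whole content of noninterference lives precisely in that insensitivity to the scheduling of the hidden events. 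The other routine-but-not-trivial point is the two inductions on \eqref{eq:privview} and \eqref{eq:chanview} establishing $(\vec x\restr_A)\restr_A = \vec x\restr_A$ and $g^A(\vec x\restr_A) = g(\vec x\restr_A)$; these are straightforward structural inductions on the length of $\vec x$ and I would relegate them to a sentence each.
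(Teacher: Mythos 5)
Your proof is correct and follows essentially the same route as the paper's: the same concatenation witness $\vec y$ built from $\vec x\restr_A$ and $\vec z\restr_{\oth A}$ for $(b)\Rightarrow(d)$, the same use of Lemma~\ref{Lemma:privchan} for the nontrivial inclusion in $(a)\Leftrightarrow(c)$, and the same specialization to a purely $A$-local $\vec z$ to extract $(c)$ from $(d)$ (your $(d)\Rightarrow(a)$ is just the paper's $(d)\Rightarrow(c)$ composed with its $(c)\Rightarrow(a)$). The only difference is organizational — a single four-step cycle versus the paper's $(d)\Rightarrow(c)\Rightarrow(b)\Rightarrow(d)$ plus $(a)\Leftrightarrow(c)$ — and your explicit preliminary facts about idempotence of the purge and $g^A$ on purged inputs make precise something the paper leaves implicit.
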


\bpr
\emph{(d)$\Longrightarrow$(c):}\/ For $\vec z = \seq{}$, the second conjunct of \emph{(d)} is $\vec y \restr_{\oth   A} = ()$, which by \eqref{eq:complview} and \eqref{eq:privview} implies $\vec y \restr_{A} = \vec y$. The first conjunct of \emph{(d)} then gives $\vec x\restr_{A} = \vec y\restr_{A} = \vec y$. Using this, the third conjunct of \emph{(d)} gives $g^{A}(\vec x) = g^{A}(\vec y) = g^{A}(\vec x\restr_{A})$. Hence \emph{(c)}. 

\noindent\emph{(c)$\Longrightarrow$(b): } 
Towards the implication in \emph{(b)}, suppose
\[
\vec x\restr_{A} \ \  = \ \  \vec y\restr_{A}\tag{$^{\bullet}b$}
\]
Using \emph{(c)}, it follows that
\[g^{A}(\vec x)\  \stackrel{(c)}=\  g^{A}(\vec x\restr_{A}) \stackrel{(^{\bullet}b)}= g^{A}(\vec y\restr_{A})\ \stackrel{(c)}=\ g^{A}(\vec y)
\tag{$b^{\bullet}$}
\]
So we have proven the implication $({^{\bullet}b} {\stackrel{(c)}{\implies}}\ b^{\bullet})$, which is \emph{(b)}.

\noindent \emph{(b)$\Longrightarrow$(d): } For arbitrary $\vec x, \vec z\in \Inp^\ast$, set 
\bea\label{eq:z}
\vec y  & = &  \left(\vec x\restr_{A}\right) \cons \left( \vec z\restr_{\oth A}\right)
\eea
Hence $\vec x\restr_{A} = \vec y\restr_{A}$ and $\vec y\restr_{\oth A} = \vec z_{\oth A}$. But $\vec x\restr_{A} = \vec y\restr_{A}$ implies $g^{A}(\vec x) = g^{A}(\vec y)$ by \emph{(b)}. The three conjuncts of \emph{(d)} are thus satisfied by  $\vec y$ defined as in \eqref{eq:z}.

\noindent \emph{(c)$\implies$(a)} holds because $g^{A}\left(\vec u\right) = g\left(\vec u\restr_{A}\right)$ implies
\bear\intt^{A}  g(\vec x_{A}) &  \stackrel{\eqref{eq:inttposs}}=&   \bigcup_{\vec u\restr_{A}= \vec x_{A}} g^{A}(\vec u)  \ \ \stackrel{(c)} =\ \ \bigcup_{\vec u\restr_{A}= \vec x_{A}} g (\vec u\restr_{A})\ =\  g (\vec x_{A})
\eear    

\noindent \emph{(a)$\implies$(c):} The inclusion one way follows from \emph{(a)} by
\bear
g\left(\vec x\restr_{A}\right)\ \ \stackrel{(a)} = \intt^{A}g \left(\vec x\restr_{A}\right) &\stackrel{\eqref{eq:inttposs}}= &   \bigcup_{\vec u\restr_{A}= \vec x\restr_{A}} g^{A}(\vec u)  \ \ \supseteq \ \ g^{A}(\vec x)
\eear    
The inclusion $g\left(\vec x\restr_{A}\right) \ \subseteq \ g^{A}(\vec x)$ is valid by Lemma~\ref{Lemma:privchan}, independently on the interference. 
\epr

\subsection{Noninterference as invariance}
\label{Sec:invariance}
\begin{definition}\label{Def:invariance}
A continuous channel $\gamma\colon \WP\Inp^{\ast}\to \WP\Oup^{\ast}$ is said to be \/ \emph{invariant} under the operation \sindex{invariance} $p\colon\WP\Inp^{\ast}\to \WP\Inp^{\ast}$ if $\gamma\circ p= \gamma$, i.e., if the following diagram commutes
\beq\label{eq:invariant}
\begin{tikzar}[row sep = .25pc,column sep = 5pc]
\WP\Inp^\ast \arrow{dd}[swap]{p} \arrow{dr}[description]{\gamma}\\
\& \WP\Oup^\ast\\
\WP\Inp^\ast \arrow{ur}[description]{\gamma}
\end{tikzar}
\eeq
\sindex{channel!continuous}\sindex{continuous channel|see{channel, continuous}}

A \emph{projector} is a continuous channel $p\colon \WP\Inp^{\ast}\to \WP\Inp^{\ast}$ that is invariant under itself, i.e., satisfies $p\circ p = p$.\sindex{projector}
\end{definition}

\para{Cumulative and single-output versions.} Mapped along the bijections in \eqref{eq:contchan} and \eqref{eq:cumul}, diagram \eqref{eq:invariant} becomes
\beq\label{eq:invariant-single}
\begin{tikzar}[row sep = .25pc,column sep = 5pc]
\Inp^\ast \arrow{dd}[swap]{q} \arrow{dr}[description]{g}\\
\& \WP\Oup^\ast\\
\Inp^\ast \arrow{ur}[description]{g}
\end{tikzar}
\qquad\qquad\qquad
\begin{tikzar}[row sep = .25pc,column sep = 5pc]
\Inp^+ \arrow{dd}[swap]{r} \arrow{dr}[description]{f}\\
\& \WP\Oup\\
\Inp^+ \arrow{ur}[description]{f}
\end{tikzar}
\eeq

\para{Examples.} Diagram \eqref{eq:memoryless2} thus says that a channel $f\colon \Inp^{+}\to\WP\Oup$ is memoryless if and only if it is invariant under the operation $r=\left(\Inp^{+}\eepi{\sxx} \Inp \mmono{(-)} \Inp^{+}\right)$. It is easy to check that $r=r\circ r$ is a projector.

Another example is noninterference. Condition \emph{(c)}\/ from Prop.~\ref{Prop:nonint-char} can be summarized by piping the outputs of the purges \eqref{eq:privview} from $\Inp^\ast_{A}$ to all of $\Inp^\ast$ and then bundling them into a projector:  
\[\prooftree
\prooftree
\Big\{\Inp^\ast \tto{\restr_A} \Inp_A^\ast \inclusion \Inp^\ast \Big\}_{A\in \Subj}
\justifies
\Inp^\ast \times \Subj  \tto{\restr_\bullet} \Inp^\ast
\endprooftree
\justifies
\restr\ =\ \left( \Inp^\ast\tto{<\id,\sxx>} \Inp^\ast\times \Inp\tto{\ \id\times\suct\ } \Inp^\ast\times  \Subj \tto{\ \ \ \ \restr_\bullet \ \ \ } \Inp^\ast\right)
\endprooftree
\]
where $\restr_\bullet(\vec x, A) = \vec x\restr_{A}$. Hence,                                                                                                                       \sindex{channel!bundled purge} \sindex{purge} \sindex{channel!purge} 
\bea\label{eq:restri}
\restriction \ :\ \Inp^\ast  & \to &  \Inp^\ast\\
\seq{ x_0 x_1 \ldots x_n} & \mapsto & \seq{ x_0 x_1 \ldots x_n}\restr_{\suct(x_n)}.\notag
\eea
In words, the global history $\vec x$ is projected to the local view $\vec x\restr_A$ of the subject $A=\suct(\sxx(\vec x))$ who entered the last channel input $\sxx(\vec x)$. It is easy to see that this is a projector,\sindex{projector} since $\left(\vec x\restr_{A}\right)\restr_{A} = \vec x\restr_{A}$ and hence,
\beq\begin{tikzar}[row sep = .25pc,column sep = 5pc]
\Inp^\ast \arrow{dd}[swap]{\restr} \arrow{dr}[description]{\restr}\\
\& \Inp^\ast\\
\Inp^\ast \arrow{ur}[description]{\restr}
\end{tikzar}
\eeq

\begin{proposition}\label{Prop:nonint-secure}
A channel $g:\Inp^\ast \to\WP\Oup^\ast$ satisfies the noninterference requirement if it is invariant under the purge: 
\beq\label{eq:invnonint}
\begin{tikzar}[row sep = .25pc,column sep = 5pc]
\Inp^\ast \arrow{dd}[swap]{\restr} \arrow{dr}[description]{g}\\
\& \WP\Oup^\ast\\
\Inp^\ast \arrow{ur}[description]{g}
\end{tikzar}
\eeq
\end{proposition}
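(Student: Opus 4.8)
The plan is to derive the noninterference requirement \eqref{eq:nonint} for every subject $A$ from Proposition~\ref{Prop:nonint-char}: since conditions \emph{(a)} and \emph{(c)} there are equivalent, it suffices to show that invariance under the purge forces
\[
g^{A}(\vec x) \ = \ g\left(\vec x\restr_{A}\right) \qquad \mbox{ for all } \vec x\in \Inp^{\ast}.
\]
One inclusion, $g\left(\vec x\restr_{A}\right)\subseteq g^{A}(\vec x)$, is already Lemma~\ref{Lemma:privchan} and costs nothing; the work is the reverse inclusion $g^{A}(\vec x)\subseteq g\left(\vec x\restr_{A}\right)$, which I would prove by induction on the length of $\vec x$, unfolding $g^{A}$ through its recursion \eqref{eq:chanview} and $\restr_{A}$ through \eqref{eq:privview}. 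The base case is the boundary identity $g(\seq{})=\{\seq{}\}$, which holds because the channels here preserve list lengths, i.e.\ $g=\ladj{g_{\ast}}$.

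For the inductive step write $\vec x = \vec w\cons u$. If $u\not\propto A$, then $(\vec w\cons u)\restr_{A} = \vec w\restr_{A}$ and $g^{A}(\vec w\cons u) = g^{A}(\vec w)$, so the claim is immediate from the induction hypothesis and invariance is not needed. If $u\propto A$, then $g^{A}(\vec w\cons u) = g^{A}(\vec w)\cons g_{\ast}(\vec w\cons u)$ while $(\vec w\cons u)\restr_{A} = (\vec w\restr_{A})\cons u$, hence $g\bigl((\vec w\cons u)\restr_{A}\bigr) = g(\vec w\restr_{A})\cons g_{\ast}\bigl((\vec w\restr_{A})\cons u\bigr)$; by the induction hypothesis the leading factors agree, so the step reduces to the single‑output identity $g_{\ast}(\vec w\cons u) = g_{\ast}\bigl((\vec w\restr_{A})\cons u\bigr)$. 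This is precisely where $g\circ\restr = g$ is used: the purge of \eqref{eq:restri} sends $\vec w\cons u$ to $(\vec w\cons u)\restr_{\suct(u)}$, which collapses to $(\vec w\restr_{A})\cons u$, so invariance gives $g(\vec w\cons u) = g\bigl((\vec w\restr_{A})\cons u\bigr)$, and extracting the last coordinate (applying $\sxx$, i.e.\ using $g_{\ast}=\radj g$) yields the wanted equality. Once $g^{A}(\vec x) = g\left(\vec x\restr_{A}\right)$ is in hand, condition \emph{(a)} of Proposition~\ref{Prop:nonint-char}, namely $\intt^{A} g(\vec x_{A}) = g(\vec x_{A})$, drops out by substituting it into \eqref{eq:inttposs} and noting that on a fully $A$‑cleared history $g^{A}$ merely re‑accumulates $g_{\ast}$, so $g^{A}(\vec x_{A}) = \ladj{g_{\ast}}(\vec x_{A}) = g(\vec x_{A})$; the diagram \eqref{eq:invnonint} is then just $g\circ\restr = g$ repackaged.

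The step I expect to be the main obstacle is the bookkeeping reconciling the two kinds of localization: the projector $\restr$ of \eqref{eq:restri} is indexed by the \emph{enactor} $\suct(\sxx\,\vec x)$ of the last input, whereas both $g^{A}$ and $\restr_{A}$ are indexed by \emph{clearance} $\propto A$. The induction above silently assumes that at each prefix ending in an $A$‑cleared input the purge $\restr$ agrees, on the data that $g_{\ast}$ actually reads, with $\restr_{A}$; making this alignment precise (using idempotence of $\restr$, that enactors are cleared for what they enact, and length‑preservation so that the last output coordinate is well defined) is the only delicate point. Everything else — the two case split, the concatenation identities for $g^{A}$ and $\restr_{A}$, and the reassembly of \eqref{eq:inttposs} — is routine.
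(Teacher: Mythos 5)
Your overall route is the same as the paper's: both reduce the claim to condition \emph{(c)} of Prop.~\ref{Prop:nonint-char}, i.e.\ to the identity $g^{A}(\vec x)=g\left(\vec x\restr_{A}\right)$, and both read the invariance $g=g\circ\restr$ as supplying exactly that identity. The difference is one of execution: the paper's proof is a two-line equational argument --- it observes that the definition \eqref{eq:chanview} of the local view gives $g(\vec x)=g^{\Gamma\vec x}(\vec x)$ for $\Gamma=\suct\circ\sxx$, and then invokes \emph{(c)} at the single subject $A=\Gamma\vec x$ to identify noninterference with $g=g\circ\restr$ --- whereas you unfold everything into an induction on $\length{\vec x}$ with the two-case split on $u\propto A$. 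Your version is more explicit about where each hypothesis enters, which is a virtue.

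However, the step you yourself flag as ``the only delicate point'' is a genuine gap, not bookkeeping, and you leave it unresolved. In the cleared case you need $g_{\ast}(\vec w\cons u)=g_{\ast}\bigl((\vec w\restr_{A})\cons u\bigr)$ for \emph{every} subject $A$ with $u\propto A$, but what invariance actually yields is $g(\vec w\cons u)=g\bigl((\vec w\cons u)\restr_{\suct(u)}\bigr)$, the purge down to the clearance of the \emph{enactor} of $u$. Your claimed collapse of $(\vec w\cons u)\restr_{\suct(u)}$ to $(\vec w\restr_{A})\cons u$ holds only when $\restr_{\suct(u)}$ and $\restr_{A}$ agree on $\vec w$, i.e.\ when $A$ and $\suct(u)$ have the same clearance type; for an $A$ cleared strictly above $\suct(u)$ the two purges of $\vec w$ differ, and iterating $\restr$ (which is idempotent) produces nothing new. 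As written, your induction therefore only establishes \emph{(c)} for the subject who entered the last input, not for an arbitrary cleared $A$ --- which, to be fair, is exactly the instance the paper's equations \eqref{eq:fstchar}--\eqref{eq:sndchar} use and exactly where its own proof is silent. To close your argument you would need either an explicit hypothesis aligning enactment with clearance, or a separate lemma propagating \emph{(c)} from the enactor's clearance level to all subjects cleared for $u$.
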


\bpr
On one hand, definition \eqref{eq:chanview}  of the local view $g^{A}$ gives 
\bea\label{eq:fstchar}
g(\vec x) & = & g^{\Gamma\vec x}(\vec x)
\eea
where $\Gamma = \left(\Inp^+ \tto{\sxx} \Inp\tto{\suct}\Subj\right)$ tells who entered the last input into the channel. On the other hand, Prop.\ref{Prop:nonint-char}\emph{(c)} gives  
\bea\label{eq:sndchar}
g^{\Gamma\vec x}\left(\vec x\right) & = & g\left(\vec x\restr_{}\right)
\eea
It follows that $g$ satisfies the noninterference if and only if for all $\vec x$ 
\[
g\left(\vec x\right)\ \  \stackrel{\eqref{eq:fstchar}}{=}  \ \ g^{\Gamma\vec x}\left(\vec x\right)\ \  \stackrel{\eqref{eq:sndchar}}{=}  \ \ g\left(\vec x\restr_{}\right)
\] 
i.e., if and only if $g=g\circ\! \restr_{}$, which is \eqref{eq:invnonint}.
\epr

It will turn out that all resource and channel  security properties, that we studied in the previous chapters, can be characterized in terms of invariants --- as well as the data security properties that we will study in the next chapters.

\def\thechapter{7}
\setchaptertoc
\chapter{Communication channels, protocols, and authentication}
\label{Chap:Auth}

\section{Communication channels}

\subsection{What is communication?}\sindex{communication} \sindex{traffic and communication}
Alice and Bob communicate by sending messages, exchanging objects, and displaying features. Hence the three types of communication channels, transmitting the three types of security entities: encoded data, physical things, and individual traits. Different channels enable different network computations. They \sindex{protocol} are programmed as protocols. They give rise to network security problems and solutions. We need to understand communication channels in order to be able to understand network security problems and design protocols that solve them. 

Any form of traffic can be viewed as communication. If we think of London and Paris as subjects, then the air, land, and sea between them are communication channels. The air traffic, the shipping across the English Channel, and the trains under it, are forms of communication. 

The main feature of communication channels is that they connect a subject that enters the inputs on one side with a subject that extracts the outputs on the other side. The two sides, the input interface and the output interface are also the characteristics of functions. Channels are a special kind of functions. That is how they were defined in Sec.~\ref{Sec:chanwhat}: as functions that take histories as the inputs. A possibilistic (relational) channel relates each input history to a set of possible outputs, possibly empty. A probabilistic channel assigns a probability to each  input. \sindex{channel}\sindex{channel!possibilistic|see {channel, relational}} \sindex{channel!probabilistic} \sindex{channel!stochastic|see {channel, probabilistic}} A communication channel can be possibilistic, or probabilistic, or even quantum. What makes it into a \emph{communication}\/ channel is that the input and the output are under control of communicating parties, usually Alice and Bob.

\subsection{Communicating vs sharing}
\sindex{channel!communication}
If Alice and Bob are merchants using the English Channel, then they may interact in two ways:
\begin{itemize}
\item they may \emph{share}\/ the Channel: \begin{itemize}
\item each of them uses both the inputs and the outputs: loads the Channel on one side and unloads it on the other side, independently of the other merchant; or
\end{itemize}
\item they may \emph{communicate}\/ through the Channel:
\begin{itemize}
\item Alice enters the inputs on one side of the Channel, Bob extracts the outputs on the other side of the Channel; and then Bob may enter the inputs for Alice into another communication channel in the opposite direction.
\end{itemize}
\end{itemize}
The structural difference between sharing and communication is summarized in the following table: 
\begin{center}
\begin{tabular}{|c|c|}
\hline
shared channel & communication channels\\
\hline
\hline
$\left(\Inp_{A}+\Inp_{B}\right)^{+} \to \WP\Oup$
&
\begin{minipage}[c]{0.25\linewidth}
\centering
\ \ \ \ $\Inp_{A}^{+}\   \to\  \WP\Oup_{B}$\\
$\WP\Oup_{A}\ \ot\ \  \ \ \Inp_{B}^{+}$
\end{minipage}\\
\hline
\end{tabular}
\end{center}
\sindex{channel!point-to-point}
The displayed channels are the simplest possible: either of them is only shared by two subjects, or only used for communication between two subjects. In practice, most channels combine sharing and communication. Alice and Bob often talk at the same time, and share the ether while trying to communicate. The practice of everyone talking at the same time is scaled up to the extreme in packet-switching networks, where large crowds of subjects enter their input packets, the network as a shared channel mixes them all, and then unmixes them as a communication channel, to deliver each packet to the desired recipient. Modern networks, starting from the Internet, are mostly just implementations of such channels, hiding the network structure from the users in a black box, and delivering the shared communication channel functionality. In this section, we ignore sharing and focus on communication. 

\subsection{Example: the wren authentication} \label{Sec:wren} \sindex{authentication} \sindex{protocol!authentication}The wren on the left in Fig.~\ref{Fig:wren-chicks} receives its chicks' chirps as messages saying ``Feed me!'', transmitted by the data channel carried by the sound. 
\begin{figure}[!h]
\begin{center}
\includegraphics[height=5cm]{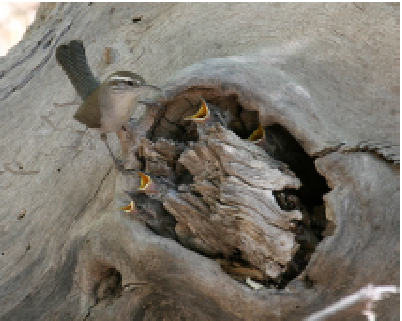}
\hspace{5em}
\includegraphics[height=5cm]{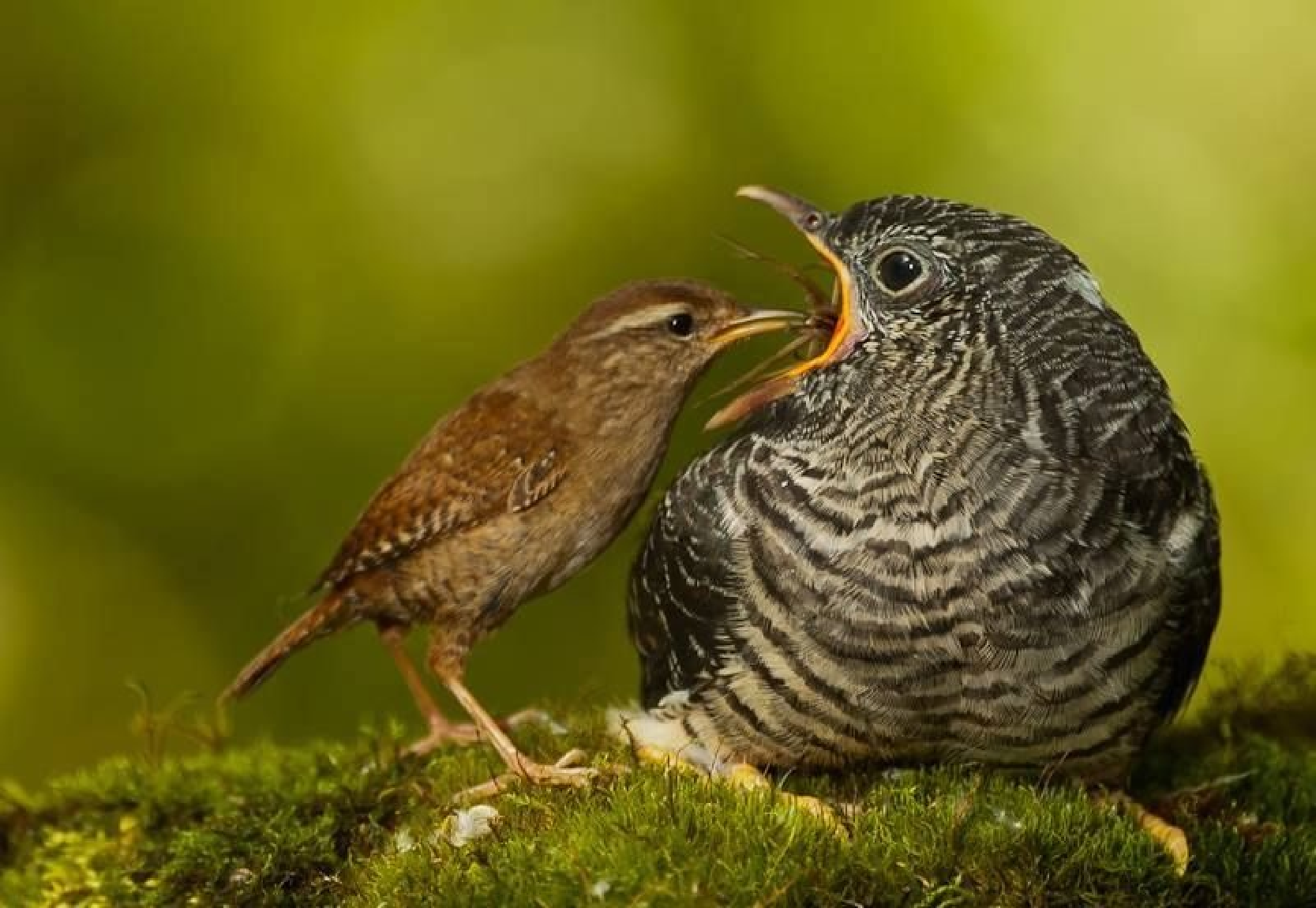}
\caption{Wren's chick-feeding protocol gets attacked by cuckoo}
\label{Fig:wren-chicks}
\end{center}
\end{figure}
The wren responds by delivering the food on the channel carried by the physical space. Since the chicks chirp together, the chirping channel is shared on the input side, whereas the food channel is shared on the output side. The competition for the food at the output provides an incentive for attacks on the chick feeding protocol. A big one is shown in Fig.~\ref{Fig:wren-chicks} on the right. The first step is that a cuckoo lays an egg in wrens' nest. Since the wrens cannot tell apart their own chicks from the cuckoo's, they feed them all. The second step of cuckoo's attack is that the cuckoo chick murders the wren chicks by pushing them out of the nest. It keeps its adoptive wren parents all for itself, and outgrows them before they know. 
\begin{figure}[!h]
\begin{center}
\includegraphics[width=11cm
]{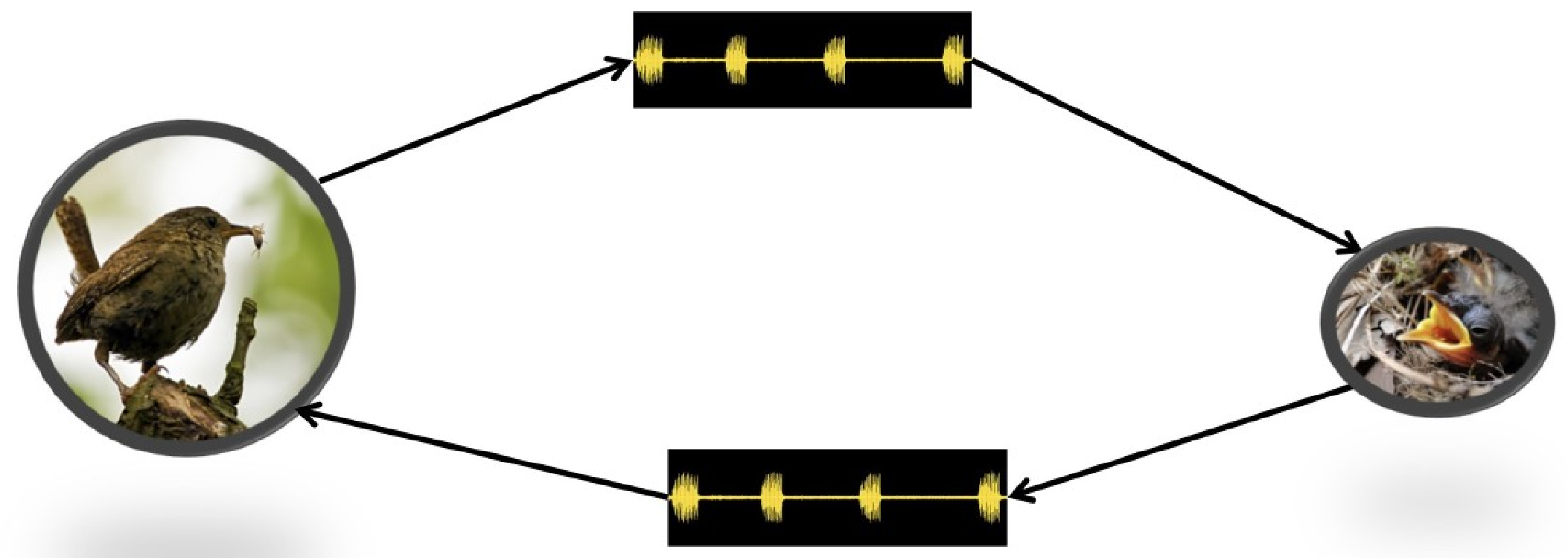}
\caption{Wren's chick authentication}
\label{Fig:wren-auth}
\end{center}
\end{figure}

To defend against this attack, the Fairy-wrens evolved the authentication protocol~\cite{wren-auth}, sketched in~\cref{Fig:wren-auth}.
As they lay on eggs, the wrens keep chirping a fixed but randomly chosen, and therefore unique, pattern. The chicks learn it from inside the eggs. This chirp is used as a shibboleth\footnote{If you don't remember what is a shibboleth, a quick web search will do.}. When they hatch, the chicks chirp back their parents' unique chirping pattern. The cuckoo chicks have so far not managed to evolve a capability to learn chirping from inside the egg. Without the shibboleth chirp, the cuckoo chicks do not get fed and starve without killing anyone. Authentication saves lives. \sindex{authentication!of wren chicks} 

\begin{figure}[!ht]
\begin{center}
\includegraphics[width=12cm
]{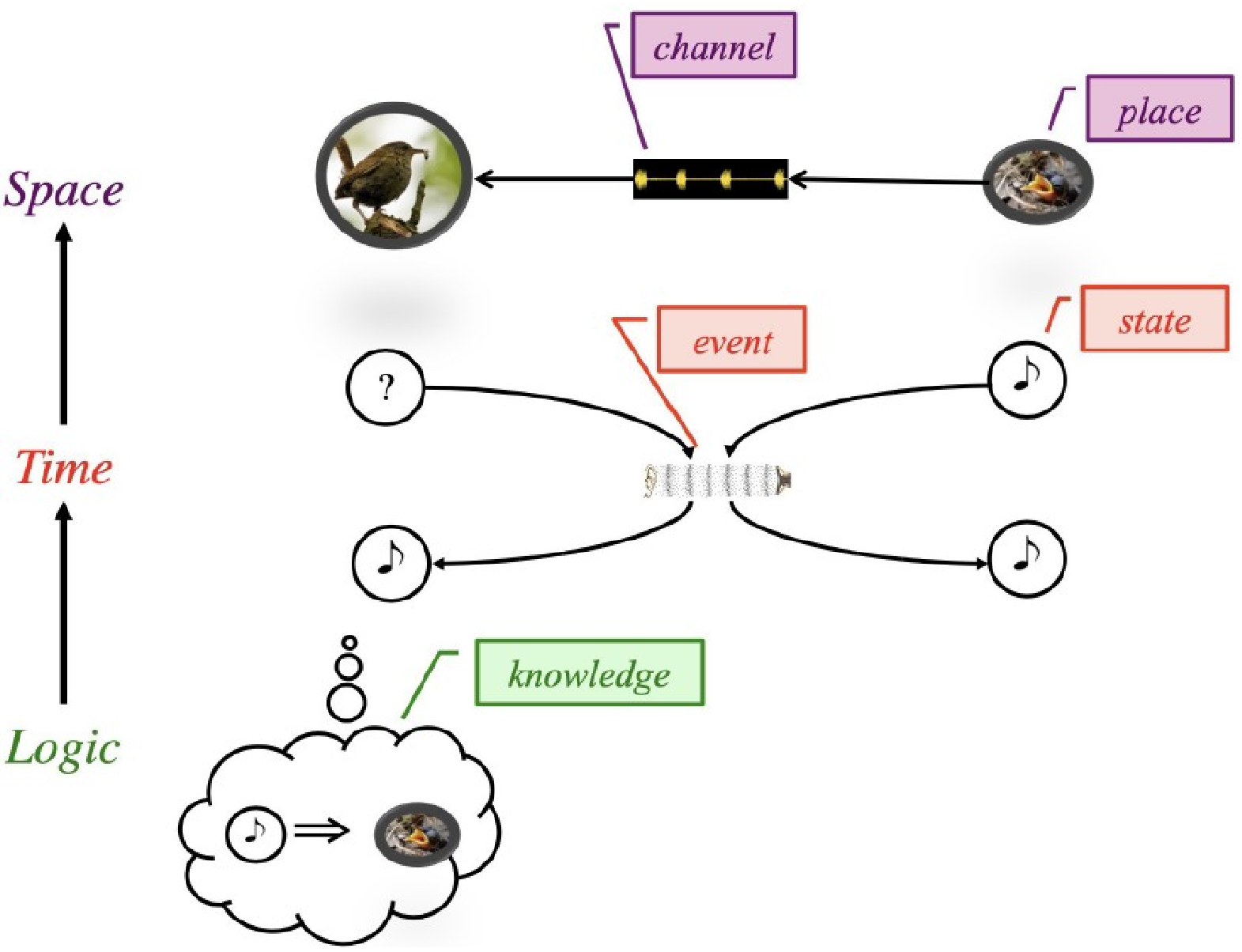}
\caption{Authentication framework}
\label{Fig:wren-auth-frame}
\end{center}
\end{figure}

Fig.~\ref{Fig:wren-auth-frame} shows the wren chick communication channel within the general authentication framework.\sindex{authentication!general framework} The wren parents on one hand and the wren chicks on the other are viewed as network nodes (``places''). Their interactions through the data channel, transmitting chick's shibboleth chirps, cause state transitions at each node. Each of the parents was ready to receive the message (as marked by ``?''). After a parent receives the chirp message (denoted by ``\textmusicalnote''), they know (on the level of Logic) that the chick was successfully authenticated.

\subsection{Channeling what you know, have, are}
\begin{figure}[!ht]
\begin{center}
\includegraphics[height=4cm
]{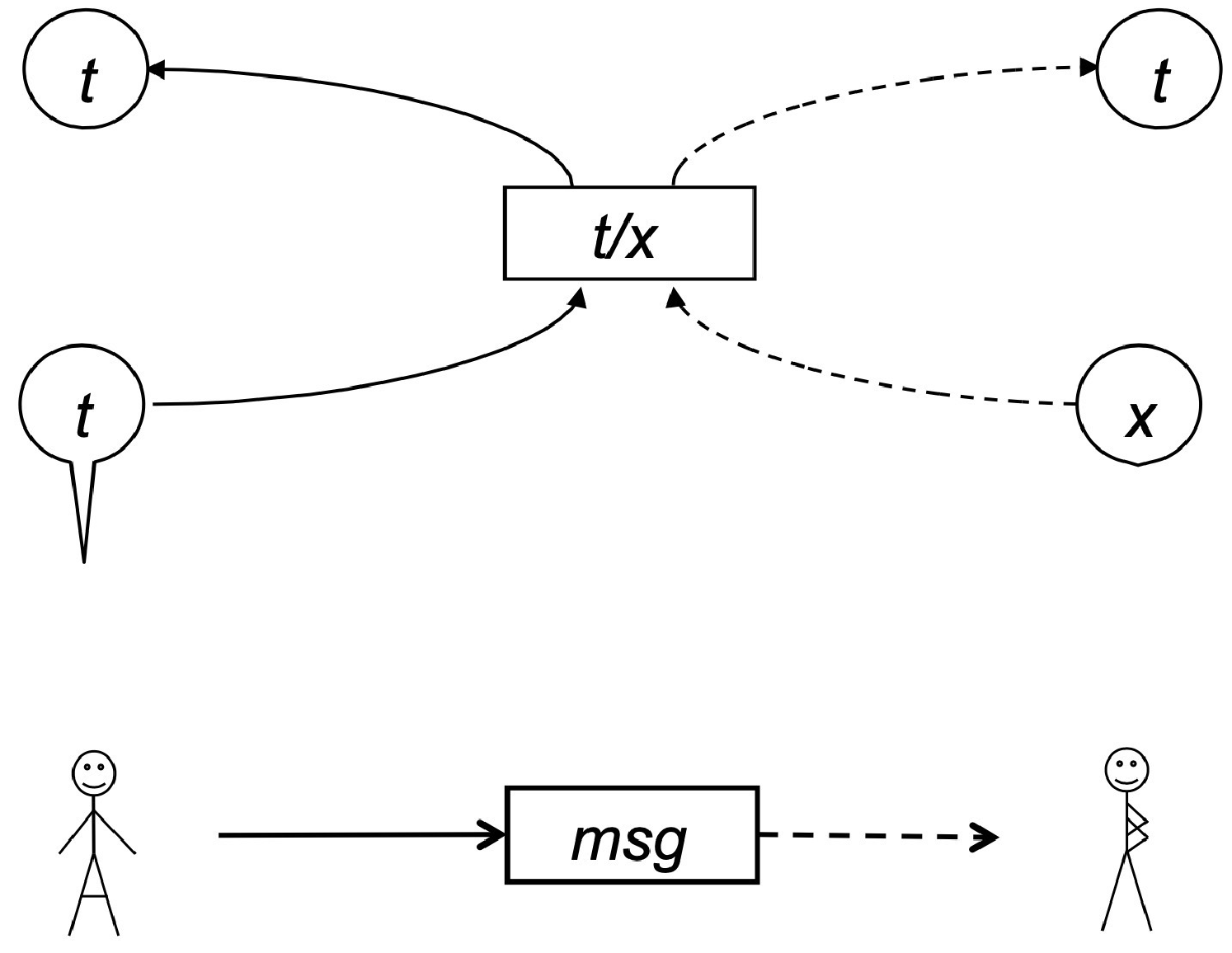}
\hspace{3em}
\includegraphics[height=4cm
]{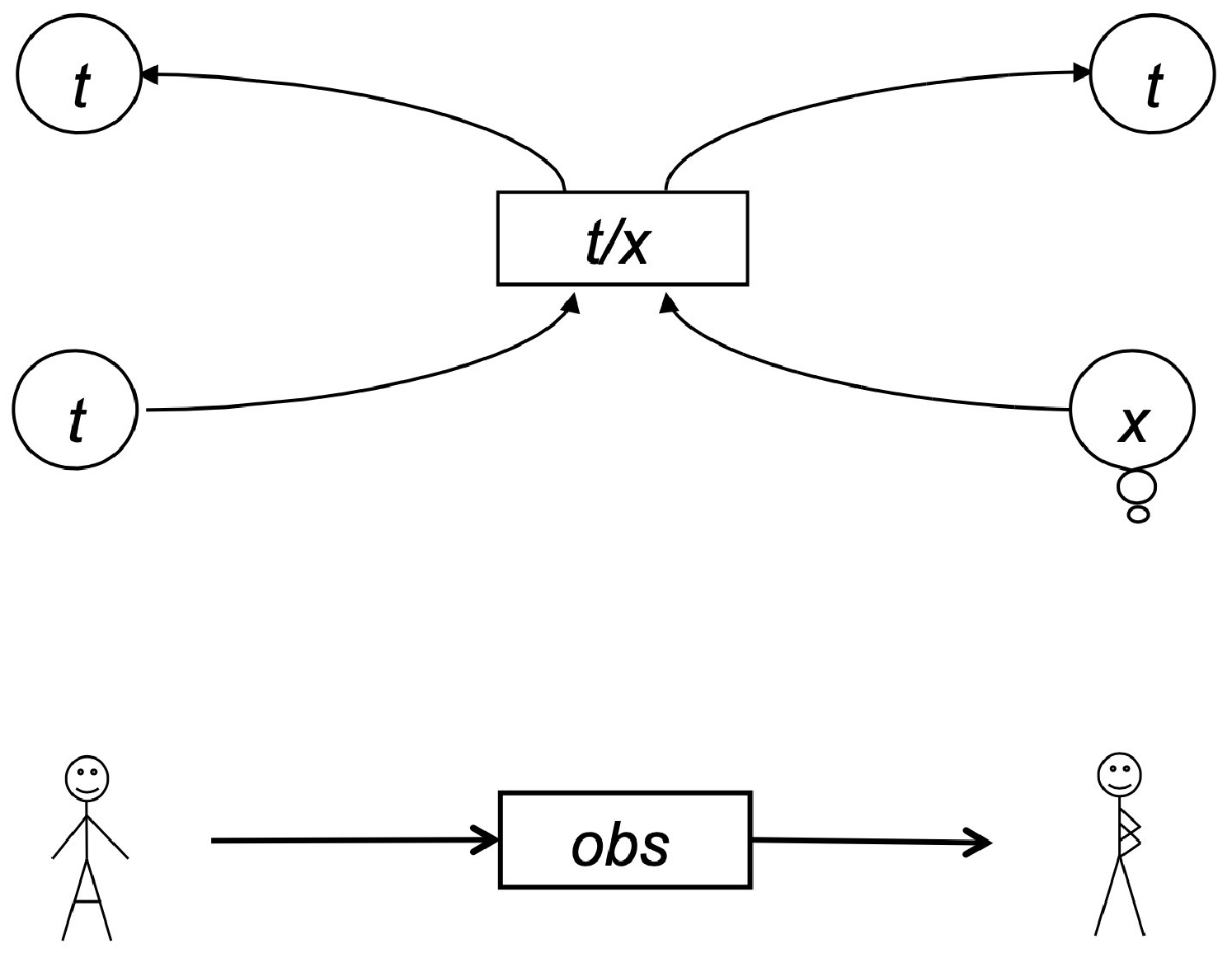}
\caption{Data transmissions: send/receive and observe/display}
\label{Fig:au-data}
\end{center}
\end{figure}
Leaving the logic aside for the moment,~\cref{Fig:au-data} displays just the space and the time: this time the network channel is at the bottom and the channel interaction is above it. The network nodes communicating through the channel are called Alice and Bob again, and the two possible interactions through a data channel are pushing a message (on the left) and pulling an observation (on the right). Alice pushes data to Bob by sending a message, whereas Bob pulls data from Alice by observing an observation.  But Alice's message can only be transmitted if Bob is ready to receive it. And Bob's observation can only be made if Alice displays the data to be observed. Alice as the sender initiates the message transmission, and Bob accepts it. Bob as the observer initiates the observation of an event from Alice's side, and Alice makes the observation possible. In both cases,~\cref{Fig:au-data} denotes Bob's state as recipient of the communication channel output by the variable $x$. The channel interaction is modeled as substitution $t/x$, whereby the transmitted data $t$ gets stored in $x$ and Bob's state changes from $x$ to $t$ (just like wren's state was changed from the listening ``?'' to the transmitted chirp ``\textmusicalnote'').  Note that Alice knows the data $t$ both before she sends a copy and after she sent it.

\begin{figure}[!hb]
\begin{center}
\includegraphics[height=4cm
]{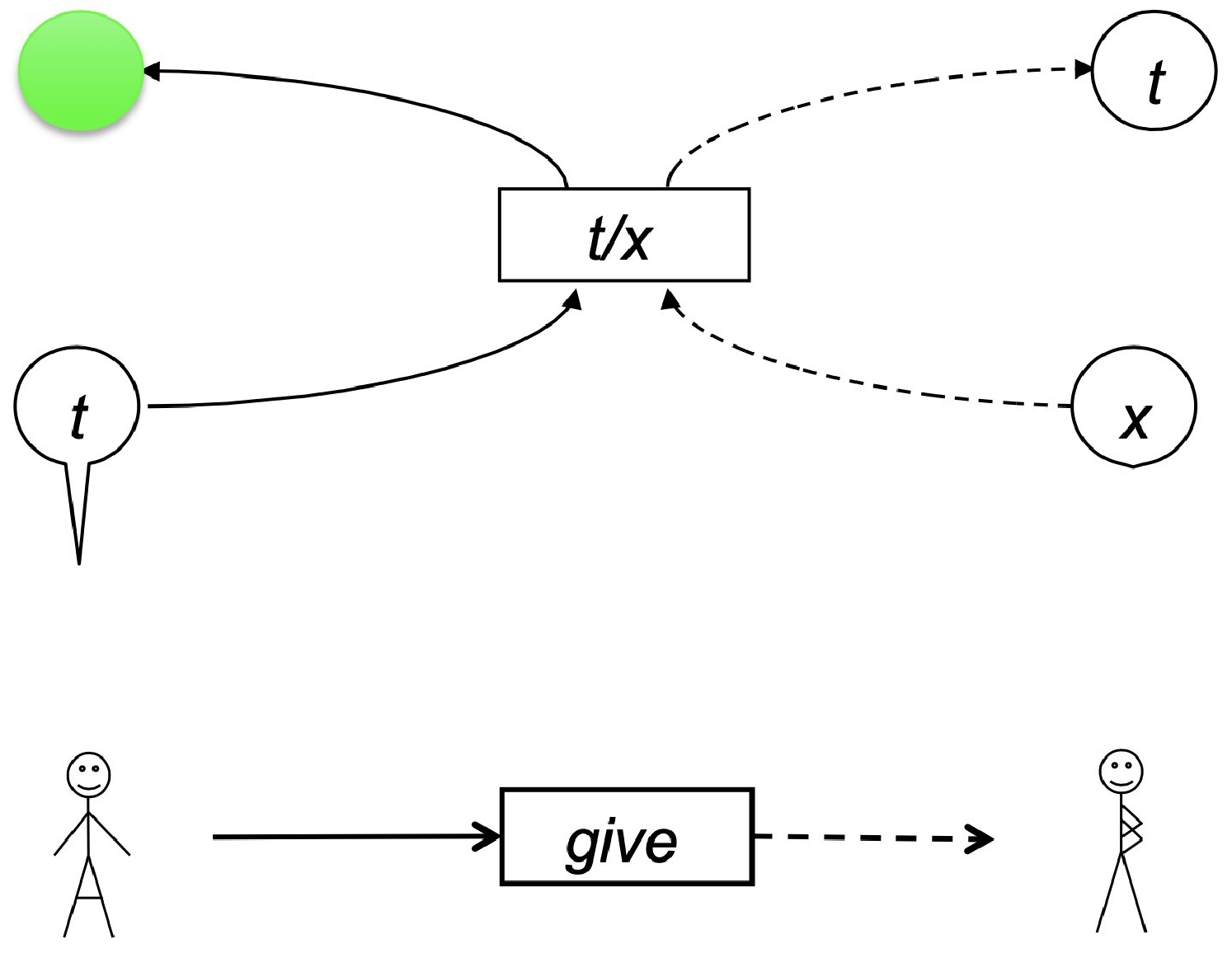}
\hspace{3em}
\includegraphics[height=4cm
]{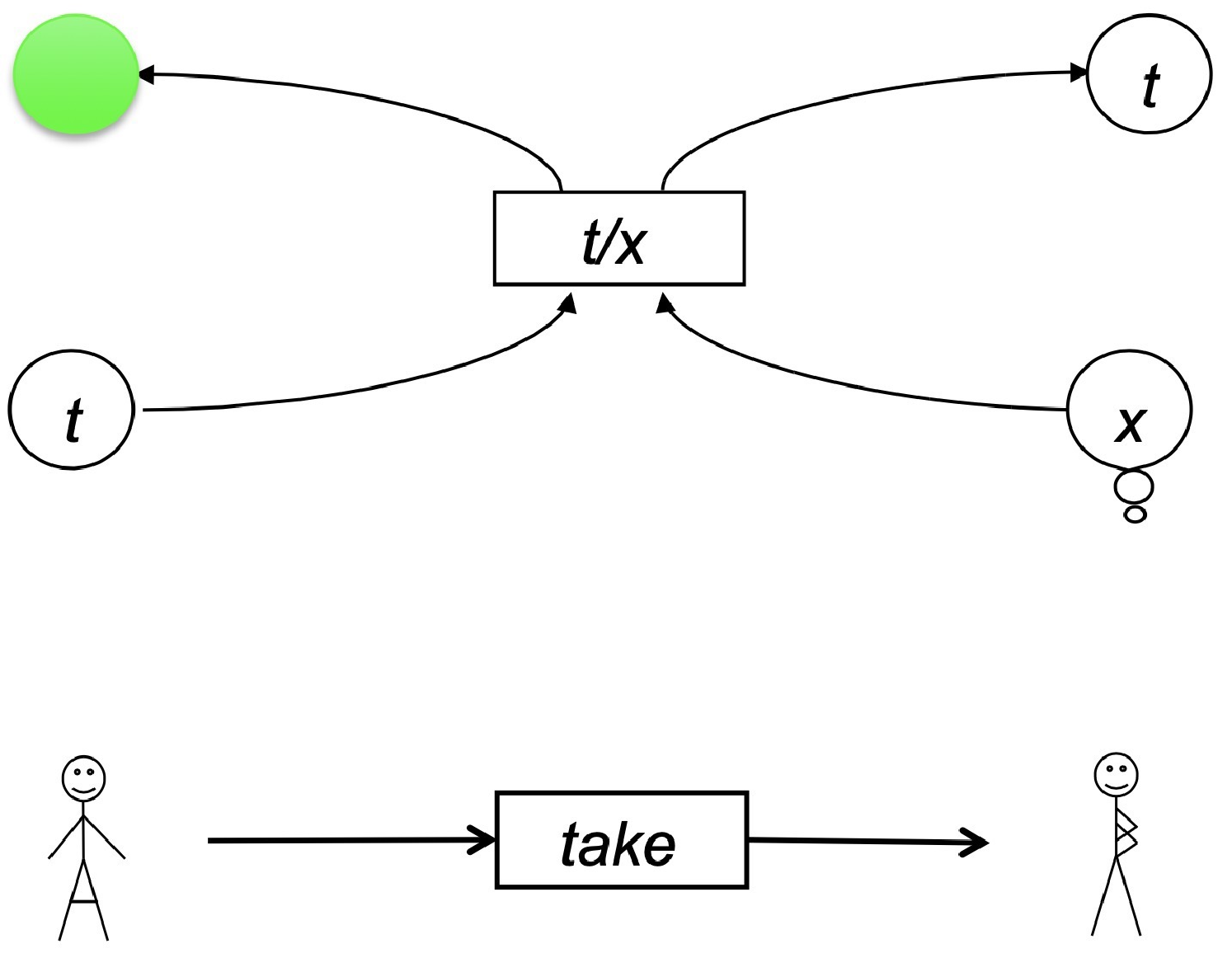}
\caption{Thing transmissions: give/have and have/take}
\label{Fig:au-thing}
\end{center}
\end{figure}
Fig.~\ref{Fig:au-thing} shows a channel transmitting things. The difference is that Alice this time does \emph{not}\/ have the thing $t$ after she transmits it to Bob. This is marked by the green balloons denoting her state after the channel interactions. This time the two possible interactions are that Alice may \emph{give}\/ Bob the thing $t$ on her own initiative, or that Bob may \emph{take} it on his initiative. The precondition for Bob's taking $t$ is that Alice had it; the postcondition of Alice's giving $t$ is that Bob has it. Bob's state changes like in the data channel, from $x$ to $t$, but Alice's state now changes differently, from $t$ to nothing.

In addition to the data channels and the thing channels, there are also trait channels, usually biometric. On the input side, someone has a trait $t$. On the output side, someone compares that trait with a trait pattern $?t$, and if both match, confirms the pattern as $!t$. 
\begin{figure}[!ht]
\begin{center}
\includegraphics[width=12cm
]{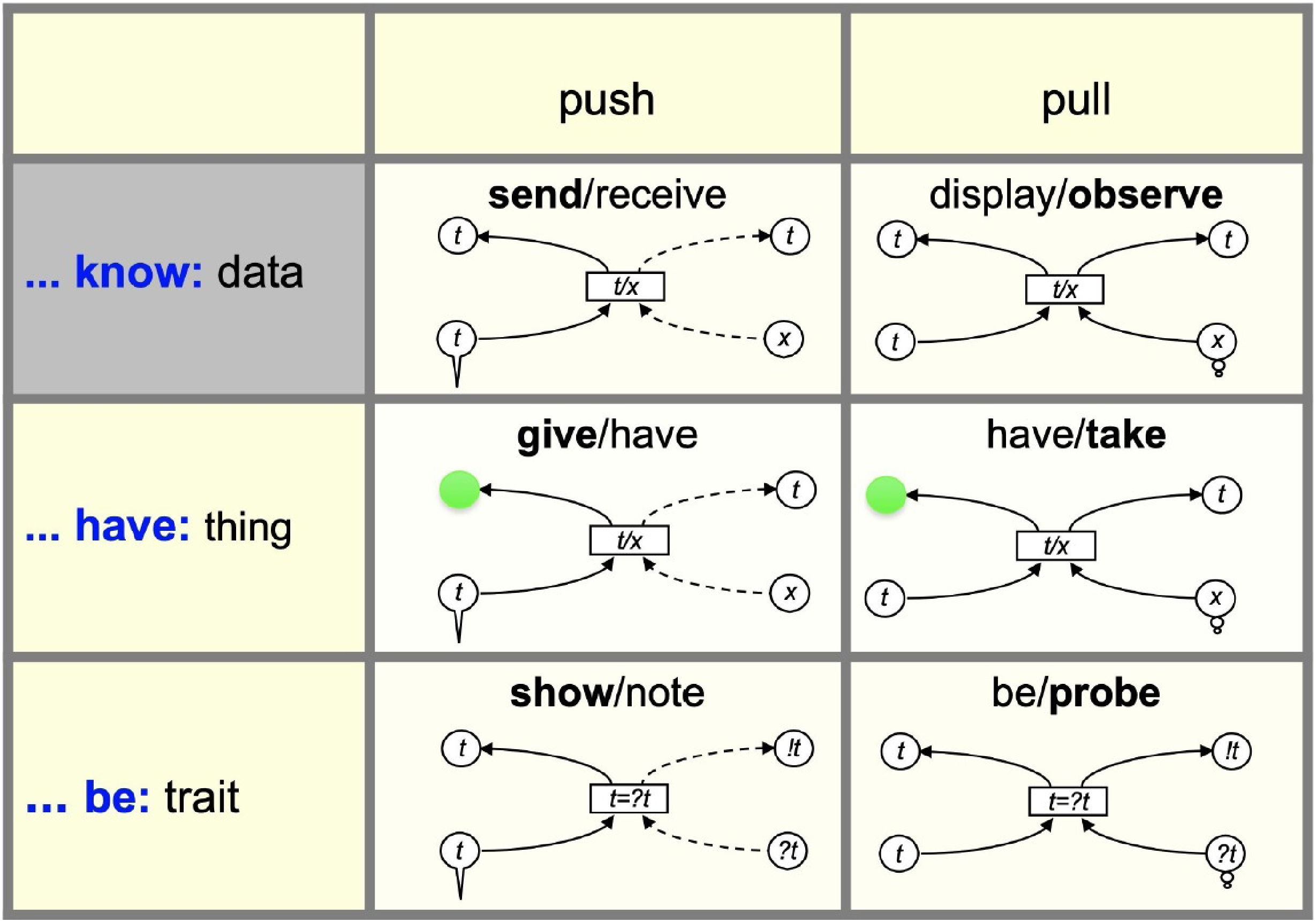}
\caption{Channel types and the supported interactions}
\label{Fig:know-have-be}
\end{center}
\end{figure}
Table~\ref{Fig:know-have-be} displays all three types of channels. The initiator actions are written in \textbf{bold}. The classification of channels into the three types is tidy and simple in general but often subtle in concrete cases. 
\begin{figure}[!h]
\begin{center}
\includegraphics[height=4cm
]{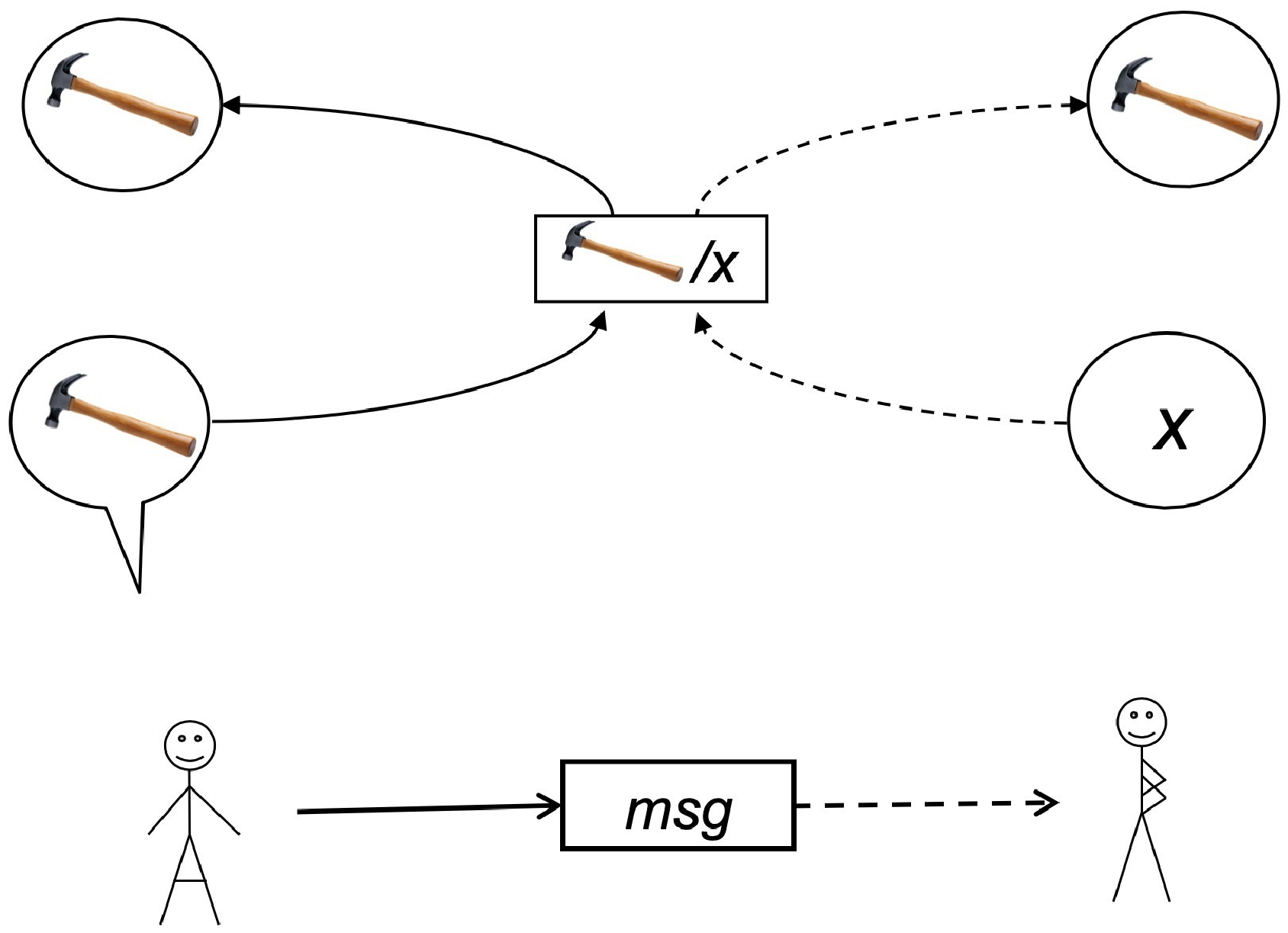}
\hspace{3em}
\includegraphics[height=4cm
]{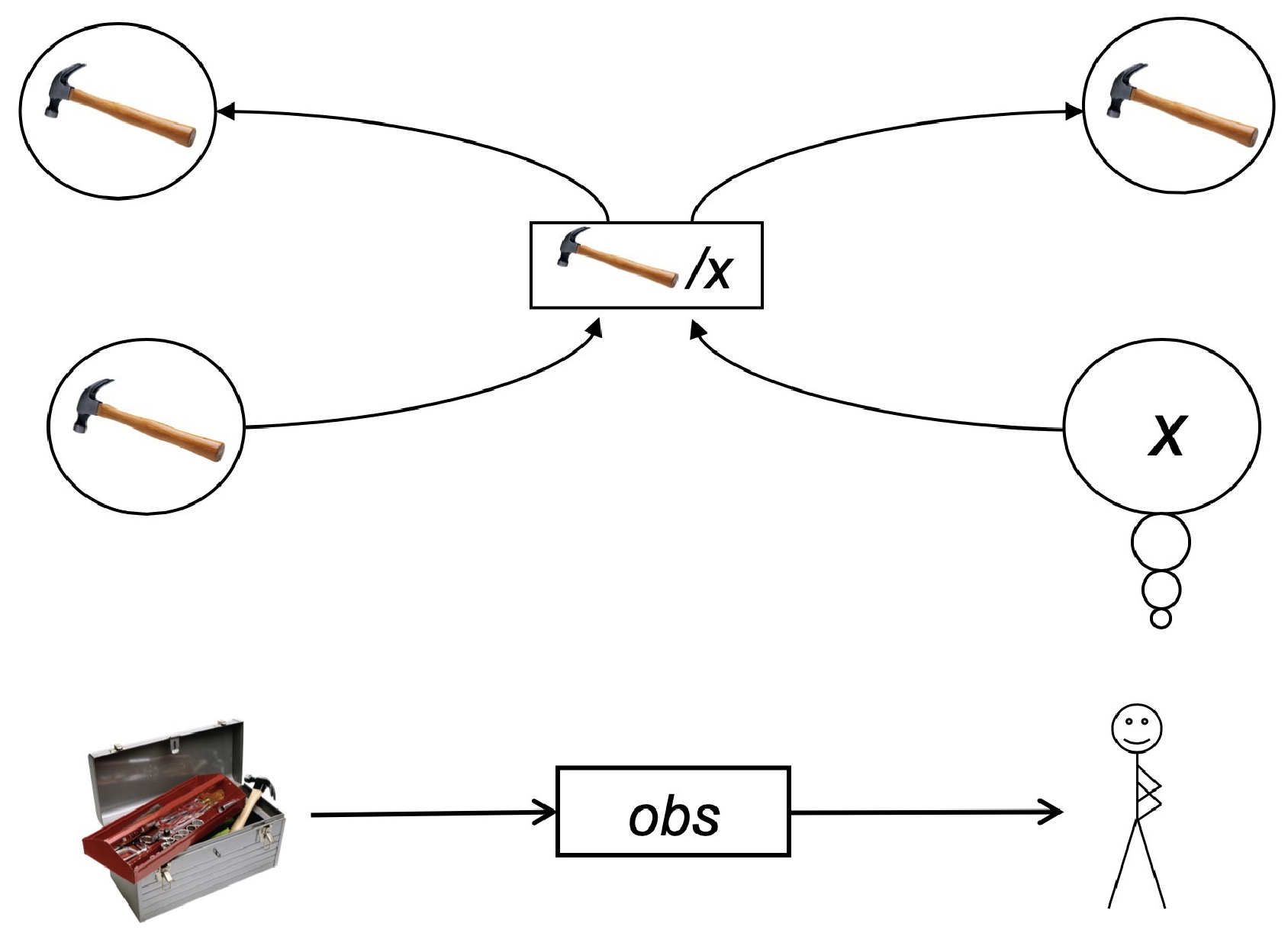}

\vspace{3ex}
\includegraphics[height=4cm
]{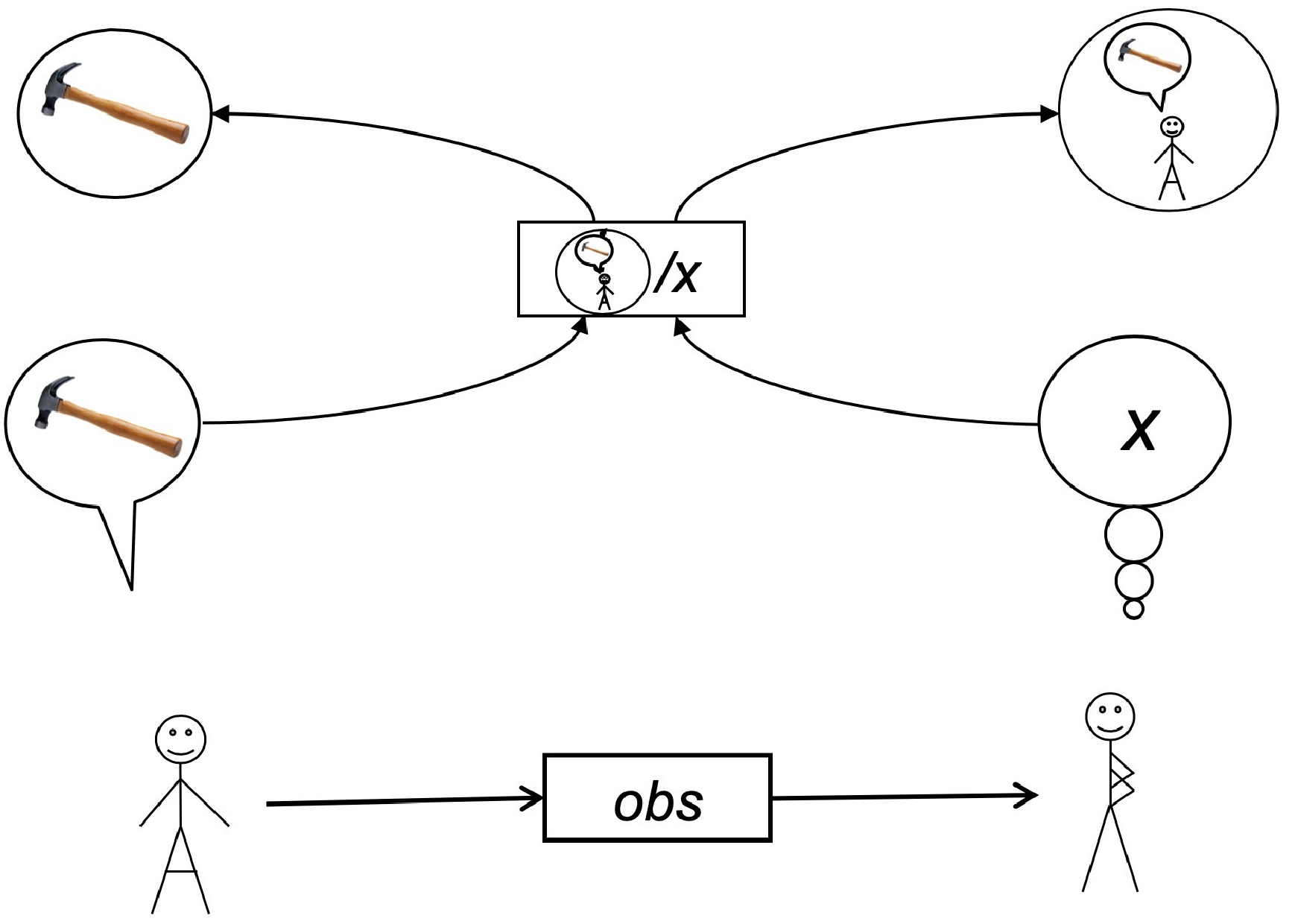}
\hspace{3em}
\includegraphics[height=4cm
]{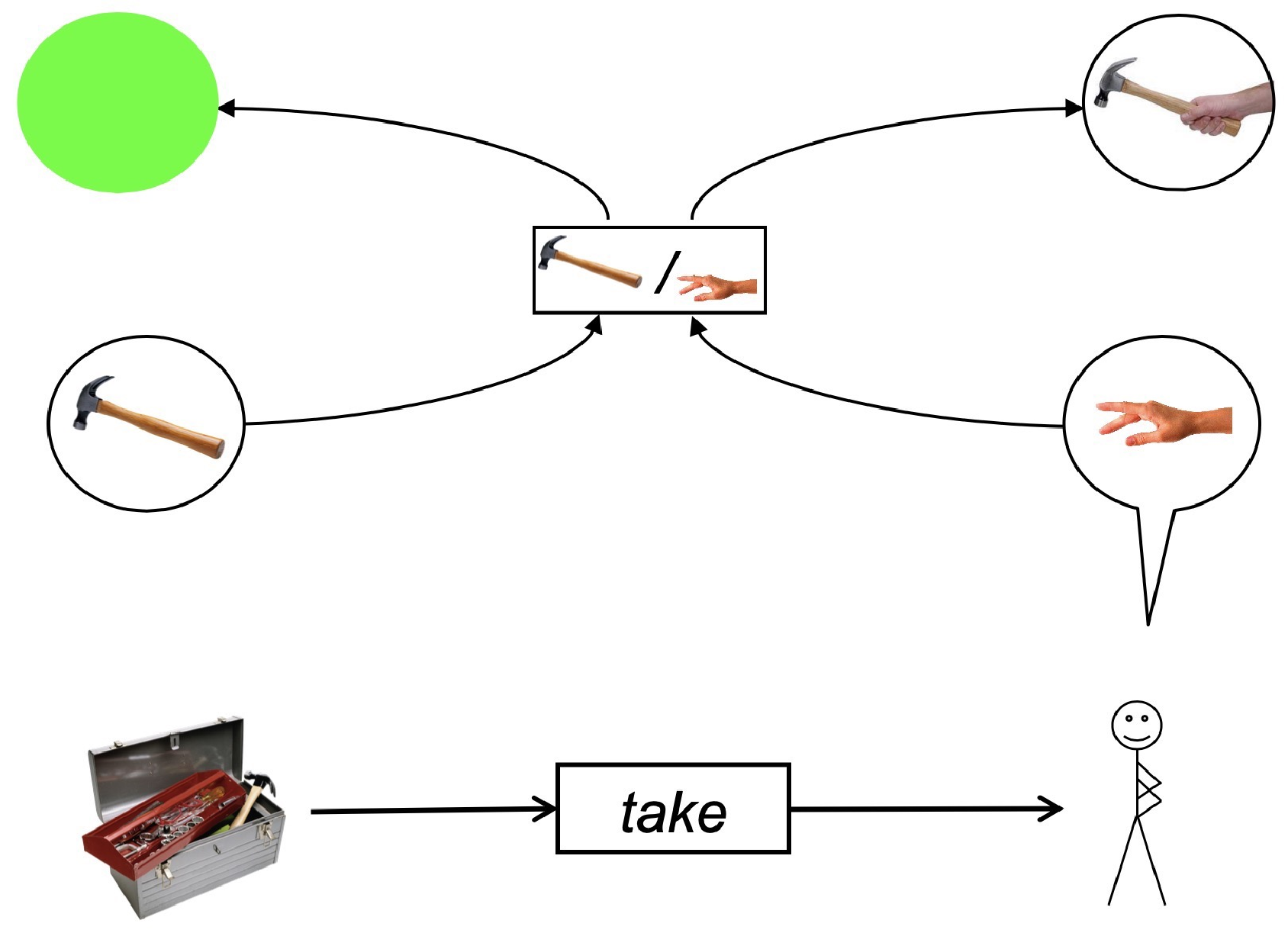}
\caption{A ``hammer'' mentioned and observed. Alice observed. A hammer taken.}
\label{Fig:hammer}
\end{center}
\end{figure}
Fig.~\ref{Fig:hammer} shows Alice asking Bob if he has a hammer, Bob observing a hammer in the toolbox, Bob observing Alice asking about the hammer,  Bob taking a hammer from the toolbox. 
\begin{figure}[!h]
\begin{center}
\includegraphics[height=4cm
]{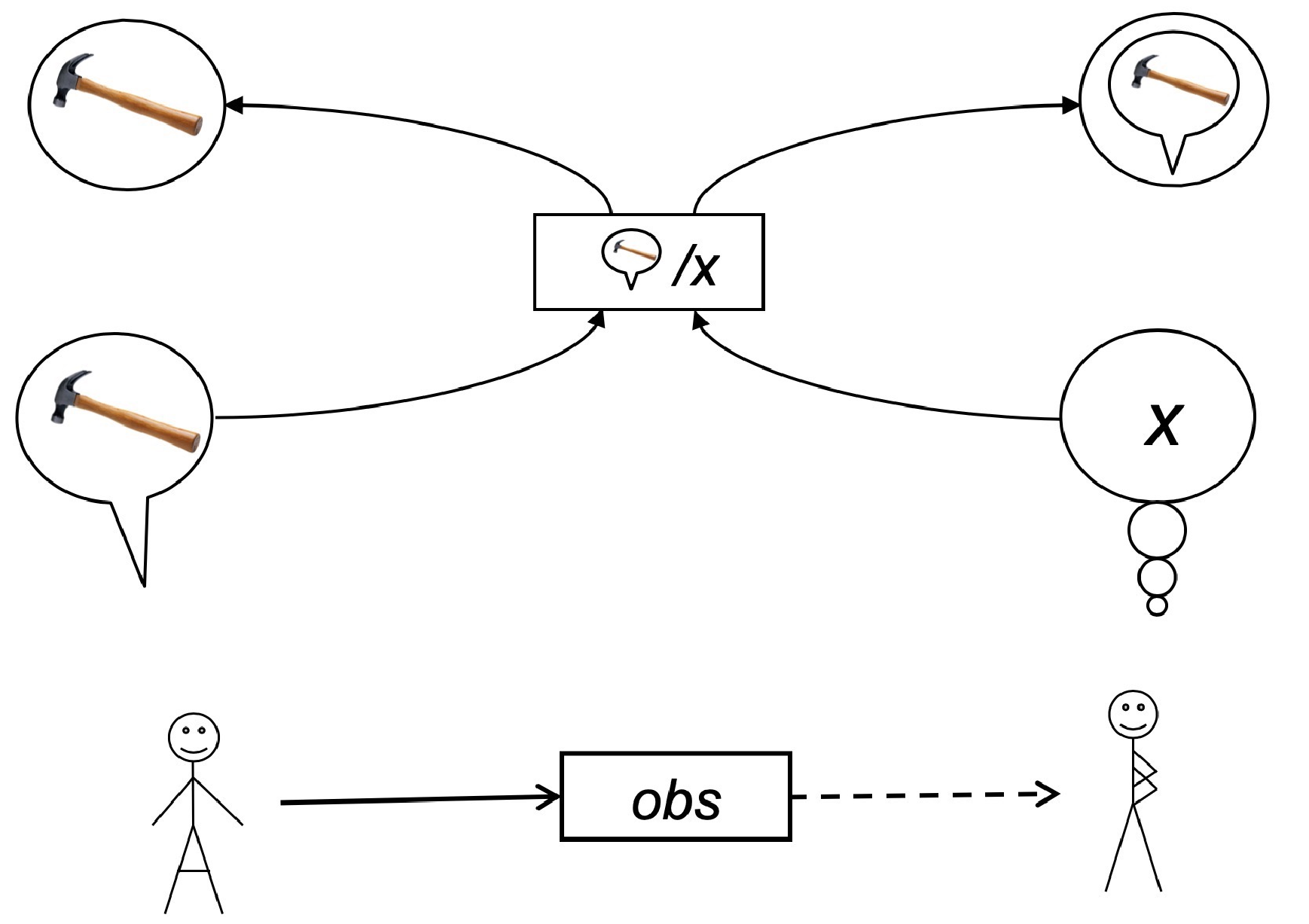}
\hspace{3em}
\includegraphics[height=4cm
]{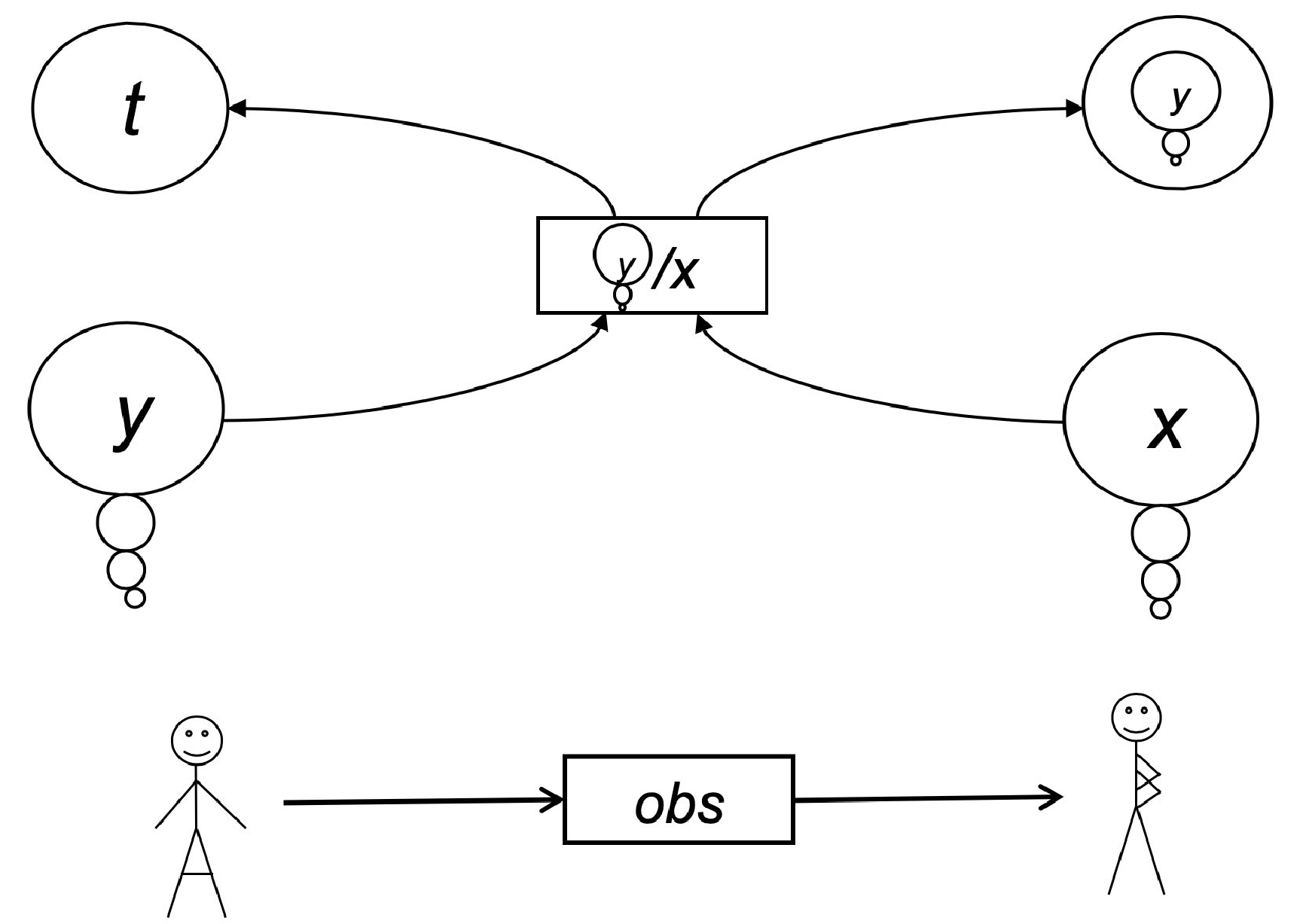}
\caption{Observing actions: sending vs observing}
\label{Fig:send-obs}
\end{center}
\end{figure}
Fig.~\ref{Fig:send-obs}, on the other hand, says that actions themselves may be observed, irrespective of the subjects performing them. On the left in this figure, Bob observes the action of asking about the hammer. This is different from the diagram in \cref{Fig:hammer} bottom left, where Bob observes Alice performing that action. Lastly, in Fig.~\ref{Fig:send-obs} on the right, Bob observes the action of observing. 

\label{Sec:comchan}

\section{Protocols}

\subsection{Composing networks from channels}\sindex{channel!composition}
Networks are built by composing channels. For example, given a channel $\AAA^{+}\tto f \WP\BBB$ from Alice to Bob and a channel $\BBB^{+}\tto t \WP \CCC$ from Bob to Carol, we compose a simple\footnote{A channel $\AAA^{+}\tto f \WP\BBB$ on its own is also a network, with the node set $\{\AAA,\BBB\}$ and the link set $\{f\}$. The pair of channels $\AAA^{+}\tto f \WP\BBB$ and $\CCC^{+}\tto h \WP\DDD$ is another one, with 4 nodes and 2 links. But these networks are trivial because the channels are isolated, and the transmissions cannot be composed.} network with the node set $\{\AAA,\BBB,\CCC\}$ and the link set $\{f,t\}$. The simplest way to pipe the outputs of $\AAA^{+}\tto f \WP\BBB$ as inputs into $\BBB^{+}\tto t \WP \CCC$ is to use \eqref{eq:cumul} and \eqref{eq:contchan} to present both channels as continuous
\[\prooftree
\AAA^{+}\to  \WP\BBB
\justifies
\WP\AAA^{\ast}\to  \WP\BBB^\ast
\endprooftree
\qquad\qquad
\prooftree
\BBB^{+}\to  \WP\CCC
\justifies
\WP\BBB^{\ast}\to  \WP\CCC^\ast
\endprooftree
\]
The network can now be viewed as a pair of composable functions:
\[
\WP\AAA^{\ast}\tto f\WP\BBB^{\ast} \tto t \WP\CCC^{\ast}
\]
Fig.~\ref{Fig:proto-comp} shows a concrete instance of such a composite. Bob observes that there is a hammer in the toolbox and tells Carol.
\begin{figure}[!h]
\begin{center}
\includegraphics[width=0.75\linewidth
]{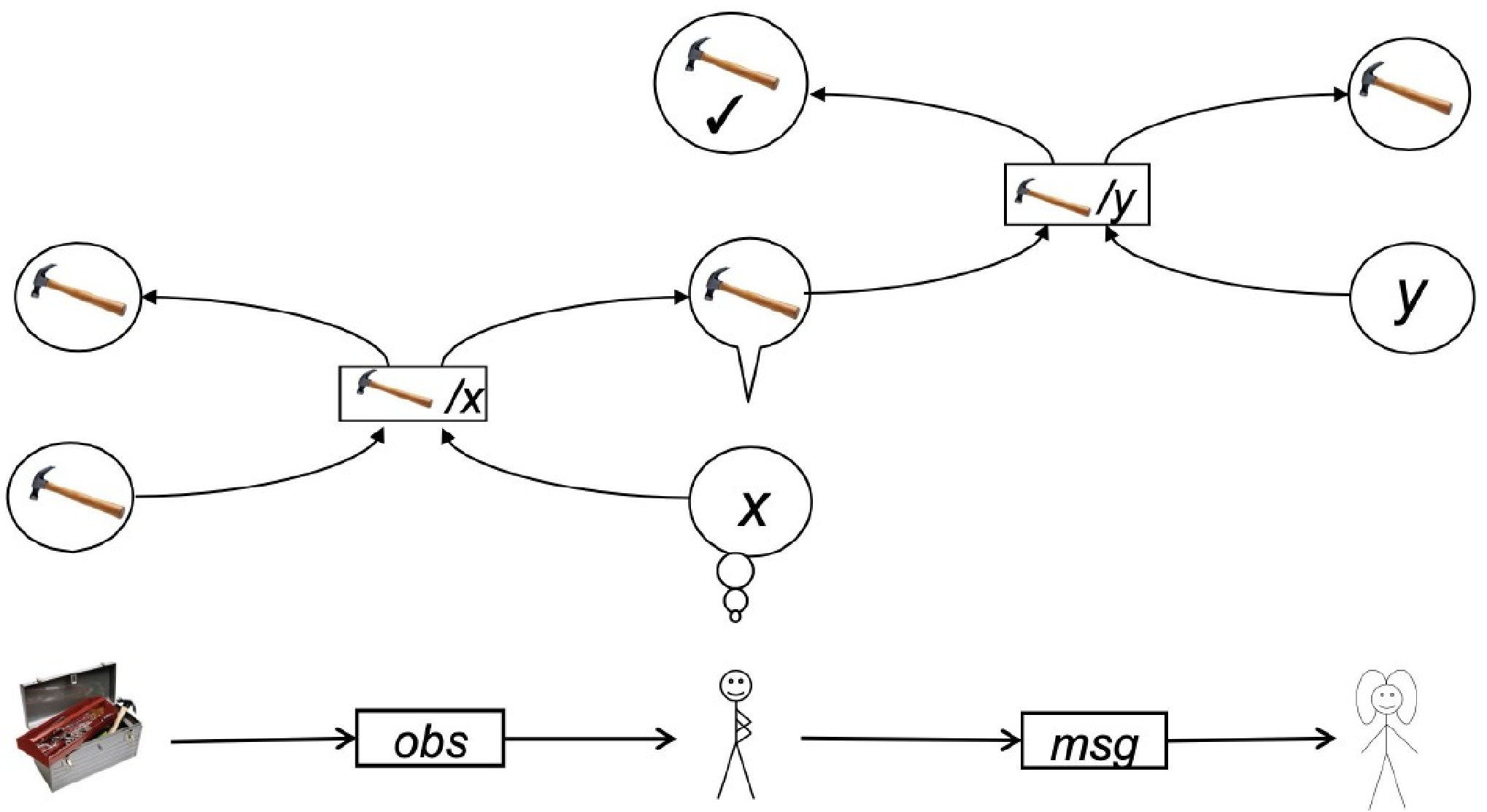}
\caption{Composing channels}
\label{Fig:proto-comp}
\end{center}
\end{figure}

\para{Remark.} Note that the subjects interacting in networks are not necessarily people, or computers. E.g. in this case, Alice is a toolbox. Any actor playing a role in a network computation is assigned a network node.

\subsection{Programming network computation}\sindex{network!computation}
The central feature of network models is that every event is localized at a node. A local event is only observable by local subjects. There is no global observer who sees everything and there are no events outside nodes. 

In a computer, there is a global observer \sindex{observer!global} who sees everything because everything happens in one place. You just look at that place and you are the global observer. For instance, in a Turing machine, all actions are performed by the head, reading and writing on the tape. Nothing ever happens anywhere else. You watch the head, and you are the global observer. 

In a network, every subject inhabits a network node and only observes the events at that node. But there are also communication channels between the nodes. The local computations at the nodes are coordinated through non-local communications through channels. A network is just a specification of nodes and links: the localities and the channels. Network computation is comprised of local computations coordinated through channel communications:
\bear
\mbox{ network computation }  & = & \mbox{ computation } + \mbox{ communication } 
\eear
\para{Protocols} prescribe the computations and the  communications distributed over networks, just like programs prescribe \sindex{protocol} the computations centralized in computers:\sindex{protocol}
\bear
\frac{\mbox{protocol}}{\mbox{program}} & = & \frac{\mbox{network}}{\mbox{computer}}
\eear
This indicates what protocols do. But how do they do it?

\subsection{What is a protocol?}
Protocols regulate network interactions, not just between computers, but also between people, between people and computers, between anyone and anything that plays a role in a network. Protocols prescribe the roles that network actors should agree to play. 

\para{Social life is a protocol suite.} A \sindex{protocol!supermarket} supermarket is a network of products waiting to be sold, shoppers coming to buy them, cashiers executing payment transactions. Fig.~\ref{Fig:protocol-supermarket} gives a high-level view of the payment protocol there.
\begin{figure}[!h]
\begin{center}
\includegraphics[width=0.5\linewidth
]{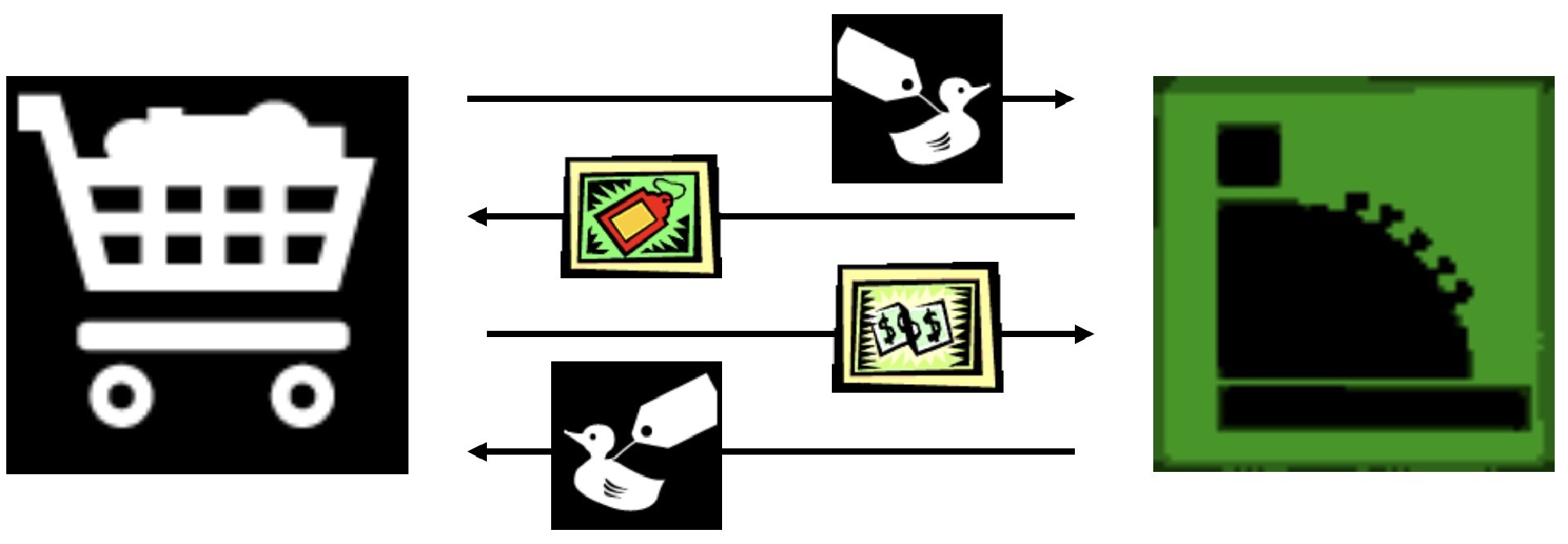}
\caption{Bird's eye view of the shopping protocol}
\label{Fig:protocol-supermarket}
\end{center}
\end{figure}
You go with the shopping cart to the cash register and present what you want to buy, say a rubber duck. The cashier establishes the price. You give the money and take the rubber duck. Every step of this abstract protocol refines to a much more complex protocol. The payment at the third step may be the most involved. It is also a multi-step protocol where every step refines to a protocol. If the shopper pays by contacting a bank, an entire suite of banking protocols is  run. Widening the perspective, layer upon layer of social protocols come into sight, from regulating how we wait in line before the cash register, through traffic protocols we obey as we drive home, to household protocols as we arrive home.

We follow protocols as instinctively as we follow grammars when we use languages. We enact complex social interactions with the same ease with which we generate complex sentences. Most of the time, we don't even become aware of the underlying structures. And the references that bind words into a sentence and messages into a protocol turn out to be of the same kind. --- The  referential structures underlying sentences and protocols  are the \emph{syntactic patterns}. \sindex{protocol!and grammar}

\textbf{Protocols are the \emph{syntactic patterns}\/ of network interactions.} But what is a syntactic pattern? Our communication channels are coded using languages. The crucial feature of a language, whether it is a natural language like English, or a programming language like Python, is that it has a \emph{syntax}\/ and a \sindex{syntax} \sindex{semantics} \emph{semantics}. Semantics conveys the meaning of words, syntax the form of sentences. But the word meanings depend on the sentence contexts, and the sentence structures depend on  the syntactic typing of words. Syntactically well-formed sentences are recognized through type-checking and type-matching. The subject of a sentence should be of type Noun Phrase, and it should be matched by a Verb Phrase. \sindex{type!checking} \sindex{type!matching}

In programming, we classify data \sindex{type!data} \sindex{datatype} into \emph{data types}\/ to make sure that the operations are correctly applied: e.g., that we only ever try to add or multiply numbers, and not people or tomatoes. If you try to multiply tomatoes, a \emph{type-checking}\/ error is raised.

In natural language, a similar process is based on assigning the \sindex{type!syntactic} \emph{syntactic types}\/ to words and phrases. To interpret this sentence that you are reading now, your mind assigns each word a type (e.g., the word ``mind'' is of type Noun) and it expects that, to make a sentence, the word of type Noun is matched by a word of type Verb (e.g., the word ``assigns'' is of type Verb). Hence, the syntactically well-formed phrase ``mind assigns''. But the verb ``assign'' is transitive, so your mind expects that the action of the verb ``assigns'' transitions to another noun. The word ``type'' is a Noun, and hence the phrase ``mind assigns type''. A couple of further refinements required by the English syntax yield the sentence ``your mind assigns a type''. By constraining the  candidate words in a sentence, syntax streamlines and simplifies the processes of generating and understanding language. The underlying computations are the \emph{syntactic processes}.

Protocols work in a similar way. In a conversational protocol extending Example 3 in Sec.~\ref{Sec:au-av}, the statements may be of type Question or of type Answer. If you interpret a statement as a question, you expect it to be matched by an answer. In an authentication protocol, the messages may be of type Challenge or of type Response. \sindex{protocol!challenge-response} A challenge (e.g., a wren challenges the chicks by offering the food) needs to be matched by a response (e.g., the chicks' respond by chirping). 

Security protocols have been studied, designed, and analyzed since the early days of network computation. A variety of models and theories has been developed and used \cite[to mention just a few]{BellaG,Boyd-Mathuria,Cremers-Mauw:book,Ryan-Schneider:book}. Most are beyond our scope and besides our goals. We will just take a quick look at a simple but central family: the  challenge-response authentication protocols.

\label{Sec:protocol}

\section{Authentication}

\subsection{Problem of communication}
The general problem of communication between Alice and Bob is that Alice only sees her side of a communication channel, whereas Bob only sees his side. 

\para{Communication as sharing meaning.} If Alice says ``I love you'', she would like to know that Bob has heard and understood what she said. Bob, on the other side, would like to know that she really said what he heard and that she really meant what she said. Alice and Bob are \emph{communicating}\/ if they achieve a \emph{common}\/ interpretation of the messages between them. If an eavesdropper Eve intercepts Alice's message and changes ``I love you'' to ``I hate you'', then the communication between Alice and Bob has been disrupted. Eve can achieve the same miscommunication effect if she prepares Bob to misinterpret Alice's message by telling him misleading gossip. 

The goal of communication from Alice to Bob is to transmit a message and assure a common meaning, accepted by the subject on both ends of the channel. 

However, communication is a process. Channels transmit strings, transmitting strings takes time, and the interpretations evolve as the transmission progresses. What Alice means and how Bob interprets it is not fixed and often remains ambiguous. The interpretations can be narrowed down through feedback, e.g., Bob questioning the meaning and Alice answering. But the problem of interpretation applies to the questions and the answers. To clarify what they mean, Alice and Bob may refer to other nodes on the network (e.g., a hammer as the meaning of the word ``hammer'', or the contexts where the word ``hammer'' occurs). But the evolution of meaning never ends. Communication remains a network process. 

\subsection{Network security}
Data, things, and traits that are communicated successfully, in the sense that the meaning at the source is transmitted to the target of communication, are said to be  \textbf{\emph{authentic}}. \sindex{authenticity} 

Examples of data, things, and traits that need to be authenticated include:
\begin{itemize}
\item a command claiming to be from the commander and not from pirates; 
\item a passport claiming to be original and not forged;
\item a voice claiming to be human and not artificial.
\end{itemize}
They are authentic if they are what they claim to be. 

\para{Authenticity vs integrity.} Integrity is a property closely related to authenticity. It is sometimes viewed as the guarantee that the contents of a channel are not only unchanged, but that they were not even accessed. \sindex{integrity} Integrity of data, things, or traits in that sense could be breached even if they are unchanged but were accessed by the adversary. In other contexts, integrity is identified with authenticity. Yet other times, authenticity is narrowed to mean integrity of the origination claims.

In network security, the \emph{good-stuff-should-happen}\/ requirements are authenticity and integrity; the \emph{bad-stuff-should-not-happen}\/ requirements are secrecy and  confidentiality. In this chapter we study the good stuff. The bad stuff is covered in the next chapter.

\para{Remark.} In Sec.~\ref{Sec:good-bad}, we defined authenticity and integrity as properties of data transmissions alone. Here we widen the angle and view them as properties of network transmissions, which include data, things, or traits. This does not change but rather refines the definition. Data have carriers. The primordial data carrier was our voice. Then we wrote on clay tablets and paper. Nowadays, data are carried by electronic signals. In any case, all channels have a physical aspect. The other way around, every thing means something. Recognizing a thing means giving it a name and a meaning. You recognize a hammer by recognizing the word ``hammer'' as its name and its use for hammering as its meaning. A person's trait similarly carries a social meaning. The distinctions between data, things, and traits, are convenient for classifying the families of security tools and channels; but at a closer look, the demarcation lines between data, things, and traits often shift and blur. The same security properties apply to all types of channel contents.

\para{Network security problem.} 
\begin{figure}[!h]
\begin{center}
\includegraphics[width=0.75\linewidth
]{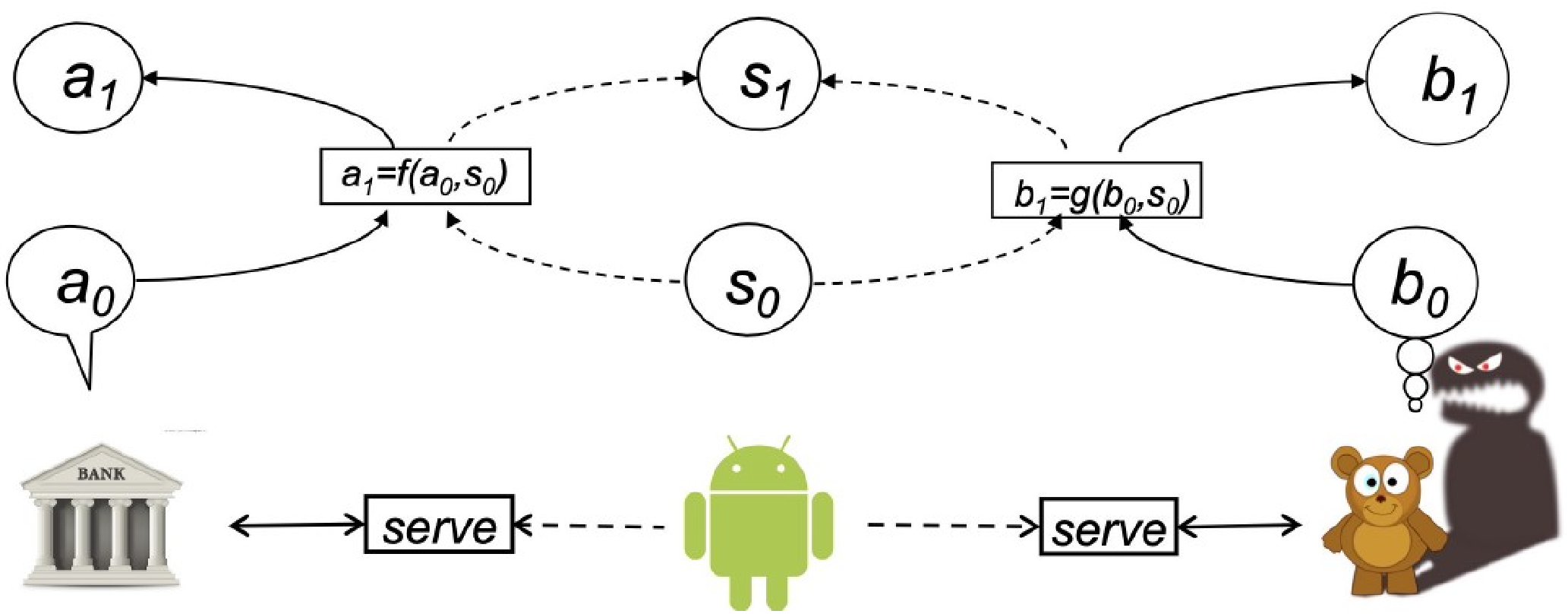}
\caption{Channel compositions lead to interference and attacks}
\label{Fig:proto-nocomp}
\end{center}
\end{figure}
If Bob interacts with Alice in one part of the network and with Carol in another, then Alice and Carol cannot prevent Bob from implementing a covert channel between them even if there are no overt or shared channels between them. Suppose that Bob is an operating system in a smart device, serving Alice as a banking application and Carol as a game, as illustrated in \cref{Fig:proto-nocomp}. Then Carol may be able to initiate banking transactions through a covert channel with Alice, although there is no overt channel between them.
 
This kind of security problem arises even in a single-channel as a network $\AAA^{+}\to \WP\BBB$ on its own. If Eve hijacks Bob's role, she may steal Alice's data or things. This happened in the wren feeding protocol in Sec.~\ref{Sec:wren}, where the cuckoo as Eve stole food and killed the wren chicks. If Eve hijacks Alice's role, then she may impersonate Alice to Bob. If Bob is a bank, Eve may take control of Alice's account.

\para{Authentication protocols}\sindex{protocol!authentication} provide solutions for such problems. The wren protocol in Sec.~\ref{Sec:wren} was an example of authentication. What is authentication?

\subsection{Idea of authentication}
Authentication is a proof that someone is who they claim to be. If a computer engaging in an online chat claims to be human, proving that claim is authentication. The \emph{Turing Test}\/ \sindex{Turing Test} is an authentication task. If a chick in wren's nest claims to be a wren chick, a proof of that claim is an authentication.  If an owner of a painting claims that the painting was made by Picasso, a proof of that claim is an authentication. After the claim is proven, the painting is said to be an \sindex{authenticity} \emph{authentic}\/ Picasso. If a visitor of an online banking website claims that they are Alice, then Bob, the bank, asks that they authenticate themselves. If Alice is supposed to prove who she is by providing to Bob her credentials, then she may also need to authenticate Bob if a malicious actor impersonating Bob could copy her credentials and impersonate her to the actual bank. So Alice and Bob need to authenticate each other. In the next section, we will see an example of a \emph{mutual}\/ authentication protocol.\sindex{authentication!mutual}

Logically speaking, authentication is an instance of \emph{hypothesis testing}. The claims that a chick is a wren, that a painting is a Picasso, that a visitor of an online banking service is Alice are hypotheses that require indirect verification, because the direct evidence is unobservable: the wren chick's distinguishing properties are unobservable for its parents, the act of Picasso painting his painting is remote in time, the identity of a website visitor is remote in space. Just like scientists seek ways to indirectly verify hypotheses about unobservable subatomic particles, authenticators seek ways to verify hypotheses about unobservable network nodes.

\subsection{Example: the Needham-Schroeder Public Key (NSPK) protocol}\label{Sec:NSPK}
\sindex{protocol!security}
The NSPK protocol is a mutual authentication protocol based on \emph{confidential channels}, implemented using public-key cryptography. It goes back to the early days of computer security \cite{Needham-Schroeder} and is almost as simple as the wren authentication protocol from Sec.~\ref{Sec:wren}. Yet, analyzing it took many years and even more publications. Its security features are independent of the cryptographic implementation of its confidential channels\footnote{Cryptographic protocols are usually analyzed assuming that the underlying cryptographic functions are perfectly secure. There are lots of cryptography textbooks that study what that means and how it is achieved.}, so we sketch a high-level view of that aspect.

\subsubsection{NSPK channels}
The NSPK protocol is run over confidential channels realized through public-key cryptography. 

\para{Idea of confidential channels.} \sindex{channel!confidential} If Alice wants to send to Bob the confidential data $t$, she applies Bob's encryption function $\{-\}_{B}$ and forms the message $\{t\}_{B}$. The encryption scrambles  $t$, so that it cannot be recognized or extracted from $\{t\}_{B}$ by anyone except Bob. The message $\{t\}_{B}$ can therefore be sent to Bob through the channels of a public network, while maintaining the confidentiality of $t$. 
You can think of  $\{-\}_{B}$ as a lockbox that anyone can close, but only Bob can open. Alice puts $t$ in the box $\{-\}_{B}$, locks it, and sends $\{t\}_{B}$ to Bob. 
\begin{figure}[!ht]
\begin{center}
\includegraphics[height=4.5cm]{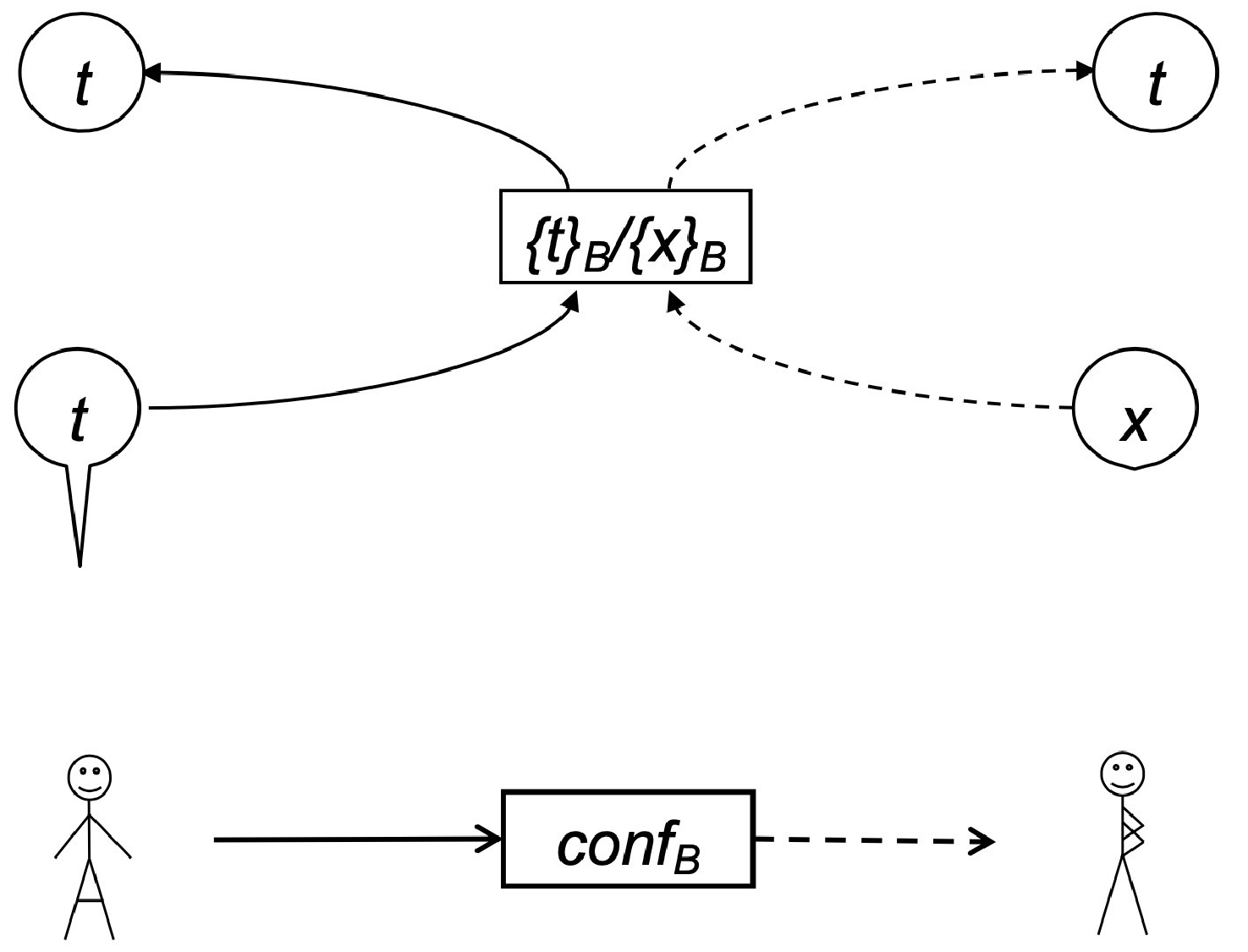}
\caption{A confidential channel}
\label{Fig:proto-confidential}
\end{center}
\end{figure}
Lots of posthandlers may handle the closed box $\{t\}_{B}$ on its way to Bob, but no one can learn anything about $t$ until Bob receives the box and opens it. That is the idea of a \emph{confidential channel}, \sindex{channel!confidential} displayed in \cref{Fig:proto-confidential}. Bob's capability to open the box $\{t\}_{B}$ and extract $t$ is presented by the \emph{pattern-matching}\/ operation $\{t\}_{B}/\{x\}_{B}$ which results in storing the data $t$ into the variable $x$ that Bob had ready for the data that he may receive through the confidential channel.  

The next paragraph provides a closer view of the last paragraph. You may skip it, but don't skip the protocol that follows.

\para{Implementation of confidential channels: public-key cryptosystems.\mbox{\large $^{\star}$}} \sindex{cryptosystem}The confidential channel schema in \cref{Fig:proto-confidential} can be implemented using public-key cryptography. A public-key cryptosystem is a triple $<G,E,D>$ of functions\footnote{More precisely, $G$ and $E$ are randomized algorithms.} where
\begin{itemize}
\item $G$ is a random generator of public/private key pairs $<k,\overline k>$,
\item $E_{k}$ is a $k$-indexed family of encryption functions, and
\item $D_{\overline k}$ is an $\overline k$-indexed family of decryption functions,
\end{itemize}
such that
\begin{itemize}
\item[{\sc(fun)}] all $t$ satisfy $D_{\overline k}\Big(E_{k}(t)\Big) = t$, and 
\item[{\sc(sec)}] if $A\Big(E_{k}(t)\Big) = t$, then $A=D_{\overline k}$, i.e., any decryption algorithm depends on the private key $\overline k$.
\end{itemize}
Cryptographers use more precise definitions, but this gives you an idea of what they say. 

The idea of how cryptosystems are used is that Bob first obtains from $G$ a key pair $<k,\overline k>$, of which he announces $k$ as his public key and keeps $\overline k$ as his private key. For a fixed key pair, the function pair $<E_{k},D_{\overline k}>$ is called a \emph{cipher}\sindex{cipher}. An  encrypted message $E_{k}(t)$ is called the \emph{ciphertext}\/ and $t$ is the corresponding \emph{plaintext}. When Alice wants to transmit in confidence a plaintext $t$ to Bob who announced the public key $k$, she generates the ciphertext $\{t\}_{B} = E_{k}(t)$, and sends it through a public channel (such as the Internet or a cellular network). By the functional property {\sc(fun)}, Bob can extract $t = D_{\overline k}\left(E_{k}(t)\right)$. By the security property {\sc(sec)}, anyone who can extract $t=A\Big(E_{k}(t)\Big)$ using some algorithm $A$ must know Bob's private key $\overline k$.

\para{High-level view of confidential channels: pattern-matching.} All that matters for the NSPK protocol is that Bob is the only one who can extract $t$ from $\{t\}_{B}$, because this is the capability used to identify him (like the chirps were used to identify baby wrens). Therefore, we reduce the whole story of public-key cryptography to the \emph{pattern-matching operation} 
\bea\label{eq:B-PKI}
\{t\}_{B}/\{x\}_{B} & \vdash & t/x
\eea
which compares the patterns $\{-\}_{B}$ of the terms on the left, and since they match, it stores the content $t$ of the first pattern into the variable $x$ of the second pattern. For a general pattern $p$ and an arbitrary term $T$, the pattern-matching operation $T/p(x)$ is defined by the reduction rule
\bea\label{eq:PKI}
p(t)/p(x) & \vdash & t/x
\eea
In words, the operation $T/p(x)$ substitutes $t$ for $x$ if $T=p(t)$.

\subsubsection{NSPK interactions}
Fig.~\ref{Fig:proto-NSPK} shows two views of the NSPK protocol.
\begin{figure}[!h]
\begin{center}
\includegraphics[width=0.4\linewidth
]{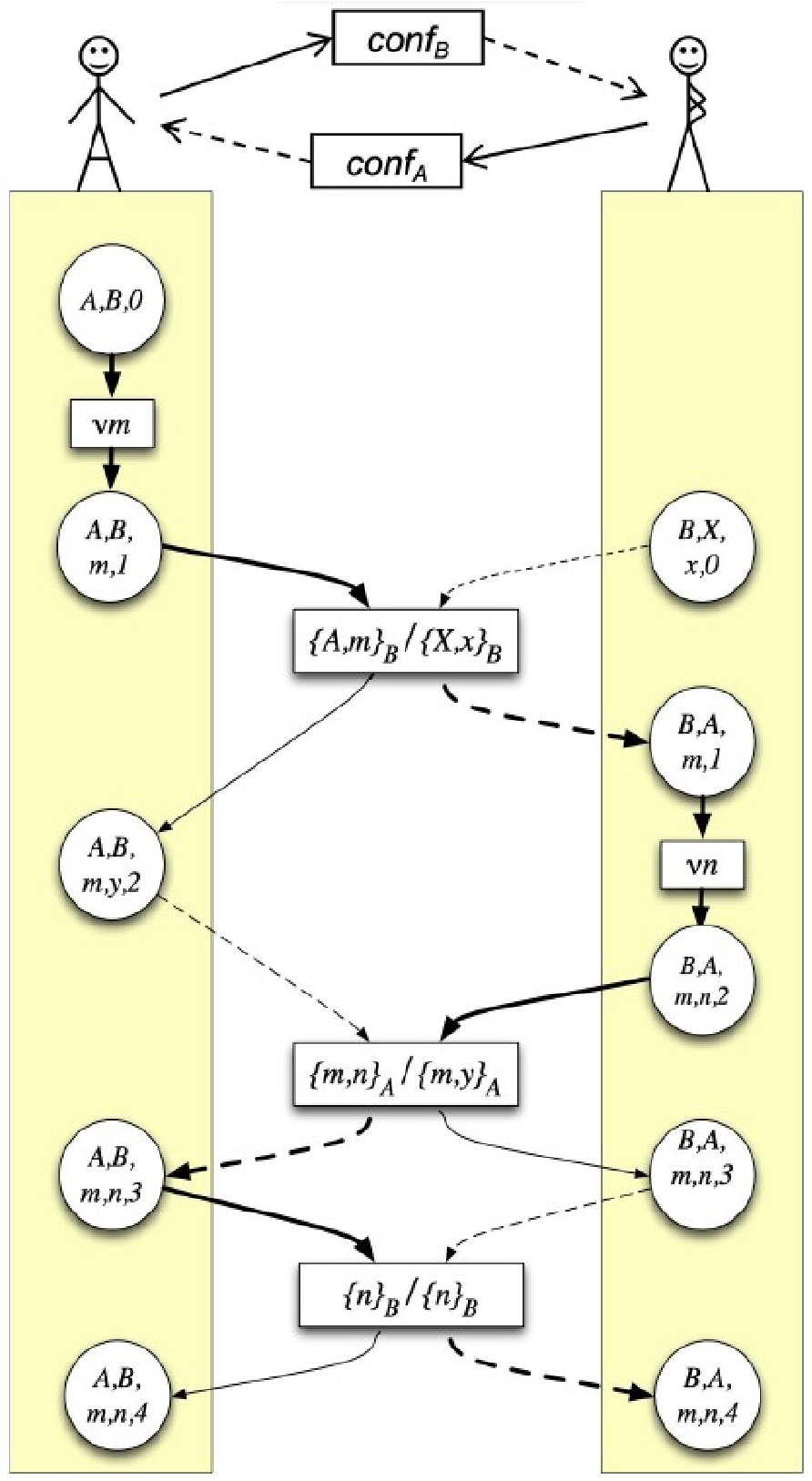}\hspace{0.1\linewidth}
\includegraphics[width=0.4\linewidth
]{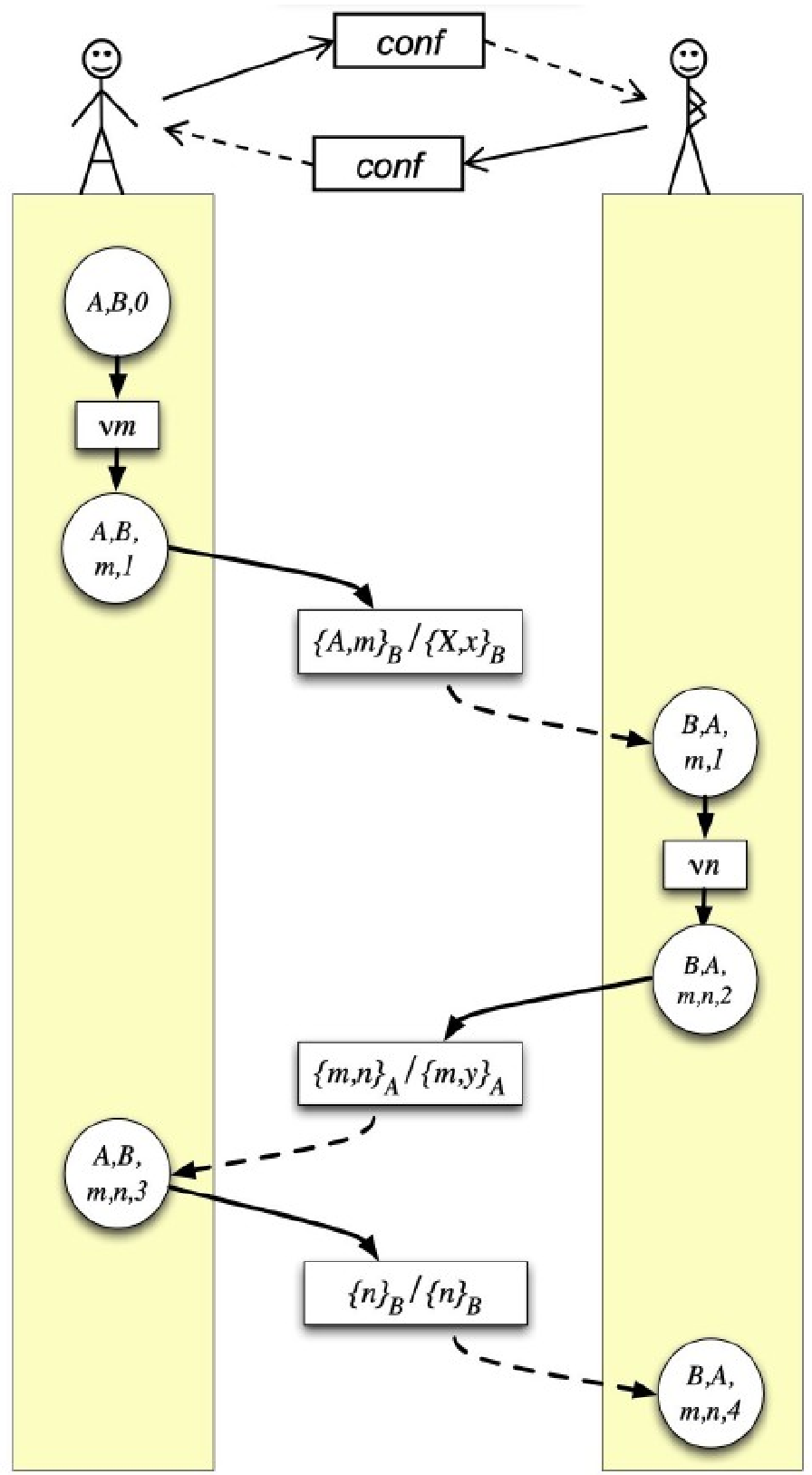}
\caption{The legendary Needham-Schroeder Public Key (NSPK) protocol}\sindex{protocol!Needham-Schroeder Public Key (NSPK)}
\label{Fig:proto-NSPK}
\end{center}
\end{figure}
On the top of each of the pictures is the network on which the protocol runs: just two nodes, Alice and Bob, and two confidential channels between them, one in each direction. The diagram on the left displays the protocol as a composition of three confidential channel interactions, instances of \cref{Fig:proto-confidential}. Each of the interactions is a message send/receive pair along a confidential channel. The diagram on the right omits the senders' state updates after the send actions and the receivers' ready states before their receive actions. This leaves on the right a protocol picture of the type that is usually found on protocol designers' whiteboards and in their papers, allowing you to follow the protocol actions as a linear sequence. This view abstracts away the fact that receivers cannot receive messages unless they are ready to receive them, and that senders need to remember that they sent a challenge to be able to receive and verify the response. But taking this into account requires looking at state changes in two places at the same time, like in the diagram on the left. We are primed to think in linear time and understand diagrams like the one on the right much easier. It is often a good idea to draw both views, to avoid confusion.

What happens in this protocol? Alice and Bob go through 4 state changes each. Initially, Alice is in the state $(A,B,0)$, which means that she knows that she is Alice, that she wants to talk to Bob, and that her counter is at $0$. Bob is in the state $(B,X,x,0)$, which means that he knows that he is Bob, that he is ready to talk to someone whose name he will store in the variable $X$, their session identifier into the variable $x$, and his counter is also at $0$. 

Alice begins a protocol session by performing the action $\nu m$, which stores a new value into the variable $m$ and changes Alice's state $(A,B,0)\mapsto (A,B,m,1)$.  The value in $m$ is Alice's \emph{nonce},~\sindex{nonce} a value that she will \emph{n}\/ot use but \emph{once}. It will allow her to distinguish the messages related to this protocol session from other messages that she may receive in the meantime. 

Alice then begins the conversation by sending the message  $\{A,m\}_{B}$, which here means:  ``Hi, this is Alice, my nonce is $m$. Please prove that you are Bob by decrypting this message.'' --- This is \textbf{Alice's \emph{challenge}\/ to Bob}, requesting that he decrypts from inside $\{-\}_{B}$. She records that she has sent it by changing the state $(A,B,m,1)\mapsto (A,B,m,y,2)$, remembering all her previous values, but now ready to receive Bob's response into $y$.

Bob, on the other hand, was ready to receive and decrypt messages in the form $\{X,x\}_{B}$ and store the content into $X$ and $x$. Upon receiving and decrypting $\{A,m\}_{B}$, he stores $A$ into $X$, $m$ into $x$, and changes his state $(B,X,x,0)\mapsto (B,A,m,1)$. Then Bob also generates a nonce $n$ by $\nu n$, changes his state $(B,A,m,1)\mapsto (B,A,m,n,2)$, and sends $\{m,n\}_{A}$, meaning: ``Hi, in response to $m$, here is my nonce $n$. Please prove that you are Alice by decrypting this message.'' --- This message is \textbf{Bob's \emph{response} to Alice} (because he proves by sending $m$ that he is able to extract data from $\{-\}_{B}$) and moreover, it is also  \textbf{Bob's \emph{challenge} to Alice} to decrypt from inside $\{-\}_{A}$.

Alice was ready to receive messages in the form $\{m,y\}_{A}$. Upon decrypting Bob's $\{m,n\}_{A}$, she verifies that the first component is her nonce $m$ and that her challenge has been responded to. She then stores $n$ in  $y$, and changes her state $(A,B,m,y,2)\mapsto (A,B,m,n,3)$. Alice's remaining task is to respond to Bob's challenge. She sends $\{n\}_{B}$. This is \textbf{Alice's \emph{response} to Bob} (because she proves by sending $n$ that she is able to extract data from $\{-\}_{A}$) and she transitions $(A,B,m,n,3)\mapsto (A,B,m,n,4)$ to settle in her final state.

Bob, in the meantime, is ready to receive a message in the form $\{n\}_{B}$, which verifies that his challenge has been responded to. Upon receiving such a message, Bob transitions $(B,A,m,n,3)\mapsto (B,A,m,n,4)$ and settles in his final state.

\para{What has been achieved?} Bob and Alice have not only responded to each other's challenges, and authenticated each other; they have also responded to the  challenges $m,n$ that were \emph{bound together}\/ in a single message. The protocol thus does not just provide two authentications: of Bob by Alice and of Alice by Bob. It provides a \emph{mutual}\/ \sindex{protocol!mutual authentication} authentication, with both of them requesting authentication and accepting to be authenticated, within the same session. Last but not least, since the nonces $m$ and $n$ were only ever sent under Alice's or Bob's public keys, they can be used to generate a shared key that only Alice and Bob know. NSPK is therefore not only a \emph{mutual authentication}\/ protocol, but also a \emph{key distribution}\/ protocol. \sindex{protocol!key distribution} 

\subsubsection{Attack!!!} \label{Sec:attack}
The problem with the NSPK protocol is that it does not require that Bob explicitly identifies himself, like Alice did with ``Hi, this is Alice'', at the beginning. The protocol binds Alice's authentication of Bob by the challenge and the response
\beq\label{eq:NSPK-A} c_{AB}^{(m)} = \{A,m\}_B \qquad\qquad\qquad r_{AB}^{(m)}= \{m\}_A\eeq
and Bob's authentication of Alice by the challenge and the response
\beq\label{eq:NSPK-B} \textcent_{BA}^{(n)} = \{n\}_A \qquad\quad\qquad\qquad r_{BA}^{(n)} = \{n\}_B\eeq 
Bob's response $r_{AB}^{(m)}=\{m\}_A$ and his challenge $\textcent_{BA}^{(n)} =\{n\}_A$ are fused into $\{m,n\}_A$ in the second message, which binds the Alice-Bob authentication and the Bob-Alice authentication into a \emph{mutual}\/ authentication between Alice and Bob. \sindex{authentication!mutual} However, the symmetry is broken because Alice's challenge to Bob $c_{AB}^{(m)} = \{A,m\}_B$ says who is the challenger, whereas Bob's challenge to Alice $\textcent_{BA}^{(n)}=\{n\}_A$ does not. Why is this a problem?

If Bob is dishonest, he may decrypt Alice's challenge $c_{AB}^{(m)}=\{A,m\}_{B}$, re-encrypt it by Carol's public key and send $c^{AC}=\{A,m\}_{C}$ --- as Alice's challenge to \emph{Carol}. Since Carol as the responder in the NSPK protocol is not required to identify herself, Bob can forward her response $\{m,n\}_{A}$ to Alice as his own. The originator of the challenge $n$ for Alice is not mentioned, and Alice will think that this is Bob, because the challenge $n$ is bound with the response to her challenge $m$, which she issued under Bob's public key, and Bob must have decrypted it. But Bob is dishonest, and he transformed the challenge for him into a challenge for Carol. Alice will decrypt $\{m,n\}_{A}$ and respond with $\{n\}_{B}$, which Bob will decrypt, re-encrypt as $\{n\}_{C}$ --- and thus transform into a response to Carol. In the end, Carol thinks she has a mutually authenticated confidential channel with Alice, Alice thinks she has a mutually  authenticated  confidential channel with Bob, and Bob is sitting in the middle, relaying the messages between the two of them. If Carol is a bank where Alice has an account, then Bob can withdraw Alice's money. 

The Needham-Schroeder Public-Key (NSPK) and Symmetric-Key (NSSK) protocols were proposed in 1978~\cite{Needham-Schroeder}. \sindex{protocol!Needham-Schroeder Public-Key (NSPK)}\sindex{protocol!Needham-Schroeder Symmetric-Key (NSSK)} The above attack on NSPK  was discovered 17 years later~\cite{Needham-Schroeder-Lowe}, by a computer, in an automated analysis. In the meantime, the protocol was widely used to secure shared computers and studied in 100s (if not 1000s) of excellent publications, without anyone noticing the problem. It is true that Needham and Schroeder never said that the protocol was secure against a dishonest responder. But it is also true that neither they nor anyone else said that it was not secure against a dishonest responder --- until the attack surfaced in \cite{Needham-Schroeder-Lowe}, from exhaustive search of protocol runs in a formal model. 

The attack is easily prevented by making sure that Bob's authentication of Alice is not \eqref{eq:NSPK-B}, but symmetric to Alice's authentication of Bob in \eqref{eq:NSPK-A}, with 
\beq\label{eq:NSPK-B-fix} c_{BA}^{(n)} = \{B,n\}_A \qquad\quad\qquad\qquad r_{BA}^{(n)} = \{n\}_B\eeq 
The second message $\{m,n\}_{A}$ now becomes $\{m,B,n\}_{A}$. The former was the fusion of $r_{AB}^{(m)}=\{m\}$ and $\textcent_{BA}^{(n)} = \{n\}_{A}$. The latter is the fusion of $r_{AB}^{(m)}=\{m\}$ and $c_{BA}^{(n)} = \{B,n\}_{A}$. Bob's impersonation of Alice now fails, because Carol's second message is $\{m,C,n\}_A$, and if Bob sends it through to Alice, she will know that her challenge for Bob was responded to by Carol, and she will not respond to complete the session. The attack has been preempted by adding Bob's identity in his challenge and the protocol now looks secure.

But if it took us so long to detect and eliminate this attack, what reasons do we have to  believe that there are no other attacks? How can we know whether a  protocol is secure? What exactly does it mean for a protocol to be secure? 

\subsection{Protocol modeling and analysis} 
No matter how communicative we are, each of us observes the world from a different standpoint and their standpoint is the center of their world. Understanding the different worldviews from different standpoints at different network nodes is hard. Reasoning about the mazes and knots of network links spanned by communication channels is even harder. Each of us participates in many protocols involving many timelines, looping and intertwining; yet we view protocol runs as linear time intervals, progressing from a beginning to an end. I send a message, then you receive the message, then you send a message and I receive it. In reality, only a half of these events is directly observable for either of us. We imagine the other half. Distributed processes in general, and communications in particular, are hard to understand and easy to misunderstand. Protocols are hard to secure. Yet they are everywhere and every aspect of life depends on their security. We need precise models, definitions, security proofs, and empiric testing to get better models.  We need Security Science (SecSci). 

\para{Towards the definition.}
An authentication protocol has at least two roles: someone authenticates someone. We call the authenticator Alice and the subject she authenticates is Bob. Since the authentication requires that Alice and Bob communicate, the protocol requires channels both ways between them. The protocol interactions through the channels should test the claim (hypothesis) that the subject that Alice is communicating with is Bob. The basic test of such claims is the \emph{challenge-response protocol schema}. 

\subsection{When is an authentication protocol secure?}

\begin{definition}\label{Def:cr}
A \emph{challenge-response protocol}\/ between Alice and Bob consists of
\begin{itemize}
\item communication channels $\AAA^+\to\WP\BBB$ and $\BBB^+\to\WP\AAA$
\item families of 
\begin{itemize}
\item challenges $c_{AB}^{(m)}\in \AAA^+$ and
\item responses $r_{AB}^{(m)}\in \BBB^+$
\end{itemize}
indexed by numbers $m\in \NNn$.  
\end{itemize}
This \emph{authentication requirement}\/ from a challenge-response protocol is the implication
\bea\label{eq:cr-auth}
\mathrm{(A)} & \implies & \mathrm{(B)}
\eea
where (A) and (B) are the following sequencers of events:
\begin{enumerate}[(A)]
\item Alice sends a challenge $c_{AB}^{(m)}$ and subsequently receives a response $r_{AB}^{(m)}$, 
\item Bob receives $c_{AB}^{(m)}$ after Alice sends it and sends $r_{AB}^{(m)}$ before she receives it. 
\end{enumerate}
\end{definition}

\para{Explanation.} The intuitive intent of the authentication requirement $\mathrm{(A)} \implies \mathrm{(B)}$  is that
\begin{enumerate}[(A)]
\item whenever it seems to Alice that her challenge to Bob was responded to, 
\item then Bob really responded to Alice's challenge. 
\end{enumerate}
In other words, the requirement is that, upon the completion of a protocol run, Alice's \emph{\textbf{state of mind}}\/ (A) should coincide with the \emph{\textbf{state of the world}}\/ (B). The protocol test is thus required to convey the truth. Note that this is a \emph{\textbf{requirement on the challenge function $c_{AB}$ and the response function $r_{AB}$}}. In data-based authentication, their cryptographic properties assure that the requirement is satisfied. In thing-based and trait-based authentications, security is assured by the tamper-resistance of  authentication tokens and features.  

\para{Examples.} One example are the challenge-response functions \eqref{eq:NSPK-A}. Another one is \eqref{eq:NSPK-B}. But this is a protocol where Bob authenticates Alice; i.e., the roles in Def.~\ref{Def:cr} are swapped. It is the simplest and the weakest form of challenge-response authentication, usually called the \emph{ping}\/ authentication.\sindex{authentication!ping} If $n$ is freshly generated by Bob, and no one can guess it, then he can only be certain that Alice must have been active between the time when he sent $c_{BA}^{(n)}=\{n\}_{A}$ and the time when he received $r_{BA}^{(n)}=\{n\}_{B}$. This he knows because only Alice could decrypt $n$ from $\{n\}_{A}$. He cannot be sure, though,  that Alice was the one who encrypted $n$ by his public key, since anyone could have done that if Alice sent $n$ unencrypted; or Carol could have encrypted it if Alice sent it encrypted by Carol's public key. So all Bob knows from authenticating Alice by \eqref{eq:NSPK-B} is that she has been alive between his send-challenge and receive-response actions. If he didn't freshly generate an unpredictable $n$, even that is not certain, since anyone could have replayed a known $n$ back to him. On the other hand, if Bob authenticates Alice by the challenge-response pair \eqref{eq:NSPK-B-fix}, then Alice knows that the challenge is from Bob, and if she is honest (meaning, if she follows the protocol) then she will respond by $r_{BA}^{(n)} = \{n\}_{B}$. Then Bob and Alice know that they are participating the protocol, and Alice agrees to be authenticated by Bob. This is a stronger form of authentication, often called the \emph{agreement}.\sindex{authentication!agreement} Last but not least, composing \eqref{eq:NSPK-A} and \eqref{eq:NSPK-B-fix} by gluing $r_{AB}^{(m)} = \{m\}_{A}$ and $c_{BA}^{(n)} = \{B,n\}_{A}$ into a single message $\{m,B,n\}_{A}$ we get the NSPK like in \cref{Fig:proto-NSPK}, but fixed to be secure even if Bob is not honest \cite{Needham-Schroeder-Lowe}. To be sure that Bob and Alice agree not only to be authenticated by each other, but that they agree to a \emph{mutual}\/ authentication, in a single session, it must be required that both of their states of mind coincide with the state of the world, and thus  \emph{\textbf{match}}.\sindex{authentication!by matching conversation} This is the strongest form of authentication, called the \emph{matching conversation}\/ authentication in \cite{STS}. The general idea that the notions of authentication form a hierarchy was discussed in \cite{LoweG:Hierarchy}.

\para{Problem.}  
The authenticity requirement is not a trace property in the sense of Def.~\ref{Def:property}. The reason is that Alice cannot see the global state of the world, but only her local worldview. More precisely, she cannot see the global histories but only her local projections. Taking into account the possible non-local events that she does not see, her worldviews are not histories, but sets of possible histories. For example, when Alice observes the events $c, r\in \Event_{A}$, then her worldview is the set of histories 
\bear
\overline{c\prec r} & = & \left\{\vec x\cons c\cons \vec y\cons r \cons \vec z\ |\ \vec x, \vec y, \vec z \in \Event_{\neg A}^{\ast}\right\}
\eear 
where $\Event_{\neg A} = \Event\setminus \Event_{A}$. The authenticity requirement is thus 
\begin{itemize}
\item not a \textbf{property}  $P\in \WP\Event^{\ast}$ 
\item but a \textbf{\emph{hyper}\/property} $\PPP\in \WP\WP\Event^{\ast}$.
\end{itemize}

\section{Authenticity as a hyperproperty}

\subsection{Hyperproperties}
A hyperproperty is a property of properties. While properties (defined in Sec.~\ref{Sec:histories}) are \textbf{\emph{sets}\/ of histories} $P\in \WP\Event^{\ast}$, hyperproperties are \textbf{\emph{sets of sets}\/ of histories} $\PPP\in \WP\WP\Event^{\ast}$. Their role in channel and protocol security was recognized and studied by Clarkson and Schneider \cite{Clarkson-Schneider:hyperproperties}.

\para{Examples.} We have seen many hyperproperties in Chapters~\ref{Chap:Process} and \ref{Chap:Geometry}. While the elevator safety requirement $\SafE\in\WP \Event^{\ast}$ defined in \eqref{eq:SafE} is a $\Event$-trace property, the set $\SAF\in \WP\WP\Event^{\ast}$ of all $\Event$-trace safety properties is a hyperproperty. Safety is a property of properties. The particular elevator requirements studied in Ch.~\ref{Chap:Process} are the properties $\LivE,\AuthoE,\AvailE\in\WP\Event^{\ast}$, but the kinds of requirements that we studied, i.e., liveness, authority, and availability, are properties of properties: $\LIV, \AU, \AV \in \WP\WP\Event^{\ast}$. 

\para{Programs} (executed on computers) and processes (executed on state machines) \textbf{generate histories}, recorded as $\Event$-traces or contexts. Their security was studied in Chapters~\ref{Chap:Process}--\ref{Chap:Geometry} in terms of \textbf{trace \emph{properties}}.

\para{Protocols} (executed on networks) \textbf{generate worldviews}, recorded \emph{\textbf{sets of}}\/ $\Event$-traces or contexts. Their security will here be formalized in terms of \textbf{trace \emph{hyperproperties}}. 

\subsection{Formalizing authenticity}\label{Sec:hyper-auth}
\para{Protocol events.} To formalize security hyperproperties and align them with security properties from Ch.~\ref{Chap:Process}, we use the event space similar to Example 2 in Ch.~\ref{Chap:Process}.
Given the types
\begin{itemize}
\item $\Obj  \ =\  \big\{c^{(m)}, r^{(m)}, \ldots \big\}$ of protocol messages as objects;
\item $\Act \ =\  \big\{\send{-}, \recv{-} \big\}$ of ``sends'' and ``receives'' as protocol actions; 
\item $\Subj \ =\ \big\{A, B,C,\ldots\big\}$ as a type of protocol roles as subjects,
\end{itemize}
a protocol event is a set of triples representing subjects' actions of sending and receiving objects:
\bear
\Event & = & \Obj \times \Act \times \Subj\ \ = \ \ \Bigg\{\send{p^{(m)}}_A, \recv{p^{(m)}}_A \  \big|\ p \in \Obj, A \in \Subj, m\in \NNn \Bigg\}
\eear
where an event $\send{p^{(m)}}_A$ is that ``Alice sends the message $p^{(m)}\in \Obj$'' whereas $\recv{p^{(m)}}_{A}$ is ``Alice receives the message $p^{(m)}\in \Obj$''.  
The objects acted upon are thus the messages sent or received. In practice, messages are usually the well-formed expressions of a language, suitably encoded. We model them as terms of a sequent algebra\footnote{Sequent algebra has been studied from different angles in the work of Axel Thue, Andrey Markov, Emil Post, Gerhard Gentzen since the early XX century. Noam Chomsky developed them as formal grammars; computer scientists as rewrite systems. See~\cite{PavlovicD:spider} for references and a general overview close to the current use.}.

\para{Term operations.} While the challenge-response protocols are modeled using the message templates presented as \emph{families}\/ of terms $c^{(m)}_{AB}$ and $r^{(n)}_{AB}$, pa\-ra\-me\-trized by indeterminates $m$ and $n$, like we did in  Sec.~\ref{Sec:NSPK}, in reality a message can only be sent if all indeterminates in its template are determined\footnote{Careless applicants often forget to complete the indeterminate parts of their cover letter templates, and submit letters beginning with \emph{``Dear \underline{\hspace{3em}}''}, leaving the space for recipient's name indeterminate. The email message format and the network packet header also have such indeterminate spaces for the recipient. They cannot be sent empty, since they are used as addresses.}. The message space $\Obj$ is therefore the set of \emph{closed}\/ terms $\JJJ[]$ of a term algebra $\JJJ$. The full term algebra $\JJJ$ contains variables $x,y,z\ldots \in \VVV$, and $\JJJ[x,y]$ denotes the subalgebra of terms with the free variables $x,y$. Any pair of values $m,n$ induces the substitution operations $m/x$ and $n/y$, which commute and can be performed concurrently:
\[\begin{tikzar}{}
\& \JJJ[y]\ar{dr}[description]{n/y}\\
\JJJ[x,y] \ar{ur}[description]{m/x} \ar{dr}[description]{n/y} \ar{rr}[description]{m/x, n/y} \&\& \JJJ[]\\
\& \JJJ[x] \ar{ur}[description]{m/x}
\end{tikzar}\]
This means that the variables are independent on each other. The message space $\Obj = \JJJ[]$ is thus the subalgebra consisting of the terms where all free variables have been instantiated to values. 

Where do the values come from? In the  challenge-response protocols, there are two main operations that generate and substitute  values for variables:
\begin{itemize}
\item \emph{new value generation}\/ $\nu m$, 
which results in the substitution $m/x$ of a fresh value $m$ for the variable $x$, i.e., 
\bea
\nu x. p(x) & \vdash & p(m/x) \ \ \mbox{ for a fresh } m
\eea
\item \emph{pattern-matching}\/ $p(m)/p(x)$, which results in the substitution $m/x$ of the value $m$ for the variable $x$ whenever $p(m)$ is matched with $p(x)$, i.e.
\bea
p(m)/p(x) & \vdash & m/x 
\eea
\end{itemize}
For examples of these operations in action, see the NSPK protocol in Sec.~\ref{Sec:NSPK}.

\para{Abbreviations.} The examples that we saw, as well as most other protocol models, suggest the following natural assumptions: 
\begin{itemize}
\item Whenever a message $p(m)$ to be sent contains an indeterminate $m$ not in sender's prior state, then $m$ must be freshly generated before the send action, which is thus $\Big<\nu m. p(m)\Big>$.
\item Whenever upon the receipt of a message $p(m)$, a variable $x$ takes a value $m$ that was not a part of recipient's prior state, then the recipient must have extracted $m/x$ by pattern-matching from the pattern $p(x)$, so that their receive action must have been $\Big(p(m)/p(x)\Big)$.
\end{itemize}
These assumptions allow simplifying  the notations by writing
\begin{itemize}
\item $\send{p^{(m)}}$ to abbreviate $\Big<\nu m. p(m)\Big>$, and
\item $\recv{p^{(m)}}$ to abbreviate $\Big(p(m)/p(x)\Big)$.
\end{itemize}

\begin{definition}\label{Def:CRAU}
The set of authenticity hyperproperties of the challenge-response protocols between Alice and Bob form the set
\begin{multline}\label{eq:CRAU}
\CRAU_{AB} \ \  = \ \  \Bigg\{\, \PPP\in \WP\WP\Event^{\ast}\ |\ \forall P\in \PPP\ \forall \vec t\in P.\ \\  
\vec t = \vec x\cons \send{c^{(m)}_{AB}}_{A}\cons \vec y \cons \recv{r_{AB}^{(m)}}_{A}\cons \vec z \ \ \implies\  
\vec y = \vec u\cons \recv{c_{AB}^{(m)}}_{B} \cons \vec v \cons \send{r_{AB}^{(m)}}_{B} \cons \vec w \, \Bigg\}
\end{multline}
A challenge-response protocol is said to be secure if Alice's worldviews (the sets of possible states of the world, consistent with her observations) are in 
$\CRAU_{AB}$.
\end{definition}

\para{Explanation.} The security requirement in Def.~\ref{Def:CRAU} is the hyperproperty formalization of the security requirement in Def.~\ref{Def:cr}. More precisely, the elements of $\CRAU$ are just those  hyperproperties that satisfy \eqref{eq:cr-auth}, because
\begin{itemize}
\item $\vec t = \vec x\cons \send{c^{(m)}_{AB}}_{A}\cons \vec y \cons \recv{r_{AB}^{(m)}}_{A}\cons \vec z$ in \eqref{eq:CRAU}
formalizes (A) in \eqref{eq:cr-auth},
\item $\vec y = \vec u\cons \recv{c_{AB}^{(m)}}_{B} \cons \vec v \cons \send{r_{AB}^{(m)}}_{B} \cons \vec w$  in \eqref{eq:CRAU} formalizes (B) in \eqref{eq:cr-auth}.
\end{itemize}
The implication $\mathrm{(A)} \implies \mathrm{(B)}$ thus means that whenever Alice observes
\[
\vec x\cons \send{c^{(m)}_{AB}}_{A}\cons \vec y \cons \recv{r_{AB}^{(m)}}_{A}\cons\vec z,
\]
the actual state of the world is
\[
\vec x\cons \send{c^{(m)}_{AB}}_{A}\cons \overbrace{\vec u\cons \recv{c_{AB}^{(m)}}_{B} \cons \vec v \cons \send{r_{AB}^{(m)}}_{B} \cons \vec w}^{\vec y} \cons \recv{r_{AB}^{(m)}}_{A}\cons\vec z.
\]

\section{Network computations are protocol runs} 

\begin{figure}[!h]
\begin{center}

\includegraphics[width=0.8\linewidth
]{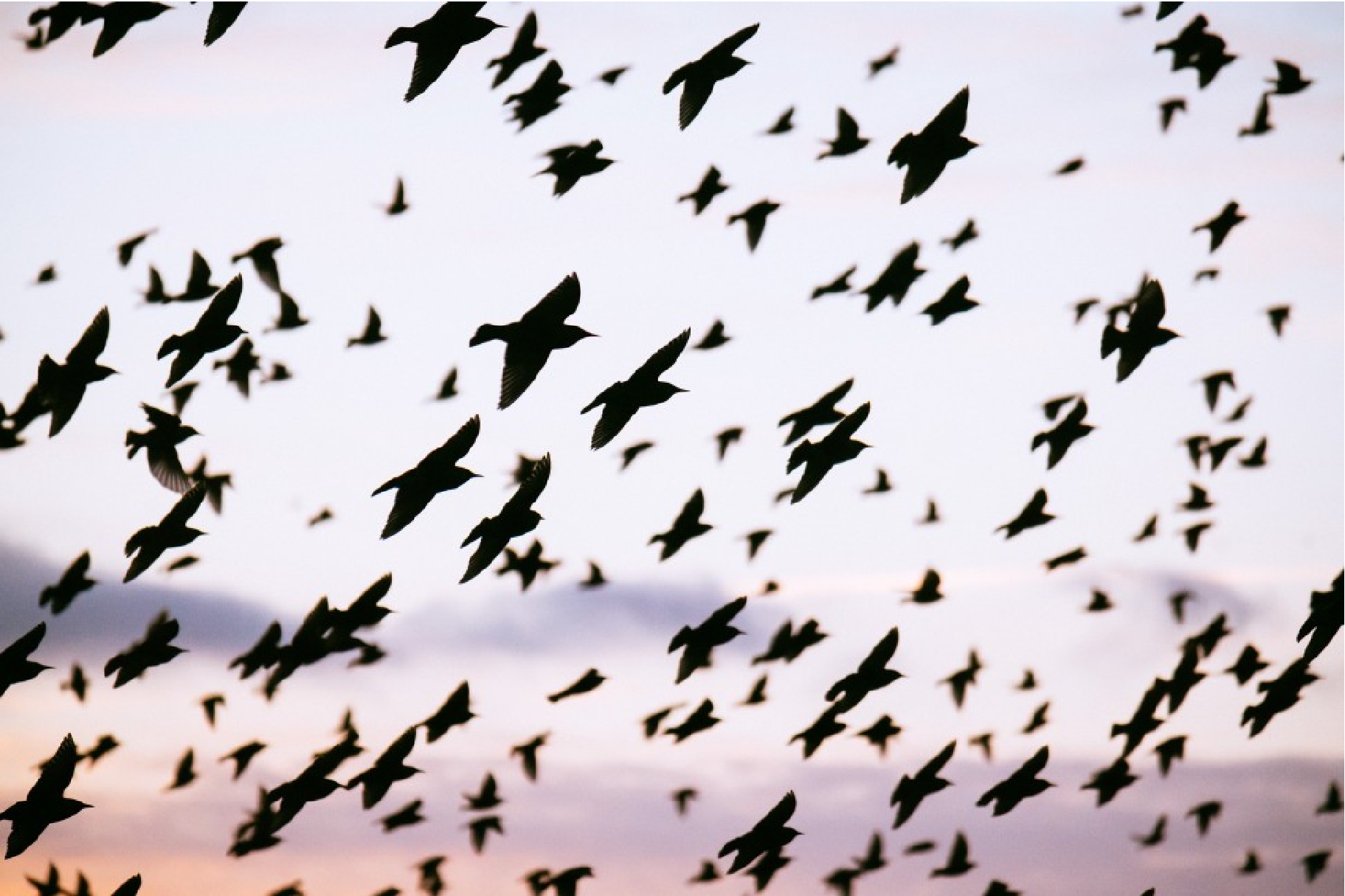}

\vspace{3ex}
\includegraphics[width=0.8\linewidth
]{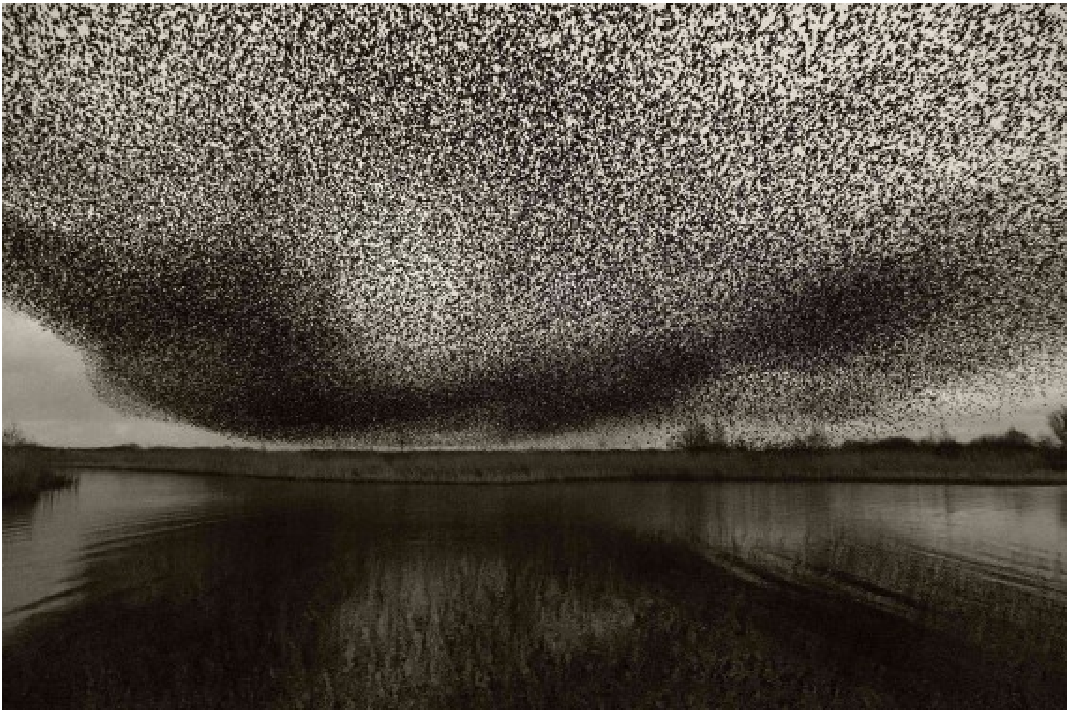}
\caption{Starling murmurations are protocol runs}
\label{Fig:muration}
\end{center}
\end{figure}
Syntactic processes streamline communications. \sindex{syntax} Many semantically different sentences share the same syntactic pattern. An algorithm is a syntactic pattern underlying a family of computations. A single algorithm can be instantiated to many applications,  implemented by many programs and executed on data from many sources. Different computations instantiate the same algorithm just like different sentences instantiate the same grammatical structure. 

A protocol is an algorithm for \emph{network}\/ computations. Each protocol role is assigned a network node\footnote{Sometimes the same actor plays several roles, which requires controlling several network nodes. Centralized control of multiple localities requires another network, with a hub and the channels to the multiple nodes. This lifts to network computation the kind of tricks played when the same actor plays several roles in a movie or in a theater production.}. Each protocol interaction is supported by a network channel. While a centralized computation is implemented by a program executed on a centralized computer\footnote{As explained in Sec.~\ref{Sec:Intro:Summary}, 
computation is centralized when there is a global observer, who sees all computational steps. In standard models of computation, this just means that all computational steps are executed in one place, e.g., by the writing and/or reading head of a state machine, by the rule set of a grammar, etc.}, a network computation is implemented by a troupe of communicating programs, usually cast over a network of computers,  interacting as prescribed by the protocol.  A \emph{protocol run}\/  consists of the runs of the protocol programs, each at their own node, coordinating with each other through communications prescribed by the protocol.  \sindex{protocol!run}

Just like an algorithm is implemented by programs and instantiated to computations, a distributed algorithm is implemented by protocols and instantiated to protocol runs. An example of a protocol run from the animal world is shown in Fig.~\ref{Fig:muration} showing a flock of starlings. The control of each individual flight path is distributed to the individual starlings, and their communication and interaction protocols establish a protocol run that enables the intricate flight patterns of the flock. 

\def\thechapter{8}
\setchaptertoc
\chapter{Information channels and secrecy}\label{Chap:Info}

\section{Channeling information and uncertainty}
\label{Sec:info-prob}

\subsection{Predictions are channels}

\para{Predictions.} Life persists by resisting environmental changes. The environmental changes can be resisted when they are predictable. The predictions are based on the assumption that  the future is like the past: if heavy rains caused floods last year and landslides a couple of years before, then the heavy rain today may cause a flood or a landslide tomorrow. We remember the past to predict the future from the present. If there were twice as many floods as landslides in the past, then the chance of floods in the future is twice greater than the chance of landslides in the future. Probabilistic channels track such chances. If in the past the word ``you'' occurred four times as frequently after the phrase ``I love'' than the word ``watermelon'', then its probability is assumed to be four times higher in the future. That is why language and other information channels are probabilistic. 

\para{From possibility to probability.} 
Predictions are channels from the input type $\Inp = \mbox{Causes}$ to the output type $\Oup = \mbox{Effects}$. The \emph{possible}\/ future effects of the present causes are captured by the \emph{possibilistic}\/ channels \sindex{channel!possibilistic}
\bea\label{eq:caus-chan}
\mbox{Causes}^{+} & \tto f & \WP(\mbox{Effects}).
\eea 
The output $f_{\vec x}\subseteq \mbox{Effects}$ is the set of possible effects of the causal context $\vec x \in \mbox{Causes}^{+}$ at the channel input. In Sec.~\ref{Sec:relchan}\eqref{eq:homchan}, it was convenient to present the possibilistic channel $f\colon\mbox{Causes}^{+} \to \WP(\mbox{Effects})$ in the form $\hhom - f - \colon \mbox{Causes}^{+} \times \mbox{Effects}\to \{0,1\}$, i.e., as \sindex{matrix!relational} \emph{relational matrix}\/ whose entries are the sequents 
\bea
\hhom{\vec x} f y & = & \begin{cases} 1 & \mbox{ if } y\in f_{\vec x},\\
0 & \mbox{ otherwise.}
\end{cases}
\eea
A more informative view of the \emph{probable}\/ future effects of the present causes is captured by the \emph{probabilistic}\/ channels:\sindex{channel!probabilistic}
\bea
\mbox{Causes}^{+} & \tto{\Probb} & \Delta(\mbox{Effects})
\eea
where $\Delta \Oup = \{\mu\colon \Oup \to [0,1]\ |\ \sum_{y\in \Oup}\mu(y) = 1\}$ is the set of probability distributions over $\Oup$. The probabilistic sequents induced by such probabilistic channels are the conditional  probabilities 
\bea\label{eq:stochentry}
\hhom{\vec x}\Probb y & = & \Prob{\vec x}(y).
\eea
Arranged together, they form \emph{stochastic matrices}\/ $\hhom - \Probb - \colon \mbox{Causes}^{+}\times \mbox{Effects}\to [0,1]$. A matrix $M\colon {\Inp \times \Oup}\to [0,1]$ is called \sindex{stochastic matrix| see{matrix, stochastic}} stochastic when $\sum_{y\in \Oup}M_{xy} = 1$ for all $x\in \Inp$. The entries defined in \eqref{eq:stochentry} induce  the bijective correspondence of probabilistic channels and stochastic matrices \sindex{matrix!stochastic}
\beq\label{eq:homchan-prob}
 \prooftree
\Probb\colon \mbox{Causes}^{+} \to \Delta(\mbox{Effects})
 \Justifies
 \hhom- \Probb - \colon \mbox{Causes}^{+}\times \mbox{Effects} \to [0,1]
 \endprooftree
 \eeq
When no confusion is likely, we omit the subscript $\Probb$ from $\hhom{\vec x}\Probb {y}$ and write the conditional probability of the effect $y$ in the causal  context $\vec x$ simply as $\hom{\vec x}{y}$. 

\para{Uncertainty.} \sindex{uncertainty} A prediction is uncertain when a causal context $\vec x$ allows multiple possible or probable effects $y$. This means that there are events $y_0, y_1,\ldots, y_n$ such that the sequents $\hom{\vec x} {y_{i}}$ are true (i.e., equal 1) for all $i=0,1,\ldots n$ in the possibilistic case, or they are all greater than $0$ in the probabilistic case. Each of them may happen, and it is uncertain which one will actually happen. 

\para{Sampling.} \sindex{sampling} The operation of \emph{sampling}\/ a channel consists of entering an input $\vec x$ and observing which of effects $y_{i}$ output, for $i=0,1,\ldots n$. Repeated sampling may produce different outputs. Recording all possibilities, we define the possibilistic channel. Seeing a possible effect multiple times makes no difference, since the repeated occurrences of the effect are not counted. Once all possibilities have been observed, a possibilistic channel yields no new  information. A probabilistic channel is defined by counting the number of times each of the different effects of the same cause occurs and by then calculating their frequencies. Suppose we sample a channel and collect a multiset $D$ of input-output pairs $<\vec x, y>$\footnote{It is not a set but a multiset because we want to count how many times each input-output pair occurs, and retain the multiple occurrences of $<\vec x, y>$ as different elements of $D$, distinguished by suitable counters or parameters.}. The multiset $D$ is partitioned into the disjoint unions
\[
D \   = \  \coprod_{\vec x\in \Inp^+} \vec xD\qquad\qquad\qquad\qquad \vec x D \ = \ \coprod_{y\in \Oup} \vec xDy
\]
where $\vec xD$ is the multiset input-output pairs where the input component is $\vec x$, whereas $\vec xD y$ is the multiset of the occurrences of the input-output pair $<\vec x, y>$. The probability that a context $\vec x$ will cause an effect $y$ can then be approximated by the frequency with which $\vec x$ was followed by $y$ in $D$
\bea\label{eq:freqxy}
\hom{\vec x}{y} & = & \frac{\#\vec xDy}{\#\vec xD}
\eea
where $\#S$ denotes the number of elements of set $S$. The probability that the channel will output $y\in \Oup$ in any context can then be approximated by sum of frequencies:
\beq\label{eq:freqy}
\homm y \ \  = \ \  \frac{\sum_{\vec x\in\Inp^+} \hom{\vec x}{y}}{\sum_{\substack{\vec x\in\Inp^+\\v\in \Oup}} \hom{\vec x}{v}}\ \ =\ \ \frac{\sum_{\vec x\in\Inp^+} \#\vec xDy}{\#D}.
\eeq

\subsection{Probabilistic channel types and structure}
\subsubsection{Cumulative probabilistic channels} 
To capture the dynamics of information transmission, the cumulative view of a probabilistic channel is defined along the lines of Sec.~\ref{Sec:chanlist}. Correspondence~\eqref{eq:cumul} now lifts to
\beq\label{eq:cumul-delta}
\begin{tikzar}[column sep = large]
\Big\{\Inp^{+}\to \Delta\Oup\Big\} \ar[bend left,tail]{r} \& \Big\{\Inp^{\ast}\to \Delta\Oup^{\ast}\Big\} \ar[bend left, two heads]{l}
\end{tikzar}
\eeq
The cumulative sequents are still derived from the single-output sequents by formula \eqref{eq:cumulseq}
\bear
 \hom{x_0 \ldots  x_n}{y_0 \ldots  y_n}\  &= &  \hom{x_0}{y_{0}}\cdot \hom{x_0 x_{1}}{y_{1}}\cdot \hom{x_0 x_{1}x_{2}}{y_{2}}\cdots  \hom{x_0 \ldots  x_n}{y_n}
\eear
This time, however, the sequents are probabilities and not mere relations, which means that they are evaluated in the interval $[0,1]$, and not merely in the set $\{0,1\}$. The meaning of this probabilistic view of formula \eqref{eq:cumulseq} is discussed in Sec.~\ref{Sec:genchan}. To go in \eqref{eq:cumul-delta} the other way around, from  cumulative sequents to single-output sequents, just project away all accumulated outputs except the last one.

\subsubsection{Continuous probabilistic channels}\label{Sec:contchan-prob}\sindex{channel!continuous probabilistic}
\sindex{channel!continuous}
A continuous possibilistic channel, defined in Sec.~\ref{Sec:contchan}, was a function $\gamma\colon \WP\Inp^{\ast}\tto\cup \WP\Oup^{\ast}$, preserving unions and finiteness. A continuous \emph{probabilistic}\/ channel is a function $\varphi\colon\Delta\Inp^{\ast}\tto\Sigma \Delta\Oup^{\ast}$, preserving the convex combinations and the finitely supported distributions. The continuation and the restriction
\beq\label{eq:contchan-delta}
\begin{tikzar}[column sep = large]
\Big\{\Inp^{\ast}\to \Delta\Oup^{\ast}\Big\} \ar[bend left]{r}{\overline{(-)}}\ar[phantom]{r}[description]{\cong} \& \Big\{\Delta\Inp^{\ast}\tto{\Sigma} \Delta\Oup^{\ast}\Big\}  \ar[bend left]{l}{\underline{(-)}}
\end{tikzar}
\eeq
form a bijection again, defined
\bear
\overline{\Probb}_{\mu}(\vec y)\ = \ \sum_{\vec x\in \Inp^\ast} \mu_{\vec x} \Prob{\vec x}(\vec y) &\quad\mbox{ and }\quad & \underline{\Probb}_{\vec x}(\vec y) \ =\ \Prob{\widehat{\vec x}}\left(\vec y\right)
\eear
for the inputs $\mu\in \Delta\Inp^\ast$ and  $\vec x\in \Inp^\ast$, with $\widehat{\vec x}\in \Delta\Inp^\ast$ denoting the point distribution, putting all weight on the point $\vec x$:
\bear
\widehat{\vec x}_{\vec{u}} & = &\begin{cases}
1 & \mbox{ if } \vec x = \vec{u}\\
0 & \mbox{ otherwise.}
\end{cases}
\eear
The stochastic matrix $\hom - - \colon \Inp^\ast \times \Oup^\ast \to [0,1]$ corresponding to the cumulative probabilistic channel $\Probb\colon \Inp^\ast\to \Delta \Oup^\ast$ now extends to the matrix $\hom - - \colon \Delta\Inp^\ast \times \Delta\Oup^\ast \to [0,1]$ 
of sequents
\bear
\hhom{\mu}\varphi {\nu} & = & \sum_{\substack{\vec x\in \Inp^\ast\\
\vec y\in \Oup^\ast}} \mu_{\vec x} \hom{\vec x}{\vec y}\nu_{\vec y}
\eear
induced by the continuous channel 
$\overline{\Probb}\colon \Delta\Inp^\ast\tto{\Sigma} \Delta \Oup^\ast$.

\subsubsection{Example: flipping a coin} 
If a coin is viewed as a probabilistic channel, then sampling the channel means flipping the coin. The channel inputs are the coin flips. Assuming that the coin cannot be manipulated, there is just one way to flip it, which means that the input type $\Inp = \mbox{Causes}$ has a single element, denoted $\ccoin$. A context $\vec x \in \mbox{Causes}^+$ is therefore just a number of coin flips $\vec x = \seq{\ccoin \ccoin \cdots \ccoin}$. Flipping the coin $D$ times produces a sample $\vec y = \seq{y_{1} y_{2}\ldots y_{D}}$, where each $y_{i} \in\{H,T\}$ says whether Heads or Tails came up in the $i$-th flip. Viewed as a possibilistic channel $f\colon \{\ccoin\}^{+}\to \WP\{H,T\}$, the coin supports all cumulative sequents with equal number of inputs and outputs:
\[\hhom{\overbrace{\ccoin\ccoin\cdots\ccoin}^{D}}f{y_{1} y_{2}\ldots y_{D}} = 1
\]
If the coin is biased, one side comes up less often than the other; yet as long as both are possible, all we know is that all possibilistic sequents are true. 

To determine whether the coin is biased (and unfair) or unbiased (and fair), the coin must be viewed as a probabilistic channel $\Probb \colon \{\ccoin_q\}^{+}\to \Delta\{H,T\}$, where $q$ now denotes the bias. It is assumed that flipping does not change the coin and that all flips obey the same probability distribution:
\[
\hhom{\ccoin_q}\Probb{H} \ \ =\ \ q\qquad\qquad  \hhom{\ccoin_q}\Probb{T}\ \ =\ \ 1-q
\] 
The coin is said to be \emph{unbiased}\/ if $\hhom{\ccoin_q}\Probb{H}= \hhom{\ccoin_q}\Probb{T}$, which means that $q=\frac 1 2$. Otherwise, the coin is said to be \emph{biased}. The assumption that flipping the coin does not change it also means that the past outcomes do not impact the present and the future outcomes, which implies
\bear
\hom{\overbrace{\ccoin_q\ccoin_q\cdots\ccoin_q}^{D}}{y_{1} y_{2}\ldots y_{D}} & = & \hom{\ccoin_q}{y_{1}}\cdot \hom{\ccoin_q}{y_{2}}\cdots \hom{\ccoin_q}{y_{D}} \ \ =\ \ q^{h}\cdot(1-q)^{D-h}
\eear
where $h$ is the number of Heads among the outcomes $y_{1}, y_{2}\ldots y_{D}$.  If the coin is unbiased, i.e., $q= 1-q = \frac 1 2$, then 
\bea\label{eq:unbiased}
\hom{\overbrace{\ccoin_{\frac 1 2}\ccoin_{\frac 1 2}\cdots\ccoin_{\frac 1 2}}^{D}}{\vec y} & = & \left(\frac 1 {2}\right)^{D}
\eea
for all $\vec y \in \{H,T\}^{D}$. On the other hand, if the coin is biased, say $q=\frac 3 4$ and $1-q = \frac 1 4$, then 
\bea\label{eq:biased}
\hom{\overbrace{\ccoin_{\frac 3 4}\ccoin_{\frac 3 4}\cdots\ccoin_{\frac 3 4}}^{D}}{\vec y} & = & \frac {3^{h}} {4^{D}}
\eea
where $h$ is again the number of Heads in $\vec y$. In particular
\[
\hom{\overbrace{\ccoin_{\frac 3 4}\ccoin_{\frac 3 4}\cdots\ccoin_{\frac 3 4}}^{D}}{\overbrace{HH\cdots H}^{D}} \ = \  \left(\frac {3} {4}\right)^{D}\qquad\qquad \hom{\overbrace{\ccoin_{\frac 3 4}\ccoin_{\frac 3 4}\cdots\ccoin_{\frac 3 4}}^{D}}{\overbrace{TT\cdots T}^{D}} \ =\  \left(\frac {1} {4}\right)^{D}.
\]
The bias of the coin can be established with arbitrary confidence if sufficiently large sample sets $D$ are available.  

\subsection{Quantifying information}
Before the coin is flipped, it is uncertain whether the Heads or the Tails will come up. After the coin is flipped and the outcome is observed, the uncertainty is eliminated. Information is the flip side of uncertainty: the more information, the less uncertainty. The other way around, \emph{probabilistic channels transmit information by decreasing the uncertainty}. That is why probabilistic channels are the \emph{information channels}. \sindex{channel!information} They were the starting point of Shannon's \emph{theory of information}\/ \cite{AshR:IT,Cover-Thomas:IT,ShannonC:communication}. 

\para{Increase of information is decrease of uncertainty.}\sindex{entropy} Sampling a channel decreases the uncertainty and increases the information. The amount of information transmitted through a channel can be measured by establishing how much uncertainty has been eliminated. Flipping a biased coin eliminates less uncertainty than flipping an unbiased coin. If the biased coin with
\[
\hom{\ccoin_{\frac 3 4}}{H} \ \ =\ \ \frac 3 4\qquad\qquad \hom{\ccoin_{\frac 3 4}}{T}\ \ =\ \ \frac 1 4
\] 
is flipped in a game, betting on Heads is a winning strategy in the long run. If an unbiased coin is flipped, there is no winning strategy and the outcome of the game is completely uncertain. Therefore, flipping a biased coin eliminates less uncertainty and furnishes less information than flipping an unbiased coin. Flipping a totally biased coin, with indistinguishable Heads on both sides, furnishes no information at all.

\para{Entropy.} The uncertainty of a channel, and the information that it transmits, can be measured as the average length of the bitstrings needed to describe the probabilities of its outputs. For example, for the unbiased coin, the probabilities are
\[\hom{\ccoin_{\frac 1 2}}{H}\  =\ \hom{\ccoin_{\frac 1 2}}{T}\ =\  \frac 1 2\ =\ (.1)_2\]
where $(.1)_2$ is written in binary. So writing each of the probabilities requires precisely one binary digit, and the lengths are 
\[\ell\hom{\ccoin_{\frac 1 2}}{H}\  =\  \ell\hom{\ccoin_{\frac 1 2}}{H} = 1\]
Their average length is calculated by weighing the length of each probability by the probability itself, and the amount of information obtained (i.e., the amount of uncertainty eliminated) by flipping the unbiased coin is
\bear
H(\ccoin_{\frac 1 2}) & = & \hom{\ccoin_{\frac 1 2}}{H}\cdot \ell\hom{\ccoin_{\frac 1 2}}{H} + \hom{\ccoin_{\frac 1 2}}{T} \cdot \ell\hom{\ccoin_{\frac 1 2}}{T} \ \ =\ \ \frac 1 2\cdot 1 + \frac 1 2\cdot 1 = 1
\eear
In words, this says that flipping a coin gives 1 bit of information. For 3 flips, \eqref{eq:unbiased} gives the probability of each outcome $\vec y \in \{H,T\}^3$ is
\[
\hom{\ccoin_{\frac 1 2}\ccoin_{\frac 1 2}\ccoin_{\frac 1 2}}{\vec y} \ = \ \frac 1 8\ =\ (.001)_2
\]
Writing the probability $\frac 1 8$ now requires 3 binary digits. The lengths of the probabilities that need to be averaged are this time 
\bear
\ell\hom{\ccoin_{\frac 1 2}\ccoin_{\frac 1 2}\ccoin_{\frac 1 2}}{\vec y} & = & 3
\eear
Averaging is easy again, since the lengths of the probabilities are weighed by the probabilities, which are the same again, just smaller: $\frac 1 8$ rather than $\frac 1 2$. The amount of information obtained (i.e., the amount of uncertainty eliminated) by flipping the unbiased coin 3 times is
\bear
H(\ccoin_{\frac 1 2}\ccoin_{\frac 1 2}\ccoin_{\frac 1 2}) & = & \sum_{\vec y \in \{H,T\}^3} \hom{\ccoin_{\frac 1 2}\ccoin_{\frac 1 2}\ccoin_{\frac 1 2}}{\vec y}\cdot \ell\hom{\ccoin_{\frac 1 2}\ccoin_{\frac 1 2}\ccoin_{\frac 1 2}}{\vec y}  \ \ =\ \ 8 \cdot \frac 1 8\cdot 3 = 3
\eear
Flipping a coin 3 times gives 3 bits of information. It sounds trivial all right, but the underlying general idea is not. The idea is that the amount of information (i.e., the decrease of uncertainty) about the causes $\vec x$  conveyed by the effects $\vec y$ transmitted by a channel $\hom {\vec x}{\vec y}$ is
\bear
H(\vec x) & = & \sum_{\vec y} \hom{\vec x}{\vec y}\ell \hom{\vec x}{\vec y}
\eear
But counting the expected lengths $\ell \hom{\vec x}{\vec y}$ of the binary representations of the probability of every output $\vec y$ is impractical. To simplify it, first note that the length of the binary notation for an integer $r\gt 1$ is the length of the smallest power of 2 above it, i.e., $\ell(r) = \lceil \log_2 r\rceil$. If we go beyond the integers and also allow fractional lengths, we can ignore the ceiling operation $\lceil - \rceil$, and write $\ell(r) = \log_2 r$. Since the logarithms of numbers $p\in (0,1]$ are negative, the fractional lengths become $\ell(p) = -\log_2 p$, the general formula for quantifying the information about $\vec x$ transmitted by the channel outputs $\vec y$ is thus
\bea\label{eq:entropy}
H(\vec x) & = & - \sum_{\vec y} \hom{\vec x}{\vec y}\log_2 \hom{\vec x}{\vec y} 
\eea
This is the most used and studied information measure:  Shannon's \emph{entropy} \cite{ShannonC:communication}. Applied to flipping the biased coin
\[\hom{\ccoin_{\frac 3 4}}{H}\  =\ \frac 3 4\ =\ (.11)_2
\qquad\qquad\qquad \hom{\ccoin_{\frac 3 4}}{T}\  =\ \frac 1 4\ =\ (.01)_2\]
formula \eqref{eq:entropy} tells how many bits of information it furnishes:
\begin{multline*} H(\ccoin_{\frac 3 4}) = -\Big( \hom{\ccoin_{\frac 3 4}}{H}\cdot \log\hom{\ccoin_{\frac 3 4}}{H} + \hom{\ccoin_{\frac 3 4}}{T} \cdot \log \hom{\ccoin_{\frac 3 4}}{T}\Big) \ \ =\\ - \Big( \frac 3 4\cdot \log \frac 3 4 + \frac 1 4\cdot \log \frac 1 4 \Big)\  =\   2 - \frac 3 4 \log_2 3\end{multline*}
Since $\log_2 3\gt \frac 3 2$, flipping this biased coin yields less than 1 bit of information. Flipping  $\ccoin_{\frac 1 2}$ gives $\frac 3 4 \log_2 3 - 1$ more information than flipping $\ccoin_{\frac 3 4}$. Information theory is a symphony of such measurements of information flows. The theories of communication, language, intelligence, data security, all depend on such measurements. As a quantitative model of information transmission, probabilistic channels play a central role in all sciences of  communication \cite{PavlovicD:LangEng,ShannonC:communication}. A rich theory of information flow security is built almost entirely in terms of channels as stochastic matrices \cite{SmithG:book}. In cryptography, simple memoryless channels are composed into complex secure constructs used in secure function evaluation and multi-party computation \cite{EvansD:MPC}.

\para{Examples: erasure channel, oblivious transfer.} A memoryless channel that transmits an input bit $b$ with probability $q$, and erases it otherwise, can be defined by the stochastic matrix $\hom - - \colon \{0,1\} \times \{0,1,e\}\to [0,1]$ comprised of the sequents
\[ \hom b b = q\qquad\qquad\qquad\qquad \hom b e = 1-q\]
This  \emph{erasure}\/ channel is displayed in Fig.~\ref{Fig:erase-OT} on the left. \begin{figure}[!h]
\begin{center}
\includegraphics[height=3cm
]{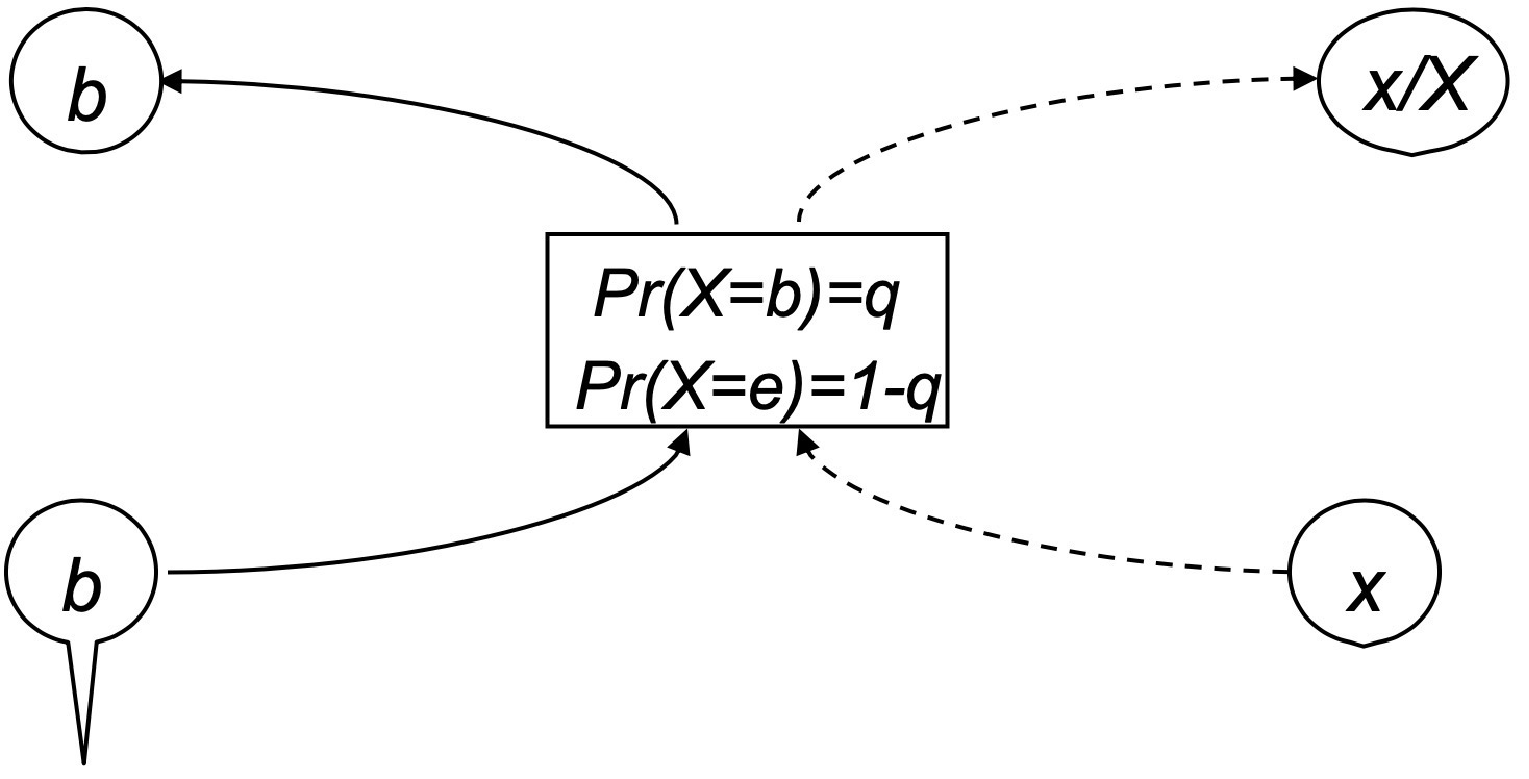}
\hspace{3em}
\includegraphics[height=3cm
]{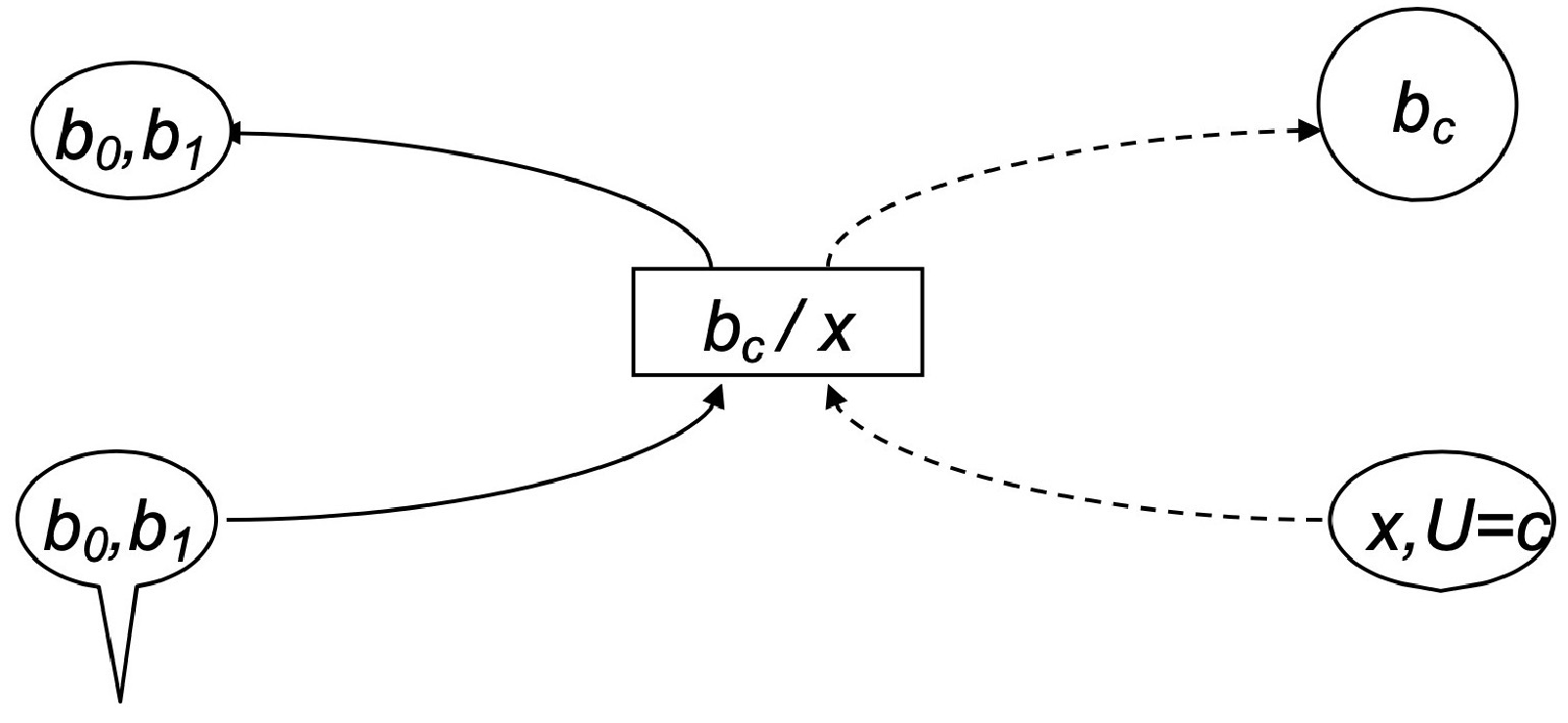}
\caption{The Erasure Channel and the Oblivious Transfer}
\label{Fig:erase-OT}
\end{center}
\end{figure}
On the right is the \emph{oblivious transfer}\/ channel, which transmits one of two bits from Alice to Bob, keeping Alice oblivious which of the two bits was transmitted. The stochastic matrix $\hom - - \colon \{0,1\}^2 \times \{0,1\}\to [0,1]$ and the sequents are
\[ \hom {b_0\, b_1}{b_0} = q\qquad\qquad\qquad\qquad \hom {b_0\, b_1}{b_1}= 1-q\]

\section{Predictive inference}
\label{Sec:info-lang}

\subsection{Example: language models}
As I speak, the context $\vec x$ of words that I have said induces a probability distribution over the next word $y$ that I will say. These \emph{conditional probabilities}\/ \sindex{probability!conditional} are the values of the sequents $\hom {\vec x} y$ describing the channel that I communicate on. We speak (and write) by sampling that distribution. The conditional probability that I sample as I speak may be something like 
\bear
\pderr{what is the next word that I am going to}{write} & = & {\textstyle \frac 1 8}\\
\pderr{what is the next word that I am going to}{say} & = & {\textstyle \frac 1 2}\\
\pderr{what is the next word that I am going to}{swallow} & = & {\textstyle \frac 1 {30}}
\eear
In the usual notation for conditional probabilities, the first line would be written 
\bear \Pr\left({\mbox{write}\, |\, \mbox{what is the next word that I am going to}}\right) & = & \textstyle \frac 1 8
\eear 
and the other two similarly. Here we don't write them like that, but as sequents. Changing standard notations is seldom a good idea, but they are seldom this bad.

\para{Speech is a probabilistic channel.} As you try to understand what I am trying to say, you also sample from a similar distribution and try to predict what I will say. If what I say is predictable, it is easier to understand. If I mumble, predictability enables error correction. When I deviate from your predictions, you may receive new information. The uncertainty increases with the unpredictability, and removing it yields more information. 

We speak the same language if we sample words from the same distribution. That distribution produces the language  that we speak and comprehend. In reality, everyone's distribution is slightly different, and we never speak a completely identical language. But if the distributions are close enough, we can communicate. That is what makes the probabilistic channels that generate streams of words into communication channels. They are the languages that we speak and write. But how do we come to share these channels? How do the probability distributions over the words settle in the minds of all people who use a language?

As we use a language, we sample it, and retain the word frequencies (\ref{eq:freqxy}--\ref{eq:freqy}). The conditional probability of the next word in a given context is the ratio in the form
\bear
\pderr{\small what is the next word that I am going to}{say} & = & \frac{\ppderr{\small what is the next word that I am going to say}}{\ppderr{\small what is the next word that I am going to}}
\eear
The numerator is the total frequency of the whole phrase. The denominator is the total frequency of the prefix, without the last word. The total frequencies are summed up over all contexts, as in \eqref{eq:freqy}|.

\para{Language generation.} The process of language generation proceeds by sampling the next word, adding it to the context, then sampling the next word, adding it to the context, and so on. This is what the chatbots do when they speak to us \cite{PavlovicD:LangEng}. Intuitively, the probability of a phrase that I (or a chatbot) might say should be the product of the probabilities of each of the words from its preceding context:
\bear
\ppderr{what is} & = & \ppderr{what}\cdot \pderr{what}{is}\\
\ppderr{what is the} & = &  \ppderr{what}\cdot \pderr{what}{is}\cdot \pderr{what is}{the}\\
\ppderr{what is the next} & = &  \ppderr{what}\cdot \pderr{what}{is}\cdot \pderr{what is}{the} \cdot \pderr{what is the}{next}
\eear
and so on. Justifying this intuition leads to a general law for reasoning about probabilistic channels.

\subsection{Generative channels and sources}
\label{Sec:genchan}\sindex{channel!generative} \sindex{source} 
\begin{definition}\label{Def:source}
A channel is called \emph{generative}\/ if it is in the form $\Inp^+\to \Delta \Inp$, i.e., its input and output types are the same.  A generative channel makes its underlying type $\Inp$ into a \emph{source}.
\end{definition}

The stochastic matrix of a generative channel is in the form $\hom - - \colon \Inp^\ast\times \Inp^\ast\to [0,1]$. On one hand, it is just a conditional probability distribution over $\Inp^\ast$, written in a matrix form.  On the other hand, when viewed as a channel, the fact that it emits outputs of the same type as its inputs puts it in the generative mode of operation. Starting from a context $\vec x= \seq{x_{1}x_{2}\ldots x_{m}}$, it generates a stream of outputs
\[
\pder {x_{1} \ldots x_{m}}{x_{m+1}},\ \pder {x_{1} \ldots x_{m}x_{m+1}}{x_{m+2}},\ldots , \pder {x_{1} \ldots x_{m}x_{m+1}\ldots x_{m+n-1}}{x_{m+n}}, \ldots
\] 
by appending each new output to the old context and thus forming a new context, which it consumes as the input to generate the next output. Then it appends that output to the previous input, and so on. At each point of the process, the probability of generating a particular cumulative sequence of outputs is completely determined by the single-output generations, as their product:
\bea
\pder {x_{1} \ldots x_{m}}{x_{m+1}\ldots x_{n}}& = & \pder {x_{1} \ldots x_{m}}{x_{m+1}}\cdot \notag\\
&& \pder {x_{1} \ldots x_{m}x_{m+1}}{x_{m+2}}\cdot\notag\\
&& \pder {x_{1} \ldots x_{m}x_{m+1}x_{m+2}}{x_{m+3}}\cdot\label{eq:chain}\\
&& \hspace{5em}\vdots\notag\\
&&\pder {x_{1} \ldots x_{m}x_{m+1}\ldots x_{m+n-1}}{x_{n}}\notag
\eea
The probability that a language channel generates a particular phrase is thus a product of the single step generations, starting from the empty context:
\bear
&& \ppderr{what is the next word that I am going to say}\ = \notag\\
 && \ppderr{what}\cdot \notag\\
 && \pderr{what}{is}\cdot \notag\\
 && \pderr{what is}{the}\cdot\notag\\
&& \pderr{what is the}{next} \cdot\label{eq:next}\\
&& \hspace{5em}\vdots\notag\\
&& \pderr{what is the next word that I am}{going} \cdot\notag\\
&& \pderr{what is the next word that I am going}{to} \cdot\notag\\
&& \pderr{what is the next word that I am going to}{say}\notag
\eear
Generalizing \eqref{eq:chain} yields the general transitivity law of generative channels.

\begin{proposition}\label{Prop:trans}
Every generative channel $\hom - - \colon \Inp^\ast\times \Inp^\ast\to [0,1]$ satisfies
\bea\label{eq:trans}
\hom{\vec x}{\vec y\vec z} & = & \hom{\vec x}{\vec y}\cdot \hom{\vec x\vec y}{\vec z}
\eea
for all $\vec x, \vec y, \vec z \in \Inp^\ast$.
\end{proposition}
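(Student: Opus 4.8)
The plan is to unwind both sides of \eqref{eq:trans} using the definition of the cumulative sequent \eqref{eq:cumulseq}, which already expresses the probability of a string of outputs as a product of single-output generations, and then observe that the two products on the two sides are literally the same finite product, merely grouped differently. I would write $\vec x = \seq{x_1 \ldots x_m}$, $\vec y = \seq{y_1 \ldots y_p}$, and $\vec z = \seq{z_1 \ldots z_q}$, and treat the concatenated string $\vec y \vec z = \seq{y_1 \ldots y_p z_1 \ldots z_q}$.

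First I would expand the left-hand side. By \eqref{eq:cumulseq} applied to the input context $\vec x$ and the output string $\vec y \vec z$,
\[
\hom{\vec x}{\vec y \vec z} \ = \ \prod_{j=1}^{p} \hom{\vec x\, y_1 \cdots y_{j-1}}{y_j} \ \cdot\ \prod_{k=1}^{q} \hom{\vec x\, \vec y\, z_1 \cdots z_{k-1}}{z_k},
\]
simply by splitting the single product over the $p+q$ output positions into the first $p$ factors and the last $q$ factors. Then I would expand the right-hand side: the first factor $\hom{\vec x}{\vec y}$ is, again by \eqref{eq:cumulseq}, exactly $\prod_{j=1}^{p} \hom{\vec x\, y_1 \cdots y_{j-1}}{y_j}$; and the second factor $\hom{\vec x \vec y}{\vec z}$ is, by \eqref{eq:cumulseq} applied now to the input context $\vec x \vec y$ and the output string $\vec z$, exactly $\prod_{k=1}^{q} \hom{(\vec x \vec y)\, z_1 \cdots z_{k-1}}{z_k}$. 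Since $(\vec x\vec y)\, z_1 \cdots z_{k-1} = \vec x\, \vec y\, z_1 \cdots z_{k-1}$ as strings, the two sides coincide term by term, and \eqref{eq:trans} follows.

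There is essentially no obstacle here beyond bookkeeping: the whole content of the proposition is that the cumulative sequent notation \eqref{eq:cumulseq} is, by construction, associative under concatenation of the output side while prepending consumed outputs to the input context. The one point deserving a sentence of care is the degenerate cases --- if $\vec y$ or $\vec z$ is the empty string $\seq{}$, the corresponding product is empty (equals $1$) and the identity reads off trivially, which I would note in passing. I would also remark, for completeness, that the formula is well-defined precisely because \eqref{eq:cumulseq} was adopted as the definition of the cumulative channel from the single-output channel via \eqref{eq:cumul-delta}, so no independent normalization check is needed. The proof is thus three lines of rewriting plus the remark on the empty-string base case.
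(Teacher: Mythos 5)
Your proof is correct and is essentially the argument the paper intends: the proposition is introduced with the words ``generalizing \eqref{eq:chain}'', and your expansion of both sides into the product of single-output generations followed by a regrouping of that product is exactly that generalization, with the empty-string cases handled as you note. The only nitpick is that the expansion you invoke (contexts extended by the previously \emph{generated outputs}) is really the generative-channel chain rule \eqref{eq:chain} rather than \eqref{eq:cumulseq} taken literally (where contexts grow by input prefixes); for a generative channel these coincide in the intended reading, so nothing in your argument changes.
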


\para{Remark.} Viewing the matrix $\hom - - \colon \Inp^\ast\times \Inp^\ast\to [0,1]$ as a listing of conditional probabilities makes equation \eqref{eq:bayes} into a familiar law, equivalent to the Bayesian law, which follows directly from the definition of the conditional probability given by Thomas Bayes:
\bear
\hom{\vec x}{\vec y} &= & \frac{\homm{\vec x\vec y}}{\homm{\vec x}}
\eear

\section{Inverse channels and Bayesian inference}

How can we derive causes from effects? How do we invert a probabilistic channel? --- Finding the unobservable causes of observable effects is the central problem of science. Finding the unknown inputs of a channel corresponding to its known outputs is the central problem of cryptanalysis and one of the main problems of channel security.

What does it mean that the channels $\Probb\colon \Inp^\ast\to \Delta \Oup^\ast$ and $\widetilde{\Probb}\colon \Oup^\ast\to \Delta\Inp^\ast$ are each other's inverses? --- Intuitively, it means that $<\vec x,\vec y>\in \Inp^\ast\times \Oup^\ast$ occurs as the input-output pair of $\Probb$ with the same probability as its reverse $<\vec y,\vec x>\in \Oup^\ast\times \Inp^\ast$ occurs as the input-output pair of $\widetilde\Probb$. But these probabilities are only determined if the probabilities of $\vec x\in \Inp^\ast$ and $\vec y\in \Oup^\ast$ are determined. If we cannot tell how frequently $\vec x\in \Inp^\ast$ occurs, how could we tell the frequency of $<\vec x,\vec y>\in \Inp^\ast\times \Oup^\ast$. But to say that the probabilities of $\vec x\in \Inp^\ast$ and $\vec y\in \Oup^\ast$ are determined means that $\Inp$ and $\Oup$ are sources, in the sense of Def.~\ref{Def:source}.

Let
\[\hhom - \Inp -\colon \Inp^\ast\times \Inp^\ast \to [0,1]\qquad\qquad\qquad
\hhom - \Oup -  \colon \Oup^\ast\times \Oup^\ast \to [0,1]\]
be the (stochastic matrices corresponding to the) generative channels making $\Inp$ and $\Oup$ into sources. (We usually omit the subscripts since the names suggest the types.) The probabilities that the source $\Inp$ may generate $\vec x$ and that $\Oup$ may generate $\vec y$ are respectively
\beq\label{eq:sourceXY}
\homm{\vec x} \ =\ \frac{\ \sum_{\vec u\in \Inp^\ast} \hom{\vec u}{\vec x}}{\sum_{\vec u,\vec w\in \Inp^\ast} \hom{\vec u}{\vec w}} \qquad\qquad\mbox{and}\qquad\qquad \homm{\vec y} \ =\ \frac{\ \sum_{\vec t\in \Oup^\ast} \hom{\vec t}{\vec y}}{\sum_{\vec t,\vec v\in \Oup^\ast} \hom{\vec t}{\vec v}}
\eeq
Recalling that $\Prob{\vec x}(\vec y)=\hom{\vec x}{\vec y}$ and $\widetilde{\Probb}_{\vec y}(\vec x)=\hom{\vec y}{\vec x}$, 
\begin{itemize}
\item the chance that $<\vec x,\vec y>$ will occur as the input-output pair of $\Probb$ is $\homm{\vec x}\cdot\hom{\vec x}{\vec y}$, whereas
\item the chance that $<\vec y,\vec x>$ will occur as the input-output pair of $\widetilde\Probb$ is $\homm{\vec y}\cdot\hom{\vec y}{\vec x}$.
\end{itemize}
The requirement that $\widetilde \Probb$ is the inverse of $\Probb$ is thus
\bea\label{eq:bayes}
\homm{\vec x}\cdot \hom{\vec x}{\vec y} & = & \homm{\vec y}\cdot \hom{\vec y}{\vec x}
\eea
The inverse of the channel $\Probb\colon \Inp^\ast \to \Delta \Oup^\ast$ can thus be defined
\bea\label{eq:bayes-inverse}
\widetilde\Probb\colon \Oup^\ast & \to & \Delta\Inp^\ast\\
\vec y &\mapsto & \hom{\vec y}{\vec x} 
= \frac{\homm{\vec x}\cdot \hom{\vec x}{\vec y}}{\homm{\vec y}}\notag
\eea

\para{Bayesian reasoning.} The definition of inverse probability goes back to Thomas Bayes' famous essay ``on the doctrine of chances'' \cite{BayesT:essay}, albeit in a convoluted form. Formulas \eqref{eq:bayes} and \eqref{eq:bayes-inverse} are often referred to as the \emph{Bayesian law}. \sindex{Bayesian law} Within a single source, without the concept of channel, it was rediscovered in the subsequent centuries by many others, until it was inaugurated as one of the basic tools of statistical inference by Fisher \cite{Fisher:1930}.
 
\subsubsection*{Example: the Monty Hall problem}\sindex{Monty Hall problem}
Monty Hall was the host of a TV game show \emph{Let's make a deal}. In one of his games, the prize was a car, hidden behind one of three doors. You will win the car if you select the correct door. After you have picked one door but before the door is open, Monty Hall opens one of the other doors, where he reveals a goat, and asks if you would like to switch from your current selection to the remaining door. How will your chances change if you switch?

Let us analyze the situation. Suppose that the doors are enumerated by $0,1,2$. Say the door that you have initially chosen is assigned the number 0.
\begin{itemize}
\item The available channel inputs $\Inp = \{C_0, C_1, C_2\}$ correspond to the situations that the $C$ar is behind the door $0, 1$, or $2$.
\item The available channel outputs $\Oup  =  \{G_0, G_1, G_2\}$ correspond to the situations where Monty Hall shows a $G$oat behind the door $0,1$, or $2$.

\item Let $\hom x y$ denote the probability that the car position input $x\in \{C_0, C_1, C_2\}$ causes Monty Hall's goat output $y\in \{G_0, G_1, G_2\}$. The matrix of the induced channel is given in the first of the following three tables.
\begin{center}
\begin{tabular}{|c||c|c|c|}
\hline
$\hom x -$ & $G_0$ & $G_1$ & $G_2$\\ 
\hline\hline
$\hom{C_0}-$ & $0$ & $\frac 1 2$ & $\frac 1 2$\\
\hline
$\hom{C_1}-$ & $0$ & $0$ & $1$\\
\hline
$\hom{C_2}-$ & $0$ & $1$ & $0$\\
\hline
\end{tabular}
\quad
\begin{tabular}{|c||c|c|c|}
\hline
$\homm{x,y}$ & $G_0$ & $G_1$ & $G_2$\\ 
\hline\hline
$C_0$ & $0$ & $\frac 1 6$ & $\frac 1 6$\\
\hline
$C_1$ & $0$ & $0$ & $\frac 1 3$\\
\hline
$C_2$ & $0$ & $\frac 1 3$ & $0$\\
\hline
\end{tabular}
\quad
\begin{tabular}{|c||c|c|c|}
\hline
$ \hom y -$ & $C_0$ & $C_1$ & $C_2$\\ 
\hline\hline
$\hom{G_0}-$ & $\uparrow$ & $\uparrow$ & $\uparrow$ \\
\hline
$\hom{G_1}-$ & $\frac 1 3$ & $0$ & $\frac 2 3$\\
\hline
$\hom{G_2}-$ & $\frac 1 3$ & $\frac 2 3$ & $0$\\
\hline
\end{tabular}
\end{center}

\begin{itemize}
\item If the car is  behind the door 0, then Monty Hall can open the doors 1 and 2 with equal probabilities $\hom{C_0}{G_1} = \hom{C_0}{G_2}  = \frac 1 2$, and show the goat there. This is the first line of the table.

\item If the car is behind the door 1, then Monty Hall can only show the goat behind the door 2, and thus $\hom{C_1}{G_2} = 1$.

\item If the car is behind the door 2, then Monty Hall can only show the goat behind the door 1, and thus $\hom{C_2}{G_1} = 1$.
\end{itemize}

\item The second table displays the joint probability distribution $\homm{x,y}\in \Distt{\Inp \times\Oup}$, which is computed from the first table using the formula $\homm{x,y} = \homm{x}\cdot\hom x y$, where we assume that the chances that the car is behind each of the door are equal, i.e., $\homm {C_0} = \homm{C_1} = \homm{C_2} = \frac 1 3$.

\item The third table displays the chance $\hom y x$ that Monty Hall's goat output $y\in \{G_0, G_1, G_2\}$ was caused by the car position $x\in \{C_0, C_1, C_2\}$ as the input. The computation of the probability that this might be the cause of the observed state is computed by going back along the inverse channel \eqref{eq:bayes-inverse}. The distribution of the car is $\homm x = \frac 1 3$ for $x\in \{C_0, C_1, C_2\}$, whereas the distribution of the goat $\homm y$ is computed by summing up the columns of the middle table. Hence, $\homm{G_0} = 0$ and $\homm{G_1} = \homm{G_2} = \frac 1 2$. Since $\homm{G_0} = 0$, the output $G_0$ can never be observed, and the chances $\hom{G_0}-$ are undefined, which is denoted by $\uparrow$. The second row is
\bear \hom{G_1}{C_0} &= &\frac{\homm{C_0}\cdot \hom{C_0}{G_1}}{\homm{G_1}}\ \ =\ \ \frac{\frac 1 3\cdot \frac 1 2}{\frac 1 2}\ \ =\ \ \frac 1 3\\
 \hom{G_1}{C_1} &= &\frac{\homm{C_1}\cdot \hom{C_1}{G_1}}{\homm{G_1}}\ \ =\ \ \frac{\frac 1 3\cdot 0}{\frac 1 2}\ \ =\ \ 0\\
 \hom{G_1}{C_2} &= &\frac{\homm{C_2}\cdot \hom{C_2}{G_1}}{\homm{G_1}}\ \ =\ \ \frac{\frac 1 3\cdot 1}{\frac 1 2}\ \ =\ \ \frac 2 3
\eear
The third row is computed analogously.
\end{itemize}
For both outputs $G_1$ and $G_2$, displaying the goat behind the door 1 or 2, the chance that the car is behind the remaining door 2, resp. 1, is twice bigger than the chance that it is behind the door 0 that you had chosen initially. You should switch.

\section{Probabilistic noninterference}

\subsection{Sharing information}
\sindex{channel!shared}
Suppose that a probabilistic channel  
$\hom - - \colon\Inp^\ast\times \Oup^\ast \to [0,1]$ is shared by Alice, Bob, and other subjects of type $\Subj$, like in Sec.~\ref{Sec:shared}. Alice can only enter the inputs from her own clearance type $\Inp_A$ \eqref{eq:privview} and only observes the corresponding outputs from her local channel view $\hom -  -  ^A\colon\Inp^\ast\times \Oup^\ast \to [0,1]$
\beq\label{eq:chanview-prob}
\hom {()}{()}^A= 1\qquad \qquad\quad \hom{\vec x\cons u}{\vec y \cons v}^A = \begin{cases}
\hom{\vec x}{\vec y}^A\cdot \hom{\vec x\cons u}{v}_\ast & \mbox{ if } u\propto A\\
\hom{\vec x}{\vec y}^A & \mbox{ otherwise}
 \end{cases}
 \eeq
where $\hom - -_\ast \colon \Inp^+\times \Oup\to[0,1]$ is the matrix of the single-output version of the channel:
\bea
\hom{\vec u}{v}_\ast & = & \sum_{\vec y\in \Oup^\ast} \hom{\vec u}{\vec y \cons v}
\eea
Definition \eqref{eq:chanview-prob} lifts to probabilistic channels the local view from the possibilistic framework in \eqref{eq:chanview}. 

\subsection{Probabilistic worldviews}  \sindex{worldview}
Alice's state of the world $\state_{\vec x_{A}}\colon \Inp^{\ast}\to  \{0,1\}$ was defined in \eqref{eq:stateA} as the characteristic function of the set $\left\{\vec x\colon \Inp^{\ast}\ |\ \vec x\restr_{A} = \vec x_{A}\right\}$. It is the inverse image of Alice's view $\vec x_{A}$ along the purge projection $\restr_{A}\colon \Inp^\ast\to\Inp^\ast$. The states of the world corresponding to the different projections that Alice may observe were collected into the possibilistic  channel $\state \colon \Inp^{\ast}_{A}\to \WP\Inp^{\ast}$,  conveniently written as the matrix  $\hom - - \colon \Inp^{\ast}_{A}\times \Inp^{\ast}\to \{0,1\}$ of entries $
\hom{\vec x_{A}}{\vec x} =  \state_{\vec x_{A}}(\vec x)$. 
In the stochastic case, assuming that $\Inp$ is a source with the frequency distribution $\Probb = \homm - \colon \Inp^\ast\to [0,1]$ defined as in \eqref{eq:sourceXY}, 
the local state of the world is the probabilistic channel $\state \colon \Inp^{\ast}_{A}\to \Delta\Inp^{\ast}$ corresponding to the stochastic matrix $\hom - - \colon \Inp^{\ast}_{A}\times \Inp^{\ast}\to [0,1]$ with the entries
\bea\label{eq:St-prob}
\hom{\vec x_{A}}{\vec x}\ \ =\ \  {\state_{\vec x_{A}}}(\vec x) & = &\begin{cases}\displaystyle \frac{[\vec x]}{[\vec x_{A}]} & \mbox{ if } \vec x\restr_{A}= \vec x_{A}\\[1ex]
0 & \mbox{ otherwise}
\end{cases}
 \eea 

\subsection{Probabilistic interference channels}
\label{Sec:prob-interchan}
By repeatedly entering the same input $\vec x_{A}\colon \Inp_{A}^{\ast}$ like in Sec.~\ref{Sec:interchan} \sindex{interference!channel}\sindex{channel!interference}, but this time not just  recording the possible outputs, but counting their frequencies, Alice can derive the interference version of a probabilistic channel, say in the matrix form
\[\prooftree
\hom - - \colon \Inp^{\ast}\times \Oup^{\ast}
\justifies
\intt^A \hom - - \colon \Inp^{\ast}_{A}\times \Oup^{\ast}
\endprooftree\]
by defining the matrix entries using \eqref{eq:chanview-prob} and \eqref{eq:St-prob} to be
\bea\label{eq:inttprob}
\intt^A \hom{\vec x_{A}} {\vec y}& = & \sum_{\vec x\in \Inp^\ast}  \hom{\vec x_{A}}{\vec x} \cdot \hom{\vec x}{\vec y}^A
\eea

\subsection{Definition and characterization of probabilistic noninterference}
With the above liftings of notions and notations from possibilistic channels to probabilistic channels, the definition of probabilistic noninterference can be written in the same way. We just rewrite Def.~\ref{Def:nonint} from its relational form to stochastic matrices.

\begin{definition}\label{Def:nonint-prob}\sindex{noninterference}
A channel $\hom - -  \colon \Inp^\ast \times\Oup^\ast \to [0,1]$ satisfies the noninterference requirement if for all subjects $A$, it is indistinguishable from the interference channels that it induces
\bea\label{eq:prob-nonint}
\intt^{A} \hom{\vec x_{A}}{\vec y} & = & \hom{\vec x_{A}}{\vec y}.
\eea
\end{definition}

\begin{proposition}\label{Prop:prob-nonint-char}
For every shared channel $\hom - - \colon \Inp^\ast\times\Oup^{\ast}\to [0,1]$ and every subject $A$, the following conditions are equivalent:
\begin{enumerate}[(a)]
\item the noninterference requirement:
\bear
\intt^{A} \hom{\vec x_{A}}{\vec y} & = & \hom{\vec x_{A}}{\vec y}
\eear

\item  for all $\vec x,\vec x' \in \Inp^\ast$ and $\vec y\in \Oup^\ast$, 
\bear
\vec x\restr_{A} = \vec {x'} \restr_{A} &  \Longrightarrow &  \hom{\vec x}{\vec y}^A = \hom{\vec {x'}}{\vec y}^A\eear
 
 \item for all $\vec x\in \Inp^\ast$ and $\vec y\in \Oup^\ast$ 
 \bear \hom{\vec x}{\vec y}^A  &  = &    \hom{\vec x\restr_{A}}{\vec y}\eear
 
\item for all $\vec x, \vec x'\in \Inp^\ast$ there is $\vec u\in \Inp^\ast$ such that  for all $\vec y\in\Oup^\ast$ holds
\bear 
\vec x\restr_{A} = \vec u \restr_{A}  \ \ \wedge \ \ \hom{\vec x}{\vec y}^A = \hom{\vec u}{\vec y}^A \ \ \wedge\ \ \  \vec u \restr_{\oth   A} = \vec x'\restr_{\oth   A} \eear
where $\oth A = \Subj \setminus \{A\}$.
\end{enumerate}
\end{proposition}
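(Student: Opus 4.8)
The plan is to mimic the proof of Proposition~\ref{Prop:nonint-char} essentially verbatim, replacing relational reasoning with stochastic-matrix reasoning, and exploiting that all the relevant constructions --- local views \eqref{eq:chanview-prob}, worldviews \eqref{eq:St-prob}, and interference channels \eqref{eq:inttprob} --- are the probabilistic liftings of their possibilistic counterparts. I would establish the cycle $(d)\Rightarrow(c)\Rightarrow(b)\Rightarrow(d)$ first, and then $(c)\Rightarrow(a)\Rightarrow(c)$, exactly as in the possibilistic case.

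First, for $(d)\Rightarrow(c)$: specialise $(d)$ to $\vec{x}'=\seq{}$. By \eqref{eq:complview} and \eqref{eq:privview} the condition $\vec u \restr_{\oth A} = \seq{}\restr_{\oth A} = \seq{}$ forces $\vec u\restr_A = \vec u$, and the first conjunct then gives $\vec x\restr_A = \vec u$. Feeding this into the second conjunct of $(d)$ yields $\hom{\vec x}{\vec y}^A = \hom{\vec u}{\vec y}^A = \hom{\vec x\restr_A}{\vec y}^A$ for all $\vec y$; and one more application of the $\vec u\restr_A=\vec u$ identity collapses $\hom{\vec x\restr_A}{\vec y}^A$ to $\hom{\vec x\restr_A}{\vec y}$ --- here I would note the small lemma that $\hom{\vec z}{\vec y}^A = \hom{\vec z}{\vec y}$ whenever $\vec z\restr_A = \vec z$, which follows directly from the recursion \eqref{eq:chanview-prob}. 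That gives $(c)$. Next $(c)\Rightarrow(b)$ is immediate: if $\vec x\restr_A = \vec x'\restr_A$ then $\hom{\vec x}{\vec y}^A \stackrel{(c)}{=} \hom{\vec x\restr_A}{\vec y} = \hom{\vec x'\restr_A}{\vec y} \stackrel{(c)}{=} \hom{\vec x'}{\vec y}^A$. For $(b)\Rightarrow(d)$, given $\vec x,\vec x'$ set $\vec u = \left(\vec x\restr_A\right)\cons\left(\vec x'\restr_{\oth A}\right)$ as in \eqref{eq:z}; then $\vec u\restr_A = \vec x\restr_A$ and $\vec u\restr_{\oth A} = \vec x'\restr_{\oth A}$ by the purge definitions, and the first of these plus $(b)$ yields $\hom{\vec x}{\vec y}^A = \hom{\vec u}{\vec y}^A$, so all three conjuncts of $(d)$ hold.

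For the equivalence with $(a)$: $(c)\Rightarrow(a)$ is a direct computation from \eqref{eq:inttprob}, using that by \eqref{eq:St-prob} the weights $\hom{\vec x_A}{\vec x}$ sum to $1$ over the fibre $\{\vec x : \vec x\restr_A = \vec x_A\}$, so
\[
\intt^A\hom{\vec x_A}{\vec y} \ =\ \sum_{\vec x\restr_A = \vec x_A}\hom{\vec x_A}{\vec x}\cdot\hom{\vec x}{\vec y}^A \ \stackrel{(c)}{=}\ \sum_{\vec x\restr_A = \vec x_A}\hom{\vec x_A}{\vec x}\cdot\hom{\vec x_A}{\vec y} \ =\ \hom{\vec x_A}{\vec y}.
\]
For $(a)\Rightarrow(c)$ I would argue as in the possibilistic proof but with one genuine new ingredient (see below): the probabilistic analogue of Lemma~\ref{Lemma:privchan} gives one inequality --- the ``local view sees at least what the restricted input produces'' direction --- and $(a)$ together with the normalisation of the worldview weights forces equality, since a convex combination of the numbers $\hom{\vec u}{\vec y}^A$ over the fibre equals $\hom{\vec x_A}{\vec y}$ and each term dominates $\hom{\vec x_A}{\vec y}$ in the relevant sense, leaving no slack.

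The main obstacle, and the one place where the probabilistic case is not a mechanical transcription of the relational one, is the $(a)\Rightarrow(c)$ step. In the relational setting one uses set inclusions $\supseteq$ and $\subseteq$ which are cheap; in the stochastic setting the corresponding statement is about equality of probability distributions, and I must be careful that the probabilistic version of Lemma~\ref{Lemma:privchan} is actually available (the excerpt states the possibilistic Lemma~\ref{Lemma:privchan} but the probabilistic lift is only implicit). I expect to need a short argument that $\hom{\vec x\restr_A}{\vec y}$ is ``dominated coordinatewise'' by $\hom{\vec x}{\vec y}^A$ in the sense that they can differ only by redistribution among fibres, together with the fact that both $\hom{\vec x_A}{\vec\cdot}$ and each $\hom{\vec u}{\vec\cdot}^A$ are probability distributions on $\Oup^\ast$; then a convex combination of distributions that is itself equal to one of them termwise-dominating it forces all the dominated ones to coincide. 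Making this domination statement precise --- and, if necessary, proving the probabilistic Lemma~\ref{Lemma:privchan} analogue from the recursion \eqref{eq:chanview-prob} by induction on the length of $\vec x$ --- is the part that will require actual care rather than bookkeeping; everything else is a faithful rerun of Proposition~\ref{Prop:nonint-char}.
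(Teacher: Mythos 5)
The paper itself supplies no proof here --- it says only that the argument ``again lifts from the possibilistic cases in Prop.~\ref{Prop:nonint-char}'' and leaves it as an exercise --- so your proposal has to be judged on its own merits. The cycle $(d)\Rightarrow(c)\Rightarrow(b)\Rightarrow(d)$ and the implication $(c)\Rightarrow(a)$ are correct and are faithful transcriptions of the possibilistic proof; your explicit observations that $\hom{\vec z}{\vec y}^A=\hom{\vec z}{\vec y}$ whenever $\vec z\restr_A=\vec z$, and that the worldview weights $\hom{\vec x_A}{\vec x}$ of \eqref{eq:St-prob} sum to $1$ over the fibre, are exactly the bookkeeping those steps need.

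The gap is in $(a)\Rightarrow(c)$, precisely where you flagged uncertainty. Your plan rests on a ``probabilistic Lemma~\ref{Lemma:privchan}'' asserting a coordinatewise domination $\hom{\vec x\restr_A}{\vec y}\le\hom{\vec x}{\vec y}^A$ for all $\vec y$. No such lemma can exist: for fixed $\vec x$, both $\hom{\vec x\restr_A}{-}$ and $\hom{\vec x}{-}^A$ are probability distributions on output strings of length $\length{\vec x\restr_A}$, so a coordinatewise inequality between them already forces equality --- the ``lemma'' you want is condition $(c)$ itself, and invoking it is circular. Without it, all that $(a)$ gives you is one averaged identity per local history: a convex combination of the numbers $\hom{\vec u}{\vec y}^A$ over the fibre $\left\{\vec u\ |\ \vec u\restr_A=\vec x_A\right\}$ equals $\hom{\vec x_A}{\vec y}$. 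An average hitting a target does not force every term to equal the target, since terms may deviate in both directions and cancel. This is exactly the step that comes for free in the relational case, where the ``average'' is a union and a union equal to one of its members does force every member inside it, with Lemma~\ref{Lemma:privchan} supplying the reverse containment. So the one genuinely probabilistic direction is the one your argument does not close; you would need either an additional structural input (for instance exploiting that $(a)$ holds simultaneously for all prefixes $\vec x_A$ and all $\vec y$, combined with the multiplicative recursion \eqref{eq:chanview-prob}), or a different route from $(a)$ to $(b)$. Everything else in your write-up stands.
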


The proof is left as an exercise, since it again lifts from the possibilistic cases in Prop.~\ref{Prop:nonint-char}.

\subsection{Example: Car rental}
Alice and Bob used to visit Smallville together, and they always rented a car. Alice loved to rent a Porsche. They are not so close anymore, but Bob started missing Alice, and he is hoping that their paths might cross again. 

One day Bob visits Smallville again, and goes as always to the car rental office. He requests the Porsche. If the Porsche is available, then Alice is probably not in town. Or maybe she is, but the Porsche was not available when she tried to rent it. Or maybe the Porsche is not available, but someone else has rented it, and Alice is not in town. Or maybe\ldots

"Everything is possible. But how \emph{likely}\/ is it that Alice is in town if the Porsche is available?"

"Hmm. It is much easier to estimate how likely it is that Porsche is available if Alice is in town. No chance! Maybe a small chance. Say 1 in 5. On the other hand, no one ever wants to pay the pricey Porsche rental, so if Alice is \emph{not}\/ in town, then I think the chance that the Porsche is available should be something like 90\%."

Bob's estimates express his beliefs about the extent to which Alice's presence may be the cause of Porsche's unavailability. There are thus two possible causes, and two possible effects:
\begin{itemize}
\item  The causes are the inputs from $\Inp = \{a, \neg a\}$, where $a$ means that Alice is in town, and $\neg a$ that she is not in town.
\item The effects are the outputs from $\Oup  =  \{p, \neg p\}$, where $p$ means that the Porsche is available for rent, and $\neg p$ that it is out, and unavailable.
\item The channel $\hom - - \colon \Inp \times \Oup\to [0,1]$ now expresses Bob's beliefs. It is the leftmost matrix. The first row displays the probable effects of Alice's presence; the second row are the effects of her absence.
\begin{center}
\begin{tabular}{|r||r|r|}
\hline
$\hom x -$ & $p$ & $\neg p$\\ 
\hline\hline
$\hom{a}-$ & $\frac 1 5$ & $\frac 4 5$\\
\hline
$\hom{\neg a}-$ & $\frac 9 {10}$ & $\frac 1 {10}$\\
\hline
\end{tabular}
\qquad\qquad
\begin{tabular}{|r||r|r|}
\hline
$\homm{x,y}$ & $p$ & $\neg p$\\ 
\hline\hline
$a$ & $\frac 1 {10}$ & $\frac 2 5$\\
\hline
$\neg a$ & $\frac 9 {20}$ & $\frac {1}{20}$\\
\hline
\end{tabular}
\qquad\qquad
\begin{tabular}{|r||r|r|}
\hline
$\hom y -$ & $a$ & $\neg a$\\ 
\hline\hline
$\hom{p}-$ & $\frac 2 {11}$ & $\frac{9}{11}$ \\
\hline
$\hom{\neg p}-$ & $\frac 8 9$ & $\frac 1 9$\\
\hline
\end{tabular}
\end{center}

\item Bob has no idea how likely it is that Alice is in town, so he takes the chance to be fifty-fifty, i.e., that the probabilities are $\homm a = \homm{\neg a} = \frac 1 2$. Using the formula $\homm{x,y} = \homm{x}\cdot\hom x y$ again, the second table is obtained from the first table by multiplying all entries with $\frac 1 2$.

\item The third table displays the chance $\hom y x$ that the availability and unavailability of the Porsche, the output $y \in \{p, \neg p\}$ is caused by Alice's presence or absence, which is the input $x\in \{a, \neg a\}$. This chance is derived from the second table using the Bayesian law in \eqref{eq:bayes}. The probabilities $\homm y$ are obtained by summing up the columns of the second table. Hence, $\homm p = \frac{11}{20}$ and $\homm{\neg p} = \frac 9 {20}$.
\end{itemize}
So if the Porsche is available, then the chance that Alice is in town is $\hom p a = \frac 2 {11}$; if the Porsche is not available, then the chance is $\hom {\neg p} a = \frac 8 9$.

\section{Secrecy}

The earliest notion of \sindex{secrecy} secrecy, the \emph{"One Ring to rule them all"}\/ of cryptography, is Shannon's notion of \emph{perfect secrecy} \cite{ShannonC:Secrecy}. We show how this classical concept is subsumed under channel security, and then comment how its modern refinements fit into the same framework.

\subsection{Perfect secrecy}
\sindex{cipher} A \emph{cipher}\/ is a family of functions $\{<\Enc_k, \Dec_k>\}_{k\in \Keys}$, where
\begin{itemize}
\item $\Enc_k: \Msgs \to \Ciph$ are the \emph{encryption}\/ functions, 
\item $\Dec_k : \Ciph \to \Msgs$ are the \emph{decryption}\/ functions,
\item $\Msgs$ is the type of \emph{messages}, or \emph{plaintexts},
\item $\Ciph$ is the type of \emph{ciphertexts}, and
\item $\Keys$ is the type of keys,
\end{itemize}
such that the equations
\bea\label{eq:decrypt}
\Dec_k\left(\Enc_k(m)\right) & = & m
\eea
hold for every $k\in \Keys$ and every $m\in \Msgs$. These equations impose on ciphers the \emph{functional}\/ requirement that every message $m$ enciphered as $c = \Enc_k(m)$ can be deciphered as $m=\Dec_k(c)$, for every $k\in \Keys$. The \emph{security}\/ requirement, on the other hand, is that this is the \emph{only}\/ way to recover the plaintext, i.e., that it cannot be recovered without the key $k$ which allows the user to pick the correct $\Dec_k$ for \eqref{eq:decrypt}. This is the \emph{secrecy}\/ requirement. It can be formalized probabilistically, by requiring that chance that $m$ can be \emph{guessed}\/ from $c=\Enc_k(m)$ is negligible, i.e., close to 0. Ever since Shannon's seminal paper on \emph{"Communication theory of Secrecy Systems"} \cite{ShannonC:Secrecy}, where the formalization was proposed, the assumption was that "the enemy knows the system", in the sense that the attacker Alice is given the family $\{<\Enc_k, \Dec_k>\}_{k\in \Keys}$. Although she is not given the key $k$, and therefore does not know which particular encryption function $\Enc_k$ is used, knowing the whole family of encryption functions allows her to average out and derive a guessing channel
\[\prooftree
\Keys \times \Msgs \tto{\Enc} \Ciph
\justifies
\Ciph \tto{\hom - - } \Dist \Msgs
\endprooftree\]
where the distribution $\hom c - \in \Dist \Msgs$ measures the probability $\hom c m$ that $c = \Enc_k(m)$ by averaging over $\Keys$ the $m$-cylinder of $\Enc_k^{-1}(c)$, i.e., by counting which portion of $\Keys$ enciphers $m$ as $c$. Hence, we have the memoryless probabilistic channel
\beq\label{eq:Shan-chan}
\begin{array}{rccrcl}
\hom - - \ :\ \Ciph & \to & \Dist \Msgs\\
c & \mapsto & \big(\hom c - & :\ \Msgs & \to & [0,1]\big)\\
&&& m & \mapsto & \hom c m = \frac{\#\left\{k\in \Keys\ |\ \Enc_k(m) = c\right\}}{\# \Keys}
\end{array}
\eeq
where $\#X$ denotes the number of elements of the set $X$. This channel captures attacker Alice's view of the cipher. Her goal is to guess $m$ from $c$, and \eqref{eq:Shan-chan} gives the guessing odds. The value $\hom m c$ is the chance that the plaintext is $m$ if the ciphertext is $c$. In other words, we write the conditional probability $\Pr(m | c)$ in the form $\hom c m$. The guessing is thus driven by the probability distribution $\hom c - :\ \Msgs  \to  [0,1]$. If Alice is given a single chance to guess the plaintext, she should probably try an $m$ with the highest probability $\hom c m$; otherwise, if there are more guessing chances, or if the cryptanalytic attack is an ongoing process, then the whole area of betting strategies and information elicitation opens up. 

Shannon's \emph{perfect secrecy}\/ \sindex{secrecy!perfect}
requirement was that the ciphertext tells nothing about the plaintext that wouldn't be known without the ciphertext. It is assumed that the messages are sourced with a publicly known distribution $\homm - :\Msgs\to [0,1]$, which can be, e.g., the word frequency if $\Msgs$ is the lexicon of a language. So $\homm m$ is the \emph{a priori}\/ chance that an unknown word is $m$. The perfect secrecy requirement is that the ciphertext does not add any \emph{posterior}\/ information to this prior knowledge.

\bigskip
\begin{definition}\label{Def:perf-sec} 
A cipher 
 satisfies the\/ \emph{perfect secrecy} requirement if the equation 
\bea\label{eq:perf-sec}
\hom c m & = & \homm m
\eea
holds for all $m\in \Msgs$ and $c\in \Ciph$.
\end{definition} 

The statistical independency of $m$ and $c$, required by \eqref{eq:perf-sec}, can be equivalently expressed using the constant projector
\bea\label{eq:upsilon-bang}
\upsilon ! \ :\ \Dist \Ciph \tto ! 1 \tto\upsilon \Dist\Ciph
\eea
which projects all distributions over $\Ciph$ into the uniform distribution $
\upsilon  : \Ciph  \to [0,1]$ where $\upsilon (c) = \frac 1 {\#\Ciph}$ for all $c\in \Ciph$. Note that $1 = \Dist 1$, as there is a single probability distribution on the singleton.

\bigskip
\begin{proposition}
A cipher 
 satisfies the perfect secrecy requirement from Def.~
 \ref{eq:perf-sec} if and only if the memoryless channel
\bea\label{eq:overhom}
\overline{\hom - -}\ :\ \Dist{\Ciph} & \to & \Dist \Msgs\\
\gamma & \mapsto & \sum_{x\in \Ciph} \gamma(x) \cdot \hom x - \notag
\eea
satisfies the negative security requirement with respect to the projector $\upsilon ! $ from~\eqref{eq:upsilon-bang}, which means that the following triangle commutes
\beq\label{eq:Shan-triangle}
\begin{tikzar}[row sep = 1pc,column sep = 7pc]
\Dist \Ciph \arrow{d}[swap]{!} \arrow{dr}{\overline{\hom - -}}\\
\Dist 1 \arrow{d}[swap]{\upsilon}\& \Dist \Msgs\\
\Dist \Ciph \arrow{ur}[swap]{\overline{\hom - -}}
\end{tikzar}
\eeq
\end{proposition}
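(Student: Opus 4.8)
The plan is to unfold both sides of the claimed equivalence into statements about the entries of the matrices involved and then observe that they literally coincide. First I would compute what the composite $\overline{\hom - -}\circ \upsilon\circ {!}$ in the lower path of triangle~\eqref{eq:Shan-triangle} does to an arbitrary $\gamma\in \Dist\Ciph$. Since $!$ sends $\gamma$ to the unique element of $\Dist 1$, and $\upsilon$ sends that to the uniform distribution $\upsilon\in \Dist\Ciph$ with $\upsilon(c) = \frac 1 {\#\Ciph}$, the composite $\overline{\hom - -}(\upsilon)$ is, by the defining formula \eqref{eq:overhom}, the distribution $m\mapsto \sum_{c\in \Ciph} \frac 1 {\#\Ciph}\cdot \hom c m$. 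The upper path sends $\gamma$ to $\overline{\hom - -}(\gamma) = \left(m\mapsto \sum_{c\in \Ciph}\gamma(c)\cdot \hom c m\right)$. So the triangle \eqref{eq:Shan-triangle} commutes if and only if for every $\gamma\in \Dist\Ciph$ and every $m\in \Msgs$ holds $\sum_{c} \gamma(c)\cdot \hom c m = \sum_{c}\frac 1 {\#\Ciph}\cdot \hom c m$.

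Next I would argue that this family of equalities (quantified over all $\gamma$) is equivalent to the single condition that the map $c\mapsto \hom c m$ is constant in $c$ for each fixed $m$. One direction is immediate: if $\hom c m = \hom{c'} m$ for all $c,c'$, pull the common value out of both sums and use that $\gamma$ and $\upsilon$ are both probability distributions, hence $\sum_c \gamma(c) = \sum_c \frac 1{\#\Ciph} = 1$. For the converse, I would instantiate $\gamma$ at point distributions $\widehat c$ (as in Sec.~\ref{Sec:contchan-prob}): taking $\gamma = \widehat c$ gives $\hom c m = \sum_{c'}\frac 1{\#\Ciph}\hom{c'}m$, and since the right-hand side does not depend on $c$, the value $\hom c m$ is the same for all $c$. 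Thus \eqref{eq:Shan-triangle} holds iff $\hom c m$ is independent of $c$.

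The remaining step is to connect ``$\hom c m$ independent of $c$'' with Shannon's perfect secrecy equation $\hom c m = \homm m$ from Def.~\ref{Def:perf-sec}. If \eqref{eq:perf-sec} holds, then $\hom c m = \homm m$ manifestly does not depend on $c$. Conversely, if $\hom c m$ has a common value $v_m$ for all $c$, I would recover $v_m = \homm m$ using the law of total probability over the ciphertext source: writing $\homm{c,m} = \homm c\cdot \hom c m$ (the same identity used in the Monty Hall and car-rental examples) and summing over $c$ gives $\homm m = \sum_c \homm c\cdot \hom c m = v_m\sum_c \homm c = v_m$, since $\sum_c\homm c = 1$. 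Hence $\hom c m = \homm m$ for all $c,m$, which is \eqref{eq:perf-sec}. Chaining the three equivalences yields the proposition.

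I do not expect a serious obstacle here; the content is essentially bookkeeping. The one point that needs slight care is the step ``independence of $c$ $\Rightarrow$ equals $\homm m$'': it tacitly uses that the ciphertext distribution $\homm -$ is the one induced by the message source through the cipher (so that $\homm m = \sum_c \homm{c,m}$ is a genuine marginal), which is exactly the standing assumption in the definition of the guessing channel \eqref{eq:Shan-chan}. I would state that assumption explicitly at the start of the proof so the marginalization is justified. A secondary small point is that $\Dist 1$ is a singleton, so ${!}$ and $\upsilon$ compose to the honest ``replace by uniform'' operation; I would note this in one line rather than belabor it.
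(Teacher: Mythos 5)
Your proposal is correct and follows essentially the same route as the paper's own proof: evaluate both paths of the triangle, use point distributions to reduce commutativity to the statement that $\hom c m$ is independent of $c$, and then identify the common value with $\homm m$ via the marginalization $\homm m = \sum_c \homm c\cdot\hom c m$. If anything, your version is slightly more complete, since you treat the converse direction (perfect secrecy implies commutativity) and the passage from point distributions to general $\gamma$ explicitly, and you rightly flag the standing assumption that $\homm -$ on ciphertexts is the marginal induced by the message source, which the paper uses tacitly in its final line.
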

 
\bpr
Consider the point distribution $\chi_c\in \Dist \Ciph$ 
\bear
\chi_c:\Ciph & \to &  [0,1]\\
x & \mapsto & \begin{cases} 1 & \mbox{ if } x=c\\
0 & \mbox{ otherwise}
\end{cases}
\eear
for an arbitrary $c\in \Ciph$. The definition of $\overline{\hom - -}$ in \eqref{eq:overhom} implies
\beq\label{eq:point}
\overline{\hom - -}\left(\chi_c\right)(m) \ \ =\ \ \sum_{x\in \Ciph} \chi_c(x) \cdot \hom x m\ \  =\ \  \hom c m
\eeq
where we fix an arbitrary $m\in \Msgs$, for convenience. On the other hand, going down and right around the triangle in \eqref{eq:Shan-triangle}, we get
\beq\label{eq:unif}
\overline{\hom - -}\left(\upsilon\right)(m)  \ \ =  \ \ \sum_{x\in \Ciph} \frac 1 n \cdot  \hom x m \ \ =\  \ \frac 1 n \sum_{x\in \Ciph} \homm{x,m}
\eeq
where $n = \#\Ciph$. The commutativity of the triangle in \eqref{eq:Shan-triangle} for all $\chi_c$, i.e., the equations $\overline{\hom - -}\left(\chi_c\right)(m) = \overline{\hom - -}\left(\upsilon\right)(m)$ thus give $n$ equations in the form
\bear
(1-n)\hom c m + \sum_{\substack{x\in \Ciph\\x\neq c}} \hom x m & = & 0
\eear
one for each $c \in \Ciph$. This system of $n$ equations thus completely determines the values of $\hom c m$ for each of the $n$ ciphertexts $c\in \Ciph$, and for the fixed arbitrary $m\in \Msgs$. Since each $c$ occurs in exactly one equation with a coefficient $1-n$ and with the coefficient 1 in the remaining $n-1$ equations, the system is invariant under the permutations of $c$, which means that all solutions must be equal, i.e.
\[\hom {x} m \ = \  \hom {c} m \ \mbox{ for all } x, c \in \Ciph  \]
This means that the value of $\hom x m$ does not depend on the choice of $x$, and hence
\[ \homm m \ =\ \sum_{x\in \Ciph} \homm x \cdot \hom x m\ =\ 
\hom c m \cdot \sum_{x\in \Ciph} \homm x \ =\ 
\hom c m\].
\epr

\subsection{Computational secrecy}
Security of modern cryptosystems is based on  computational hardness assumptions. Such a system can satisfy the secrecy requirement even if the chance $\hom c m$ to guess $m$ from $c = \Enc_k(m)$ is greater than the chance $\homm m$ to guess $m$ without $c$ --- provided that \emph{computing}\/ the probabilities $\hom c m$ for the various plaintexts $m$ is computationally hard. This means that, although "the enemy [still] knows the system", and the attacker Bob is still given the cipher $\{<\Enc_k, \Dec_k>\}_{k\in \Keys}$, it is computationally unfeasible for him to derive the channel $\hom - - :\Ciph \to \Dist \Msgs$ and actually compute the guessing chances $\hom c m$. So the security requirements on modern crypto systems cannot be stated in terms of this channel.  But although they are stated in different terms, they still fit into the conceptual framework of channel security in an interesting way. The technical details are beyond the scope of this text, but we sketch the big picture.

For a cipher $\{<\Enc_k, \Dec_k>\}_{k\in \Keys}$ derived from a modern crypto system, the functional requirement is still that the plaintext is recovered from the ciphertext by
\bear
\Dec_k(\Enc_k(m)) & = & m
\eear
but now the security requirement, that this should be the \emph{only}\/ way to recover $m$ from $c$, is expressed not by saying that the chance to guess $m$ from $c$ without $k$ is negligible (i.e., not greater than guessing $m$ on its own), but by saying that the chance to find a feasible algorithm $A$ that satisfies
\bea\label{eq:AEnc}
A(\Enc_k(m)) & = & m
\eea
is negligible (i.e., not greater than guessing $k$ and using $A = \Dec_k$). Although echoing the structure of the functional requirement $\Dec_k(\Enc_k(m)) =  m$, this version of the secrecy requirement turns out to be independent of it. 
If we omit the keys and forget about the decryption functions for a moment, we are left with a function $\Enc:\Msgs \to \Ciph$, and the security requirement is simply that finding the inverse images along it is hard. This is the seminal concept of \emph{one-way}\/ functions, the stepping stone into modern cryptography \cite{Diffie-Hellman}. 

\bigskip
\begin{definition}\label{Def:onwy}
A function $\Enc:\Msgs \to \Ciph$ is said to be\/ \emph{one-way} if 
\begin{enumerate}[a)]
\item given $m\in \Msgs$, it is easy to compute $\Enc(m)$, but
\item given $c \in \Ciph$, it is hard to find $m\in \Msgs$ such that $\Enc(m) = c$.
\end{enumerate}
\end{definition}
Formally, 
\begin{itemize}
\item (a) means that there is a feasible algorithm that computes $\Enc:\Msgs \to \Ciph$; but \item (b) means that there is no feasible algorithm that computes a function $A:\Ciph \to \Msgs$ such that $A(\Enc(m)) \in \Enc^{-1}\left(\Enc(m)\right)$ for all $m\in \Msgs$. 
\end{itemize}

\para{Remark.} In general, it is not required that one-way functions $\Enc$ are injective, and the inverse images may not be unique. In the special case of crypto systems, however, the functional requirement  $\Dec_k \circ \Enc_k(m) = m$ implies that $\Enc_k(x) = \Enc_k(y)$ only when $x=y$, and $\Enc_k$ is thus injective. In the general case, the requirement that $A(\Enc(m)) \in \Enc^{-1}\left(\Enc(m)\right)$ is easily seen to be equivalent with the equation $\Enc \circ A \circ \Enc(m) = \Enc(m)$. But $\Enc \circ A \circ \Enc = \Enc$ means that $A\circ \Enc$ is a projector. 

\bigskip
\begin{proposition}
A function $\Enc:\Msgs\to \Ciph$ is one-way if and only if it is feasible, and if for any feasible function $A:\Ciph\to \Msgs$ where $A\circ  \Enc : \Msgs \to \Msgs$ is a projector, the chance that the following triangle commutes is negligible (which we express by $\star$).
\beq\label{eq:onwy-prop}
\begin{tikzar}[row sep = 1pc,column sep = 2pc]
\Dist \Msgs \arrow{d}[swap]{\Dist \Enc} \arrow{drrr}{\Dist \Enc}\\
\Dist \Ciph \arrow{d}[swap]{\Dist A}\& \mbox{\Large$\bigstar$} \&\& \Dist \Ciph\\
\Dist \Msgs \arrow{urrr}[swap]{\Dist\Enc}
\end{tikzar}
\eeq
\end{proposition}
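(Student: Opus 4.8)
The plan is to reduce the ``if and only if'' to the definitions of one-way function and projector, mirroring the perfect-secrecy argument that precedes it, but now with the averaging over keys replaced by the negligibility quantifier $\star$. First I would unpack Definition~\ref{Def:onwy}. Feasibility of $\Enc$ is literally condition (a) and appears verbatim on both sides of the claimed equivalence, so that conjunct can be set aside immediately; the real content is the equivalence between condition (b) and the statement about the triangle \eqref{eq:onwy-prop}. The key observation, already recorded in the Remark, is the translation between the ``inverse-image'' phrasing and an equational one: for any $A\colon \Ciph\to\Msgs$ we have $A(\Enc(m))\in \Enc^{-1}(\Enc(m))$ for all $m$ if and only if $\Enc\circ A\circ \Enc=\Enc$, and the latter equation is exactly the assertion that $A\circ\Enc$ is idempotent on $\Msgs$, i.e.\ a projector. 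So a function $A$ ``inverts $\Enc$'' in the sense relevant to one-wayness precisely when $A\circ\Enc$ is a projector \emph{and} $\Enc\circ A\circ \Enc=\Enc$.

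Next I would push this equation through the $\Dist(-)$ functor. Since $\Dist$ is a functor, $\Enc\circ A\circ \Enc=\Enc$ implies $\Dist\Enc\circ \Dist A\circ \Dist\Enc=\Dist\Enc$, which is exactly commutativity of the triangle \eqref{eq:onwy-prop} (the left leg is $\Dist\Enc$ followed by $\Dist A$, the hypotenuse is $\Dist\Enc$, and the bottom leg is $\Dist\Enc$). Conversely, commutativity of \eqref{eq:onwy-prop} gives $\Dist\Enc\circ \Dist A\circ \Dist\Enc=\Dist\Enc$; evaluating both sides at the point distributions $\widehat m\in\Dist\Msgs$ (which $\Dist\Enc$ sends to $\widehat{\Enc(m)}$, since the image of a point mass under a deterministic map is a point mass) recovers $\widehat{\Enc(A(\Enc(m)))}=\widehat{\Enc(m)}$, hence $\Enc(A(\Enc(m)))=\Enc(m)$ for every $m$. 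Thus, for a fixed feasible $A$ with $A\circ\Enc$ a projector, ``the triangle \eqref{eq:onwy-prop} commutes'' is \emph{equivalent} to ``$A(\Enc(m))\in\Enc^{-1}(\Enc(m))$ for all $m$'', i.e.\ ``$A$ inverts $\Enc$''.

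With that equivalence in hand the proof closes by matching quantifiers. Definition~\ref{Def:onwy}(b), via the formal restatement in the bulleted Remark, says: there is no feasible $A\colon\Ciph\to\Msgs$ with $A\circ\Enc$ a projector that inverts $\Enc$ --- or, in the probabilistic reading appropriate to modern cryptography, the chance of finding such an $A$ is negligible. By the previous paragraph, ``$A$ inverts $\Enc$'' is the same as ``triangle \eqref{eq:onwy-prop} commutes for $A$'', so (b) becomes exactly: for any feasible $A$ with $A\circ\Enc$ a projector, the chance that \eqref{eq:onwy-prop} commutes is negligible. Conjoining the (common) feasibility-of-$\Enc$ clause yields the stated characterization. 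I expect the only delicate point to be the treatment of the $\star$/negligibility wrapper: the deterministic equivalence ``inverts $\iff$ triangle commutes'' holds pointwise in $A$, so it is compatible with any probabilistic quantifier over the choice of $A$, but one should be careful that ``negligible chance of commuting'' on the geometric side is interpreted as ``negligible chance that the adversary outputs an $A$ making it commute'', which is precisely how ``hard to find $m$'' was meant in Definition~\ref{Def:onwy}(b). Everything else is the functoriality of $\Dist$ and the point-distribution evaluation trick already used in the perfect-secrecy proposition.
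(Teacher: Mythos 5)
Your proposal is correct and follows exactly the route the paper intends: the paper offers no explicit proof, but its preceding Remark already contains the key translation $A(\Enc(m))\in\Enc^{-1}(\Enc(m))$ for all $m$ $\iff$ $\Enc\circ A\circ\Enc=\Enc$, which you correctly lift through the functor $\Dist$ and bring back down via point distributions, and you rightly observe that restricting the quantifier to $A$ with $A\circ\Enc$ a projector loses nothing since that condition is automatic for any successful inverter. The only cosmetic remark is that your conjunction ``$A\circ\Enc$ is a projector \emph{and} $\Enc\circ A\circ\Enc=\Enc$'' is redundant (the second condition implies the first), but this does not affect the argument.
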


\para{Comment.} The requirement that the triangles in \eqref{eq:onwy-prop}, indexed by all feasible functions $A:\Ciph\to \Msgs$, \emph{almost never}\/ commute, is of course \emph{not}\/ one of the channel security requirements from Def.~\ref{Def:nonint-prob}.
There, the triangles are required to commute to be secure. The idea is that the projectors $\proj$ in Def.~\ref{Def:nonint-prob} filter out the undesired flows, and let through the desired flows. A function $f$ thus behaves like $f\circ \proj$ precisely when it passes no undesired flows. In triangle \eqref{eq:onwy-prop}, the undesired flows are the composites $A\circ \Enc$ where $A$ reverts the one-way-function $\Enc$. Here the security requirement is that such flows are unlikely. This is expressed by minimizing the incidence of the equation $\Enc\circ A \circ \Enc = \Enc$. The requirement that equations are not satisfied are unusual in algebra, but crucial in security. 

\def\thechapter{9}
\setchaptertoc
\chapter{Privacy}\label{Chap:Privacy}

\section{Idea of privacy}\label{Sec:Intro}

\para{What is privacy?} 
Privacy is the right to be left alone. 

Privacy is \emph{not}\/ a security property. It is the source of security requirements. I claim that the water well is my  private property and that my health record contains my private data. These privacy claims give rise to security requirements: to prevent others from accessing my water well and my health record.

Privacy is the owners' \emph{right}\/ to enjoy their resources with no interference from others. This right protects the owner from access through law-abiding parties, but also declares access from adversaries to be illegal.
Security focuses on the adversarial \emph{process}\/ of defending and attacking some privately owned assets. 

Privacy is the realm of ownership claims:
\begin{itemize}
\item ``The data about me are owned by me.''
\item ``The sea is owned by the King.''
\end{itemize}
Security is the process of implementing privacy claims:
\begin{itemize}
\item ``My data are secured by encryption.'' 
\item ``Our maritime borders are secured by King's Navy.''
\end{itemize}

There is also a feedback loop from security to privacy. If I am unable to secure the water well, then I cannot claim the  right to be left alone with it. For example, if Lion wants to prevent other animals from using a water well, but the well is too big for Lion to defend it, then Lion cannot claim the right to be left alone with the well. He may try to chase away Gazelle drinking on the other side of the well. But by the time Lion gets there, Gazelle has had plenty of water. After trying to secure the well by running around it for a while, Lion will eventually have to give up his privacy claim.

\para{If an asset can be secured, then it can be made private in as many ways as it can be used.} Back in Sec.~\ref{Sec:good-bad}, we distinguished the dependability requirements that can be imposed on a car from the security requirements. The dependability is assured by the engine and the brakes; the security by the locks and the keys. In addition, the privacy requirements can be imposed on the various aspects of the use of the car. Driver's and passengers' privacy can be assured by tinted windows. The right to privately use the car can be secured by the ownership documents. The right to use a car for public transport can be issued as a license to private individuals, for taxi or ride-sharing. A range of privacy arrangements for vehicles is illustrated in \cref{Fig:cars}.
\begin{figure}[!ht]
\begin{minipage}{.48\linewidth}
\centering
\includegraphics[height = 3.9cm
]{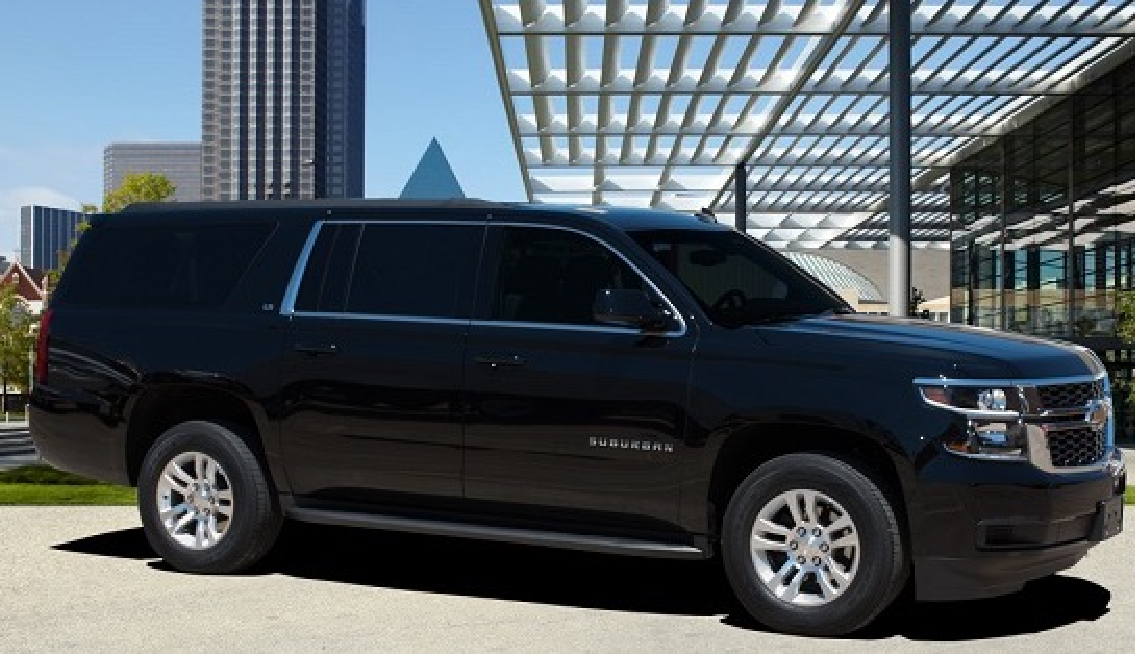}
\end{minipage}
\hspace{.02\linewidth}
\begin{minipage}{.48\linewidth}
\centering
\includegraphics[height = 3.9cm
]{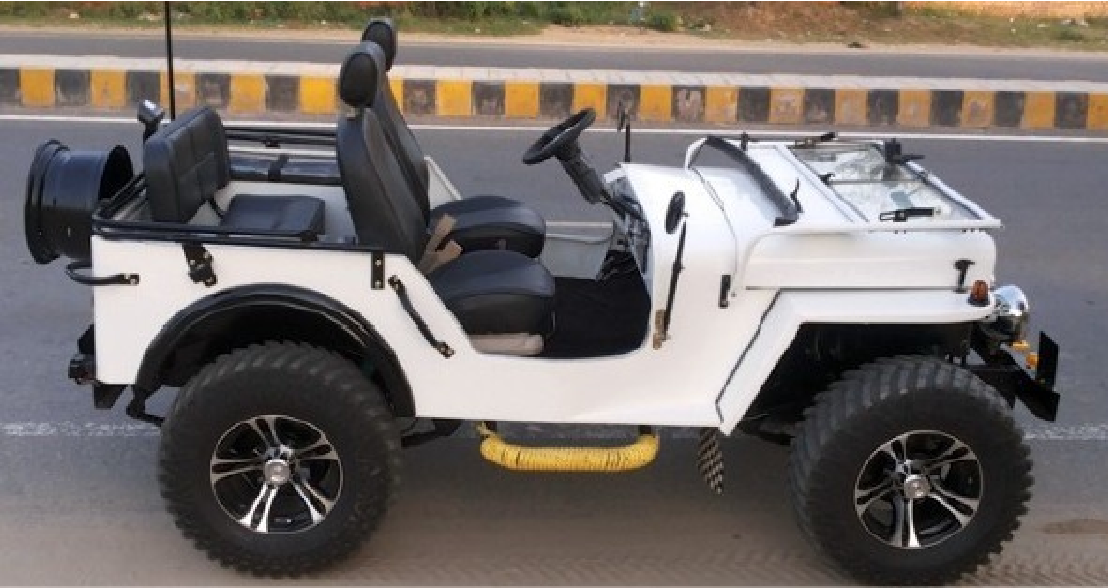}
\end{minipage}

\vspace{1.2\baselineskip}
\begin{minipage}{.48\linewidth}
\centering
\includegraphics[height = 3.8cm
]{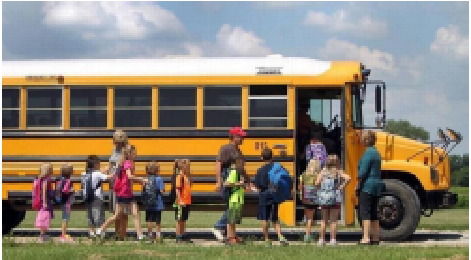}
\end{minipage}
\hspace{.02\linewidth}
\begin{minipage}{.48\linewidth}
\centering
\includegraphics[height = 3.8cm
]{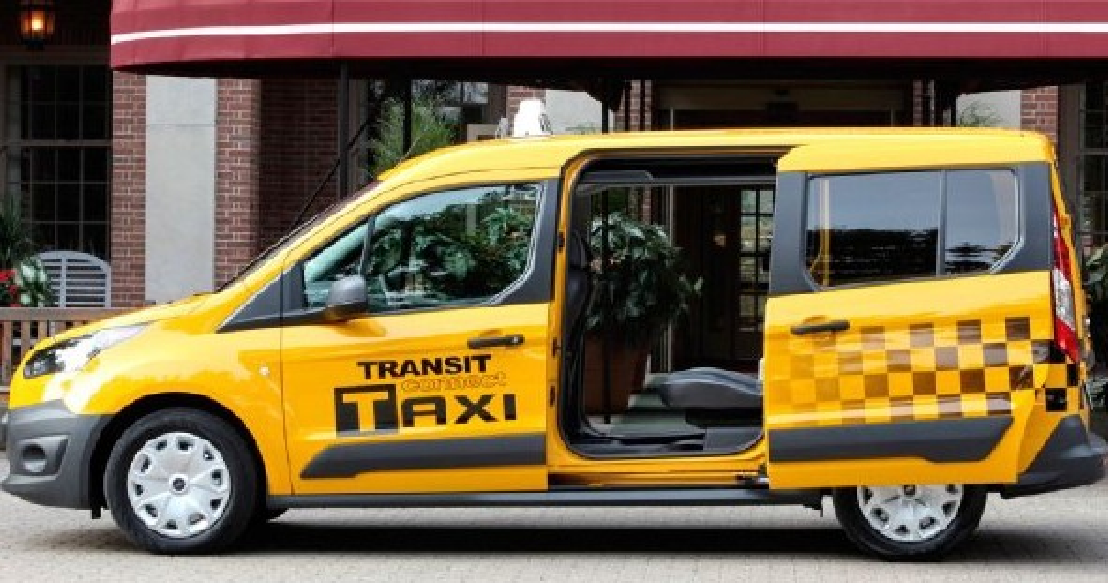}
\end{minipage}
\caption{Different types of vehicles address different privacy requirements}
\label{Fig:cars}
\end{figure}
Fig.~\ref{Fig:NYC} shows a city as a partition of space into public and private. The structure of this partition is determined by the privacy claims and arrangements: ``This is my private apartment and I have the right to be alone'', or ``This is public space and everyone can come here''. The purpose of security tools and the task of security policies is to implement such privacy claims and denials, as well as complex privacy policies.
\begin{figure}[!ht]
\begin{center}
\includegraphics[height = 7cm
]{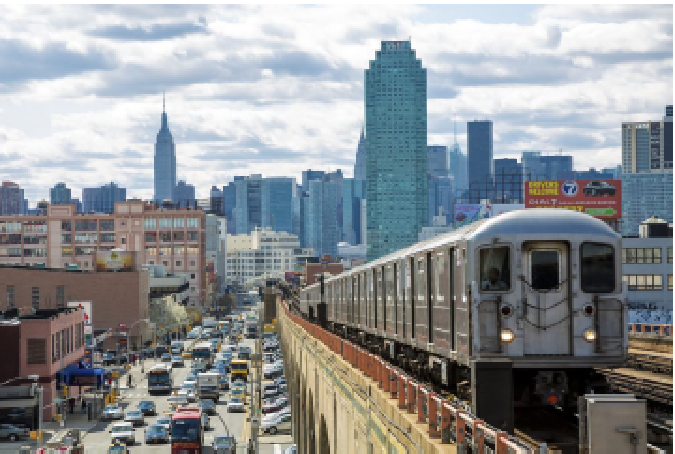}
\end{center}
\caption{The structure on living spaces is induced by the variety of privacy policies}
\label{Fig:NYC}
\end{figure}

\subsection[History]{History of privacy}
The social history is, first of all, the history of shifting demarcation lines between the public  sphere and the private sphere~\cite{BaileyJ:private,orlin2009locating,schoeman1984philosophical}. The communist revolutions usually start by abolishing not only private property, but also private rights. Tyrannies and oligarchies, on the other hand, erode the public ownership and rights, by privatizing resources, services, and social life. The distinction between the realms of 
\begin{itemize}
\item the public: city, market, warfare\ldots and
\item the private: family, household, childbirth\ldots 
\end{itemize}
was established and discussed in antiquity \cite{angela2009rome,burke2000delos}. It was a frequent topic in Greek tragedies: e.g., Sophocles' \emph{Antigone}\/ is torn between her private commitment to her brothers and her public duty to the king. The English word \emph{politics}\/ comes from the Greek word \textgreek{p'olis}, denoting the public sphere; the English word \emph{economy}\/ comes from the Greek word \textgreek{o'ikos}, denoting the private sphere.

\para{Disambiguations.} There are many aspects of privacy, conceptualized in different research communities, and studied by different methods; and perhaps even more aspects that are not conceptualized in research, but arise in practice, and in informal discourse. We carve a small part of the concept, and attempt to model it formally.

As an abstract requirement, privacy is a negative constraint, in the form \emph{"bad stuff (actions) should not happen"}. Note that secrecy and confidentiality are also such negative constraints, whereas authenticity and integrity are positive constraints, in the form \emph{"good stuff (actions) should happen"}. More precisely, authenticity and integrity require that some desirable information flows happen. For example, a message \emph{"I am Alice"}\/ is authentic if it originates from Alice. On the other hand, confidentiality, secrecy and privacy require that some undesirable information flows are prevented: Alice's password should be secret, her address should be confidential, and her health record should be private. 

But what is the difference between privacy, secrecy and confidentiality? Let us move out of the way the difference between the latter two first. In the colloquial usage, they allow subtle distinctions. For example, when a report is confidential, we don't know its contents; but when it is secret, we don't even know that it exists. For the moment, we ignore such differences, and focus on distinguishing privacy as a right from secrecy and confidentiality as security properties. The terms are used interchangeably. 

As for the difference between privacy and secrecy (or confidentiality), it comes in two flavors:
\begin{description}
\item[1)] while secrecy is a property of \emph{information}, privacy is a property of any \emph{asset}\/ or \emph{resource}\/ that can be secured\footnote{Information is, of course, a resource, so it can be private.}; and
\item[2)] while secrecy is a \emph{local}\/ requirement, usually imposed on information flowing through a given channel, privacy is a \emph{global}\/ requirement, usually imposed on all resource requests and provisions, along any channel of a given network. 
\end{description}
Let us take a look at some examples.

\para{Ad (1)} Alice's password is secret, whereas her bank account is private. Bob's health record is private, and it remains private after he shares some of it with Alice. It consists of his health information, but may also contain some of his tissue samples for later analysis. On the other hand, Bob's criminal record is in principle not private, as criminal records often need to be shared, to protect the public. Bob may try to keep his criminal record secret, but even if he succeeds, it will not become private. Any resource can be made private if the access to it can be secured. For example,  we speak of a private water well, private funds, private party if the public access is restricted. On the other hand, when we speak of a secret water well, secret funds, or a secret party, we mean that the public does not have any information about them. A water well can be secret, and it can be private, but not secret.

\para{Ad (2)} To attack Alice's secret password, Mallory eavesdrops at the secure channel to her bank; to attack her bank account, he can of course, also try to steal her credentials, or he can initiate a request through any of the channels of the banking network, if he can access it. Or he can coordinate an attack through many channels. To attack a secret, a cryptanalyst analyzes a given cipher. To attack privacy, a data analyst can gather and analyze data from many surveillance points. To protect secrecy, the cryptographer must assure that the plaintexts cannot be derived from the ciphertexts without the key, for a given cipher. To protect privacy, the network operator must assure that there are no covert channels anywhere in the network. 

\subsection{Attacking and securing privacy}
The general principle of \textbf{Privacy Policy Enforcement} is:
\begin{quote}
\em If a resource can be secured,\\ then it can be made private.
\end{quote}
This extends the general principle of \textbf{Security Policy Enforcement}, which states that
\begin{quote}
\em A resource can be secured  \\if and only if its value is greater than the cost of securing it.
\end{quote}
In practice, though, things get complicated due to the dynamic interactions across the layers of the \emph{policy enforcement stack}, displayed in \cref{Fig:PEStack}. 
\begin{figure}[!ht]
\begin{center}
\newcommand{\Sentences}{Privacy}
\newcommand{\Words}{Security}
\newcommand{\Alphabet}{Cryptography}
\newcommand{\notng}{}
\def\JPicScale{.4}
\ifx\JPicScale\undefined\def\JPicScale{1}\fi
\psset{unit=\JPicScale mm}
\psset{linewidth=0.3,dotsep=1,hatchwidth=0.3,hatchsep=1.5,shadowsize=1,dimen=middle}
\psset{dotsize=0.7 2.5,dotscale=1 1,fillcolor=black}
\psset{arrowsize=1 2,arrowlength=1,arrowinset=0.25,tbarsize=0.7 5,bracketlength=0.15,rbracketlength=0.15}
\begin{pspicture}(0,0)(250,80)
\psline[linewidth=0.6](10,60)(250,60)
\psline[linewidth=0.6](30,40)(210,40)
\psline[linewidth=0.6](70,20)(190,20)
\psline[linewidth=0.6](70,40)(70,20)
\psline[linewidth=0.6](190,40)(190,20)
\psline[linewidth=0.6](210,60)(210,40)
\psline[linewidth=0.6](30,60)(30,40)
\psline[linewidth=0.6](10,80)(10,60)
\psline[linewidth=0.6](250,80)(250,60)
\rput(130,70){\Sentences}
\rput(130,50){\Words}
\rput(130,30){\Alphabet}
\psline[linewidth=0.6](10,80)(250,80)
\rput(70,-15){\notng}
\end{pspicture}
\vspace{-1\baselineskip}
\caption{Privacy is built upon security, which is built upon cryptography}
\label{Fig:PEStack}
\end{center}
\end{figure}
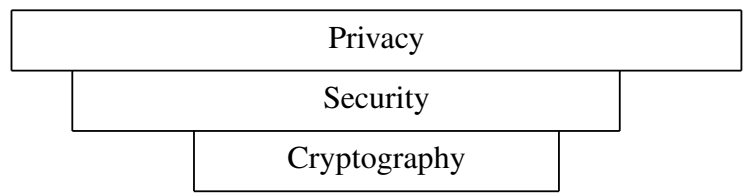
The complications arise from the fact that the strength of the foundations does not in general guarantee the strength of the buildings that they carry. The strong foundations are necessary, but not sufficient for a strong building. In security, this leads to the well-known phenomenon that security protocols may be broken without breaking the underlying crypto \cite{SimmonsG:TMN}. In privacy, the analogous phenomenon is that privacy protocols may be broken without breaking the underlying security protocols \cite{PavlovicD:CathyFest}. The idea is displayed in Figures \ref{Fig:breaking-sec} and \ref{Fig:breaking-priv}.
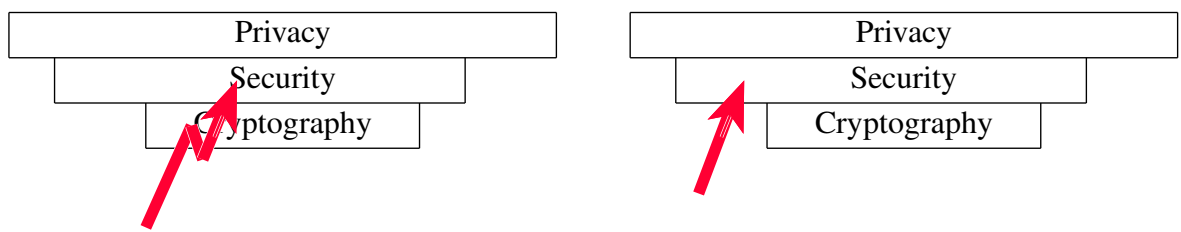
\begin{figure}[!b]
\begin{minipage}{.45\linewidth}
\newcommand{\Sentences}{Privacy}
\newcommand{\Words}{Security}
\newcommand{\Alphabet}{Cryptography}
\newcommand{\notng}{}
\def\JPicScale{.3}
\ifx\JPicScale\undefined\def\JPicScale{1}\fi
\psset{unit=\JPicScale mm}
\psset{linewidth=0.3,dotsep=1,hatchwidth=0.3,hatchsep=1.5,shadowsize=1,dimen=middle}
\psset{dotsize=0.7 2.5,dotscale=1 1,fillcolor=black}
\psset{arrowsize=1 2,arrowlength=1,arrowinset=0.25,tbarsize=0.7 5,bracketlength=0.15,rbracketlength=0.15}
\begin{pspicture}(0,0)(250,80)
\psline[linewidth=0.6](10,60)(250,60)
\psline[linewidth=0.6](30,40)(210,40)
\psline[linewidth=0.6](70,20)(190,20)
\psline[linewidth=0.6](70,40)(70,20)
\psline[linewidth=0.6](190,40)(190,20)
\psline[linewidth=0.6](210,60)(210,40)
\psline[linewidth=0.6](30,60)(30,40)
\psline[linewidth=0.6](10,80)(10,60)
\psline[linewidth=0.6](250,80)(250,60)
\rput(130,70){\Sentences}
\rput(130,50){\Words}
\rput(130,30){\Alphabet}
\psline[linewidth=0.6](10,80)(250,80)
\newrgbcolor{userLineColour}{1 0 0.2}
\psline[linewidth=5,linecolor=userLineColour,border=0.3](70,-15)(90,30)
\newrgbcolor{userLineColour}{1 0 0.2}
\psline[linewidth=5,linecolor=userLineColour,border=0.3](90,30)(95,15)
\newrgbcolor{userLineColour}{1 0 0.2}
\psline[linewidth=5,linecolor=userLineColour,border=0.3](95,15)(105,40)
\newrgbcolor{userLineColour}{1 0 0.2}
\psline[linewidth=1,linecolor=userLineColour,border=0.3,arrowsize=15 2,arrowlength=1.3]{->}(100,25)(110,50)
\end{pspicture}
\end{minipage}
\hspace{.05\linewidth}
\begin{minipage}{.45\linewidth}
\newcommand{\Sentences}{Privacy}
\newcommand{\Words}{Security}
\newcommand{\Alphabet}{Cryptography}
\newcommand{\notng}{}
\def\JPicScale{.3}
\ifx\JPicScale\undefined\def\JPicScale{1}\fi
\psset{unit=\JPicScale mm}
\psset{linewidth=0.3,dotsep=1,hatchwidth=0.3,hatchsep=1.5,shadowsize=1,dimen=middle}
\psset{dotsize=0.7 2.5,dotscale=1 1,fillcolor=black}
\psset{arrowsize=1 2,arrowlength=1,arrowinset=0.25,tbarsize=0.7 5,bracketlength=0.15,rbracketlength=0.15}
\begin{pspicture}(0,0)(250,80)
\psline[linewidth=0.6](10,60)(250,60)
\psline[linewidth=0.6](30,40)(210,40)
\psline[linewidth=0.6](70,20)(190,20)
\psline[linewidth=0.6](70,40)(70,20)
\psline[linewidth=0.6](190,40)(190,20)
\psline[linewidth=0.6](210,60)(210,40)
\psline[linewidth=0.6](30,60)(30,40)
\psline[linewidth=0.6](10,80)(10,60)
\psline[linewidth=0.6](250,80)(250,60)
\rput(130,70){\Sentences}
\rput(130,50){\Words}
\rput(130,30){\Alphabet}
\psline[linewidth=0.6](10,80)(250,80)
\newrgbcolor{userLineColour}{1 0 0.2}
\psline[linewidth=5,linecolor=userLineColour,border=0.3](40,0)(55,40)
\newrgbcolor{userLineColour}{1 0 0.2}
\psline[linewidth=1,linecolor=userLineColour,border=0.3,arrowsize=15 2,arrowlength=1.3]{->}(50,25)(60,50)
\rput(90,-35){\notng}
\rput(80,-25){\notng}
\rput(70,-15){\notng}
\end{pspicture}
\end{minipage}
\caption{Security protocols can be attacked, by breaking the underlying crypto, but also directly}
\label{Fig:breaking-sec}
\end{figure}
\begin{figure}[!t]
\begin{minipage}{.45\linewidth}
\newcommand{\Sentences}{Privacy}
\newcommand{\Words}{Security}
\newcommand{\Alphabet}{Cryptography}
\newcommand{\notng}{}
\def\JPicScale{.3}
\ifx\JPicScale\undefined\def\JPicScale{1}\fi
\psset{unit=\JPicScale mm}
\psset{linewidth=0.3,dotsep=1,hatchwidth=0.3,hatchsep=1.5,shadowsize=1,dimen=middle}
\psset{dotsize=0.7 2.5,dotscale=1 1,fillcolor=black}
\psset{arrowsize=1 2,arrowlength=1,arrowinset=0.25,tbarsize=0.7 5,bracketlength=0.15,rbracketlength=0.15}
\begin{pspicture}(0,0)(250,80)
\psline[linewidth=0.6](10,60)(250,60)
\psline[linewidth=0.6](30,40)(210,40)
\psline[linewidth=0.6](70,20)(190,20)
\psline[linewidth=0.6](70,40)(70,20)
\psline[linewidth=0.6](190,40)(190,20)
\psline[linewidth=0.6](210,60)(210,40)
\psline[linewidth=0.6](30,60)(30,40)
\psline[linewidth=0.6](10,80)(10,60)
\psline[linewidth=0.6](250,80)(250,60)
\rput(130,70){\Sentences}
\rput(130,50){\Words}
\rput(130,30){\Alphabet}
\psline[linewidth=0.6](10,80)(250,80)
\newrgbcolor{userLineColour}{1 0 0.2}
\psline[linewidth=5,linecolor=userLineColour,border=0.3](190,6.25)(165,51.25)
\newrgbcolor{userLineColour}{1 0 0.2}
\psline[linewidth=5,linecolor=userLineColour,border=0.3](159.69,37.01)(146.42,60.44)
\newrgbcolor{userLineColour}{1 0 0.2}
\psline[linewidth=1,linecolor=userLineColour,border=0.3,arrowsize=15 2,arrowlength=1.3]{->}(155,46.25)(141.73,69.68)
\newrgbcolor{userLineColour}{1 0 0.2}
\psline[linewidth=5,linecolor=userLineColour,border=0.3](160.62,36.25)(165,51.25)
\rput(70,-15){\notng}
\end{pspicture}
\end{minipage}
\hspace{.05\linewidth}
\begin{minipage}{.45\linewidth}
\newcommand{\Sentences}{Privacy}
\newcommand{\Words}{Security}
\newcommand{\Alphabet}{Cryptography}
\newcommand{\notng}{}
\def\JPicScale{.3}
\ifx\JPicScale\undefined\def\JPicScale{1}\fi
\psset{unit=\JPicScale mm}
\psset{linewidth=0.3,dotsep=1,hatchwidth=0.3,hatchsep=1.5,shadowsize=1,dimen=middle}
\psset{dotsize=0.7 2.5,dotscale=1 1,fillcolor=black}
\psset{arrowsize=1 2,arrowlength=1,arrowinset=0.25,tbarsize=0.7 5,bracketlength=0.15,rbracketlength=0.15}
\begin{pspicture}(0,0)(250,80)
\psline[linewidth=0.6](10,60)(250,60)
\psline[linewidth=0.6](30,40)(210,40)
\psline[linewidth=0.6](70,20)(190,20)
\psline[linewidth=0.6](70,40)(70,20)
\psline[linewidth=0.6](190,40)(190,20)
\psline[linewidth=0.6](210,60)(210,40)
\psline[linewidth=0.6](30,60)(30,40)
\psline[linewidth=0.6](10,80)(10,60)
\psline[linewidth=0.6](250,80)(250,60)
\rput(130,70){\Sentences}
\rput(130,50){\Words}
\rput(130,30){\Alphabet}
\psline[linewidth=0.6](10,80)(250,80)
\newrgbcolor{userLineColour}{1 0 0.2}
\psline[linewidth=5,linecolor=userLineColour,border=0.3](243.75,33.75)(226.42,60.44)
\newrgbcolor{userLineColour}{1 0 0.2}
\psline[linewidth=1,linecolor=userLineColour,border=0.3,arrowsize=15 2,arrowlength=1.3]{->}(235,46.25)(221.73,69.68)
\rput(70,-15){\notng}
\end{pspicture}
\end{minipage}
\caption{Privacy protocols can be attacked, by breaking the underlying security, but also directly}
\label{Fig:breaking-priv}
\end{figure}
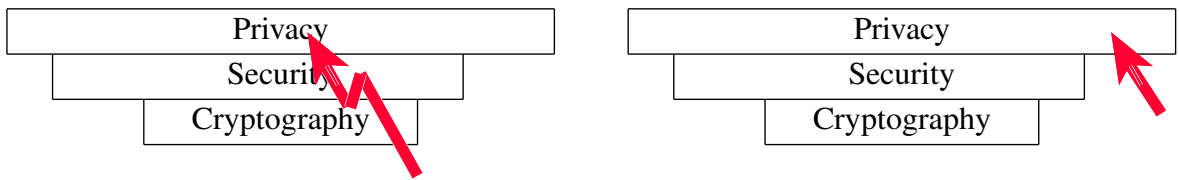
The privacy attacks that leave the underlying security intact are implemented through \textbf{deceit}. The methods of deception allow attackers to
\begin{itemize}
\item pull private data and resources, or 
\item push private actions
\end{itemize}
without breaking the security measures that make them private.

\section{Surveillance and sousveillance} \label{Sec:veillance}

\subsection{The paradox of data as a resource} 
Private data are a privately owned resource. How can this be, if data are easy to copy, whereas a resource was defined in Sec.~\ref{Sec:what-resource} by being hard to come by? The reason is that some data must be protected as private in order to protect some physical or financial resources as private. 

\para{Examples.} A password or a cryptographic key are easily copied, but should be kept private if they are used to protect private funds in a bank account. The intrinsic value of Alice's health record is to provide the information needed by Alice's physician Bob to treat her if she becomes ill, but the extrinsic value of Alice's health record is to help Alice's insurers withdraw her health coverage just before she may need it, and also perhaps to tell burglars when Alice might be in a hospital. 

\para{Private data are thus kept private not because of their utility for the owner, but because of their potential value for others, as means for attacks.} If private data are effectively secured, they become hard to come by for the attackers, but easy to use in attacks when they are available. Paradoxically, the privately owned data are resources for the non-owners.

\subsection{The two sides of data security}

\para{The \emph{Digital Rights Management (DRM)}\/ technologies} are tasked with controlling the distribution of specified digital data. While the individually owned data are also digitized and private, the DRM technologies have been developed mainly to protect the digital assets owned and marketed by organizations, and in some cases by governments. The marketed digital assets include software and media, such as digital music, movies, and digital text, such as news, study materials, and popular literature. 

\para{The \emph{surveillance}\/ technologies} are tasked with enabling and facilitating data collection about the behaviors of specified subjects. In market surveillance, the subject behaviors usually include the marketing habits and interests, which are used for targeted advertising, influence campaigns, sponsored search, and recommender systems in general. In investigative surveillance, the subject behaviors include a wide range of subject behaviors and contacts, often open for inclusion of any correlatable information.

It is not hard to see that \textbf{the surveillance tasks and the DRM tasks are two sides of the same coin}: the former strives to establish a data flow, the latter to prevent data from flowing. A DRM copy protection may be used to prevent surveillance; surveillance techniques may be used to break or disable a DRM protection. 

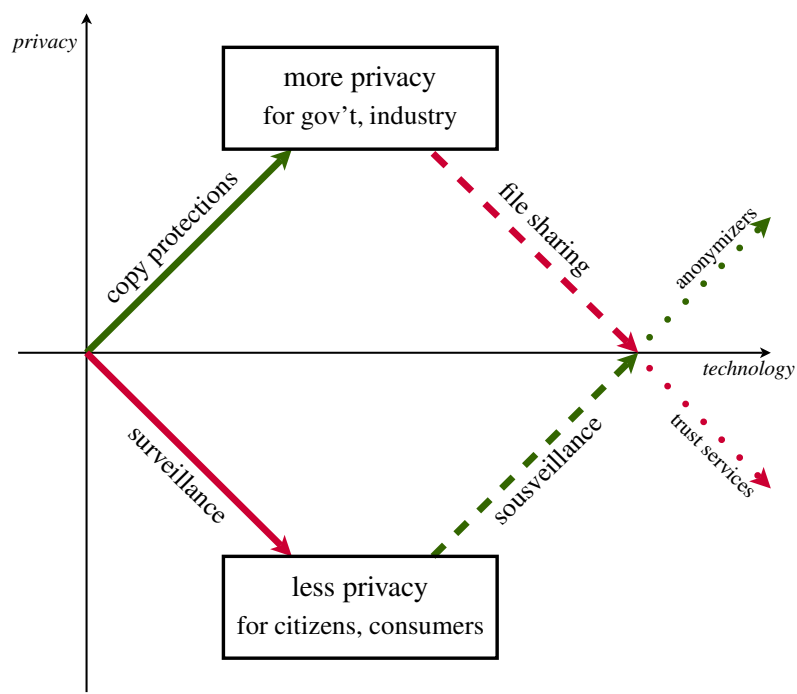
\begin{figure}[!ht]
\begin{center}
\newcommand{\priv}{\scriptsize \it privacy}
\newcommand{\tech}{\scriptsize \it technology}
\newcommand{\more}{\begin{minipage}[c]{3cm}\centering{\small more privacy}\\{\footnotesize for gov't, industry}\end{minipage}}
\newcommand{\less}{\begin{minipage}[c]{3.3cm}\centering{\small less privacy}\\{\footnotesize for citizens, consumers}\end{minipage}}
\newcommand{\surveillance}{\footnotesize surveillance}
\newcommand{\sousveillance}{\footnotesize sousveillance}
\newcommand{\DRM}{\footnotesize copy protections}
\newcommand{\filesharing}{\footnotesize file sharing}
\newcommand{\FISA}{}
\newcommand{\arabspring}{}
\newcommand{\Doubleclick}{}
\newcommand{\cyberanarchy}{}
\newcommand{\anon}{\scriptsize anonymizers}
\newcommand{\trust}{\scriptsize trust services}
\def\JPicScale{.9}
\ifx\JPicScale\undefined\def\JPicScale{1}\fi
\psset{unit=\JPicScale mm}
\psset{linewidth=0.3,dotsep=1,hatchwidth=0.3,hatchsep=1.5,shadowsize=1,dimen=middle}
\psset{dotsize=0.7 2.5,dotscale=1 1,fillcolor=black}
\psset{arrowsize=1 2,arrowlength=1,arrowinset=0.25,tbarsize=0.7 5,bracketlength=0.15,rbracketlength=0.15}
\begin{pspicture}(0,0)(110,100)
\psline{->}(0,50)(110,50)
\psline{->}(10,0)(10,100)
\rput[t](106.88,48.75){\tech}
\rput[r](8.75,95.62){\priv}
\newrgbcolor{userLineColour}{0.2 0.4 0}
\psline[linewidth=1,linecolor=userLineColour]{->}(10,50)(40,80)
\newrgbcolor{userLineColour}{0.8 0 0.2}
\psline[linewidth=1,linecolor=userLineColour]{->}(10,50)(40,20)
\newrgbcolor{userLineColour}{0.2 0.4 0}
\psline[linewidth=1,linecolor=userLineColour,linestyle=dashed,dash=2.5 2.5]{->}(60.62,20)(90.62,50)
\newrgbcolor{userLineColour}{0.8 0 0.2}
\psline[linewidth=1,linecolor=userLineColour,linestyle=dashed,dash=2.5 2.5]{->}(60.62,79.38)(90.62,50)
\rput[t]{-45}(24.38,33.12){\surveillance}
\rput[b]{45}(23.75,65.62){\DRM}
\rput[t]{45}(76.25,33.75){\sousveillance}
\rput[b]{-45}(75.62,66.88){\filesharing}
\pspolygon[linewidth=0.5](30,95)(70,95)(70,80)(30,80)
\pspolygon[linewidth=0.5](30,20)(70,20)(70,5)(30,5)
\rput(50,87.5){\more}
\rput(50,12.5){\less}
\rput[t]{-45}(73.12,65.62){\arabspring}
\rput[b]{45}(73.12,33.75){\cyberanarchy}
\rput[b]{-45}(26.88,35){\Doubleclick}
\rput[t]{45}(26.88,65){\FISA}
\newrgbcolor{userLineColour}{0.2 0.4 0}
\psline[linewidth=1,linecolor=userLineColour,linestyle=dotted,dotsep=3]{->}(90,50)(110,70)
\newrgbcolor{userLineColour}{0.8 0 0.2}
\psline[linewidth=1,linecolor=userLineColour,linestyle=dotted,dotsep=3]{->}(90,50)(110,30)
\rput[b]{45}(103.12,64.38){\anon}
\rput[t]{-45}(102.5,35.62){\trust}
\end{pspicture}
\caption{Technology investments favor the profitable side of data privacy}
\label{Fig:data-tech}
\end{center}
\end{figure}

The tasks of securing data privacy on one hand and of the intellectual property on the other correspond to \textbf{the same security problem}: to control the data flows in digital networks. However, the technologies developed for surveillance on the one hand and for the DRM on the other hand lead to \textbf{the opposite solutions}: they weaken privacy and strengthen the intellectual property. Fig.~\ref{Fig:data-tech} illustrates how the technical advances support privacy protections of intellectual property and commercial digital assets, but help break the privacy protections of individual behaviors and personal digital assets.

\section[Data privacy]{Data privacy and pull attacks} \label{Sec:datapriv}

\newcommand{\prx}[2]{\raisebox{.18ex}{\scriptsize |}\!\frac{\ #1\ }{\ #2\ }\!\raisebox{.18ex}{\scriptsize |}}

\subsection{Defining data privacy} 
Data privacy defines the boundaries between the private sphere and the public sphere and provides the legal right to be left alone with your private data. This right intends to restrict the access to the private data, just like secrecy intends to restrict the access to secret information. But while secrecy is concerned with access to data through a specific channels, the right to privacy is not specific to a particular channel, it legally protects from access to the private data through any global channel. 

Secrecy is formally defined in cryptography. The earliest definition, due to Shannon~\cite{ShannonC:Secrecy}, says that it is a property of a channel where the outputs are statistically independent of the inputs. It is tempting, and seems natural, to define privacy in a similar way. This was proposed by Dalenius back in the 1970s~\cite{DaleniusT:desideratum}: A database is private if the public data that it discloses publicly say nothing about the private data that it does not disclose. This \emph{desideratum}, as Dalenius called it, persisted in research for a number of years, before it became clear that it was generally impossible as a requirement when considering access through any global channel. For example, if everybody knows that Alice eats a lot of chocolate, but there is an anonymized database that shows a statistical correlation between eating a lot of chocolate and heart attacks, then this database discloses that Alice may be at a risk of heart attack, which should be Alice's private information, and thus breaches Dalenius' desideratum. Notably, this database breaches Alice's privacy \emph{even if}\/ Alice's record does not come about in it. Indeed, it is not necessary that Alice occurs in the database either for establishing the correlation between chocolate and heart attack, or for the public knowledge that Alice eats lots of chocolate; the two pieces of information can arise independently. Alice's privacy can be breached by linking two completely independent pieces of information, one about Alice and chocolate, the other one about chocolate and heart attack. But since Alice's record does not come about in the database, it cannot be removed from it, or anonymized in it. The public knowledge of background information can establish covert channels that require the definition for data privacy to be different from the definition of secrecy.

Alice's privacy is her global right. But the definition of privacy of her data in a database cannot have global requirements and depend on the (non)-availability of background information and covert channels that are in no control of the database. It has to depend on the records in the database that contain Alice's information and how they are disclosed.
Limiting the focus to what private information can be learned just from Alice's record in the database leads to the more practical measure of {\em differential privacy}. 
In differential privacy, it is required that all sensitive data about Alice that can be learned from a database $D$ with Alice's record, could also be learned from the database $D'$ where Alice's record is replaced by an alternative record. This sounds very close to Dalenius' definition, 
which says that all sensitive data about Alice that can be learned from a database $D$ with Alice's record, could also be learned without access to the database $D$, 
but on a closer look it is very different. The definition of differential privacy addresses the impossibility of privacy according to Dalenius' definition and removes the issue of background information. 

\para{Data privacy in databases.} The life of any data set consists of first gathering the data (veillance), then storing and releasing the data, and finally processing the data.
Databases are the central tool for the management of large collections of data. 
Conceptually, 
databases organize data as large matrices, called {\em tables}, 
storing values of {\em attributes} organized in {\em records}. 
\begin{definition}
Given the sets $\RRR$ of \emph{records}, $\AAA$ of \emph{attributes}, and $V_a$ of \emph{values} for each $a\in \AAA$, 
a database is a matrix
\bear
D &:& \RRR\times \AAA\  \to V
\eear
where $V = \bigcup_{a\in \AAA} V_a$ and $D(r,a) \in V_a$ for all $r\in \RRR$ and $a\in \AAA$.
\end{definition} \label{Def:database}
The rows of the matrix represent the tuples of data in a record of the database, and the columns of the matrix represent the attributes, i.e., the data for an attribute for each record.

If an attribute has a unique value for each record, this attribute is considered to be an {\em identifier}.

\begin{definition}
    An \emph{identifier} (ID)  is an attribute $a \in \AAA$ that uniquely determines all entities.
    
    More precisely, there is a function $f:V_a \to \EEE$ such that for all $e\in \EEE$ holds
    \bear
    f(D_{R(e)}^a) & = & e
    \eear
    where $D_{R(e)}^a$ is value for the attribute $a$ that occurs in a record $R(e)$ in the database $\DDD$.
\end{definition}

Attributes of private data, like health conditions, educational records, or financial information, that are particularly sensitive regarding the protection of their access, are called \textbf{\em sensitive attributes}.

\begin{figure}[htbp]
    \begin{center}
     \includegraphics[height=5.5cm]
                        {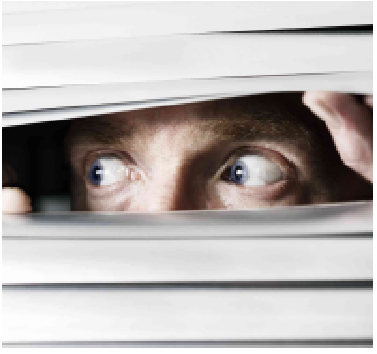}
\hspace{2em}
        \includegraphics[height=5.5cm]
                        {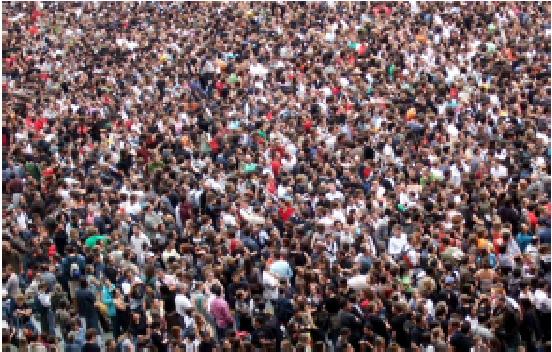}
        \caption{Privacy can be achieved by being hard to see or hard to find}
        \label{Fig:Crowd}
    \end{center}
\end{figure}

We described some instances of data gathering in section~\ref{Sec:veillance}, in particular in the context of public data. The fact that the gathered data in 
these cases has been public does not imply for every stored datum, possibly containing sensitive attributes, to be publicly standing out. 
One can be completely private and anonymous if lost in a crowd as suggested in~\cref{Fig:Crowd}, even if the data about the crowd is public and contains sensitive information. 
The surveillance of this data alone does not kill the privacy of any individual's data in the crowd. It is the search in the collected data that can kill the privacy. 
The simple look at the image does not highlight any private data of any subject, but if there is sufficient reason to spend any effort 
on processing the image, private data can be identified and extracted. The controlled access to the records of the database through its interface can 
restrict the searches and help protect the privacy of the data.

\begin{figure}[t!hbp]
    \begin{center}    
\begin{tabular}{|c||c|c|c||c|}
    \hline
    $ID$      & \multicolumn{3}{c ||}{$QID$} & $SA$ \\ \hline
    $Name$    & $Zipcode$ & $Age$ & $Sex$  & $Disease$ \\ \hline \hline 
    $Alice$   & $47677$   & $29$  & $F$    & \textit{ovarian cancer}  \\ \hline
    $Betty$   & $47602$   & $22$  & $F$    & \textit{ovarian cancer}  \\ \hline
    $Charles$ & $47678$   & $27$  & $M$    & \textit{prostate cancer} \\ \hline
    $David$   & $47905$   & $43$  & $M$    & \textit{flu}             \\ \hline
    $Emily$   & $47909$   & $52$  & $F$    & \textit{heart disease}   \\ \hline
    $Fred$    & $47906$   & $47$  & $M$    & \textit{heart disease}   \\ \hline
\end{tabular}
\caption{Data contain identifiers (ID) and sensitive attributes (SA).}
        \label{Fig:database}
    \end{center}
\end{figure}
Medical databases contain sensitive attributes, like e.g., health conditions or medication records, that are privately owned before they are brought into the database.
Data are assumed to be secured while stored, their release makes them public, and the concern of privacy and anonymity arises with the controlled access and the release of any record. 
The database in~\cref{Fig:database} contains the identifier $Name$ and the sensitive attribute $Disease$. Neither Alice nor Betty would like the information that they have been diagnosed with cancer to be publicly known,
they would want to protect access to this private data. 

As a simple solution to support the restriction of access to sensitive private data, access could simply be prohibited to anyone other than the owners. This simple solution is not very practical, because direct access to the private data in not just needed by the owners, but is also important for health care professionals to have to do their work.  

Accumulative and anonymized access is important for the public and for research to improve general understanding of health conditions and the spread and causes of diseases.
In many cases, the sensitive attributes only need to be presented in accumulated form for statistical analysis where they can be detached from the identifiers of the records.
Individuals would not be willing to have their sensitive data released if their identity can be identified from the released record. This is where 
the privacy protection of statistical databases come in.

\begin{definition}
    Data are collected from a set of \emph{entities} $\EEE$.
\emph{Data gathering} is a map $R:\EEE\to \RRR$, so that $D_{R(e)}$ is the tuple of the data corresponding to the entity $e\in \EEE$.
\emph{Data identification} is a map $E:\RRR\to \EEE$, such that $E(R(e))=e$.
\end{definition}

Simply removing the identifier from a record before release removes the data identification by the means of the identifier, but it does not detach the record from the entity.
The removal of the identifier of the name attribute in the example in Fig.~\ref{Fig:database} leaves the attributes shown in~\cref{Fig:database_anon}. In each record the sensitive attributes are only linked to non-sensitive attributes that do not reveal the identity of the record by themselves - the records seem to be anonymized. 
\begin{figure}[t!hbp]
    \begin{center}    
    \begin{tabular}{|c|c|c||c|}
        \hline
         \multicolumn{3}{| c ||}{\textit{QID}} & \textit{SA} \\ \hline
         \textit{Zip code} & \textit{Age} & \textit{Sex}  & \textit{Disease} \\ \hline \hline 
         $47677$   & $29$  & $F$    & \textit{ovarian cancer}  \\ \hline
         $47602$   & $22$  & $F$    & \textit{ovarian cancer}  \\ \hline
         $47678$   & $27$  & $M$    & \textit{prostate cancer} \\ \hline
         $47905$   & $43$  & $M$    & \textit{flu}             \\ \hline
         $47909$   & $52$  & $F$    & \textit{heart disease}   \\ \hline
         $47906$   & $47$  & $M$    & \textit{heart disease}   \\ \hline
    \end{tabular}
    \caption{Attempt at anonymizing records by removing identifiers (ID).}
    \label{Fig:database_anon}                
    \end{center}
\end{figure}

Cross-referencing the records in the anonymized database against records in public databases, like a voter register shown in Fig.~\ref{Fig:reident}, allows for re-identification of the records and link the sensitive attribute entries back with the individuals' identities.

\begin{figure}[htbp]
    \begin{center}   
    \small
    \begin{minipage}{.45\linewidth}
    \centering
    \begin{tabular}{|c|c|c||c|} \hline
         \multicolumn{3}{| c ||}{\textit{QID}} & \textit{SA} \\ \hline
         \textit{Zip code} & \textit{Age} & \textit{Sex}  & \textit{Disease} \\ \hline \hline 
         \colorbox{red!30}{\bf $47677$}  & \colorbox{red!30}{\bf $29$}  & \colorbox{red!30}{\bf $F$}    & \textit{ovarian cancer}  \\ \hline
         $47602$   & $22$  & $F$    & \textit{ovarian cancer}  \\ \hline
         $47678$   & $27$  & $M$    & \textit{prostate cancer} \\ \hline
         $47905$   & $43$  & $M$    & \textit{flu}             \\ \hline
         $47909$   & $52$  & $F$    & \textit{heart disease}   \\ \hline
         $47906$   & $47$  & $M$    & \textit{heart disease}   \\ \hline
    \end{tabular}
    \end{minipage}\hspace{.05\linewidth}~\begin{minipage}{.45\linewidth}
    \centering
    \begin{tabular}{|c||c|c|c|} \hline
         \textit{Name} & \textit{Zip code} & \textit{Age} & \textit{Sex}  \\ \hline \hline 
         \textit{Alice} & \colorbox{red!30}{\bf $47677$}  & \colorbox{red!30}{\bf $29$}  & \colorbox{red!30}{\bf $F$}  \\ \hline
         \textit{Bob}   & $47983$   & $65$  & $M$    \\ \hline
         \textit{Carol} & $47677$   & $22$  & $F$    \\ \hline
         \textit{Dan}   & $47532$   & $23$  & $M$    \\ \hline
         \textit{Ellen} & $46789$   & $43$  & $F$    \\ \hline
    \end{tabular}
    \end{minipage}
    \caption{Medical database linked with Voter Register.}
    \label{Fig:reident}                
    \end{center}
\end{figure}
\normalsize
Such examples are real. In the 90s, a case involving the governor of Massachusetts got highly publicized and was subsequently used to drive privacy policy in the health care industry. 
After the governor collapsed at a public event, a graduate student demonstrated that linking the anonymized data of health records that his health insurance generally made available for research studies, with the anonymized data that could be bought from the voter register, allowed the re-identification of the governor's health record related to the public collapse and exposed the governor's health condition. 
The anonymized records of the health insurance had the obvious personal identifiers removed, but the published data included zip code, date of birth, and gender. 
When matched with the voter records, there were only six possible patient records that could belong to the governor. Only three matched the gender, and only one shared the zip code. 
Those attributes were sufficient to identify the governor uniquely.
This case got later published in~\cite{Sweeney97}. Interestingly, the names and other details of the case from public media got changed in this publication, as listed in~\cref{Fig:reident_real}, to protect the privacy of the involved entities.

\begin{figure}[htbp]
    \begin{center}       
Medical Data Released as Anonymous
\small
\begin{tabular}{|c|c|c|c|c|c|c|c|}
        \hline
        \textit{SSN} & \textit{Name} & \textit{Race} & \textit{Date of Birth} & \textit{Gender} & \textit{ZIP} & \textit{Martial Status} & \textit{Problem} \\ \hline \hline
        & & \textit{asian} & \textit{09/27/64} & \textit{female} & \textit{02139} & \textit{divorced} & \textit{hypertension} \\ \hline
        & & \textit{asian} & \textit{09/30/64} & \textit{female} & \textit{02139} & \textit{divorced} & \textit{obesity} \\ \hline
        & & \textit{asian} & \textit{04/18/64} & \textit{male}   & \textit{02139} & \textit{married}  & \textit{chest pain} \\ \hline
        & & \textit{asian} & \textit{04/15/64} & \textit{male}   & \textit{02139} & \textit{married}  & \textit{obesity} \\ \hline
        & & \textit{black} & \textit{03/13/63} & \textit{male}   & \textit{02138} & \textit{married}  & \textit{hypertension} \\ \hline
        & & \textit{black} & \textit{03/18/63} & \textit{male}   & \textit{02138} & \textit{married}  & \textit{shortness of breath} \\ \hline
        & & \textit{black} & \textit{09/13/64} & \textit{female} & \textit{02141} & \textit{married}  & \textit{shortness of breath} \\ \hline
        & & \textit{black} & \textit{09/07/64} & \textit{female} & \textit{02141} & \textit{married}  & \textit{obesity} \\ \hline
        & & \textit{white} & \textit{05/14/61} & \textit{male}   & \textit{02138} & \textit{single}   & \textit{chest pain} \\ \hline
        & & \textit{white} & \textit{05/08/61} & \textit{male}   & \textit{02138} & \textit{single}   & \textit{obesity} \\ \hline
        & & \textit{white} & \textit{09/15/61} & \textit{female} & \textit{02142} & \textit{widow}    & \textit{shortness of breath} \\ \hline
    \end{tabular}
\vspace*{0.5cm}

    Voter List
    \begin{tabular}{|c|c|c|c|c|c|c|c|}
        \hline
        \textit{Name} & \textit{Address} & \textit{City} & \textit{ZIP} & \textit{DOB} & \textit{Gender} & \textit{Party} & \textit{$\ldots\ldots$} \\ \hline \hline
        \textit{$\ldots\ldots$} & \textit{$\ldots\ldots$} & \textit{$\ldots\ldots$} & \textit{$\ldots\ldots$} & \textit{$\ldots\ldots$} & \textit{$\ldots\ldots$} & \textit{$\ldots\ldots$} & \textit{$\ldots\ldots$} \\ \hline  
        \textit{$\ldots\ldots$} & \textit{$\ldots\ldots$} & \textit{$\ldots\ldots$} & \textit{$\ldots\ldots$} & \textit{$\ldots\ldots$} & \textit{$\ldots\ldots$} & \textit{$\ldots\ldots$} & \textit{$\ldots\ldots$} \\ \hline
        \textit{Sue J. Carlson} & \textit{1459 Main St.} & \textit{Cambridge} & \textit{02142} & \textit{09/15/61} & \textit{female} & \textit{democrat} & \textit{$\ldots\ldots$} \\ \hline
        \textit{$\ldots\ldots$} & \textit{$\ldots\ldots$} & \textit{$\ldots\ldots$} & \textit{$\ldots\ldots$} & \textit{$\ldots\ldots$} & \textit{$\ldots\ldots$} & \textit{$\ldots\ldots$} & \textit{$\ldots\ldots$} \\ \hline
    \end{tabular}
    \normalsize
    \caption{Medical record of the Governor of Massachusetts identified.}
        \label{Fig:reident_real}                
    \end{center}
\end{figure}

There are several other similar cases that have been publicized over the years, but this specific case has been influential in determining the criteria for the 2003 HIPAA act 
and in defining the related criterion of $k$-anonymity, that has been used to advance privacy protection in statistical databases. In the identification of the governor, the attributes of gender, birthdate and ZIP code were not sensitive by themselves. Their critical feature was that the specific value combination of these attributes for the governor were unique in both databases. This allowed them to act - in combination - as an identifier. We call such sets of attributes a {\em quasi-identifier}.
\begin{definition}
    A \emph{quasi-identifier} (QID)  is a {\color{red}set of} attributes $\QQQ\subseteq \AAA$ that uniquely determine some entities.
    
    More precisely, there is a partial function $f:\prod_{i\in \QQQ}V_i \pto \EEE$ such that for some $e\in \EEE$ holds
    \bear
    f(D_{R(e)}^\QQQ) & = & e
    \eear
    where $D_{R(e)}^\QQQ$ is a $\QQQ$-tuple of attributes in the database $\DDD$.
\end{definition}

In figure~\ref{Fig:reident}, the set of attributes $\QQQ = \{\textit{ZIP}, \textit{Age}, \textit{Sex} \} $ is a quasi-identifier because 
$$f(D_{R(\textit{Alice})}^{\{\textit{ZIP}, \textit{Age}, \textit{Sex}\}}) \ = \ \textit{Alice},$$ 
where $D_{R(\textit{Alice})}^{\{\textit{ZIP}, \textit{Age}, \textit{Sex}\}} \ = \ (\textit{47677, 29, F})$ and the function $f$ is determined from the voter register uniquely to be $f(\textit{47677, 29, F}) \ = \ \textit{Alice}.$
In figure~\ref{Fig:reident_real}, the set of attributes $\QQQ = \{\textit{Gender}, \textit{DOB}, \textit{ZIP}\} $ is a quasi-identifier because 
$$f(D_{R(\textit{Sue J. Carlson})}^{\{\textit{Gender}, \textit{DOB}, \textit{ZIP}\}}) \ = \ \textit{Sue J. Carlson},$$ 
where uniquely $D_{R(\textit{Sue J. Carlson})}^{\{\textit{Gender}, \textit{DOB}, \textit{ZIP}\}} \ = \ (\textit{female, 09/15/61, 02142})$ and the function $f$ is determined from the voter register to be uniquely $f(\textit{female, 09/15/61, 02142}) \ = \ \textit{Sue J. Carlson}.$

\subsection{$K$-Anonymity}
The measure of {\em $k$-anonymity}\sindex{k-anonymity}~\cite{SweeneyL:kanonymity} has been introduced to make re-identification more difficult. With $k$-anonymity, a database ensures that the values of the quasi-identifier attributes of any record cannot be distinguished among at least $k$ records; 
the larger the $k$, the larger the set of records among which one entity can be hiding. This measure helps protect privacy when the focus is one individual query. 

\begin{definition}
    A database $D$ satisfies the \emph{$k$-anonymity requirement} if for every quasi-identifier $\QQQ$ and every $\QQQ$-tuple of values $x \ \in \ V^\QQQ = \prod_{i\in \QQQ}V_i$, there are either at least $k$ records with the same value $x$, or 
no such records exist in $D$. Formally, the requirements 
are $\forall \QQQ \subseteq \AAA, \forall x \in V^\QQQ$:
        $$ \# \RRR_x^\QQQ \ = \ \# \{ r \in \RRR \ | \ D_r^\QQQ = x \} \ \ \ \ge \ k, \mbox{ \ \ \ \ \ \ or \ \ \ \ \ \ }        
                    x \ \notin \ D^\QQQ.$$
\end{definition}
This measure needs to consider any $\QQQ$-tuple of attributes as potential quasi-identifiers.

\para{Techniques.}
The two techniques to achieve $k$-anonymity are generalization and suppression~\cite{SweeneyL:suppression}.

In \textit{generalization} some details of some attributes are removed to make their values more common, and less identifiable. 

The choice which attribute values are used to generalize to larger ranges has an impact on the information that each query to the database can provide. In their use to provide averages and statistical information about the attribute values of the records, it is important to chose generalizations that limit the bias that they have on the averages and cumulative information that is processed for any queries.

\textit{Suppression} removes records that are considered outliers that cannot be easily generalized with other records.

\para{Example.}
Figure~\ref{Fig:kanonex} illustrates the records of a database that, before anonymization has a $QID$ of the attributes $<ZIP, car, child>$, where the value 
\textit{$<$96822, Subaru Outback 1999, 8 year old$>$} only occurs in one record of the database as shown in the figure on the left. But generalizing the values of the attribute $ZIP$ to the first two digits, the attribute $car$ to just the brand of the car, and the attribute of the age of the child to just the information whether the child is a minor or not, the same anonymized values for the previous QID attributes for the entity occur in $k$ records now, as shown in the figure on the right. The record of the entity that could previously be uniquely identified, now shares its value for these attributes with $k-1$ other records. The database is $k$-anonymized if there are at least $k$ records not just for this specific $\QQQ$-tuple value for this specific $\QQQ$, but if there are at least $k$ records in the database for any $\QQQ$-tuple value for any subset of attributes $\QQQ$. 

\begin{figure}[t!hb]
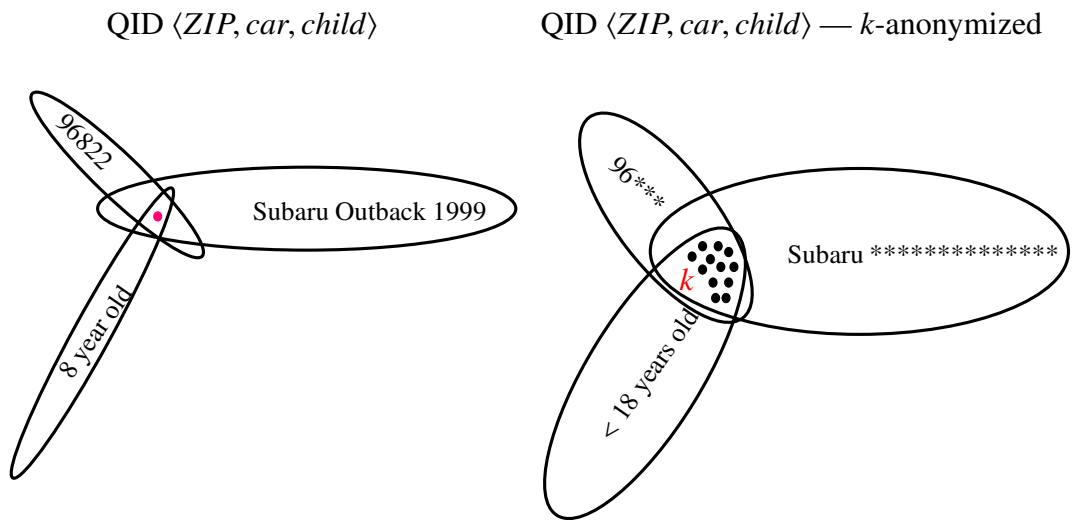

\begin{minipage}[t]{0.45\linewidth}
\centering
QID $<ZIP, car, child>$
\newcommand{\ZIP}{\footnotesize 96822}
\newcommand{\car}{\footnotesize Subaru Outback 1999}
\newcommand{\child}{\footnotesize 8 year old}
\begin{center}
\def\JPicScale{.55}
\input{QID-unique.tex}
\end{center}
\end{minipage}
    %
\begin{minipage}[t]{0.45\linewidth}
\centering
QID $<ZIP, car, child>$ --- $k$-anonymized
\newcommand{\ZIP}{\footnotesize 96***}
\newcommand{\car}{\footnotesize Subaru **************}
\newcommand{\child}{\footnotesize $\lt 18$ years old}
\newcommand{\kthem}{\bf \color{red} $k$}
\begin{center}
\def\JPicScale{.55}
\input{QID-k-anon.tex}
\end{center}
\end{minipage}
\caption{Database $k$-anonymization by generalization of $QID$-attributes.}
\label{Fig:kanonex}
\end{figure}

\para{Limitations of $k$-anonymity.}
The utility of using $k$-anonymity to protect privacy in databases has its limits:

\textbf{Lack of diversity:} Even if a database provides $k$-anonymity by the appropriate application of generalization and suppression, an entity could still be associated with their private information, e.g., the sensitive attribute of their health condition if the database lacks diversity and the same SA value occurs in more than $k$ records. Then, $k$-anonymity does not conceal the value. It reveals the SA not just for the entity $e$, but for all the more than $k$ entities with the same SA value 
$xs \ = \ D_{R(e)}^{SA}$, the same QID tuple $qs = D_{R(e)}^\QQQ$ and with records $\RRR_{qs}^\QQQ$.
In this case the database would be $k$-anonymous, but still discloses the SA of a group of more than $k$ entities.

\textbf{Background information:} General anonymized data may also disclose an individual SA value when combining the data from the database with any background information about an individual.

The data relating smoking and cancer from a database $D$ can be used together with the public knowledge that Bob smokes, and, in effect, link Bob with the SA of having a cancer risk --- even if Bob does not occur in $D$.
The background information that is used to reveal Bob's private information in this case is a \textbf{false problem}, because its release is not under the control of the database. No anonymization of the database $D$ can eliminate the information available outside of $D$. Therefore, it must be acceptable that a database $D$ may disclose some sensitive information about me to those who know me --- even if I do not occur in $D$.

\para{Example.} Alice writes an exam at school. Her teacher grades the exams and returns them to the students. He also gives them a summary of the results and tells them, among others, that $3$ students failed the exam. If the records for each student in a database $D$ contain their grade for this exam, the teacher's summary does not release the grades for any student but just provides the information:  
\bear \# \RRR^{grade}_{fail} = \#\{ r\in \RRR \ |\  D_r^{grade} = \mbox{fail}\} \ \  =  \ \  3
\eear
Charlie likes to harass students, and asks all of his friends 
how they did in the exam. Alice does not want to tell Charly about her exam, she wants to keep her exam results private. After Charly finds out the grades of all other students in the class, and that only two of them had failed the exam, he concludes, that Alice must have failed the exam, and harasses her about it. 
Through his collection of information, Charly builds the database $D'$ that contains all records of $D$ except for Alice's. For this database $D'$, Charlie determines that   
\bear 
\#\{ r\in \RRR \ |\  {D'}_r^{grade} = \mbox{fail}\} \ \  =  \ \  2.
\eear
and can conclude that $D\setminus D' = R(Alice)$ and $D^{grade}_R(Alice) \ = \ fail$.
The release of the summary of the grades by the teacher did not preserve the privacy of Alice's grade on the exam, although Alice's result had been made $3$-anonymous in the disclosure of results from database $D$ by the teacher.
If the teacher had made her statement less precise and told the students instead that either $2$ or $3$ students had failed the exam, the privacy of Alice's results would have been preserved.
The question remains how the release of aggregate data that has been processed by tools like mining or classification affect the privacy and anonymity of sensitive data.
This issue occurs and has been studied in a refined form in the context of statistical databases and has lead to the notion of k-privacy. 

\para{Statistical databases.}
Statistical databases are collecting data for statistical analysis and reporting. 
Their purpose is to provide cumulative and statistical information of public interest and for research while protecting the privacy of the individual records.
Their focus is on the classification and analysis of the dataset as a whole, they are not interested in the details of a specific datum. To be sharing sensitive data with the database, participants need to be convinced that their record cannot be identified, and any part of their sensitive data cannot be reconstructed from the data provided in the disclosures of the database.

\subsection{Differential Privacy}

\textit{Differential privacy}~\cite{Dwork06} is a requirement on the disclosure algorithm $F$ of 
the database $D$, and not a requirement on the data in the database itself.
It implements the indistinguishability of databases $D$ and $D'$, where database $D$ differs from database $D'$ just in one individual records, in terms of an equivalence kernel. 
Differential privacy requires that the flow leakage of individual information from any single record is negligible when releasing a statistical result from the database $D$.
Intuitively, the risk to one's privacy incurred by participating in a database
is expressed by the parameter of the privacy budget $\varepsilon$ in the model.
The techniques that have been developed to achieve differential privacy can be controlled to achieve an arbitrary level of privacy under this measure.

\begin{definition}\label{Def:diffpriv}
Let $\DDD$ be a family of databases, $\PPP \subseteq \sum_{a\in \AAA}V_a$, a family of properties (viewed as sets of values in some attributes), and $\varepsilon\gt 0$ a real number, called the privacy budget.

A disclosure algorithm $F:\DDD\to \PPP$ is \emph{$\varepsilon$-differentially private} if for every property $Y\in \PPP$ holds
\bear
1 \ \ \ge \ \ \prx{\Pr(F(x)\in Y)}{\Pr(F(x')\in Y)} \ \  \ge \ \ e^{-\varepsilon}
\eear
for any pair of databases $x,x'\in \DDD$ which differ in at most one record, and where the normalized ratio is denoted as
$$\prx{x}{y} \  = \left\{ \begin{array}{ll} x/y & \mbox{if $x \le y$,} \\
                          y/x & \mbox{otherwise.} \end{array} \right. $$
\end{definition}


\para{Implementation.}
Differential privacy is a property of a database disclosure algorithm $F$. 
To make a disclosure algorithm meet the different privacy requirement, the data in the database is not getting affected at all, so that 
no bias is added to the recorded data. Instead, differential privacy can be achieved by perturbing the disclosure 
algorithm in different phases of the process.
In principle, one could perturb the disclosure algorithm 
\begin{itemize}
\item at the inputs of the queries,
\item at the calculation of intermediate values for the query, or
\item at the output values of the disclosure.
\end{itemize}
In practice, the perturbation of the output values is the most common approach by adding noise to the output values. 
In this way, the statistical properties of the added noise and the relation of the perturbed values to the unperturbed values can be best controlled to meet the differential privacy requirements from definition~\ref{Def:diffpriv} for a specific privacy budget $\varepsilon$.

\para{Output Perturbation Method.}
The standard method to achieve differential privacy is the controlled 
addition of Laplace noise to generate the outputs of the disclosure algorithm~\cite{dwork2014}.
The amount of noise is adjusted with respect to the parameter $\varepsilon$
to achieve $\varepsilon$-differential privacy and trade-off the accuracy of 
the disclosed results versus the risk of private individual data to be leaked. 
The parameter choices are guided based on the following properties.

To implement differential privacy and guarantee a bound on the ratio of the probabilities between $F(x)$ and to $F(x')$, the density function of the added noise needs to have a bound on its ratios in shifted versions. This is why Laplace noise is chosen. The Laplace density function for various parameters $\lambda$ is shown in Fig.~\ref{Fig:laplace_prop}. Two versions of the density functions are shown that are shifted by $\Delta f_{x,x'} \ = \ \|f(x) - f(x')\|$ to represent the densities of perturbed results disclosed by $F(x)$ and $F(x')$ from databases $x, x' \ \in \ \DDD$ - this is the situation we are in when adding Laplace noise to the unperturbed results of a feasible disclosure algorithm $f(x)$ and $f(x')$.

A bound on the ratio of the two densities for a disclosed value $y \in Y \in \PPP$ that could have been 
generated from $y = F(x)$ or $y = F(x')$ then allows to determine and control the Laplace 
parameters to achieve $\varepsilon$-privacy for the disclosure algorithm $F(x)$.

\begin{figure}[t!hbp]
    \begin{center}
        \scalebox{0.2}{ 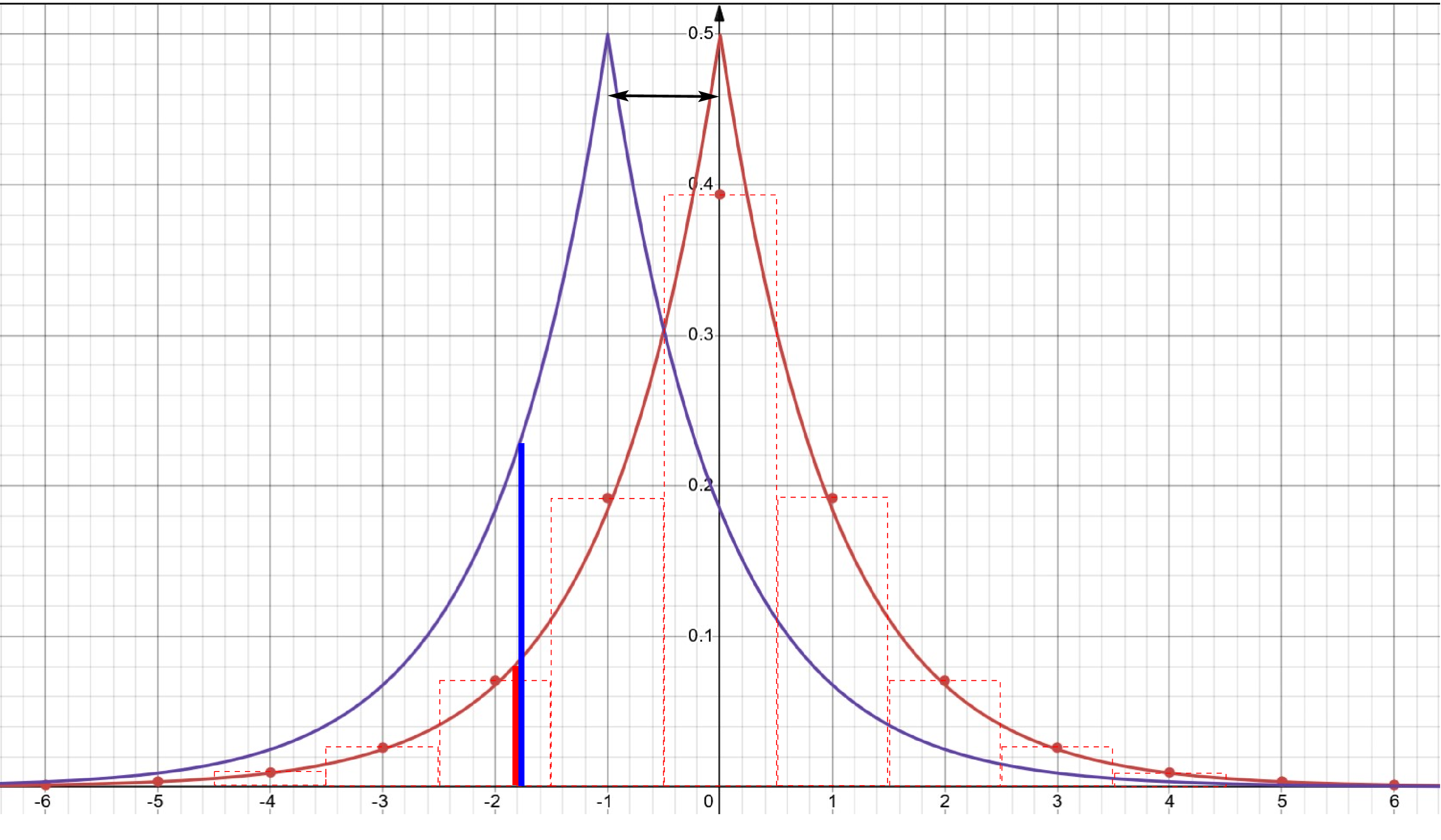}
        \caption{Laplace density functions shifted by $\Delta f_{x, x'}$}
        \label{Fig:laplace_prop}                
    \end{center}
\end{figure}

\begin{theorem}\label{Thm:DPreal}
Let $f:\DDD\to \PPP$ be a feasible disclosure algorithm. Then 
\bear
F(x) & = & f(x) + Lap\left(\frac{GS_f}{\varepsilon}\right)
\eear
is $\varepsilon$-differentially private, where
$GS_f = \bigwedge_{x,x'} \| f(x) - f(x') \|$ is the \emph{global sensitivity}, and 
$Lap(\lambda)$ describes randomly sampled noise with density function $Lapd(y, \lambda) \ = \ \frac{1}{2\lambda} exp\Large( - \frac{|y|}{\lambda} \Large)$, and 
where databases $x,x' \ \in \ \DDD$ differ in at most one record.
\end{theorem}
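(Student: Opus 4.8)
The plan is to verify the defining inequality of $\varepsilon$-differential privacy from Definition~\ref{Def:diffpriv} directly: first establish a pointwise bound on the ratio of the two noise-perturbed output densities for a pair of adjacent databases, and then integrate that bound over an arbitrary property $Y\in\PPP$. Feasibility of $f$ plays no role in the privacy bound (it matters only for utility), so I would set it aside. First I would fix databases $x,x'\in\DDD$ differing in at most one record and set $\lambda = GS_f/\varepsilon$. Since $F(x)$ is $f(x)$ plus Laplace noise of scale $\lambda$, its output density is $p_x(y) = Lapd(y-f(x),\lambda)$, and likewise $p_{x'}(y)=Lapd(y-f(x'),\lambda)$; when $f$ is vector-valued, the noise is added coordinatewise and $Lapd$ is the corresponding product density, with $\|\cdot\|$ the $\ell_1$ norm, so the scalar case is just a special case.

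The key step is the density-ratio computation. For any candidate output $y$,
\[
\frac{p_x(y)}{p_{x'}(y)} \;=\; \exp\!\left(\frac{\|y-f(x')\| - \|y-f(x)\|}{\lambda}\right)\;\le\; \exp\!\left(\frac{\|f(x)-f(x')\|}{\lambda}\right)\;\le\; \exp\!\left(\frac{GS_f}{\lambda}\right)\;=\;e^{\varepsilon},
\]
where the first inequality is the reverse triangle inequality, the second is the definition of the global sensitivity $GS_f$ as the largest change of $f$ between adjacent databases (so $\|f(x)-f(x')\|\le GS_f$; the $\bigwedge$ in the statement is to be read as this supremum), and the final equality is the choice $\lambda=GS_f/\varepsilon$. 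Swapping the roles of $x$ and $x'$ gives the symmetric lower bound, so $e^{-\varepsilon}\le p_x(y)/p_{x'}(y)\le e^{\varepsilon}$ for every $y$.

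Next I would lift this to sets. For any property $Y\in\PPP$,
\[
\Pr(F(x)\in Y) \;=\; \int_Y p_x(y)\,dy \;\le\; e^{\varepsilon}\int_Y p_{x'}(y)\,dy \;=\; e^{\varepsilon}\,\Pr(F(x')\in Y),
\]
and the symmetric inequality then yields $e^{-\varepsilon}\le \Pr(F(x)\in Y)/\Pr(F(x')\in Y)\le e^{\varepsilon}$, which is exactly the condition $1\ge \prx{\Pr(F(x)\in Y)}{\Pr(F(x')\in Y)}\ge e^{-\varepsilon}$ of Definition~\ref{Def:diffpriv}. Since $x,x',Y$ were arbitrary, $F$ is $\varepsilon$-differentially private.

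I expect the substance here to be bookkeeping rather than depth. The two fine points to get right are: (i) matching norms --- when $f$ outputs a tuple and independent Laplace noise is added to each coordinate, the per-coordinate exponents telescope into $\|f(x)-f(x')\|$ only if $\|\cdot\|$ in $GS_f$ is the $\ell_1$ norm, so this must be pinned down in the setup; and (ii) the passage from a pointwise density-ratio bound to a probability-ratio bound, which is routine but needs $Y$ measurable and $p_x,p_{x'}$ genuine (finite, normalized) densities, which Laplace densities are. A caveat worth flagging in the final write-up is that the theorem as displayed writes $GS_f=\bigwedge_{x,x'}\|f(x)-f(x')\|$, whereas the argument uses --- and ``global sensitivity'' should mean --- the supremum (maximum) over adjacent pairs.
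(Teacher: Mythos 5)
Your proof is correct and follows essentially the same route as the paper's: bound the ratio of the two shifted Laplace densities at an arbitrary output point via the (reverse) triangle inequality and the global sensitivity, with $\lambda = GS_f/\varepsilon$, then pass to arbitrary $Y\in\PPP$. Your version is somewhat more careful on the final integration step and rightly flags that $GS_f=\bigwedge_{x,x'}\|f(x)-f(x')\|$ must be read as a supremum over adjacent pairs for the bound $\|f(x)-f(x')\|\le GS_f$ to hold, which is exactly how the paper's proof uses it.
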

\begin{proof}
To get $\varepsilon$-differential privacy, we need to compare the disclosures of databases $x$ and $x'$, where the two databases are different in just one record $r$. The global sensitivity $GS_f$ expresses a bound of how much the unperturbed 
disclosure $f(x)$ can differ from $f(x')$. With the difference denoted by $\Delta f_{x,x'} = f(x) - f(x')$, we can write 
$$| \Delta f_{x,x'} |  \ = \ |f(x) - f(x') |\ \le \ GS_f.$$  
When an arbitrary value $y' \in Y$ is disclosed, this value could have been generated as
$y' = F(x) = f(x) + N^x_{Lap}$   
or as
$y' = F(x') = f(x') + N^{x'}_{Lap}$, where $N^x_{Lap}$ and $N^{x'}_{Lap}$ are continuous random variables representing the Laplace noise sampled from $Lap\left(\frac{GS_f}{\varepsilon}\right)$.  
For $F(x) =  F(x')$, we have $N^x_{Lap} = N^{x'}_{Lap} + \Delta f_{x,x'}$, and 
if we set $y^* = F(x) - f(x)$, the ratio of the probabilities can be expressed as
\bear
\Pr\Big(F(x)=y'\Big)/\Pr\Big(F(x')=y'\Big)  & = & \Pr\Big(N^x_{Lap}=y^*\Big)/\Pr\Big(N^{x'}_{Lap} = (y^* + \Delta f_{x,x'})\Big) \\
& = & Lapd\Big(y^*, \frac{GS_f}{\varepsilon}\Big)/Lapd\Big(y^* + \Delta f_{x,x'}, \frac{GS_f}{\varepsilon}\Big).
\eear
Note that $y^*$ is shown in Fig.~\ref{Fig:laplace_prop}, $Lapd(y^*, \frac{GS_f}{\varepsilon})$ is depicted by the red bar and $Lapd(y^* + \Delta f_{x,x'}, \frac{GS_f}{\varepsilon})$ is depicted by the blue bar in the figure.
To get to the requirement for differential privacy from Def.~\ref{Def:diffpriv}, we consider the normalized ratio:
\bear
    \prx{\Pr(F(x)= y')}{\Pr(F(x')=y')}  & = & \prx{Lapd(y^*, \frac{GS_f}{\varepsilon})}{Lapd(y^* + \Delta f_{x,x'}, \frac{GS_f}{\varepsilon})} \\
    & = & \prx{exp\Big(- \frac{\varepsilon \cdot y^*}{GS_f}\Big)}{exp\Big(- \sum_{a \in \AAA}\frac{\varepsilon |y^* + \Delta f_{x,x'}|}{GS_f}\Big)} \\
    & = & exp\Big( - \Big| \frac{\varepsilon \cdot |y^*|}{GS_f} - \frac{\varepsilon |y^* - \Delta f_{x,x'}|}{GS_f} \Big| \Big) \\
    &  \ge & exp\Big( - \frac{\varepsilon \big| \Delta f_{x,x'} \big|}{GS_f}\Big) \\
    & \ge & exp\Big(- \frac{\varepsilon \cdot GS_f}{GS_f}\Big) \\
    &  = & e^{- \varepsilon}.
    \eear
The third line follows from the definition of the normalized ratio and its relation to the absolute value of the exponent difference: $\prx{exp(a)}{exp(b)}= exp(|a-b|)$.
The first inequality after that applies the triangle inequality, and the second inequality applies the global sensitivity $GS_f$ 
as the upper bound on $\| \Delta f_{x,x'} \|$. 
At the end we get the lower bound required by Def.~\ref{Def:diffpriv}. 
This bound was determined taking an arbitrary value of $y' \in Y$. Therefore, we generally have as required
\bear
   1 \ \ \ge \ \ \prx{\Pr(F(x)\in Y)}{\Pr(F(x')\in Y)}  & \ge &  e^{-\varepsilon}.
\eear
This shows that the choice of parameter $\lambda = \frac{GS_f}{\varepsilon}$ for the additive Laplace noise is successful in making the disclosure algorithm $F(x)$ 
$\varepsilon$-differentially private.
\end{proof}

\section[Influence]{Public influence and push attacks
}\label{Sec:deceit}

\subsection{Layered architectures}
The internet stack, displayed in \cref{Fig:stacks} on the right, is studied in most network courses\footnote{Many courses actually follow the OSI network stack, which has two additional layers. The additional layers do not seem to arise from additional functionalities,  but they are useful nevertheless, as a fascinating illustration of design-by-committee and of accidents-of-evolution.}. It is less often noted that most codes and languages, both natural and artificial, are built according to the same layered architecture, as displayed in \cref{Fig:stacks} on the left. 
\begin{figure}[!ht]
    \centering
    \begin{minipage}[b]{0.3\textwidth}
    \centering
\includegraphics[width=5cm]{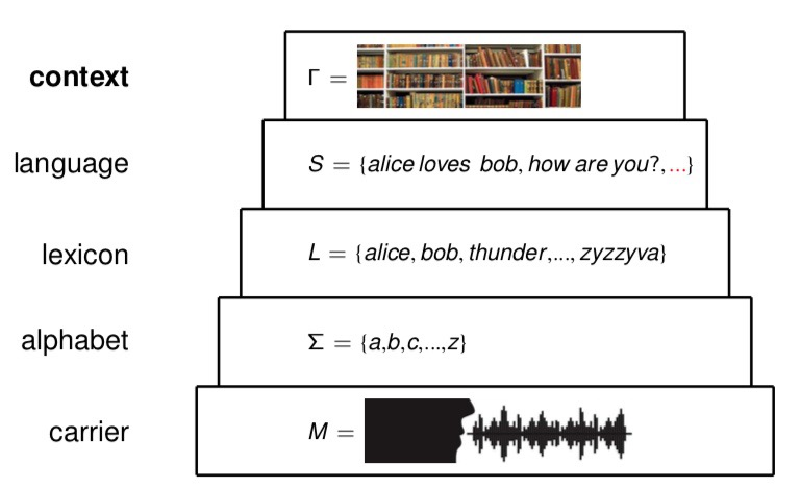}
    \end{minipage}%
\hspace{1.3em}
    \begin{minipage}[b]{0.3\textwidth}
\centering
\includegraphics[width=5cm]{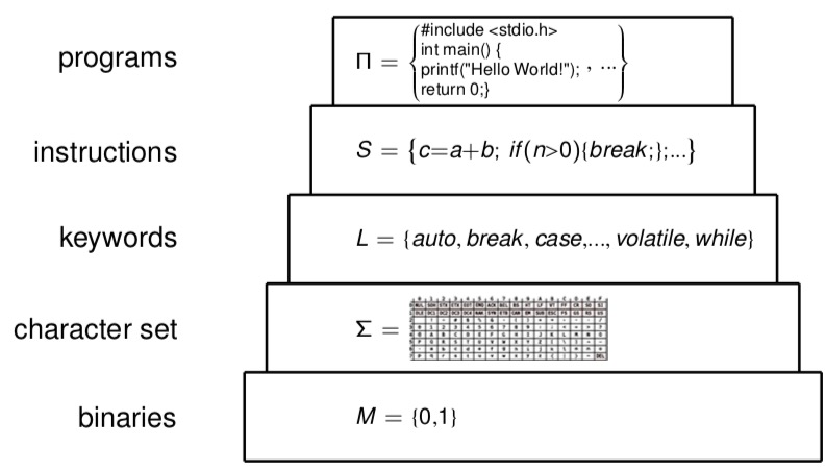} 
    \end{minipage}
\hspace{1.3em}
        \begin{minipage}[b]{0.3\textwidth}
        \centering
\includegraphics[width=5cm]{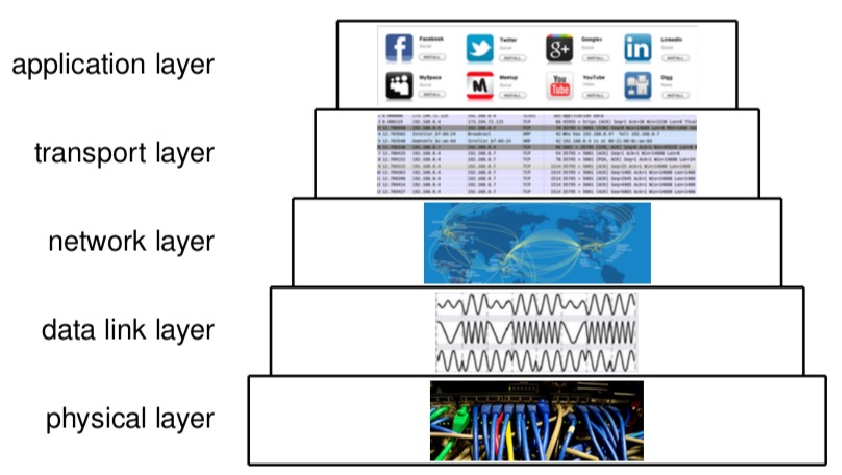} 
    \end{minipage}
            \caption{Languages and networks transfer messages across layers}
                    \label{Fig:stacks}
\end{figure}
The signals or messages transferred at each layer provide the particles of which the signals or messages at the next layer above are composed. In a natural language, the words are composed of letters, the sentences of words, texts of sentences, and so on. On the Internet, the network packets are composed of the frames transferred on the data link layer, the transport layer frames encapsulate sequences of network packets, and so on. Just like the syntax of a natural language determines which sentences are well-formed, and thus facilitates parsing, the network protocols determine formats of the packets, to facilitate assembling the messages that may have been split into particles. The familiar network protocol architecture of the Internet is displayed in \cref{Fig:stack-protocol}. 
\begin{figure}[!ht]
    \centering
        \begin{minipage}[b]{0.5\textwidth}
        \centering
\includegraphics[width=7cm]{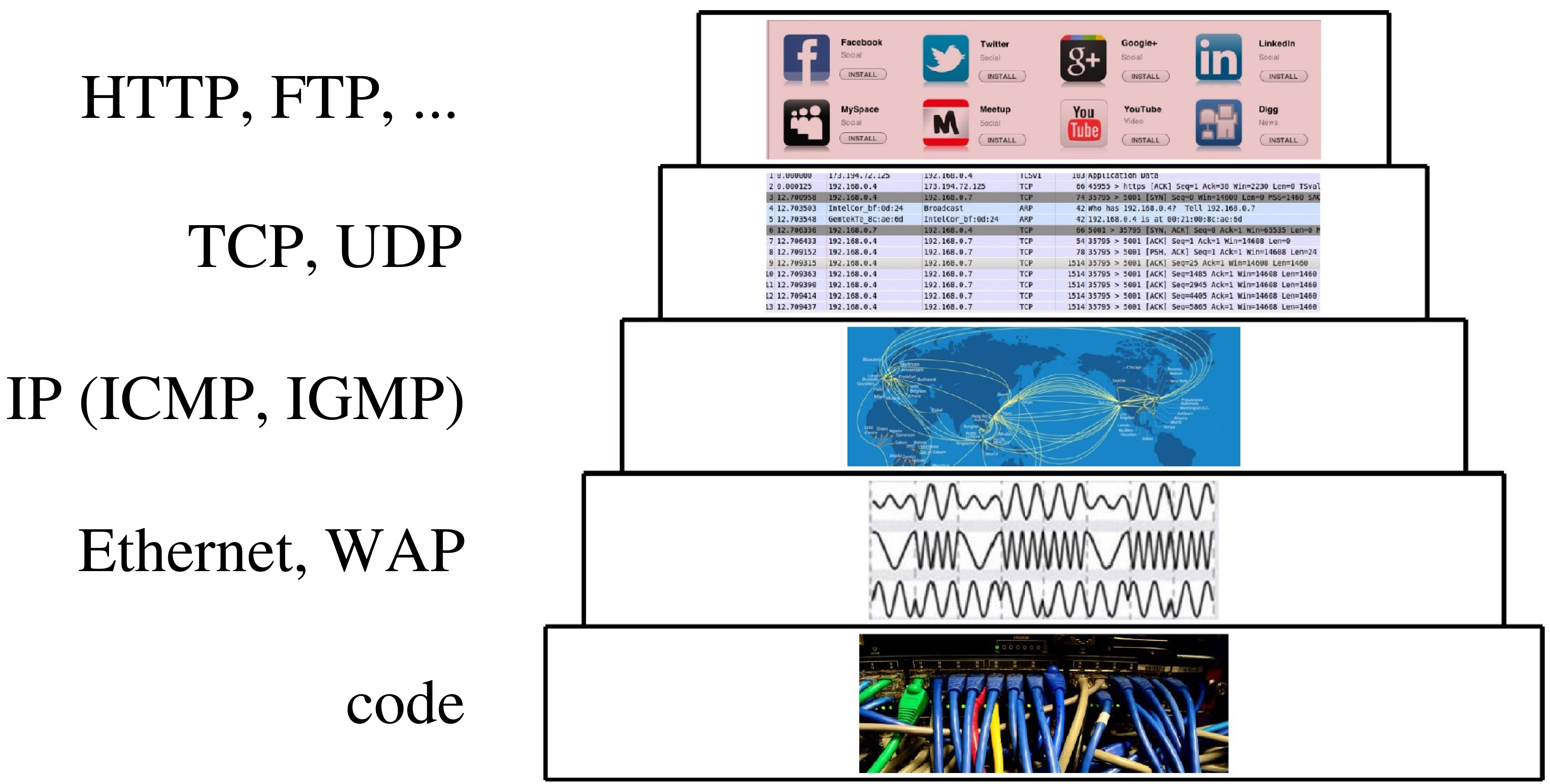} 
    \end{minipage}
            \caption{Protocols are syntactic rules for communication across layers}
                    \label{Fig:stack-protocol}
\end{figure}
The application layer protocols are designed by the application developers, and follow their own layered architecture, often several layers deep. Communications in natural languages are also regulated by protocols, which distinguish the sentence flow of a newspaper article from the formal languages of legal documents, of engineering, and the style and language of literary narrative, of poetry, and so on. These high-level rules tend to be more complex than the grammatical rules that generate sentences, and they are only known for some limited cases and situations.

\subsection{Level-above attacks}
In general, a context can be viewed as  a prefix of a well-formed message: it conveys some information but leaves some uncertainty. A sequence of contexts is a higher-level context, that may require error correction or abstraction. For example, Alice's email message to Bob is a context, as is Bob's response: they both narrow the possible conversations that may ensue, but usually do not determine them completely. A completed conversation, or a protocol run, is also a context at a still higher level, where a sequence of conversations may constitute a transaction. 

Deceit and outsmarting arise when such level-above channels are established covertly. For example, Bob sends a message in the context of one conversation, but that message may shift the conversation into a different context, which Alice may or may not notice, and may respond to it at the higher level, or remain at the lower level. Through a sequence of transactions, with or without the level shifts and covert messages, Alice and Bob may establish a covert context of trust, separate from the overt contexts of any of the conversations.

\subsection{Upshot}
Fig.~\ref{Fig:stacks} shows the computational  versions of the language stack, from the carriers at the bottom to the contexts at the top. The \emph{natural language stack}\/ on the left is echoed by the \emph{programming language stack}\/ in the middle, which is echoed by the  \emph{network stack}\/ on the right\footnote{The OSI model, usually taught in networks courses, separates a \emph{session}\/ layer above the transport layer, and a \emph{presentation}\/ layer below the application layer. A quick look at protocols shows that this is just a legacy quirk.}. 

\begin{figure}[!ht]
    \centering
    \begin{minipage}[b]{0.3\textwidth}
\centering
\includegraphics[width=5cm]{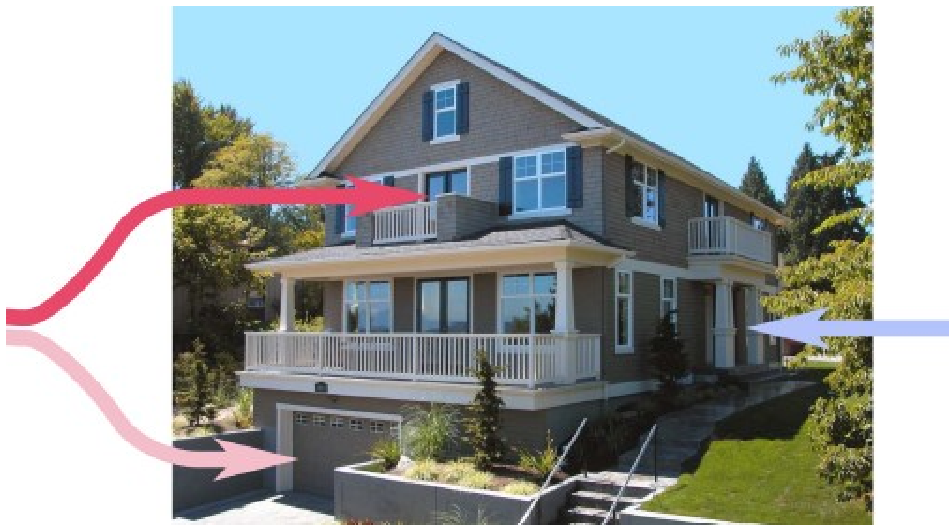}
    \end{minipage}%
    \hspace{1.3em} 
    \begin{minipage}[b]{0.3\textwidth}
\centering
\includegraphics[width=5cm]{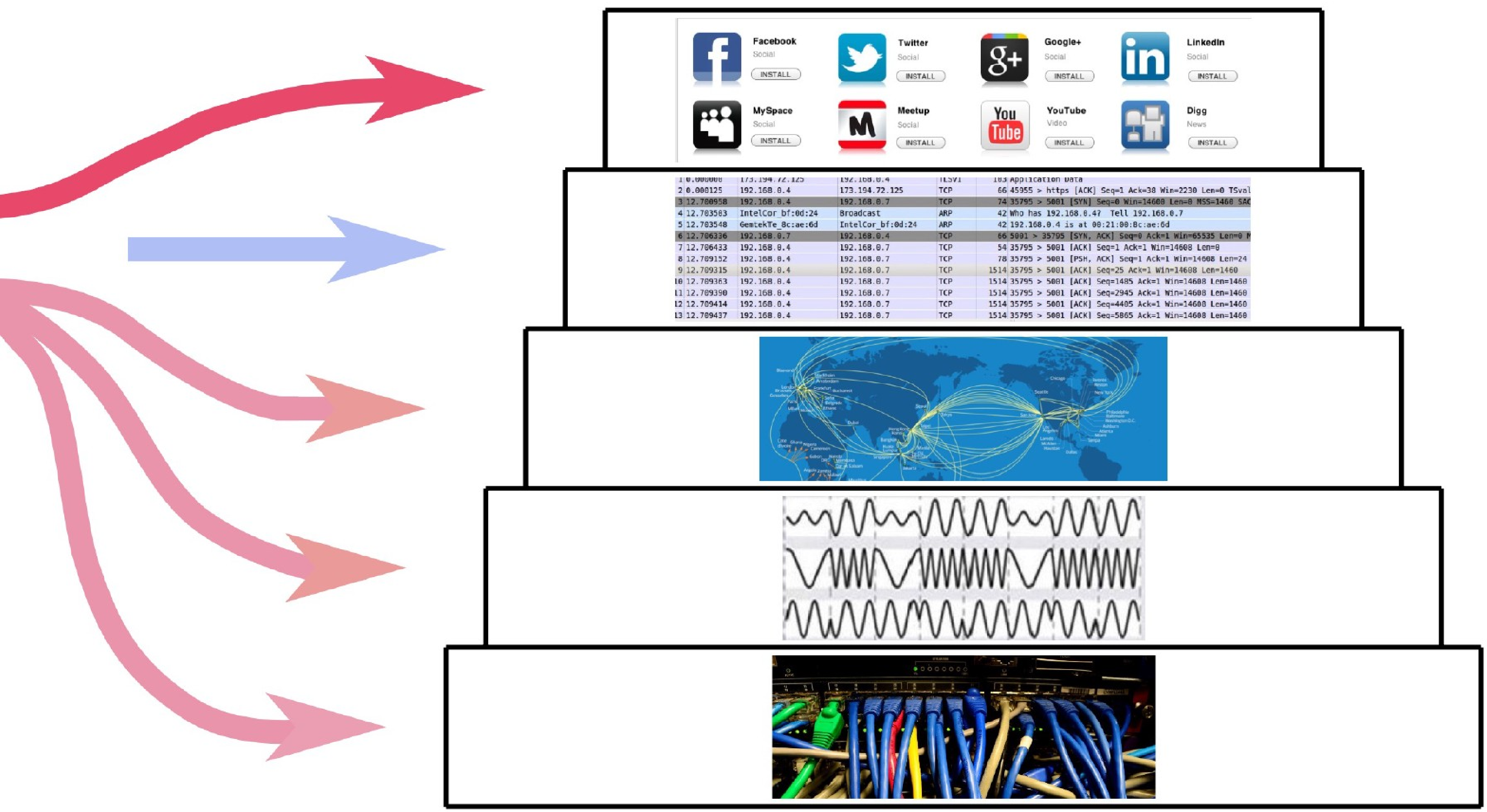} 
    \end{minipage}
    \hspace{1.3em} 
        \begin{minipage}[b]{0.3\textwidth}
\centering
\includegraphics[width=5cm]{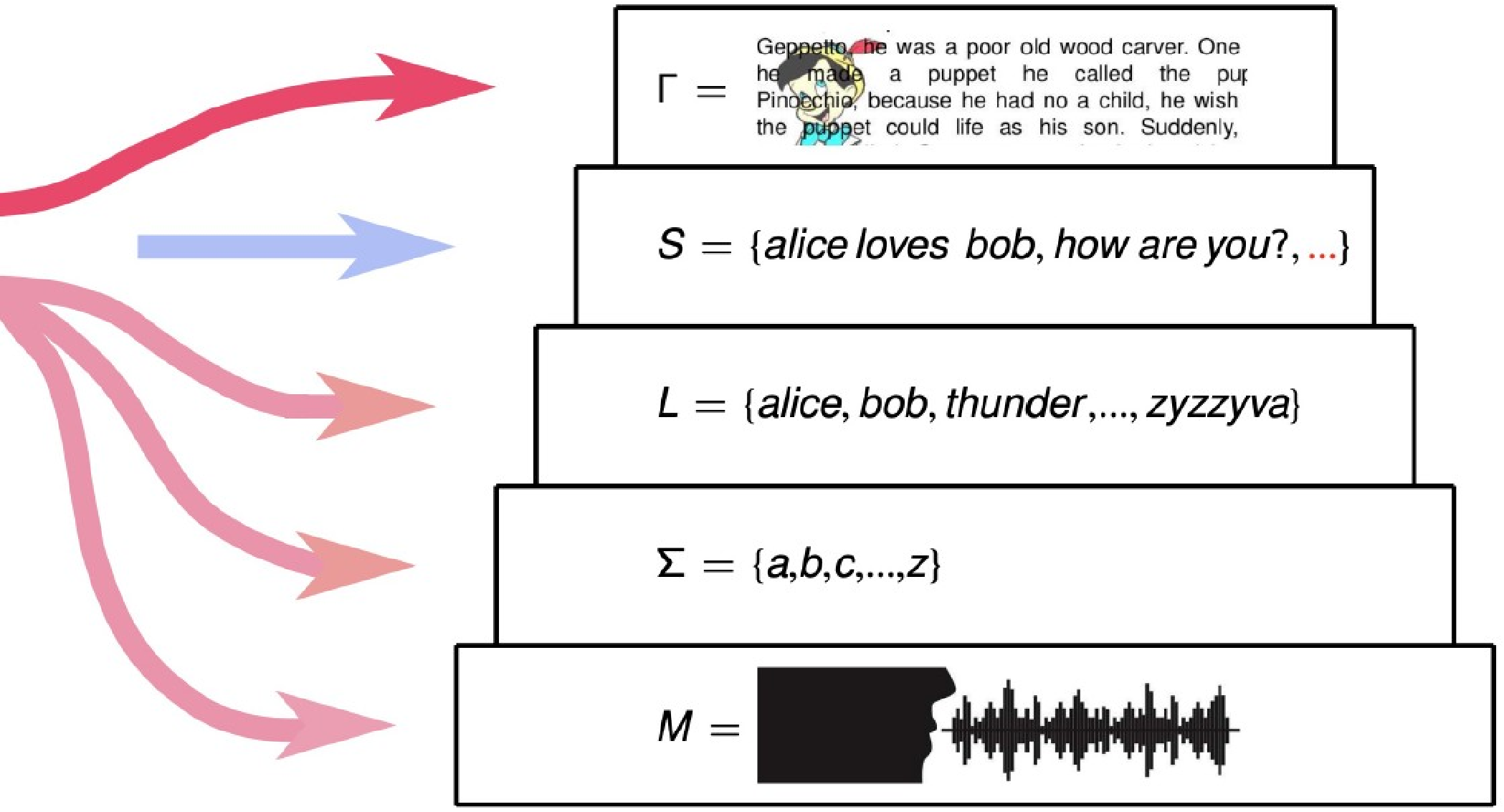} 
    \end{minipage}
            \caption{{\color{lightBlue}Attacks} are easy, {\color{lightRed}level-below} attacks are easier, {\color{red}level-above} attacks are the easiest}
                    \label{Fig:subver-lim}
\end{figure}

\section{What did we learn?}

The takeaway ideas of this chapter are:

\begin{itemize}
\item Privacy is the right to be left alone.

\item Data privacy prevents the information pull.

\item Influence implements the action push.

\item Deceit defeats privacy through level-above attacks.
\end{itemize}

But where do these ideas leave the tasks of the privacy protections in reality? For example, implementing the data privacy in a network requires at least the guarantees that
\begin{enumerate}[a)]
\item the private data cannot be derived from the publicly released data;
\item the publicly released data cannot be compiled and expanded through cross-referencing.
\end{enumerate}
But requirement (a) is a typical example of an intractable logical problem. What is derivable from what is demonstrated by constructing derivations. What is not derivable is even harder to characterize in general. Requirement (b) requires controlling the information flows, which is a typical example of an intractable problem in a network. So it seems that the data privacy problem has no general solution, as it involves intractable problems.\footnote{Differential privacy is a general method to protects the \emph{statistical}\/ databases, where the problems of cross-referencing and cross-derivations are filtered out together.}

Yet in real life, we do manage to maintain certain levels of privacy. Some say less and less, some say there are new forms of data that remain private. Either way, there is empiric evidence that privacy is not completely impossible. 

\backmatter

\chapter{Epilogue}\label{Chap:Postface}

Aloha Students of Security Science,

The grades have been entered. You have all done well. Thanks for taking this course.

Interestingly, you seem to have coordinated the votes to evenly distribute the points for the presentations. (The chance that everyone might get the same number of votes with no coordination seems to be less than 1 in 8000.) This is obviously in the interest of those who have invested less work and could expect an outcome below the average, and against the interest of those who have invested more work, and could expect an outcome above the average. The advantages and disadvantages of egalitarian social contracts are well-known.

On the other hand, the agreement may be signaling that the utility of points for those who gained some was greater than for those who lost some, in which case the collusion has increased the total utility.  This allows us to close the course by repeating the side remark that we never had a chance to fully develop:
\begin{quote} \emph{\textbf{Security and Economy are two sides of the same coin:}}
\begin{itemize}
\item a resource is an economic asset only if it can be secured, while
\item a security protection is effective only if it is cost-effective.
\end{itemize}
\end{quote} 
For instance, the lion cannot claim a water well as his asset if it is so big that he cannot prevent the gazelle from drinking on the other side. A \$100 lock is not an effective protection of a \$50 bike.

But valuations can vary wildly. My bike can be priceless to me. To prevent it from being stolen, I might be willing to steal a lock. In a market economy, my goal is to minimize my costs, maximize my revenue, and secure my profits. The less I give and the more I take, the better off  I am. But there are also social processes, such as love, religion, art, and science, where our goal is to give as much as we can and take out as little as possible. Many security failures arise from confusing different security goals that arise in different economic environments. To be good security engineers and scientists, you must remember do not waste money on the market and to not save efforts in love and science.

In any case, we respected your preferences and assigned the maximum of 25 points to everyone.

Happy holidays!

-- Peter and Dusko

\bibliographystyle{plain} 
\bibliography{TEXT-ref}  
\addcontentsline{toc}{chapter}{Bibliography}

\appendix

\def\thechapter{A}
\chapter{Appendix A. Prerequisites}

\section{List and string constructors and notations}\label{Prereq:list}
\sindex{string}\sindex{list}
\begin{itemize}
\item\textbf{lists:} $X^\ast \ = \ \Big\{\seq{x_1\ x_2 \ldots x_n} \in X^n\ |\ n = 0,1,2,\ldots\Big\}$
\item\textbf{strings:} $X^+  = \Big\{\seq{x_1\ x_2 \ldots x_n} \in X^n\ |\ n = 1,2,3\ldots\Big\}$ 
\end{itemize}
The only \sindex{list!difference from string} \sindex{string!difference from list} difference between the type $X^{\ast}$ of lists of elements of $X$ and the type $X^{+}$ of strings of elements of $X$ is that $X^{\ast}$ contains the empty list $\seq{}$, whereas $X^{+}$ does not. Strings are the nonempty lists, whereas a list is either a string or empty:
\bear
X^\ast & = & X^+ \cup \Big\{\sseq{}\Big\}
\eear
Lists are inductively generated by the list constructor $(\cons)$ \sindex{list!constructor} and the \sindex{list!empty} empty list $()$:
\bea
X^\ast \times X \tto{\ (\cons) \ } & X^\ast & \oot{\  ()\  } 1\label{eq:list}\\
\Big<\seq{x_0 \ldots x_n}, x_{n+1}\Big> \longmapsto &  \seq{x_0\ldots x_n\ x_{n+1}}&\notag\\
&\seq{}&\longmapsfrom \emptyset \notag
\eea
whereas strings are generated by the string constructor $(\cons)$ \sindex{string!constructor} and the \sindex{string!letter inclusion}  letter inclusion $(-)$ operation:
\bea
X^+ \times X \tto{\ (\cons) \ } & X^+ & \oot{\  (-)\  } X\label{eq:string}\\
\Big<\seq{x_0 \ldots x_n}, x_{n+1}\Big> \longmapsto &  \seq{x_0\ldots x_n\ x_{n+1}}&\notag\\
&\seq{x}&\longmapsfrom x \notag
\eea

\para{Notation.} We write the names of lists and strings in bold\footnote{Functional programmers write $xs \ = \ \seq{x1\  x2\ldots xn}$}
\bear
\vec x & = & \seq{x_0\ x_1\ x_2\ \ldots x_n}
\eear
When convenient, the indices can also be written right to left, i.e., $\vec x  =  \seq{x_n\ x_{n-1}\ \ldots x_1\ x_0}$. 

\subsubsection*{Induction} 
Inductive constructors allow inductive definitions. Here are a couple of basic examples.

\para{List concatenation.} The list \sindex{concatenation} \emph{concatenation}\/ appends two lists
\bear
X^\ast \times X^\ast &\tto{@} & X^\ast\ \ \oot{\sseq{}} 1
\\
\big<\vec z, \sseq{}\big> & \longmapsto & \vec z\\
\big<\vec z, \vec y\cons x \big > &\longmapsto & \left(\vec z @ \vec y\right)\cons x
\eear
Since concatenation is obviously associative, it makes $X^\ast$ into a monoid, with the empty string $()$ as the unit. More precisely, $X^\ast$ is the free monoid over $X$, whereas $X^+$ is the free semigroup. The monoid operations easily extend from lists to the sets of lists
\[
\prooftree
X^\ast\times X^\ast \tto{@} X^\ast \oot{\sseq{}} 1
\justifies
\WP (X^\ast) \times \WP \left(X^\ast\right) \tto{@} \WP \left(X^\ast\right) \oot{\{\sseq{}\}} 1
\endprooftree
\]
by defining the concatenation of $C,D\subseteq X^\ast$ to be the set of concatenations of their elements
\bear
CD\ =\ C@D & = & \left\{c@d\in X^\ast\ |\ c\in C, d\in D \right\}
\eear

\para{List length.} The simplest inductive definition assigns to each list the number of its symbols:
\bear
X^\ast & \tto{\ell} & \NNn\\
() & \mapsto & 0\\
\vec y\cons x & \mapsto & \ell(\vec y)+1
\eear
Note that $\NNn\cong \{1\}^\ast$ is the free monoid over one generator, with the operation of addition on $\NNn$ corresponding to the concatenation of  $\{1\}^\ast$, providing the arithmetic in base 1. The list length operation can thus be viewed as the monoid homomorphism $X^\ast \to \{1\}^\ast$ induced by the unique function $X\to 1$, identifying all symbols of the alphabet $X$ with a single symbol.

\para{Abbreviations.} When no confusion is likely we abbreviate, not just $C@D$ to $CD$ as above, but also $\vec x @ \vec y$ to $\vec x :: \vec y$, and even $\vec x\vec y$. 

\para{Prefix order.}\sindex{prefix order} The \emph{prefix}\/ relation $\sqsubseteq$  defined by
\bea\label{eq:prefix-order}
\vec x \sqsubseteq \vec y & \iff & \exists \vec z.\ \vec x :: \vec z = \vec y
\eea
is a partial order, both on lists and on strings. Writing $\vec x\sqsubseteq \vec y$ and saying that \emph{$\vec x$ is a prefix of $\vec y$} means that there is $\vec z = \seq{z_1 \ \ldots\  \ z_{n-k}}$ such that
\bear
&& \seq{x_1\ x_2 \ldots x_k\ z_1 \ \ldots\  \ z_{n-k}} \\
&=& \seq{y_1\ y_2 \ldots y_k\  y_{k+1} \ldots y_n} 
\eear
so that $x_i = y_i$ for $i\leq k$ and $z_i = y_{i+k}$ for $i\leq n-k$. This partial ordering makes any set of lists $X^\ast$ into a meet semilattice, with the meet $\vec x \sqcap \vec y$ extracting the greatest common prefix of $\vec x$ and $\vec y$. The set of strings $X^+$ is not a semilattice just because the greatest common prefix of strings starting  differently is the empty list, which is not a string.


\para{Specifying properties.} For any pair of events  $a,b\in \Event$, and for arbitrary sets of events $C,D\subseteq \Event$, 
some of the properties that can be defined are: 
\bear
\overline a & =  & \{ \vec x:: a :: \vec y\ |\ \vec x, \vec y \in \Event^\ast\}\\
\overline{a\prc b} & = &\left\{ \vec x:: a :: \vec y :: b :: \vec z\ \  |\ \ \vec x, \vec y, \vec z \in \Event^\ast\right\}\\
\overline{a\prc \exists b} & = & \big\{ \vec t \in \Event^\ast\ |\ \exists \vec x \vec y.\ \vec t = \vec x:: a:: \vec y\ \implies \  \exists \vec {y'}\vec {y''}.\ \vec y = \vec {y'} :: b :: \vec {y''}\big \}\\
\overline{\exists a\prc  b} & = & \big\{ \vec t \in \Event^\ast\ |\ \exists \vec x \vec y.\ \vec t = \vec x:: b:: \vec y\ \implies \  \exists \vec {x'}\vec {x''}.\ \vec x = \vec {x'} :: a :: \vec {x''}\big \}\\
\overline C &= &  \{ \vec x:: c :: \vec y\ |\ \vec x, \vec y \in \Event^\ast, c\in C\}\\
\overline{C\prc D} &  = & \left\{ \vec x:: c :: \vec y :: d :: \vec z\ \  |\ \ \vec x, \vec y, \vec z \in \Event^\ast,\  c\in C,\ d\in D\right\}\\
\overline{C\prc \exists D} & = & \big\{ \vec t \in \Event^\ast\ |\ 
\forall c\in C.\ \vec t = \vec x:: c:: \vec y\ \implies \  \exists d\in D.\  \vec y = \vec {y'} :: d :: \vec {y''}\big \}\\
\overline{\exists C\prc D} & = & \big\{ \vec t \in \Event^\ast\ |\ 
\forall d\in D.\ \vec t = \vec x:: d :: \vec y\ \implies \ \exists c\in C.\  \vec x = \vec {x'} :: c :: \vec {x''}\big \}
\eear

\section{Neighborhoods}
\label{Prereq:Topology}
\para{Neighborhoods.} Topology is the most general theory of space. Spaces usually model the realms of observations, and their structures therefore correspond to what is observed: e.g., metric spaces capture distances, vector spaces also capture angles, etc. Topological spaces only capture \emph{neighborhoods}, viewed as sets that are inhabited by some objects together. Smaller neighborhoods suggest that the objects are closer together, and the set inclusion of neighborhoods thus tells which objects are closer together, and which ones are further apart. But these distinctions are made without assigning any numeric values to the distances between objects, as it is done in metric spaces. Metric spaces can thus be viewed as a special case of topological spaces.

For any set $\Space$, a (topological) space structure is defined by either of the following:
\begin{itemize}
\item \emph{open  neighborhoods}\/ are represented by a family $\OOO_\Space \subseteq \WP(\Space)$  closed under finite $\cap$ and arbitrary $\bigcup$; whereas
\item \emph{closed  neighborhoods}\/ are represented by a family $\FFF_\Space \subseteq \WP(\Space)$  closed under finite $\cup$ and arbitrary $\bigcap$
\end{itemize}
The two families determine each other, in the sense that a set is open if and only if its complement is closed, i.e.
\[
\OOO_\Space\ = \ \left\{U\in \WP(\Space)\ |\ \neg U \in \FFF_\Space\right\}\qquad\qquad
\FFF_\Space \ = \ \left\{F\in \WP(\Space)\ |\ \neg F \in \OOO_\Space\right\}
\]
Any family $\BBB_\Space\subseteq \WP(\Space)$ can be declared to be open (or closed) neighborhoods, and used to generate  the full space structure.  If we want $\BBB_\Space$ to be 
\begin{itemize}
\item open, then set
\[
\BBB^\cap_\Space \ = \ \left\{\bigcap_{i=0}^n B_i\ | \   B_0,\ldots, B_n \in \BBB_\Space\right\}\qquad\mbox{and}\qquad
\OOO_\Space \ = \   \left\{\bigcup \VVV\ |\ 
  \VVV \subseteq \BBB^\cap_\Space\right\}
\]
\item closed, then set
\[
\BBB^\cup_\Space \  = \  \left\{\bigcup_{i=0}^n B_i\ | \  B_0,\ldots, B_n \in \BBB_\Space\right\}\qquad\mbox{and}\qquad
\FFF_\Space \ = \  \left\{\bigcap \VVV\ |\ 
  \VVV \subseteq \BBB^\cup_\Space\right\}\]
\end{itemize}

\para{Closure operators.} \sindex{closure}For any given topology on $\Space$, the family $\FFF_\Space$ can be equivalently presented in terms of the \emph{closure operator}\/ that it induces:
\bea\label{eq:closure}
\closure{} : \WP(\Space) & \to & \WP(\Space)\\
X & \longmapsto & \bigcap\{F\in \FFF_\Space\ |\ X\subseteq F\}\notag
\eea
It is easy to see that the general requirements from a closure operator are satisfied:
\[ X\  \subseteq\  \closure X \ = \ \closure{\closure X}\] 
and $\FFF_\Space = \{X\in \WP(\Space)\ |\ \closure X = X\}$.

\para{Interior operators} are dual to closure operators.\sindex{interior} This means that for any given  topology on $\Space$, the family $\OOO_\Space$ of open sets (dual to the closed sets $\FFF_\Space$) can be equivalently presented in terms of the \emph{interior operator}\/ that it induces:
\bea\label{eq:interior}
\interior{} : \WP(\Space) & \to & \WP(\Space)\\
X & \longmapsto & \bigcup\{U\in \OOO_\Space\ |\ U\subseteq X\}\notag
\eea
It is easy to see that the general requirements from an interior operator are satisfied:
\[ X\  \supseteq\  \interior X \ = \ \interior{\interior X}\] 
and $\OOO_\Space = \{X\in \WP(\Space)\ |\ \interior X = X\}$.

\para{Closures and interiors determine each other} by
\[
 \closure X = \neg\interior{\neg X}\qquad \qquad\qquad \qquad  \interior X = \neg \closure{\neg X} 
\]

\para{Density} can be defined by saying for $X, Y \in \WP(\Space)$ that \emph{$X$ is dense on $Y$}\/ if $\closure{X} \supseteq Y$. This can be equivalently written in the form $Y \cap \interior{\neg X} = \emptyset$. We sometimes describe such situations by saying that \emph{$\neg Y$ is codense on $\neg X$}. In general, $U$ is codense on $V$ when $\int V \setminus U= \emptyset$. The family of sets that are dense on $Y$, or on which $Y$ is codense, is
\bear
\DDD_Y  & = & \left\{ D \in\WP(Y) \ |\ \forall F\in \FFF_\Space.\ D\subseteq  F\implies Y\subseteq F \right\}\\
        & = & \left\{ D \in\WP(Y) \ |\ \forall U\in \OOO_\Space.\ Y\cap U\neq \emptyset \ \implies D\cap U \neq \emptyset  \right\}
\eear

\para{Closed-dense $\cap$-decomposition.} Any inclusion $X\subseteq \Space$ clearly factors as $X\subseteq \closure X \subseteq \Space$ where $\closure X\in \FFF_\Space$ is closed, and $X \in \DDD_{\closure X}$ is relatively dense.
Therefore, any $X\subseteq \Space$ is an intersection of a closed and a dense set
\bear
X & = & \closure X\ \cap\ (X\cup \neg \closure X)
\eear
where $\closure X\in\FFF_\Space$ is closed, and $X\cup \neg \closure X \in \DDD_\Space$ is dense everywhere.

\para{Open-codense $\cup$-decomposition} is induced by the closed-dense $\cap$-decomposition of the complement $\neg X\subseteq \Space$ through $\closure{\neg X} = \neg \interior X \subseteq \Space$. More precisely $\neg X = \closure{\neg X} \cap (\neg X\cup \neg \closure{\neg X}) = \neg\interior X \cap \neg\left(X \cap \neg \interior X\right )$ becomes
\bear
X & = & \interior X \cup \big(X\setminus \interior X \big)
\eear
where $\interior X \in \OOO_\Space$ is open, and $X \setminus \interior X$ is codense.

\section{Images, cylinders and cylindrifications}\label{Sec:localization}
\para{Direct and inverse images.} \sindex{image} Any function $f:A\to B$ induces two direct image maps and one inverse image map:
\bear
f_! :\WP A & \longrightarrow & \WP B\\
X & \longmapsto &\{y\ |\ \exists x.\ f(x) = y\wedge x\in X\}\\[2ex]
f^\ast :\WP B & \longrightarrow & \WP A\\
V&\longmapsto & \{x\ |\ f(x)\in V\}\\[2ex]
f_\ast :\WP A & \longrightarrow & \WP B\\
X & \longmapsto &\{y\ |\ \forall x.\ f(x) = y\Rightarrow x\in X\}
\eear
It is easy to show that they satisfy\sindex{image!direct}\sindex{image!inverse}
\bear
f_!(X)\subseteq Y & \iff & X\subseteq f^\ast(Y)\mbox{ and}\\
f^\ast(Y) \subseteq X & \iff & Y\subseteq f_\ast(X)
\eear
and hence for any $X\in \WP A$ holds
\[
f^\ast f_\ast(X)\  \subseteq\  X\  \subseteq\  f^\ast f_!(X) \]

\para{Cylinders.}\sindex{cylinder}  We call $f^\ast f_\ast(X)$ the \emph{internal $f$-cylinder}\/ contained in $X$, whereas $f^\ast f_!(X)$ is the \emph{external $f$-cylinder}\/ containing $X$. An external $f$-cylinder is thus the smallest inverse image of a set in $B$ along $f$ that contains $X$, whereas an internal $f$-cylinder is the largest inverse image of a set in $B$ along $f$ that is contained in $X$. 

\para{Cylinder closures and interiors.} \sindex{cylinder closure|see{closure,cylinder}} \sindex{cylinder interior|see{interior,cylinder}}Suppose that we are given a family of functions $V= \{v:A\to B_v\ |\ v\in \VVv\}$, where $\VVv$ is an arbitrary set of indices. We consider such families as \emph{cylinder localizations}: we build $v$-cylinders, internal and external, for all $v\in V$, and approximate arbitrary sets using cylinders. \sindex{closure!cylinder}\sindex{interior!cylinder}

\begin{figure}[ht!]
\begin{center}
\newcommand{\Exx}{X}
\newcommand{\Exxx}{X}
\newcommand{\ExxClosure}{\closure X}
\newcommand{\ExxxInterior}{\interior X}
\newcommand{\Alispace}{B_0}
\newcommand{\Bobspace}{B_1}
\def\JPicScale{.5}
\input{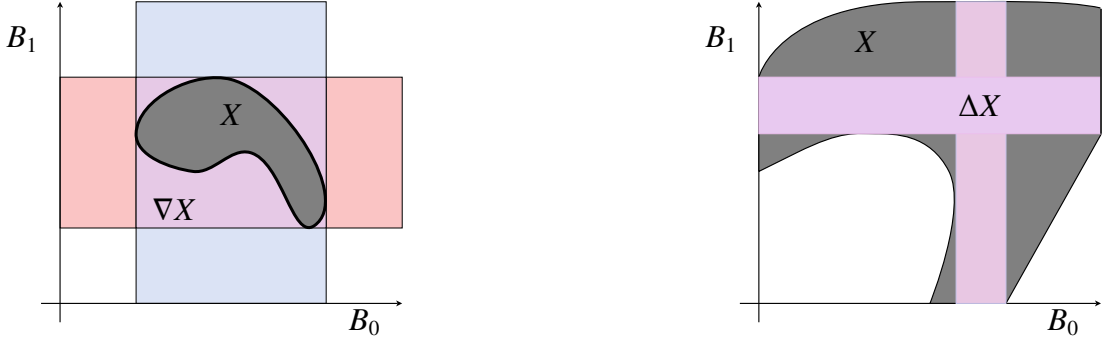}
\hspace{8em}
\ifx\JPicScale\undefined\def\JPicScale{1}\fi
\psset{unit=\JPicScale mm}
\psset{linewidth=0.3,dotsep=1,hatchwidth=0.3,hatchsep=1.5,shadowsize=1,dimen=middle}
\psset{dotsize=0.7 2.5,dotscale=1 1,fillcolor=black}
\psset{arrowsize=1 2,arrowlength=1,arrowinset=0.25,tbarsize=0.7 5,bracketlength=0.15,rbracketlength=0.15}
\begin{pspicture}(0,0)(130.92,102.31)
\rput(100,15){$\Alispace$}
\rput(10,90){$\Bobspace$}
\psline{->}(15,20)(110,20)
\psline{->}(20,15)(20,100)
\rput(65,70){$\Exx$}
\pscustom[fillcolor=gray,fillstyle=solid]{\psline(20,55)(20,80)
\psbezier(20,80)(25,100)(63.75,100)
\psbezier(102.5,100)(130.92,102.31)(118.12,79.38)
\psline(118.12,79.38)(85,20)
\psline(85,20)(65,20)
\psbezier(65,20)(75,45)(70,55)
\psbezier(65,65)(55.62,65)(50.62,65)
\psbezier(45.62,65)(42,66)(30,60)
\psline(30,60)(20,55)
}
\newrgbcolor{userLineColour}{0.78 0.77 0.92}
\newrgbcolor{userFillColour}{0.91 0.78 0.89}
\pspolygon[linewidth=0,linecolor=userLineColour,fillcolor=userFillColour,fillstyle=solid](71.88,100)(85,100)(85,20)(71.88,20)
\newrgbcolor{userLineColour}{0.92 0.75 0.93}
\newrgbcolor{userFillColour}{0.91 0.79 0.94}
\pspolygon[linewidth=0,linecolor=userLineColour,fillcolor=userFillColour,fillstyle=solid](20,80)(110,80)(110,65)(20,65)
\pspolygon[linewidth=0,linecolor=white,fillcolor=white,fillstyle=solid](110,100)(121.88,100)(121.88,65)(110,65)
\psline[fillstyle=solid](110,98.12)(110,65)
\rput(48.12,89.38){$\Exxx$}
\rput(78.12,72.5){$\ExxxInterior$}
\end{pspicture}
\caption{Cylinder closure and interior}
\label{default}
\end{center}
\end{figure}

The \emph{cylinder closure}\/ and \emph{cylinder interior}\/ operators  are defined by
\bea
\ana{-} \ :\ \WP A &\to & \WP A\label{eq:ana}\\
X &\longmapsto & \ana X = \bigcap_{v\in \VVv} v^\ast v_! (X)\notag\\
\cata{-} \ :\ \WP A &\to & \WP A\label{eq:cata}\\
X &\longmapsto & \cata X = \bigcup_{v\in \VVv} v^\ast v_\ast (X) \notag
\eea
where the sets
\bear
\ana{X}_v\ =\ v^\ast v_! (X) & = & \{z\in A\ |\ \exists x. v(z) = v(x)\wedge x\in X\} \\
\cata{X}_v\ =\ v^\ast v_\ast (X) & = & \{z\in A\ |\ \forall x. v(z) = v(x)\Rightarrow x\in X\} 
\eear
are respectively the external $v$-cylinders around $X$ and the internal $v$-cylinders inside $X$. It is easy to see that the $\ana -$ is a closure operator and that $\cata -$ is an interior operator, which means that they satisfy
\begin{gather*}
\cata{\cata X} = \cata X \qquad\qquad  \ana{\ana X} = \ana X\\
\cata X\  \subseteq\  X\  \subseteq\  \ana X
\end{gather*}
They are complementary in the sense
\[
\neg\ana X = \cata{\neg X}\qquad \qquad \qquad \neg\cata X = \ana{\neg X} 
\]
which means that they determine each other:
\[
\ana X = \neg\cata{\neg X}\qquad \qquad \qquad \cata X = \neg\ana{\neg X} 
\]

\begin{definition}\label{Def:cylindric}
$X\in \WP A$ is \emph{external cylindric}\/ if $X= \ana X$ and \emph{internal cylindric}\/ if $X = \cata X$.
\end{definition}

\begin{lemma}\label{Lemma:schedule}
Any history $\vec t$ of events $\Event = \coprod_{w\in\WWw} \Event_w$ is completely determined by 
\begin{itemize}
\item the restrictions $\vec t\restr_w$ for $w\in \WWw$, and 
\item their {\sschedule} , which is  function $\schedule : \length{\vec t} \to  \WWw$ such that 
\bea\label{eq:sched}
\schedule(k) = w & \iff & t_k \in\Event_w
\eea
where $\vec t = \seq{t_0\  t_1  \ldots t_n }$ and $\length{\vec t} = n+1$.
\end{itemize}
\end{lemma}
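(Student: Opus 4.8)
The plan is to exhibit an explicit reconstruction of $\vec t$ from the pair consisting of the family of restrictions $(\vec t\restr_w)_{w\in\WWw}$ and the schedule $\schedule$, and thereby to show that the map sending $\vec t$ to this pair has a left inverse, hence is injective. Write $\vec t = \seq{t_0\ t_1\ \ldots\ t_n}$, so that $\length{\vec t} = n+1$ and $\schedule\colon\{0,1,\ldots,n\}\to\WWw$. For each position $k\le n$ set
\[
c_k \ =\ \#\bigl\{\, j<k \ \bigm|\ \schedule(j)=\schedule(k)\,\bigr\}.
\]
First I would check that the reconstruction rule
\[
t_k \ =\ \bigl(\vec t\restr_{\schedule(k)}\bigr)_{c_k}
\]
is meaningful, i.e. that $c_k$ is a legal index into the string $\vec t\restr_{\schedule(k)}$. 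Since $k$ itself satisfies $\schedule(k)=\schedule(k)$, we have $c_k+1\le\#\schedule^{-1}\bigl(\schedule(k)\bigr)$, and by the defining property \eqref{eq:sched} of $\schedule$ the right-hand side equals $\length{\vec t\restr_{\schedule(k)}}$, because $\restr_w$ retains precisely the letters of $\vec t$ lying in $\Event_w$ — here $\restr_w$ is the strict purge of \eqref{eq:purge}, applicable because $\Event=\coprod_{w\in\WWw}\Event_w$ is a partition.

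The key step is the identity $t_k = \bigl(\vec t\restr_{\schedule(k)}\bigr)_{c_k}$ itself, which I would prove by induction on $\length{\vec t}$ using the recursive clause of \eqref{eq:purge}. Concretely, one establishes the auxiliary counting fact that for any history $\vec t$ and any $w\in\WWw$, the $m$-th letter of $\vec t\restr_w$ is $t_k$, where $k$ is the $(m{+}1)$-st index, in increasing order, with $t_k\in\Event_w$. Appending a letter to $\vec t$ either leaves $\vec t\restr_w$ unchanged (if the new letter is not in $\Event_w$) or appends it as the new last letter of $\vec t\restr_w$, and in both cases the enumeration of $\Event_w$-positions is extended compatibly. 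Specialising $m=c_k$ and observing that $k$ is by definition the $(c_k{+}1)$-st index with $t_k\in\Event_{\schedule(k)}$ then yields the reconstruction rule.

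Finally I would conclude: the data $\bigl((\vec t\restr_w)_{w\in\WWw},\ \schedule\bigr)$ determines every letter $t_k$ of $\vec t$ through the rule above, and it also determines the length $n+1=\sum_{w\in\WWw}\length{\vec t\restr_w}$; hence it determines $\vec t$ completely, which is the claim. I expect the only real obstacle to be the bookkeeping in the induction — keeping the running counters $c_k$ synchronised with exactly those positions of $\vec t$ that the purge operation retains — but this becomes routine once the auxiliary counting fact is stated precisely, since it is just the observation that $\restr_w$ enumerates the $\Event_w$-events of $\vec t$ in their original order.
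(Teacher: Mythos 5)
Your reconstruction rule $t_k = \bigl(\vec t\restr_{\schedule(k)}\bigr)_{c_k}$ with $c_k$ counting the earlier positions scheduled to the same $w$ is exactly the paper's argument: your $c_k$ is the paper's counter $\varsigma(k,\schedule(k))$ up to an off-by-one in indexing convention, and both proofs establish the key counting identity by induction on the length of $\vec t$ via the recursive clause of the purge. The proposal is correct and essentially identical to the paper's proof.
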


\bpr The claim is that the function  $\schedule : \length{\vec t} \to  \WWw$ satisfying \eqref{eq:sched} provides enough information to reconstruct  $\vec t$ from its restrictions $\vec t\restr_w$, given for all $w\in \WWw$. The restrictions can be {\sschedule}d in many ways, but  \eqref{eq:sched} says that $k$-th component $t_k$ of $\vec t$ comes from the restriction $\vec t \restr_{\schedule(k)}$. To simplify notation, we write $\vec t \restr_{\schedule(k)}$ as $\vec \tau\, ^{(k)}$, so that $\sigma$-th component of $\vec t \restr_{\schedule(k)}$ becomes $\tau^{(k)}_\sigma$. So we know that $t_k = \tau^{(k)}_\sigma$ for some $\sigma$. The question is: \emph{what is $\sigma$?}

Towards the answer, we use $\schedule : \length{\vec t} \to  \WWw$ to define the function $\varsigma : \length{\vec t}\times \Subj  \to \length{\vec t}+1$ which counts the number of $w$-actions in the $k$-length prefix of $\vec t$. The definition is by recursion:
\bear
\varsigma(0,w) & = & \begin{cases}
1 & \hspace{3em}\mbox{if } \schedule(0) = w\\
0 & \hspace{3em}\mbox{otherwise}
\end{cases}\\
\varsigma(n+1,w) & = & \begin{cases}
\varsigma(n)+1 & \mbox{ if } \schedule(n+1) = w\\
\varsigma(n) & \mbox{ otherwise}
\end{cases}
\eear
It is easy to show by induction that  $\varsigma(k,w)$ is the length of the $w$-restriction of the $k$-prefix ${\vec t }_k \sqsubseteq \vec t$, i.e., $\varsigma(k,w) = \length{\vec t_k\restr_w}$. Hence, 
$t_k \  = \ \tau^{(k)}_{\varsigma(k,\schedule(k))}$ is thus the $\varsigma(k,\schedule(k))$-th component of $\vec \tau\, ^{(k)} = \vec t\restr_{\schedule(k)}$.
\epr

\begin{proposition}
A property $\Property$ is localized if and only if together with every $\vec t$ it contains all $\vec s$ such that $\vec t \restr_w = \vec s\restr_w$ for all $w\in\WWw$.
\end{proposition}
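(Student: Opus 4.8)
The plan is to unwind the definition of "localized" (Def.~\ref{Def:localized}) and show it is literally the assertion we want, with Lemma~\ref{Lemma:schedule} supplying the one nontrivial ingredient. Recall that $\Property$ is localized iff $\Property = \widehat \Property$, where $\widehat \Property = \{\vec z \mid \forall w\in \WWw.\ \vec z\restr_w \in \Property_w\}$ and $\Property_w = \{\vec x\restr_w \mid \vec x\in \Property\}$. So I want to prove:
\[
\Property = \widehat \Property \quad\Longleftrightarrow\quad \Big(\forall \vec t\in \Property\ \forall \vec s.\ \big(\forall w\in\WWw.\ \vec t\restr_w = \vec s\restr_w\big) \implies \vec s\in \Property\Big).
\]

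First I would prove the forward direction ($\Rightarrow$), which is the easy half. Assume $\Property = \widehat\Property$. Take $\vec t\in \Property$ and $\vec s$ with $\vec t\restr_w = \vec s\restr_w$ for all $w$. For each $w$, since $\vec t\in \Property$ we have $\vec t\restr_w \in \Property_w$ by definition of $\Property_w$; hence $\vec s\restr_w = \vec t\restr_w \in \Property_w$. As this holds for every $w\in\WWw$, we get $\vec s\in \widehat\Property = \Property$, as desired.

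Next the reverse direction ($\Leftarrow$), which is where the content sits. Assume the closure-under-equal-restrictions condition. The inclusion $\Property \subseteq \widehat\Property$ is immediate from the definitions (if $\vec t\in\Property$ then each $\vec t\restr_w\in\Property_w$, so $\vec t\in\widehat\Property$), and does not use the hypothesis. For $\widehat\Property\subseteq\Property$, take $\vec z\in\widehat\Property$, so for each $w$ there is some $\vec x^{(w)}\in\Property$ with $\vec x^{(w)}\restr_w = \vec z\restr_w$. The goal is to produce a single witness in $\Property$ whose restrictions all agree with those of $\vec z$; then the hypothesis finishes the job. Here I invoke Lemma~\ref{Lemma:schedule}: the history $\vec z$ is completely determined by its family of restrictions $\{\vec z\restr_w\}_{w\in\WWw}$ together with the schedule $\schedule\colon \length{\vec z}\to\WWw$ reading off which $\Event_w$ each component lies in. Applying the reconstruction recipe of the lemma to the \emph{same} family $\{\vec z\restr_w\}_w$ and the \emph{same} schedule $\schedule$ yields exactly $\vec z$ back — but the point is that we may instead compare with any $\vec x^{(w)}$: since $\vec x^{(w)}\restr_w = \vec z\restr_w$, the witness $\vec x^{(w)}\in\Property$ has one restriction equal to $\vec z$'s; iterating, or more cleanly invoking the hypothesis once we have a member of $\Property$ agreeing with $\vec z$ on \emph{all} restrictions, gives $\vec z\in\Property$. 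The cleanest route: fix any $w_0$, note $\vec x^{(w_0)}\in\Property$; but $\vec x^{(w_0)}$ need not agree with $\vec z$ on the other restrictions, so instead I build the agreeing element directly via Lemma~\ref{Lemma:schedule} from the data $(\{\vec z\restr_w\}_w, \schedule_{\vec z})$ — call it $\vec z'$; by construction $\vec z' = \vec z$, and $\vec z'$ equals $\vec z$ on all restrictions; then picking $\vec t = \vec x^{(w)}$ for a single $w$ and successively swapping restrictions is unnecessary because the hypothesis already applies to the pair $(\vec x^{(w)}, \vec z)$ once $\vec z$ and some member of $\Property$ share all restrictions.

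The main obstacle — and the place I must be careful — is that $\widehat\Property$ asserts only that for \emph{each} $w$ separately there is some $\vec x^{(w)}\in\Property$ matching $\vec z$ on the $w$-restriction, not that one element matches on all of them simultaneously. Lemma~\ref{Lemma:schedule} is precisely the tool that resolves this: it shows a history is pinned down by \emph{all} its restrictions plus a schedule, so "matching on all restrictions" pins down the history up to schedule; combined with the fact that $\vec z$ is one such reconstruction, the hypothesis "closure under equal restrictions" is exactly strong enough to conclude $\vec z\in\Property$. I would state this step carefully, possibly as: the family $\{\vec z\restr_w\}$ together with $\schedule_{\vec z}$ reconstructs $\vec z$; for any $\vec x\in\Property$ with $\vec x\restr_w=\vec z\restr_w$ for all $w$ we would be done, and such $\vec x$ exists because... — here one needs that $\Property$ itself is nonempty in the relevant fibered sense, which is what $\vec z\in\widehat\Property$ delivers fiber by fiber, and Lemma~\ref{Lemma:schedule} lets us amalgamate the fibers. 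I would double-check that no extra hypothesis (like $\Property\ne\emptyset$) sneaks in and that the amalgamation is legitimate, as that is the only subtle point.
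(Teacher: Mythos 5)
Your forward direction is correct: if $\Property=\widehat\Property$ and $\vec t\in\Property$, then each $\vec t\restr_w$ lies in $\Property_w$, so any $\vec s$ with the same restrictions lies in $\widehat\Property=\Property$. The trouble is the converse, and you have put your finger on exactly the right spot: the passage from a separate witness $\vec x^{(w)}\in\Property$ for each $w$ to a single member of $\Property$ agreeing with $\vec z$ on \emph{all} restrictions simultaneously. Lemma~\ref{Lemma:schedule} does not supply that amalgamation and cannot: it says that one history is reconstructible from \emph{its own} family of restrictions together with a schedule; it says nothing about whether a family of restrictions harvested from \emph{different} members of $\Property$ is realized by any single member of $\Property$. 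Reconstructing from the data $(\{\vec z\restr_w\}_w,\schedule_{\vec z})$ only returns $\vec z$ itself, which is precisely what is not yet known to be in $\Property$ — so the sentence "Lemma~\ref{Lemma:schedule} lets us amalgamate the fibers" is the unproved claim, not a proof of it.

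Moreover, the gap is not closable, because with "localized" meaning $\Property=\widehat\Property$ for $\widehat\Property=\{\vec z\mid\forall w\in\WWw.\ \vec z\restr_w\in\Property_w\}$ (Def.~\ref{Def:localized} and the paragraph following the proposition), the converse is false. Take $\Event_A=\{a\}$, $\Event_B=\{b\}$ and $\Property=\{\seq{a},\seq{b}\}$. Each element of $\Property$ is the unique history with its pair of restrictions (one a single letter, the other empty), so $\Property$ is closed under equality of all restrictions. Yet $\seq{a\ b}$ has $A$-restriction $\seq{a}\in\Property_A$ and $B$-restriction $\seq{b}\in\Property_B$, hence $\seq{a\ b}\in\widehat\Property\setminus\Property$ and $\Property\neq\widehat\Property$. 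What your closure condition actually characterizes is that $\Property$ is saturated under the equivalence relation "same restriction at every $w$" — i.e., a union of interleaving classes, which is where Lemma~\ref{Lemma:schedule} genuinely enters, since it identifies each class with a set of schedules over a fixed family of restrictions. That saturation is strictly weaker than $\Property=\widehat\Property$, because $\widehat\Property$ allows the witnessing member of $\Property$ to vary with $w$. So either the statement must be weakened to the saturation reading (in which case it is essentially a tautology and your forward-direction argument plus the definition finishes it), or the equivalence fails; as written, the "if" half cannot be proved.
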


\para{Localization.} In general, the smallest local property containing $\Property\subseteq \Event^\ast$ with respect to the partition  $\Event = \coprod_{i\in \IIi} \Event_i$ is
\bear
\widehat\Property & = & \left\{\vec t\in \Event^\ast\ |\ \forall i\in \IIi.\ \vec t\restr_i\, \in \Property_i
\right\} \\
& = & \bigcap_{i\in \IIi} \widehat\Property_i\quad\mbox{ where}\quad \widehat\Property_i = \{\vec t\in \Event^\ast\ |\ \vec t\restr_i \in \Property_i\}
\eear
The property $\widehat \Property$ is called the \emph{localization} of $\Property$. It is easy to see that $\Property \subseteq \widehat \Property = \widehat{\widehat P}$, i.e., that localization is a closure operator. The property $\Property$ is thus \emph{localized}\/ when $\Property = \widehat \Property$.

\label{Appendix:Prereq}

\addcontentsline{toc}{chapter}{Index}
\printindex

\end{document}